\documentclass[aps,prx,reprint,superscriptaddress,longbibliography,floatfix]{revtex4-2}
\usepackage[utf8]{inputenc}

\makeatletter
\newcommand{\AddInputPath}[1]{%
    \providecommand*{\input@path}{}
    \g@addto@macro{\input@path}{#1}
}
\AddInputPath{{templates/}}
\makeatother
\usepackage{amsmath,amsthm,amssymb,amsfonts}
\usepackage{mathrsfs}
\usepackage{mathdots}
\usepackage{mathtools, physics, bbm}
\usepackage{booktabs}
\usepackage{listings}

\usepackage[normalem]{ulem}

\usepackage{xcolor}
\definecolor{flaginput}{RGB}{213,230,249}
\definecolor{flaginternal}{RGB}{239,239,239}
\definecolor{flagoutput}{RGB}{215,238,220}

\DeclareMathOperator{\U}{U}

\usepackage{tikz}
\usetikzlibrary{quantikz2}
\usetikzlibrary{arrows.meta,calc,decorations.pathreplacing,positioning,backgrounds}

\definecolor{HCfacezero}{HTML}{DCEAF3}
\definecolor{HCfaceone}{HTML}{F5E5CC}
\definecolor{HCfacetwo}{HTML}{DDEBDD}
\definecolor{HCrhozero}{HTML}{356A9A}
\definecolor{HCrhoone}{HTML}{A45729}
\definecolor{HCrhotwo}{HTML}{6A5687}
\definecolor{HCplaquette}{HTML}{C7922F}
\definecolor{HCink}{HTML}{24272B}
\definecolor{HCmuted}{HTML}{70757A}
\definecolor{HCfaint}{HTML}{C9CDD1}

\tikzset{
  HC/A vertex/.style={circle,draw=HCink,fill=HCrhozero!18,
    line width=.45pt,minimum size=3.4pt,inner sep=0pt},
  HC/B vertex/.style={rectangle,rounded corners=.35pt,draw=HCink,
    fill=HCrhoone!18,line width=.45pt,minimum size=3.5pt,inner sep=0pt},
  HC/A vertex large/.style={HC/A vertex,line width=.55pt,minimum size=5.2pt},
  HC/B vertex large/.style={HC/B vertex,line width=.55pt,minimum size=5.3pt},
  HC/x edge/.style={draw=HCink,line width=.80pt,line cap=round},
  HC/y edge/.style={draw=HCink,line width=.72pt,dash pattern=on 3.0pt off 1.7pt,
    line cap=round},
  HC/z edge/.style={draw=HCink,line width=.78pt,dash pattern=on .45pt off 1.20pt,
    line cap=round},
  HC/x inactive/.style={draw=HCfaint,line width=.38pt,line cap=round},
  HC/y inactive/.style={draw=HCfaint,line width=.38pt,
    dash pattern=on 2.2pt off 1.25pt,line cap=round},
  HC/z inactive/.style={draw=HCfaint,line width=.42pt,
    dash pattern=on .35pt off .95pt,line cap=round},
  HC/active 0/.style={draw=HCrhozero,line width=1.05pt},
  HC/active 1/.style={draw=HCrhoone,line width=1.05pt},
  HC/active 2/.style={draw=HCrhotwo,line width=1.05pt},
  HC/domain/.style={draw=HCmuted,line width=.55pt},
  HC/plaquette halo/.style={draw=HCplaquette,line width=2.55pt,
    line join=round,line cap=round},
  HC/rho badge/.style={circle,draw=HCink,fill=white,line width=.45pt,
    minimum size=11pt,inner sep=0pt,font=\scriptsize\bfseries},
  HC/rho 0/.style={HC/rho badge,draw=HCrhozero,text=HCrhozero},
  HC/rho 1/.style={HC/rho badge,draw=HCrhoone,text=HCrhoone},
  HC/rho 2/.style={HC/rho badge,draw=HCrhotwo,text=HCrhotwo},
  HC/callout/.style={rounded corners=2pt,draw=HCmuted!55,fill=white,
    inner sep=4pt,align=left,font=\scriptsize},
  HC/pipeline/.style={rounded corners=1.5pt,draw=HCmuted!70,fill=white,
    minimum height=8.5mm,inner xsep=3pt,align=center,font=\scriptsize},
  HC/flow/.style={-{Stealth[length=4.2pt,width=4.0pt]},draw=HCmuted,
    line width=.55pt},
  HC/bit/.style={rectangle,draw=HCmuted!70,fill=white,line width=.35pt,
    minimum width=2.7mm,minimum height=4.0mm,inner sep=0pt,font=\tiny},
  HC/bit one/.style={HC/bit,draw=HCrhozero,fill=HCrhozero!20,
    text=HCrhozero,font=\scriptsize\bfseries},
  HC/block 0/.style={rectangle,draw=HCrhozero!75,fill=HCrhozero!11,
    line width=.45pt,minimum width=34mm,minimum height=4.3mm,inner sep=1pt},
  HC/block 1/.style={rectangle,draw=HCrhoone!75,fill=HCrhoone!11,
    line width=.45pt,minimum width=34mm,minimum height=4.3mm,inner sep=1pt},
  HC/block 2/.style={rectangle,draw=HCrhotwo!75,fill=HCrhotwo!11,
    line width=.45pt,minimum width=34mm,minimum height=4.3mm,inner sep=1pt}
}

\newcommand{\HCA}[2]{({1.5*((#1)+(#2))},{0.8660254*(-(#1)+(#2))})}
\newcommand{\HCB}[2]{({1.5*((#1)+(#2))+1},{0.8660254*(-(#1)+(#2))})}

\newcommand{\HCDomainPath}{(-.6,0) -- (3.9,-2.598076) --
  (8.4,0) -- (3.9,2.598076) -- cycle}

\newcommand{\HCDrawMatching}[3]{%
  \begin{scope}[shift={#1}]
    \begin{scope}[x=.36cm,y=.36cm]
      \begin{scope}
        \clip \HCDomainPath;
        \foreach \i in {-1,...,3}{
          \foreach \j in {-1,...,3}{
            \draw[HC/x inactive] \HCA{\i}{\j} -- \HCB{\i-1}{\j};
            \draw[HC/y inactive] \HCA{\i}{\j} -- \HCB{\i}{\j-1};
            \draw[HC/z inactive] \HCA{\i}{\j} -- \HCB{\i}{\j};
          }
        }
        \foreach \i in {-1,...,3}{
          \foreach \j in {-1,...,3}{
            \pgfmathtruncatemacro{\HCc}{mod(mod(\i+2*\j,3)+3,3)}
            \pgfmathtruncatemacro{\HCrhoy}{mod(\HCc+1,3)}
            \pgfmathtruncatemacro{\HCrhoz}{mod(\HCc+2,3)}
            \ifnum\HCc=#2\relax
              \draw[HC/x edge,#3] \HCA{\i}{\j} -- \HCB{\i-1}{\j};
            \fi
            \ifnum\HCrhoy=#2\relax
              \draw[HC/y edge,#3] \HCA{\i}{\j} -- \HCB{\i}{\j-1};
            \fi
            \ifnum\HCrhoz=#2\relax
              \draw[HC/z edge,#3] \HCA{\i}{\j} -- \HCB{\i}{\j};
            \fi
          }
        }
      \end{scope}
      \draw[HC/domain] \HCDomainPath;
      \foreach \i in {0,1,2}{
        \foreach \j in {0,1,2}{
          \node[HC/A vertex] at \HCA{\i}{\j} {};
          \node[HC/B vertex] at \HCB{\i}{\j} {};
        }
      }
      \node[font=\tiny,anchor=north east] at \HCA{0}{0} {$q_1$};
      \node[font=\tiny,anchor=north west] at \HCB{0}{0} {$q_2$};
      \node[HC/rho #2] at (3.9,3.25) {#2};
      \node[font=\scriptsize\bfseries,anchor=west] at (4.45,3.25)
        {$\rho(e)=#2$};
    \end{scope}
  \end{scope}%
}

\tikzset{
  tensor/.style={
    draw,
    minimum width=0.6cm,
    minimum height=0.6cm,
    inner sep=2pt
  }
}

\usetikzlibrary{calc}
\tikzset{
  gatebox/.style={draw, fill=white, minimum size=5.5mm, inner sep=1pt},
  wire/.style={line width=0.5pt, rounded corners=2.2mm},
}

\newcommand{\vket}[1]{\lvert #1 \rangle\!\rangle}   
\newcommand{\vbra}[1]{\langle\!\langle #1 \rvert}   

\graphicspath{{./}{figures/}}

\newcommand{\stk}[2]{\genfrac{}{}{0pt}{0}{#1}{#2}}   

\usepackage{stmaryrd,MnSymbol}
\usepackage[colorlinks = true, citecolor=blue, linkcolor=blue, urlcolor=blue, linktoc=page]{hyperref}
\renewcommand{\>}{\rangle}

\newcommand{\C}{{\mathbb{C}}} 

\newcommand{\cnot}{{\mathtt{CNOT}}}

\newcommand{\cz}{{\mathtt{CZ}}}
\newcommand{\I}{{\mathtt{I}}}

\renewcommand{\U}{{\mathsf{U}}}

\renewcommand{\L}{{\mathsf{L\,}}} 

\newcommand{\rr}{{\rrangle}}

\newcommand{\ins}{{\mathrm{in}}}
\newcommand{\outs}{{\mathrm{out}}}

\newcommand{\id}{{\mathbb{I}}}

\newcommand{\qnoise}[1]{\gate[1,style={dashed,rounded corners,fill=red!10}]{#1}}

\definecolor{darkred}{RGB}{120,0,0}
\definecolor{darkpurple}{RGB}{120,0,120}

\usepackage{cleveref}

\theoremstyle{plain}
\newtheorem{theorem}{Theorem}[section]
\newtheorem{lemma}[theorem]{Lemma}
\newtheorem{proposition}[theorem]{Proposition}
\newtheorem{corollary}[theorem]{Corollary}

\newtheorem{definition}[theorem]{Definition}
\newtheorem{example}[theorem]{Example}
\newtheorem{remark}[theorem]{Remark}

\usepackage{comment}

\begin{document}

\title{Fault tolerance of quantum circuits with tensor networks and symplectic geometry}

\author{Soham Ghosh}
\email{soham.ghosh@tum.de}\email{mails.soham@gmail.com}
\affiliation{Chair of Theoretical Information Technology, Technical University of Munich}

\author{Holger Boche}
\email{boche@tum.de}
\affiliation{Chair of Theoretical Information Technology, Technical University of Munich}
\affiliation{BMFTR Research Hub 6G-Life, Germany}
\affiliation{Munich Centre for Quantum Science and Technology (MCQST)}
\affiliation{Munich Quantum Valley (MQV)}
\affiliation{Cluster of Excellence CeTI, Technical University of Dresden}

\author{Andrew Tanggara}
\email{andrew.tanggara@gmail.com}
\affiliation{Centre for Quantum Technologies, National University of Singapore}
\affiliation{Nanyang Quantum Hub, Nanyang Technological University}

\begin{abstract}
We develop an operator-algebraic framework for analyzing fault tolerance in quantum circuits and establish necessary and sufficient conditions for exact recovery of the ideal computation under a prescribed noise model.
For prescribed circuit families and noise models, we derive a semidefinite program (SDP) based test whose infeasibility certifies that there does not exist any recovery map restoring its intended operation. For non-adaptive stabilizer circuits, analyzing the stabilizer symmetries of its components with symplectic geometry, we obtain closed-form detection and logical-effect matrices
that characterize circuit distance algebraically. Applied to a finite
Hastings--Haah honeycomb Floquet-code circuit, these matrices certify
the dynamical encoding of two logical qubits and distance four under Pauli data errors including measurement errors. Using these matrices, we then derive MacWilliams
identities for circuit weight enumerators and obtain linear-programming (LP) upper
bounds on circuit distance. Output-code constraints give additional
Singleton-like bounds. 
Under independent
and identically distributed (i.i.d.) depolarizing noise, the weight enumerators
determine decoder failure probabilities and yield upper bounds on
circuit pseudothresholds. 
Finally, we derive LP distance bounds for families of flag
syndrome-extraction circuits specified by their CNOT orderings,by
incorporating the $t$-flag criterion as additional linear constraints to the distance-bounding LP.
An analytical CSS Hamming-code benchmark yields a tight distance-three
bound attained by the Chao--Reichardt one-flag construction. 
\end{abstract}

\maketitle

\tableofcontents

\section{Motivation and overview}
\label{sec:introduction}

The presence of noise is the main obstacle to realizing scalable quantum computers
and the speed-ups they promise over classical computers.
Uncontrolled interactions with the environment can destroy quantum
coherence, while imperfect gates and measurement devices introduce errors during the
computation itself \cite{Shor1996FT,gottesman2022opportunities}.
Protecting quantum information requires preserving amplitudes,
relative phases, and entanglement correlations. At the logical level, errors in classical computers are only bit-flips from $0$ to $1$ (and vice versa), while quantum noise can also
change a state continuously, through a small rotation or a gradual
loss of coherence \cite{gottesman1997stabilizer}.

Measurements introduce a further difficulty. Classical error
correction can compare redundant bit values, which can ideally be
read without changing them. Directly measuring an unknown quantum
state, however, can disturb the very information we wish to protect.
How can we then obtain enough information to correct an error
without learning about the protected quantum state?

Quantum error correction resolves part of these difficulties by encoding
logical information using redundant physical qubits
\cite{Shor1995,Steane1996,CalderbankShor1996}. Entangling operations transfer
\emph{error-syndrome} \cite{gottesman1997stabilizer,PrabhuReichardt2023FewerQubits,Reichardt2021SteaneFlags,LiouLai2023SharedFlags} information to additional ancillary qubits, which
are then measured. For a correctable set of errors, these
measurements reveal information about the error without
revealing or disturbing the protected logical information,
thereby enabling recovery \cite{knill1997theory}.

The remaining difficulty is that the devices performing these
operations are themselves imperfect. How do we encode information
reliably when the act of encoding itself can introduce errors? How do we perform logical computation on
encoded states, extract error syndromes, and read out the answer
when the operations used to do them can fail? Faults can accumulate over time, and even a single fault can
propagate through subsequent gates into correlated errors on several qubits, while
an incorrect syndrome measurement can lead to an incorrect recovery. As a result, protecting an encoded
state requires more than choosing a quantum error correcting code. Since near-term physical devices are susceptible to noise, the goal is to design fault-tolerant quantum circuits
that suppress errors while keeping the resource overhead
low, even when the size of the computation increases
\cite{Shor1996FT,AharonovBenOr2008,AliferisGottesmanPreskill2006,KnillLaflammeZurek1998,NguyenPattison2025Overheads}. However, this is a daunting task that requires careful choices about
where and when to introduce redundancy across circuit components and in what sequence we should perform circuit operations to achieve error suppression
within a reasonable resource budget in additional qubits, operations, and execution time.

The central theoretical task is to quantify how fault-tolerant a quantum circuit
is under a specified noise model and to use that information
to guide its design. This raises several related but distinct
questions. Given some noise model, does there exist a recovery operation that restores the
circuit's intended operation, including its classical measurement record?
What is the smallest number of spacetime faults that are undetected by any decoder while changing the logical action of the circuit? For a specified noise
model and decoder, what is the probability that recovery fails?
And, for a finite circuit, how large can the physical error rate be
before the encoded circuit implementation ceases to outperform an
unencoded reference? These questions are captured by different
metrics. \emph{Circuit distance} is a worst-case measure which quantifies the smallest
harmful fault that is not detectable by any decoder, whereas \emph{logical failure probability} depends on the
noise, the decoder, and the number of fault patterns that cause
failure. The finite-circuit \emph{pseudo-threshold}, on the other hand, compares logical failure
probability with a specified unencoded benchmark
\cite{knill1997theory,BenyOreshkov2011Simulation,
beverland2024faulttolerancestabilizerchannels,
ChamberlandBeverland2018Flags,CrossDiVincenzoTerhal2009Pseudothreshold}. 
\begin{figure*}[t]
\centering
\begin{tikzpicture}[
  x=.94cm,y=1cm,font=\small,
  qwire/.style={draw=black!80,line width=.6pt},
  cwire/.style={draw=blue!55!black,double,double distance=1.4pt,
    line width=.4pt},
  memory/.style={draw=orange!65!black,line width=1.1pt},
  circuit/.style={draw=blue!55!black,fill=blue!4,
    rounded corners=2pt,line width=.7pt},
  noise/.style={draw=orange!65!black,fill=orange!8,
    rounded corners=2pt,line width=.7pt},
  flow/.style={-{Stealth[length=4pt]},line width=.6pt}
]
\draw[qwire] (0,0) -- (12.7,0);
\draw[noise] (1,-1.85) rectangle (1.9,.45);
\draw[circuit] (2.8,-.45) rectangle (4.5,2.35);
\draw[noise] (5.1,-1.85) rectangle (6,1.4);
\draw[circuit] (7.5,-.45) rectangle (9.5,2.35);
\draw[noise] (10.1,-1.85) rectangle (11,1.4);
\draw[draw=black!65,fill=black!3,rounded corners=2pt]
  (12.7,-.45) rectangle (13.7,3.25);
\node[anchor=east] at (0,0) {$Q_0$};
\node at (1.45,-.55) {$E_{e_0}$};
\node at (3.65,0) {$C_{o_1}$};
\node at (5.55,-.55) {$E_{e_1}$};
\node at (8.5,0) {$C_{o_2\mid o_1}$};
\node at (10.55,-.55) {$E_{e_2}$};
\node at (13.2,1.3) {$\mathcal R$};
\node[above] at (4.8,0) {$Q_1$};
\node[above] at (9.8,0) {$Q_2$};
\foreach \x in {3.65,8.5}{
  \draw[qwire] (\x-.28,.95) arc[start angle=180,end angle=0,radius=.28];
  \draw[flow] (\x,.88) -- (\x+.19,1.23);
}
\node[font=\scriptsize] at (3.65,.57) {outcome $o_1$};
\node[font=\scriptsize] at (8.5,.57) {outcome $o_2$};
\draw[cwire] (3.95,1.04) -- (5.1,1.04);
\fill[blue!55!black] (4.25,1.04) circle (1.5pt);
\draw[cwire] (4.25,1.04) -- (4.25,2) -- (7.5,2);
\node[above,font=\scriptsize,text=blue!55!black]
  at (5.55,2) {control $o_1$};
\draw[cwire] (8.8,1.04) -- (10.1,1.04);
\draw[qwire] (6,1.04) -- (6.65,1.04) -- (6.65,1.84);
\draw[qwire] (6.65,2.16) -- (6.65,2.95) -- (12.7,2.95);
\draw[qwire] (6.65,1.84)
  .. controls (6.93,1.89) and (6.93,2.11) .. (6.65,2.16);
\node[above] at (9.6,2.95) {$A_1^{\mathrm{out}}$};
\draw[qwire] (11,1.04) -- (12.7,1.04);
\node[above] at (11.85,1.04) {$A_2^{\mathrm{out}}$};
\draw[memory] (1.9,-1.4) -- (5.1,-1.4);
\draw[memory] (6,-1.4) -- (10.1,-1.4);
\node[above,text=orange!65!black] at (3.5,-1.4) {$N_1$};
\node[above,text=orange!65!black] at (8,-1.4) {$N_2$};
\draw[cwire] (13.7,2.95) -- (14.6,2.95);
\node[anchor=west] at (14.6,2.95) {record};
\draw[qwire] (13.7,0) -- (14.6,0);
\node[anchor=west] at (14.6,0) {quantum output};
\draw[flow] (11.7,-1.65) -- (14.1,-1.65);
\node[above,font=\scriptsize] at (12.9,-1.65) {time};
\end{tikzpicture}
\caption{An adaptive circuit interacting with temporally correlated noise, shown for
two rounds. The circuit implements the quantum instruments $\{C_{o_1}\}_{o_1}$ and $\{C_{o_2\mid o_1}\}_{o_2}$ where the instrument at round $2$ depends on the observed outcome $o_1$. Double blue lines
carry classical information. A copy of $o_1$ selects the second
instrument before noise acts on its retained record. The orange maps
$E_{e_r}$ act on the displayed data, record, and memory ports;
$N_1,N_2$ carry information between noise interactions. Summing over the circuit outcomes and noise Kraus operator indices gives the
physical channel. This representation is a generalization of the strategic code scenario
\cite{tanggara2024strategiccodeunifiedspatiotemporal}, where the goal is only to protect an encoded initial state. The recovery is assumed to be noiseless. }
\label{fig:adaptive-circuit-noise}
\end{figure*}

This opens up the need for one unifying framework within which all these metrics can be studied. To achieve this goal, we take a direct approach by analyzing the linear operators that define a quantum circuit and its noise model. As illustrated in
\Cref{fig:adaptive-circuit-noise}, the ideal circuit consists of
quantum instruments \cite{chiribella2009theoretical} whose branch operators may depend on earlier
outcomes, such as $C_{o_1}$ and $C_{o_2\mid o_1}$ (which depends on observed outcome $o_1$). Similarly, the noise model is a
causal process with Kraus branches $E_{e_r}$ and memory systems
$N_r$ that carry correlations across time. The final recovery $\mathcal R$
acts on the noisy measurement record and quantum output. We use the \emph{quantum-comb} formalism to represent circuit components
and noise processes as causally ordered modules that can be composed while
retaining their spatio-temporal structure
\cite{chiribella2009theoretical}. Our framework supports analysis at different
levels of abstraction, namely, \Cref{fig:adaptive-circuit-noise} depicts a
coarse-grained circuit-noise model, with noise acting on the exposed ports of complete
circuit stages, whereas \Cref{fig: component-example-circuit-noise}
depicts a fine-grained model, with noise assigned to individual
state preparations, idle qubits, gates, and measurements.

For a fixed circuit and noise model, we prove necessary and
sufficient conditions (see \Cref{Thm: Iff Fault tolerance}) for the existence of a recovery that restores the intended
quantum operation together with its measurement record. The conditions are expressed directly in terms of the circuit and noise Kraus
operators, while the proof is constructive and yields an explicit construction of a recovery linear map. An intended quantum operation is a linear map which in general does not have a unique circuit representation. Circuits that realize the same
linear map can have different fault-tolerance properties \cite{RodatzEtAl2025FaultToleranceByConstruction}. For a family of circuits implementing the same linear map, 
the circuit and its corresponding recovery operation vary together. We formulate an SDP relaxation
of this joint circuit-recovery search problem (see \Cref{prop:joint-ft-sdp}). Every circuit--recovery pair gives a feasible point, so infeasibility rules out all
target-implementing circuits in the prescribed family. Feasibility alone does not establish
that such a pair exists. The test is most useful for large or
continuously parameterized families with informative shared constraints
and manageable SDP size, while it need not be more efficient than testing each circuit separately for a small finite family.

We next specialize to non-adaptive stabilizer circuits. Let the index $\zeta\in\{0,1\}$ denote the level of abstraction in the circuit-noise model, namely $\zeta=0$ (resp. $\zeta=1$) denotes fine-grained (resp. coarse-grained) circuit-noise model. We use symplectic geometry to analyze the stabilizer symmetries of the
circuit components and derive closed-form expressions for the
\emph{detection matrix} $H_\zeta$ and the \emph{logical-effect matrix}
$L_\zeta$ \cite{beverland2024faulttolerancestabilizerchannels}. Any spacetime Pauli fault can be represented as binary symplectic vectors \cite{beverland2024faulttolerancestabilizerchannels,Derks2025designingfault,rengaswamy2018synthesis}. For a given spacetime fault label $e\in\mathcal E^{(\zeta)}$, $H_\zeta  e^{\mathsf T}$ is its detection
syndrome, and $L_\zeta  e^{\mathsf T}$ characterizes its effect on the logical information processed by the circuit. The circuit distance $d_{\mathrm{circ},\zeta}$ is therefore the
minimum weight (see \Cref{def:spacetime Pauli weight}) of a fault vector $ e\in\mathcal E^{(\zeta)}$ having \emph{trivial syndrome} ($H_\zeta  e^{\mathsf T} = 0$) and \emph{non-trivial logical effect} ($L_\zeta  e^{\mathsf T} \neq 0$). Furthermore, we show (see \Cref{thm:ft-HL-criterion}) that under stochastic Pauli noise, exact recovery is possible if and only if
faults of positive probability with the same detection syndrome have the same logical effect. The matrices thus characterize circuit distance and correctability algebraically,
although computing distance can still remain difficult
\cite{ChamberlandBeverland2018Flags,
beverland2024faulttolerancestabilizerchannels,
WebsterJacobHiggott2026Distance,KapshikarKundu2023DistanceHardness}.  As an illustration (see \Cref{app:honeycomb}), we analyze a seven-round, 18-qubit instance of the
Hastings--Haah honeycomb Floquet code
\cite{hastings2021dynamically}. Our analysis certifies that the schedule
dynamically encodes two logical qubits and has circuit distance four
under the specified internal data- and record-Pauli fault model, with
no noise on the data input-output boundaries.

We then shift our attention from the detection and logical-effect signatures of an individual fault pattern to how fault signatures arise based on the distribution of faults over circuit spacetime locations. Fix an abstraction
level $\zeta$, and choose a partition $\Pi=\{I_b\}_{b=1}^B$ of the permitted spacetime circuit fault locations $\Lambda^{(\zeta)} = \bigsqcup_{b=1}^{B}I_{b}$ into blocks $I_b$, where each block contains $N_b$ circuit locations (see \Cref{def:binary-fault-space} and \Cref{def:we-circuit-family}). 
For example, the blocks could group locations by circuit round (\emph{time}) or by
circuit region (\emph{space}). For a given fault vector $e \in \mathcal{E}^{\zeta}$, its \emph{weight profile} (see \Cref{def:weight-profile}) $\mathbf w =(w_1,\ldots,w_B)$ is a tuple where $w_b$ counts the total number of non-identity Pauli faults in block $I_b$, where the total weight of the fault is recovered by $|\mathbf w| = \sum_b w_b$. For each profile and signature, define
\[
\resizebox{0.98\columnwidth}{!}{$\displaystyle
A_{\mathbf w}^{(\zeta)}(\sigma,\lambda)
\triangleq
\bigl|\{\mathbf e\in\mathcal E^{(\zeta)}:
\operatorname{wt}_{\Pi}(\mathbf e)=\mathbf w,\;
H_\zeta\mathbf e^{\mathsf T}=\sigma,\;
L_\zeta\mathbf e^{\mathsf T}=\lambda\}\bigr|
$}.
\]
This is the number of fault patterns with syndrome $\sigma$, logical effect $\lambda$ and weight profile $\mathbf w$. All these counts can be packaged into one generating function
\begin{equation}
\label{eq: Gen function}
W_{\sigma,\lambda}^{(\zeta),\Pi}(\mathbf u,\mathbf v)
\triangleq
\sum_{\mathbf w} A_{\mathbf w}^{(\zeta)}(\sigma,\lambda)
\prod_{b=1}^{B}u_b^{N_b-w_b}v_b^{w_b},
\end{equation}
where $\mathbf u \triangleq (u_b)_{b \in \mathbb Z^+_B}$ and $\mathbf v \triangleq (v_b)_{b \in \mathbb Z^+_B}$ are formal variables. From this function, we construct non-negative integer-valued variables $U,T,\hat U, \hat T$ called \emph{weight enumerators} (see \Cref{def:Weight-enumerators-circuit}) and derive MacWilliams identities and additional constraints that every circuit in a prescribed family must satisfy. Relaxing these variables to real numbers, we obtain linear-programming (LP) distance bounds for families of non-adaptive stabilizer circuits. Similar to our SDP, infeasibility rules out every circuit in a prescribed family from achieving target distance $D$, while feasibility doesn't necessarily give a construction. Explicit knowledge about the circuit family can lead to construction of additional non-negative integer valued variables satisfying some constraints which can give stronger bounds. To illustrate this, we consider families of $t$-flag \cite{ChamberlandBeverland2018Flags,PatoEtAl2024Flag,TansuwannontEtAl2020CyclicFlags} CSS code syndrome extraction circuits prescribed by their CNOT orderings (see \Cref{subsec:flag-syndrome-extraction}). We then incorporate the $t$-flag criteria as constraints on an additional variable $C$ (see \Cref{eq:flp-resolved-counts}) and obtain LP distance bounds . We benchmark the resulting bound analytically on CSS Hamming codes and show infeasibility at distance $D\geq 4$. Furthermore, we verify that the $1$-flag construction of Chao and Reichardt \cite{ChaoReichardt2018TwoQubits} provides a feasible point at distance $D=3$.

The weight enumerators retain information about fault-tolerance that circuit distance alone cannot capture.
To see this, consider independent and identically distributed single-qubit depolarizing noise. Let $p$ be the
probability of a non-identity Pauli fault at each circuit location, with $X$,
$Y$, and $Z$ each occurring with probability $p/3$. If $d_{\mathrm{circ},\zeta}$ is the
circuit distance, the total number of weight-$d_{\mathrm{circ},\zeta}$ faults with zero syndrome
and nonzero logical effect is
\[
a_{d_{\mathrm{circ},\zeta}} \triangleq \sum_{|\mathbf w|=d_{\mathrm{circ},\zeta}}\ \sum_{\lambda\ne 0}
A_{\mathbf w}^{(\zeta)}(0,\lambda).
\]
For small $p$, we then get the probability of undetectable logical failure up to leading order in $p$ as
\[
P_{\mathrm{ud}}(p)
\triangleq
a_{d_{\mathrm{circ},\zeta}}\left(\frac{p}{3}\right)^{d_{\mathrm{circ},\zeta}}+O(p^{d_{\mathrm{circ},\zeta}+1}).
\]
Hence, two circuits $C,C'$ with the same distance can have different
low-noise undetected-failure probabilities if their values of $a_{d_{\mathrm{circ},\zeta}},a'_{d_{\mathrm{circ},\zeta}}$
differ. Given a deterministic decoder $\delta$ (see \Cref{def:iid-decoder}), the generating function in \Cref{eq: Gen function} also enable us to calculate the exact failure probability $P_{\mathrm{fail},\delta}(p)$ of it failing to correct an error. To assess practically motivated error suppression for quantum circuits, we use the single-qubit physical error rate $p$ as a reference benchmark and require,
for a prescribed integer $j\geq0$, that
\begin{equation}
    P_{\mathrm{fail},\delta}(p) \leq 10^{-j}p. \label{eq:Intro-pseudo}
\end{equation}
For a given circuit and decoder $\delta$, and fixed $j\geq0$
and $0<p_{\max}\leq1$, the pseudo-threshold
\cite{ChamberlandBeverland2018Flags} relative to the target
$10^{-j}p$ is the supremum of all $x\in[0,p_{\max}]$ such that
\Cref{eq:Intro-pseudo} holds for every $0<p<x$
(see \Cref{def:iid-pseudothreshold}). Finally, incorporating an additional constraint to the the distance linear program we obtain bounds on pseudo-threshold for any given family of non-adaptive stabilizer circuits (see \Cref{prop:iid-performance-lp}).

\subsection{Paper organization and technical overview}

After introducing notation and preliminaries in
\Cref{sec:preliminaries}, we define the operator-algebraic circuit and noise
representations and develop the composition of their stabilizer
relations in \Cref{sec:general-quantum-circuits}.
The main results and applications are organized as follows:

\begin{itemize}
    \item \textbf{Fault tolerance of general quantum circuits.}
    In \Cref{sec: Fault tolerance}, we establish conditions
    for recovering the complete ideal circuit operation,
    including its quantum output and measurement record.
    Let $\mathcal N_0$ denote the channel implemented by
    the noiseless circuit and $\mathcal N$ its noisy version.
    \Cref{Thm: Iff Fault tolerance} gives necessary and
    sufficient Kraus-operator conditions for the existence
    of a completely positive, trace-preserving recovery
    $\mathcal R$ satisfying
    \begin{equation}
        \mathcal R\circ\mathcal N=\mathcal N_0.
        \label{eq:overview-recovery}
    \end{equation}
    Our result extends the exact recoverability condition
    for circuits implementing quantum memories
    \cite{tanggara2024strategiccodeunifiedspatiotemporal}
    to arbitrary quantum circuits.

    \textbf{Application.}
    Different circuits implementing the same quantum channel
    may have different fault-tolerance properties under a
    given noise model
    \cite{RodatzEtAl2025FaultToleranceByConstruction}.
    For a prescribed family implementing the same target
    channel, we therefore ask whether some circuit admits
    a recovery that restores its ideal operation under the
    specified noise model.
    We formulate an SDP relaxation of this joint
    circuit--recovery search problem
    (\Cref{prop:joint-ft-sdp}).
    Every valid circuit--recovery pair yields a feasible
    point, so infeasibility rules out every such pair
    within the family. Feasibility alone does not establish
    that such a pair exists.

    \item \textbf{Fault tolerance of non-adaptive stabilizer circuits.}
    In \Cref{sec: Fault tolerance of stabilizer circuits},
    we specialize to non-adaptive stabilizer circuits.
    Each nonzero circiut Kraus operator maps an input stabilizer
    code space to an output stabilizer code space; while the
    circuit may also measure logical Pauli operators
    of the input code
    (\Cref{Cor: Logical dim Pres,prop:common-relation-extraction}).
    We compose the Pauli symmetries of individual circuit
    components to obtain symmetries of the entire circuit
    valid for every possible measurement-outcome sequence
    (\Cref{prop: stabilizers of composition,prop:outcome-independent-stabilizer-relations}).
    These symmetries are described by a binary linear
    space, which we call the \emph{common relation space}
    $\mathcal W_{\mathbf C}$
    (\Cref{def:common-stabilizer-relations}).

    These relations serve four operational roles.
    Two provide ways to detect faults.
    First, certain parities of the circuit's measurement
    outcomes have fixed values in the noiseless setting.
    These are called \emph{detectors}
    \cite{Derks2025designingfault}.
    Second, noiseless measurements of output-code
    stabilizers, combined with the circuit's outcomes,
    give further parities whose values are fixed in the
    noiseless setting.
    A violation of either kind of parity detects a fault
    (\Cref{prop:common-relation-extraction,tab:ft-operational-signatures}).
    These violations are captured by the
    \emph{detection matrix} $H_\zeta$
    \cite{kliuchnikov2023stabilizercircuitverification,beverland2024faulttolerancestabilizerchannels},
    defined in \Cref{def:ft-signature-matrices}.
    Here $\zeta=0$ and $\zeta=1$ specify the fine-grained
    and coarse-grained circuit noise models, respectively.
    For a spacetime Pauli fault vector $e$
    (\Cref{def:binary-fault-space}), the vector
    $\sigma\triangleq H_\zeta e^{\mathsf T}$ is its
    \emph{detection syndrome}.
    A zero entry means that the corresponding detection
    relation is preserved; a nonzero entry records its
    violation.
    Thus $H_\zeta e^{\mathsf T}=0$ precisely when the
    fault is undetected by these checks.
    We derive closed-form expressions for these signatures
    at both levels in
    \Cref{prop:ft-level0-signature,prop:ft-level1-signature}.

    The other two roles describe the circuit's intended
    logical operation.
    Some relations identify outcome parities that report
    the outcomes of logical Pauli measurements on the input
    code. Others connect input and output logical Pauli
    operators, specifying how the remaining encoded quantum
    information is carried through the circuit
    (\Cref{prop:common-relation-extraction,prop:outcome-independent-stabilizer-relations}).
    Violations of the chosen logical relations are captured
    by the \emph{logical-effect matrix} $L_\zeta$
    \cite{kliuchnikov2023stabilizercircuitverification,beverland2024faulttolerancestabilizerchannels}.
    The vector $\lambda\triangleq L_\zeta e^{\mathsf T}$ is the
    fault's \emph{logical-effect signature}
    (\Cref{def:ft-signature-matrices}).
    Its entries indicate which of these relations are
    violated, with closed-form expressions obtained from
    \Cref{prop:ft-level0-signature,prop:ft-level1-signature}. For a zero-syndrome fault, $\lambda\ne0$ identifies a harmful undetected logical error.
    For a detected fault, $\lambda$ records violations relative to the chosen logical representatives; its value must be interpreted together with the syndrome and a correction convention.
    If both signatures vanish, the ideal circuit channel
    is unchanged
    (\Cref{rem:ft-operational-signatures,prop:ft-matrix-indicators,tab:ft-matrix-meaning}).
    More generally, among faults with the same detection
    syndrome, the logical-effect signature distinguishes
    their effects on the circuit channel
    (\Cref{lem:ft-signature-completeness}).

    Under stochastic Pauli noise
    (\Cref{def:stochastic-pauli-noise}), we show that exact
    recovery is possible precisely when positive-probability
    faults with the same detection syndrome have the same
    logical-effect signature
    (\Cref{thm:ft-HL-criterion}).
    Circuit distance is the smallest number of faulty
    spacetime locations that can produce an undetected
    nontrivial logical effect
    (\Cref{def:ft-circuit-distance}):
    \begin{equation}
        \begin{aligned}
        d_{\mathrm{circ},\zeta}
        =\min\bigl\{\operatorname{wt}_{\mathrm{ST}}(e):{}
        &e\in\mathcal V^{(\zeta)},\\
        &H_\zeta e^{\mathsf T}=0,\;
        L_\zeta e^{\mathsf T}\ne0\bigr\}.
        \end{aligned}
        \label{eq:overview-distance}
    \end{equation}
    Here $\mathcal V^{(\zeta)}\subseteq\mathcal E^{(\zeta)}$
    is the allowed fault subspace, and
    $\operatorname{wt}_{\mathrm{ST}}(e)$ counts the faulty
    spacetime locations in $e$
    (\Cref{def:binary-fault-space,def:spacetime-pauli-weight}).

    \textbf{Application.}
    In \Cref{app:honeycomb}, we analyze a seven-round,
    18-qubit instance of the Hastings--Haah honeycomb
    Floquet code \cite{hastings2021dynamically}.
    Our analysis certifies that the measurement schedule
    dynamically encodes two logical qubits and has circuit
    distance four under the specified internal data- and
    record-Pauli fault model, with ideal data boundaries
    (\Cref{prop:honeycomb-relations,app:honeycomb-circuit-distance}).

    \item \textbf{Weight enumerators and linear programming bounds
    for non-adaptive stabilizer circuits.} In \Cref{sec:weight-enumerators}, we use signature-resolved circuit weight enumerators to formulate bounds across prescribed circuit families.
    For a partition $\Pi$ of circuit spacetime locations,
    we construct a multivariate polynomial
    $W_{\sigma,\lambda}^{(\zeta),\Pi}(\mathbf u,\mathbf v)$
    that counts spacetime faults by weight profile,
    detection syndrome $\sigma$, and logical effect
    $\lambda$ (\Cref{def:Weight-enumerators-circuit}).
    We count zero-syndrome faults and harmless
    zero-syndrome faults, and constrain these counts
    using the MacWilliams identities we derive
    (\Cref{prop:we-signature-macwilliams}) and
    boundary-code constraints
    (\Cref{prop:lp-input-boundary}).
    In \Cref{sec:linear-programming-bounds}, these
    constraints yield an LP whose infeasibility rules
    out circuit distance at least $D$ for an entire
    prescribed family of non-adaptive stabilizer circuits
    (\Cref{thm:lp-obstruction}).
    Feasibility does not necessarily give a construction.
    Under the stated boundary-fault assumptions, the
    boundary constraints also yield Singleton-like bounds
    (\Cref{cor:boundary-singleton}).

For a decoder that applies no correction at zero detection syndrome (\Cref{def:iid-decoder}), we consider i.i.d.\ single-qubit depolarizing noise with noiseless output checks and recovery.
    We derive an additional constraint, which we append
    to the distance-bounding LP to obtain upper bounds
    on circuit \emph{pseudo-thresholds}
    (\Cref{def:iid-pseudothreshold,prop:iid-performance-lp}).

    \textbf{Application.}
    In \Cref{subsec:flag-syndrome-extraction}, we derive
    LP distance bounds for flag-based syndrome-extraction
    circuits prescribed by their CNOT orderings
    \cite{ChaoReichardt2018FewQubits,PatoEtAl2024Flag,ChamberlandBeverland2018Flags}.
    We express the $t$-flag criterion as linear
    constraints on operation-fault counts and incorporate
    them into the LP
    (\Cref{prop:flag-forbidden-sectors,thm:flp-bound}).
    We benchmark the resulting bound analytically on
    CSS Hamming codes, including data boundary faults,
    and show infeasibility for $D\geq4$
    (\Cref{prop:ham-boundary-bound}).
    Furthermore, we verify that the $1$-flag construction
    of Chao and Reichardt
    \cite{ChaoReichardt2018TwoQubits}
    attains distance $D=3$ and supplies a feasible point,
    establishing tightness
    (\Cref{thm:ham-tight-benchmark}).
\end{itemize}

We discuss possibilities of further developments and applications in
\Cref{sec:discussion-outlook}.
The causal normalization constraints for the circuit
and noise representations are derived in
\Cref{app:causal-normalisation}.
\subsection{Related prior work}

Our circuit and noise representations build on the
quantum-comb formalism for causal quantum networks
\cite{chiribella2009theoretical,
GutoskiWatrous2007Strategies}.
Strategic codes apply this formalism to adaptive quantum
error correction under spatially and temporally correlated
noise
\cite{tanggara2024strategiccodeunifiedspatiotemporal}.
In this work, the recovery goal is the complete ideal circuit
operation, including its quantum output and measurement
record. Our exact-recovery criterion is a constructive
Kraus-operator formulation of the channel-recovery duality
of B\'eny and Oreshkov
\cite{BenyOreshkov2010,BenyOreshkov2011Simulation},
with the Knill--Laflamme condition as the encoded-storage
case \cite{knill1997theory}.
SDP optimization of recovery is also established
\cite{Fletcher2007,
Kosut2008} and joint encoder--decoder optimization, product lifting, and partial-transpose constraints are also treated by constrained bilinear SDP hierarchies \cite{BertaEtAl2022ConstrainedBilinear}.
Our synthesis relaxation applies these methods to circuit-comb variables, prescribed circuit families, and exact recovery of a full quantum--classical target channel.

For stabilizer circuits, outcome relations, logical action,
and fault distance have been studied through spacetime codes,
circuit verification, stabilizer channels, and detector error
models
\cite{DelfossePaetznick2023,
kliuchnikov2023stabilizercircuitverification,
beverland2024faulttolerancestabilizerchannels,
Derks2025designingfault,
Bacon2017,
Gidney2021Stim,
BlackwellHaah2025,
WebsterJacobHiggott2026Distance}.
In particular, the characterization of distance through
check and logical-effect matrices is already present in
\cite{beverland2024faulttolerancestabilizerchannels}.
We derive explicit expressions for these matrices from the
Pauli symmetries of circuit components, using a common
relation space that also describes detectors, measured
observables, output checks, and preserved logical operators.
Related approaches to composition and circuit design include
operator matching in tensor-network codes
\cite{cao2022quantum,
shen2025quantum}, circuit transformations that preserve
fault equivalence
\cite{RodatzEtAl2025FaultToleranceByConstruction}, and
composable detector error models
\cite{kliuchnikov2026composing}.

Our enumerator bounds build on quantum weight enumerators,
MacWilliams identities, and their use in linear programming
\cite{ShorLaflamme1997MacWilliams,Rains1998QuantumWeightEnumerators,
MacWilliams1963Weights,
Delsarte1973AssociationSchemes,
GluesingLuerssen2015Fourier,
LaiAshikhmin2018Split,
AshikhminLitsyn1999UpperBounds}.
Tensor and circuit enumerators extend these methods to
composite codes and noisy circuits
\cite{CaoLackey2024TensorEnumerators,
KuklianskyLackey2024CircuitEnumerators}, while
channel-dependent enumerators give bounds for codes under
specified noise channels \cite{OuyangLai2022ChannelLP}.
Here, we resolve fault counts by spacetime weight profile,
detection syndrome, and logical effect, and combine their MacWilliams identities with circuit and boundary-code
constraints. This yields LP bounds across prescribed circuit
families, including performance bounds under i.i.d.
depolarizing noise and the stated decoder assumptions.

Finally, our flag-circuit application builds on established
flag constructions and optimization methods
\cite{ChaoReichardt2018TwoQubits,
ChamberlandBeverland2018Flags,PatoEtAl2024Flag,
chao2020flag,
PrabhuReichardt2023FewerQubits,
AnkerMarvian2024ClassicalFlags,
CrigerEtAl2026ILP}.
We express the $t$-flag condition as additional linear
constraints on fault counts for families of CNOT orderings,
and use the Chao--Reichardt one-flag construction as an
analytical CSS Hamming-code benchmark.


\section{Notation and preliminaries}
\label{sec:preliminaries}

All Hilbert spaces are finite dimensional and carry fixed
orthonormal bases, taken to be computational bases for qubits.
We write $\L(\mathcal H_0,\mathcal H_1)$ for the space of
linear operators from $\mathcal H_0$ to $\mathcal H_1$, and
$\L(\mathcal H)\triangleq\L(\mathcal H,\mathcal H)$.
The symbols ${}^{\mathsf T}$, ${}^{\star}$ and ${}^{\dagger}$
denote transpose, entry-wise complex conjugation and adjoint,
respectively, in the fixed bases. For a nonnegative integer
$\ell$, let $\mathbb Z^+_{\ell}\triangleq\{1,\ldots,\ell\}$ and $[\ell] \triangleq \{0,\cdots,\ell\}$. 

Binary vectors are row vectors unless explicitly displayed
as columns. Binary label arithmetic, ranks and kernels are
over $\mathbb F_2$. For a binary matrix
$B\in\mathbb F_2^{m\times n}$, we use the left-kernel convention
\[
    \ker B\triangleq\{u\in\mathbb F_2^m:uB=0\}.
\]
Consequently, for a row vector $v\in\mathbb F_2^n$,
\[
    Bv^{\mathsf T}=0
    \quad\Longleftrightarrow\quad
    vB^{\mathsf T}=0
    \quad\Longleftrightarrow\quad
    v\in\ker B^{\mathsf T}.
\]
This convention applies to binary matrices. For a linear
operator $C$ between Hilbert spaces, $\ker C$ denotes the
subspace of kets $|\psi\rangle$ satisfying $C|\psi\rangle=0$.
For $V\subseteq\mathbb F_2^n$, its ordinary binary dual is
\[
    V^\perp\triangleq
    \{u\in\mathbb F_2^n:uv^{\mathsf T}=0
      \text{ for every }v\in V\}.
\]
We distinguish this dual from the symplectic dual defined below.

\subsection{Pauli and Clifford operators}

\begin{definition}[Pauli operators]
\label{def: symplectic Pauli definition}
Let $X\ket{j}=\ket{j+1}$ and $Z\ket{j}=(-1)^j\ket{j}$
for $j\in\mathbb F_2$, with addition modulo two, and let
$Y=\iota XZ$, where $\iota$ is the imaginary unit.
For $u=[x,z]\in\mathbb F_2^{2n}$, define
\begin{equation}
    \begin{aligned}
        D(u)&\triangleq\bigotimes_{j=1}^n X^{x_j}Z^{z_j},\\
        P(u)&\triangleq\iota^{\tau(u)}D(u),
        \qquad \tau(u)\triangleq\sum_{j=1}^n x_jz_j.
    \end{aligned}
\end{equation}
The sum defining $\tau$ is evaluated over the integers.
Thus the pairs $(0,0),(1,0),(0,1),(1,1)$ represent
$I,X,Z,Y$, respectively, and $P(u)^\dagger=P(u)$ and
$P(u)^2=I$.
The Heisenberg--Weyl group and its Hermitian subset are
\[
    \begin{aligned}
        \mathrm{HW}(\mathbb C^{2^n})
        &\triangleq\{\iota^aP(u):a\in\mathbb Z_4,
                      \ u\in\mathbb F_2^{2n}\},\\
        \mathcal P(\mathbb C^{2^n})
        &\triangleq\{(-1)^aP(u):a\in\mathbb F_2,
                      \ u\in\mathbb F_2^{2n}\}.
    \end{aligned}
\]
We write $\overline{\mathcal P}(\mathbb C^{2^n})$ for the
set of Hermitian Paulis modulo sign, represented by the
canonical operators $P(u)$. The Hermitian subset
$\mathcal P(\mathbb C^{2^n})$ is not a group, since the product
of two anticommuting Hermitian Paulis is not Hermitian.
\end{definition}

For $u=[x,z],v=[x',z']\in\mathbb F_2^{2n}$, define
the symplectic inner product by
\begin{equation}
    \begin{aligned}
        \langle u,v\rangle_s
        &\triangleq u\Omega_n v^{\mathsf T}
         =x(z')^{\mathsf T}+z(x')^{\mathsf T},\\
        \Omega_n&\triangleq
            \begin{bmatrix}0&I_n\\I_n&0\end{bmatrix},
    \end{aligned}
\end{equation}
where $\Omega_n$ denotes the \emph{symplectic form} \cite{rengaswamy2018synthesis}. The symplectic form is nondegenerate because $\Omega_n$ is invertible.
For a subspace $V\subseteq\mathbb F_2^{2n}$, its
symplectic dual is
\[
    V^{\perp_s}\triangleq
    \{u\in\mathbb F_2^{2n}:\langle u,v\rangle_s=0
      \text{ for every }v\in V\}.
\]
Since multiplication by $\Omega_n$ is invertible,
$\dim V^{\perp_s}=2n-\dim V$.
The subspace $V$ is called \emph{isotropic} if its vectors are pairwise
symplectically orthogonal, equivalently
$V\subseteq V^{\perp_s}$.

\begin{lemma}[Symplectic Pauli algebra]
\label{prop: Symplectic Pauli Algebra}
For $u=[x,z],v=[x',z']\in\mathbb F_2^{2n}$,
\begin{equation}
    \begin{aligned}
        P(u)P(v)&=(-1)^{\langle u,v\rangle_s}P(v)P(u),\\
        P(u)P(v)&=\iota^{\kappa(u,v)}P(u+v),
    \end{aligned}
\end{equation}
where
\begin{equation}
    \begin{aligned}
        \kappa(u,v)\triangleq{}&
        \tau(u)+\tau(v)+2\sum_jz_jx'_j -\tau(u+v)\pmod4.
    \end{aligned}
    \label{eq:kappa}
\end{equation}
Here $u+v$ is evaluated modulo two before applying $\tau$.
Moreover, $\kappa(u,v)\bmod2=\langle u,v\rangle_s$.
Thus $P(u)$ and $P(v)$ commute precisely when
$\langle u,v\rangle_s=0$. In that case their product
sign is $(-1)^{\kappa(u,v)/2}$, where the half of an
even representative is taken modulo two.
\end{lemma}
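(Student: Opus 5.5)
The plan is to reduce everything to single-qubit computations, using the tensor-product structure of $D(u)$, and then to collect the phases. First I will establish the basic single-qubit relation $ZX=-XZ$. It follows from the definitions: $ZX\ket{j}=(-1)^{j+1}\ket{j+1}$, while $XZ\ket{j}=(-1)^j\ket{j+1}$. From this, for bits $a,b,c,d$ I get
\[
X^aZ^bX^cZ^d=(-1)^{bc}X^{a+c}Z^{b+d},
\]
where the exponents $a+c$ and $b+d$ are first taken over the integers. Since $X^2=Z^2=I$, they can be reduced modulo two. Tensoring over the $n$ qubits then gives
\[
D(u)D(v)=(-1)^{\sum_j z_jx'_j}D(u+v)=\iota^{2\sum_j z_jx'_j}D(u+v).
\]
Multiplying by the prefactors $\iota^{\tau(u)}\iota^{\tau(v)}$ and writing $D(u+v)=\iota^{-\tau(u+v)}P(u+v)$ yields $P(u)P(v)=\iota^{\kappa(u,v)}P(u+v)$, with $\kappa$ exactly as in \eqref{eq:kappa}. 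Only the exponent of $\iota$ matters modulo $4$, so $\kappa$ is well defined in $\mathbb Z_4$.

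Next I will derive the commutation relation by the same swapping argument. Computing $D(u)D(v)$ and $D(v)D(u)$ from the same formula gives
\[
D(u)D(v)=(-1)^{\sum_j (z_jx'_j+x_jz'_j)}D(v)D(u)=(-1)^{\langle u,v\rangle_s}D(v)D(u).
\]
The scalar prefactors commute, so the same relation holds for $P$.

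The step needing the most care is the claim $\kappa(u,v)\bmod 2=\langle u,v\rangle_s$, because $\tau(u+v)$ is computed after reducing modulo two. Working modulo $2$, $\tau(u+v)\equiv\sum_j (x_j+x'_j)(z_j+z'_j)$. This is legitimate because reducing each factor modulo two does not change the parity of the product. Expanding, the terms $\sum_j x_jz_j+\sum_j x'_jz'_j$ cancel against $\tau(u)+\tau(v)$, the even term $2\sum_j z_jx'_j$ vanishes, and what remains is $\sum_j(x_jz'_j+x'_jz_j)=\langle u,v\rangle_s$ modulo $2$.

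For the last assertions:
\begin{itemize}
\item The commutation statement follows because a nonzero operator $A$ with $A=-A$ is impossible. Hence $P(u)$ and $P(v)$ commute if and only if $\langle u,v\rangle_s=0$.
\item In that case $\kappa$ is even, so $\iota^{\kappa}=(-1)^{\kappa/2}$. This value depends only on $\kappa$ modulo $4$, which justifies taking half of an even representative modulo two.
\end{itemize}
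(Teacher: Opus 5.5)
Your proof is correct and follows the same route as the paper: you derive $D(u)D(v)=(-1)^{\sum_j z_jx'_j}D(u+v)$ from $ZX=-XZ$, swap $u$ and $v$ to get the commutation relation, insert the $\iota^{\tau}$ phases to obtain $\kappa$, and reduce $\tau(u+v)$ modulo two to identify $\kappa\bmod 2$ with $\langle u,v\rangle_s$. You also spell out the single-qubit check and the well-definedness of $\kappa/2$ modulo two, which the paper leaves implicit.
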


\begin{proof}
The relation $ZX=-XZ$ gives
\[
    D(u)D(v)=(-1)^{\sum_jz_jx'_j}D(u+v).
\]
Exchanging $u$ and $v$ yields the commutation relation.
Inserting the phases in the definition of $P$ gives
\eqref{eq:kappa}. Finally,
\[
    \tau(u+v)
    =\tau(u)+\tau(v)+\sum_j(x_jz'_j+x'_jz_j) \pmod 2,
\]
so $\kappa(u,v)\bmod2=\langle u,v\rangle_s$.
\end{proof}

A Clifford unitary $U$ normalizes the Heisenberg--Weyl
group, meaning that
$U\mathrm{HW}(Q)U^\dagger=\mathrm{HW}(Q)$.
The Clifford unitaries on a qubit register $Q$ form the
group $\operatorname{Cliff}(Q)$.
We use the following binary representation
\cite{rengaswamy2018synthesis}.

\begin{proposition}[Symplectic representation of Clifford unitaries]
\label{Prop: Symplectic Clifford}
For every $U\in\operatorname{Cliff}(\mathbb C^{2^n})$,
there are a binary matrix
$F_U\in\mathbb F_2^{2n\times2n}$ and a sign map
$s_U:\mathbb F_2^{2n}\to\mathbb F_2$ such that
\begin{equation}
    \begin{aligned}
        UP(u)U^\dagger&=(-1)^{s_U(u)}P(uF_U),\\
        F_U\Omega_nF_U^{\mathsf T}&=\Omega_n.
    \end{aligned}
\end{equation}
The first $n$ rows of $F_U$ are the binary labels of the
images of $X_1,\ldots,X_n$, and the last $n$ rows are
those of $Z_1,\ldots,Z_n$. The sign map need not be linear.
\end{proposition}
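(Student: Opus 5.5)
The plan is to read off $F_U$ and $s_U$ from the action of $U$ on a generating set, and then to extend to all of $\mathbb F_2^{2n}$ using the product rule of \Cref{prop: Symplectic Pauli Algebra}.

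\textbf{Step 1: existence of $F_U$ and $s_U$.} Fix $u\in\mathbb F_2^{2n}$. Since $U$ normalizes $\mathrm{HW}(\mathbb C^{2^n})$, we have $UP(u)U^\dagger=\iota^{a}P(v)$ for some $a\in\mathbb Z_4$ and some $v$. The operator $UP(u)U^\dagger$ is Hermitian and squares to $I$. Since $P(v)$ is Hermitian, this forces $\iota^{a}$ to be real, so $\iota^a=\pm1$. Define $f(u)\triangleq v$ and $s_U(u)$ by $\iota^a=(-1)^{s_U(u)}$. The label $v$ is unique, because the operators $D(v)$ are linearly independent.

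\textbf{Step 2: linearity of $f$.} Apply conjugation by $U$ to the identity $P(u)P(w)=\iota^{\kappa(u,w)}P(u+w)$. The left side becomes
\[
(-1)^{s_U(u)+s_U(w)}P(f(u))P(f(w)),
\]
which is proportional to $P(f(u)+f(w))$. The right side is proportional to $P(f(u+w))$. Uniqueness of labels then gives $f(u+w)=f(u)+f(w)$, so $f$ is $\mathbb F_2$-linear. Hence $f(u)=uF_U$, where the $j$-th row of $F_U$ is $f(e_j)$. With the ordering $u=[x,z]$, the first $n$ basis vectors label $X_1,\ldots,X_n$ and the last $n$ label $Z_1,\ldots,Z_n$, which gives the stated row description.

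\textbf{Step 3: the symplectic condition.} Conjugation preserves commutation relations. Using the commutation rule of \Cref{prop: Symplectic Pauli Algebra} on both sides, we get
\[
\langle uF_U,wF_U\rangle_s=\langle u,w\rangle_s \quad\text{for all } u,w,
\]
that is, $uF_U\Omega_nF_U^{\mathsf T}w^{\mathsf T}=u\Omega_nw^{\mathsf T}$. Taking $u,w$ to range over standard basis vectors gives $F_U\Omega_nF_U^{\mathsf T}=\Omega_n$.

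\textbf{Step 4: the sign map.} Comparing phases in Step 2 yields
\[
s_U(u+w)=s_U(u)+s_U(w)+\tfrac12\bigl(\kappa(uF_U,wF_U)-\kappa(u,w)\bigr)\pmod 2
\]
in the commuting case. The correction term is generally nonzero; a single Hadamard $H$ already shows this. Since $HXH=Z$ and $HZH=X$, we have $s_H(X)=s_H(Z)=0$, while $HYH=-Y$ gives $s_H(Y)=1$. So $s_U$ need not be linear, and one example suffices for this remark.

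\textbf{Main obstacle.} The only delicate point is the phase bookkeeping in Steps 1–2: showing that the image carries a real sign and a unique label. This reduces to Hermiticity and the linear independence of the $D(v)$. The rest is direct.
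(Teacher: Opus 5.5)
Your proof is correct and follows the same route as the paper. You show that the conjugated Hermitian Pauli has a unique label and a real sign, you derive additivity of the induced label map from the product rule, and you obtain the symplectic identity from preservation of commutation evaluated on standard basis vectors. Your Hadamard computation $HYH=-Y$ makes explicit the non-linearity of $s_U$, which the paper only asserts; also, your Step~4 recurrence holds for all pairs $u,w$, not only commuting ones, because $\kappa(uF_U,wF_U)\equiv\kappa(u,w)\pmod 2$ follows from the symplectic condition.
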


\begin{proof}
Conjugation by $U$ sends a Hermitian Pauli to a Hermitian
Pauli. Its label and sign are therefore uniquely determined.
Since multiplication of Pauli labels is binary addition,
conjugation induces an additive map $u\mapsto uF_U$;
its values on the standard basis give the rows of $F_U$.
Conjugation also preserves commutation, so
\[
    uF_U\Omega_nF_U^{\mathsf T}v^{\mathsf T}
    =u\Omega_n v^{\mathsf T}
    \qquad\text{for all }u,v.
\]
Taking standard basis vectors proves the matrix identity.
\end{proof}

Matrices satisfying $F\Omega_nF^{\mathsf T}=\Omega_n$
are called \emph{symplectic matrices} and they form the group
$\mathrm{Sp}(2n,\mathbb F_2)$.

\subsection{Vectorization}

Following \cite{chiribella2009theoretical,GutoskiWatrous2007Strategies},
we recall the vectorization identities and Choi representation
used below to describe quantum operations and their composition.
All vectorizations and transposes are taken with respect to
fixed orthonormal bases.

\begin{definition}[Vectorization]
\label{def: Vectorized identity}
For $C:\mathcal H_0\to\mathcal H_1$, define
\begin{equation}
    \begin{aligned}
        \vket{C}&\triangleq\sum_{i,j}C_{ij}
            \ket{i}_{\mathcal H_1}\otimes\ket{j}_{\mathcal H_0},\\
        \vket{I_{\mathcal H}}&=\sum_j\ket{j}\otimes\ket{j}.
    \end{aligned}
\end{equation}
Vectorization is a linear bijection from
$\L(\mathcal H_0,\mathcal H_1)$ to
$\mathcal H_1\otimes\mathcal H_0$, with the output factor
first. Its adjoint is denoted by $\vbra{C}$.
These vectors are unnormalized. In particular,
\[
    \vbra{I_{\mathcal H}}\vket{I_{\mathcal H}}
    =\dim\mathcal H.
\]
\end{definition}

\begin{remark}
\label{remark: choi inverse}
Vectorization also gives the Choi representation of a
completely positive map. If
$\Phi(\rho)=\sum_a C_a\rho C_a^\dagger$, where
$C_a:\mathcal H_0\to\mathcal H_1$, its unnormalized
Choi operator is
\[
    \begin{aligned}
        J_\Phi
        &\triangleq(\Phi\otimes\mathrm{id}_{\mathcal H_0})
             \bigl(\vket{I_{\mathcal H_0}}
                   \vbra{I_{\mathcal H_0}}\bigr)\\
        &=\sum_a\vket{C_a}\vbra{C_a}.
    \end{aligned}
\]
Here $\mathrm{id}_{\mathcal H_0}$ is the identity map
on $\L(\mathcal H_0)$. Conversely, the map is recovered by
\[
    \Phi(\rho)
    =
    \operatorname{Tr}_{\mathcal H_0}
    \!\left[
        J_\Phi
        (I_{\mathcal H_1}\otimes\rho^{\mathsf T})
    \right],
    \qquad \rho\in\L(\mathcal H_0),
\]
where the transpose is taken in the basis used for vectorization.
The correspondence $\Phi\mapsto J_\Phi$ is the Choi
isomorphism for linear maps \cite{Choi1975}.
\end{remark}

\begin{lemma}[Vectorization identities]
\label{lemma: Vectorization identities}
For compatible operators,
\begin{equation}
    \begin{aligned}
        \vket{C}
        &=(C\otimes I_{\mathcal H_0})\vket{I_{\mathcal H_0}}\\
        &=(I_{\mathcal H_1}\otimes C^{\mathsf T})
            \vket{I_{\mathcal H_1}},\\
        (A\otimes B^{\mathsf T})\vket{C}&=\vket{ACB},\\
        \langle\langle A\vket{B}&=\operatorname{Tr}(A^\dagger B).
    \end{aligned}
    \label{eq:prelim-vectorisation-identities}
\end{equation}
In the last identity, $A$ and $B$ have the same domain
and codomain. In particular, for $A,B\in\L(\mathcal H)$,
\[
    \vbra{I_{\mathcal H}}(A\otimes B^{\mathsf T})
       \vket{I_{\mathcal H}}=\operatorname{Tr}(AB).
\]
Contracting an identity ket with an identity bra gives
\[
    \begin{aligned}
        (I_{\mathcal H}\otimes\vbra{I_{\mathcal H}})
        (\vket{I_{\mathcal H}}\otimes I_{\mathcal H})
            &=I_{\mathcal H},\\
        (\vbra{I_{\mathcal H}}\otimes I_{\mathcal H})
        (I_{\mathcal H}\otimes\vket{I_{\mathcal H}})
            &=I_{\mathcal H}.
    \end{aligned}
\]
\end{lemma}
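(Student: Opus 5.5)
The plan is to expand every operator in the fixed computational bases and compare matrix entries directly; each identity then reduces to a short index calculation. First, I will establish the two expressions for $\vket{C}$. Using $\vket{I_{\mathcal H_0}}=\sum_j\ket{j}\otimes\ket{j}$, applying $C\otimes I_{\mathcal H_0}$ gives $\sum_j C\ket{j}\otimes\ket{j}=\sum_{i,j}C_{ij}\ket{i}\otimes\ket{j}$, which is the definition. For the second form, $(I_{\mathcal H_1}\otimes C^{\mathsf T})\vket{I_{\mathcal H_1}}=\sum_i\ket{i}\otimes C^{\mathsf T}\ket{i}=\sum_{i,j}(C^{\mathsf T})_{ji}\ket{i}\otimes\ket{j}$. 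Since $(C^{\mathsf T})_{ji}=C_{ij}$, this agrees with the first form. The point to watch is that the transpose is taken in the same basis used for vectorization, so that no complex conjugation enters.

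Next comes $(A\otimes B^{\mathsf T})\vket{C}=\vket{ACB}$. I will substitute the first form of $\vket{C}$ to get $(AC\otimes B^{\mathsf T})\vket{I}$. I then use the ``transpose trick'' $(I\otimes B^{\mathsf T})\vket{I}=(B\otimes I)\vket{I}$, which is exactly the equality of the two forms just proved, applied with $C=B$ (both sides equal $\vket{B}$). This gives $(ACB\otimes I)\vket{I}=\vket{ACB}$. Dimensions are consistent when $B:\mathcal H_0'\to\mathcal H_0$, $C:\mathcal H_0\to\mathcal H_1$ and $A:\mathcal H_1\to\mathcal H_1'$.

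For the inner product, I compute $\vbra{A}\vket{B}=\sum_{i,j}A_{ij}^{\star}B_{ij}=\operatorname{Tr}(A^\dagger B)$, using orthonormality of the product basis. The special case follows by applying the previous identity to get $(A\otimes B^{\mathsf T})\vket{I}=\vket{AB}$, and then $\vbra{I}\vket{AB}=\operatorname{Tr}(AB)$.

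Finally, for the snake identities I will apply $(I\otimes\vbra{I})(\vket{I}\otimes I)$ to a basis ket $\ket{k}$. This gives $\sum_{j,l}\ket{j}\,\langle l|j\rangle\langle l|k\rangle=\ket{k}$, so the composite is $I_{\mathcal H}$; the mirrored identity is computed the same way. None of these steps is a real obstacle. The only care needed is bookkeeping: keeping the output factor first in $\vket{\cdot}$ and tracking which tensor factor each contraction acts on in the snake identities.
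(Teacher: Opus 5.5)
Your proposal is correct and follows essentially the paper's approach of expanding everything entrywise in the fixed bases. The only difference is the order. The paper computes $(A\otimes B^{\mathsf T})\vket{C}=\vket{ACB}$ directly by index summation and gets the two forms of $\vket{C}$ by setting the middle operator to the identity, while you prove those two forms first and derive the sandwich identity from them via the transpose trick; both routes are equally short and valid.
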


\begin{proof}
Expanding in the fixed bases gives
\[
    (A\otimes B^{\mathsf T})\vket{C}
    =\sum_{p,q}\left(\sum_{i,j}A_{pi}C_{ij}B_{jq}\right)
           \ket{p}\otimes\ket{q}
    =\vket{ACB}.
\]
The first two identities follow by taking $C$ in this
formula to be an identity operator.
Likewise,
$\vbra{A}\vket{B}=\sum_{i,j}A_{ij}^{\star}B_{ij}
=\operatorname{Tr}(A^\dagger B)$, which also gives
the stated trace formula. Finally, each identity
contraction sends a basis ket $\ket{j}$ to $\ket{j}$,
since $\langle i|j\rangle=\delta_{ij}$.
\end{proof}

\section{Quantum circuits}
\label{sec:general-quantum-circuits}

We define quantum circuits through their component operations
and describe their composition by contracting the corresponding
open wires. This separates the circuit from its noise model
and gives a common description of ideal and noisy realizations.
We then specialise this construction to stabilizer circuits.

\begin{definition}[Quantum circuit]
\label{def:quantum_circuit}
A \textit{quantum circuit} $\mathbf C$ is defined by a set
of measurement outcome strings
$o\triangleq(o_1,\ldots,o_\ell)$ of length $\ell$, where
each $o_r\triangleq(o_{r,1},\ldots,o_{r,m'_r})
\in\{0,1\}^{m'_r}$, and a sequence of $\ell$ linear
operators indexed by the measurement outcomes:
\begin{equation}
\resizebox{0.98\columnwidth}{!}{$
    \begin{gathered}
        \mathbf C\triangleq
        \{C_{o_{1:1}},\ldots,C_{o_\ell|o_{1:\ell-1}}\}_o
        \quad\text{for}\quad
        C_{o_r|o_{1:r-1}}:
        Q_{r-1}\longrightarrow A'_r\otimes Q_r.
    \end{gathered}
$}
\end{equation}
Here $Q_r\cong\C^{2^{n_r}}$ and
$o_{1:r-1}\triangleq(o_1,\ldots,o_{r-1})$ is the
measurement outcome history from round $1$ to $r-1$. Each operator $C_{o_r|o_{1:r-1}}$ is composed of a
$\ket{0}^{\otimes m_r}_{A_r}$ preparation on the
$m_r$-qubit ancilla register $A_r\cong\C^{2^{m_r}}$,
a history-dependent unitary
$U_{o_{1:r-1}}\in\U(A_r\otimes Q_{r-1})$,
an $m'_r$-qubit computational-basis projector
$\ketbra{o_r}{o_r}$ on $A'_r\cong\C^{2^{m'_r}}$,
and a conditional unitary $U_{o_r}\in\U(Q_r)$
depending on the outcome $o_r$.
For the action of $U_{o_{1:r-1}}$, we impose
$A_r\otimes Q_{r-1} \cong A'_r\otimes Q_r$, which gives $m_r+n_{r-1}=m'_r+n_r$. Explicitly,
\begin{equation}
\resizebox{0.98\columnwidth}{!}{$
    \begin{aligned}
        C_{o_r|o_{1:r-1}}
        &=(\ketbra{o_r}{o_r}_{A'_r}\otimes U_{o_r})
            U_{o_{1:r-1}}
            (\ket{0}^{\otimes m_r}_{A_r}\otimes\id_{Q_{r-1}})\\
        &=
        \begin{quantikz}
            \setwiretype{n}
              & \lstick[1]{$|0^{m_r}\>_{A_r}$}
              & \gate[2]{U_{o_{1:r-1}}}\setwiretype{q}
              & \meter{}
              & \ctrl{0}\wire[d][1][]{c}
                \wire[l][1]["{o_r\in\{0,1\}^{m'_r}}"{above,pos=0}]{c}
                \setwiretype{c}
              & \\
            \setwiretype{n}
              & Q_{r-1}\;
              & \setwiretype{b}
              & \setwiretype{b}
              & \gate[1]{U_{o_r}}
              & \;Q_r
        \end{quantikz}\;.
    \end{aligned}
$}
\label{eq: circuit def}
\end{equation}
\end{definition}

\begin{definition}[Circuit components]
\label{def: circuit component}
A \textit{component} of $\mathbf C$ is the collection
of linear operators $\{C_{o_r|o_{1:r-1}}\}_{o_r}$ for
a fixed history $o_{1:r-1}$ and round $r\in\mathbb Z^+_{\ell}$.
For each history, these operators satisfy
$\sum_{o_r}C_{o_r|o_{1:r-1}}^\dagger
C_{o_r|o_{1:r-1}}=I_{Q_{r-1}}$,
by unitarity and completeness of the measurement projectors.
\end{definition}

\begin{remark}
Here $m_r=0$ (respectively, $m'_r=0$) means that
$C_{o_r\mid o_{1:r-1}}$ prepares no ancilla (respectively, performs no
measurement). For $r<\ell$, $U_{o_r}$ can be absorbed into the next
round's history-dependent unitary, while the final conditional unitary must
remain unless another component follows. We keep these factors
separate to distinguish dependence on $o_r$ at round $r$ from dependence on its
preceding history. Any quantum instrument on qubit registers admits the decomposition in
\Cref{eq: circuit def} after adding sufficient ancillas and, if needed,
refining outcomes to label individual Kraus operators. Discarding or
coarse-graining the refined outcomes recovers the original instrument
\cite{chiribella2009theoretical,pollock2018non,kliuchnikov2023stabilizercircuitverification}.
\end{remark}

We have defined a quantum circuit as a sequence of linear
operators indexed by measurement outcomes. To describe its
action under a noise model, or in the ideal case, we now
associate a linear map with this circuit description.

\begin{definition}[Circuit map and noise model]
\label{def: Circuit map and noise model}
For a quantum circuit $\mathbf C$, write
$A_r^{\mathrm{out}}\equiv A'_r$ for the retained outcome
register and $A_{1:\ell}^{\mathrm{out}}
\triangleq\bigotimes_{r=1}^\ell A_r^{\mathrm{out}}$.
Let
$\vket{C_{o_{1:r}}}\in
A_r^{\mathrm{out}}\otimes Q_r\otimes Q_{r-1}$
be the vectorization of $C_{o_{1:r}}$.
We define its \textit{circuit map}
\begin{equation}
    \begin{aligned}
        \mathbf C_{\mathrm{cm}}:\L(Q_0)
        \longrightarrow
        \L\bigl(A_{1:\ell}^{\mathrm{out}}\otimes Q_\ell
            \otimes Q_{\ell-1}^{\otimes2}\otimes\cdots\otimes Q_0^{\otimes2}\bigr)
    \end{aligned}
\end{equation}
as the CP map
\begin{equation}
    \begin{gathered}
        \mathbf C_{\mathrm{cm}}(\cdot)
        =\sum_o\mathbf C_o(\cdot)\mathbf C_o^\dagger
        \quad\text{for}\\
        \mathbf C_o
        =\vket{C_{o_{1:\ell}}}\otimes\cdots
            \otimes\vket{C_{o_{1:1}}}\otimes I_{Q_0}.
    \end{gathered}
\end{equation}
A \textit{noise model} over $\mathbf C$ is a CP map
\begin{equation}
    \begin{aligned}
        \mathbf E:
        \L\bigl(A_{1:\ell}^{\mathrm{out}}\otimes Q_\ell
            \otimes Q_{\ell-1}^{\otimes2}\otimes\cdots
            \otimes Q_0^{\otimes2}\bigr) \longrightarrow
        \L(A_{1:\ell}^{\mathrm{out}}\otimes Q_\ell)
    \end{aligned}
\end{equation}
with Kraus operators $\{\mathbf E_e\}_e$ satisfying 
\begin{equation}
    \begin{gathered}
        \sum_e\mathbf E_e^\dagger\mathbf E_e
          =I_{A_\ell^{\mathrm{out}}Q_\ell}
               \otimes\mathbf E^{(\ell-1)},\\
        \operatorname{Tr}_{Q_r}\mathbf E^{(r)}
          =I_{A_r^{\mathrm{out}}Q_r}\otimes\mathbf E^{(r-1)},
        \qquad 1\leq r<\ell,\\
        \operatorname{Tr}_{Q_0}\mathbf E^{(0)}=I_{Q_0},
        \qquad \mathbf E^{(r)}\succeq0
               \quad(0\leq r<\ell).
    \end{gathered}
    \label{eq: Noise comb constraints}
\end{equation}
A \textit{noisy realization} of $\mathbf C$ under
$\mathbf E$ is the channel
\[
    \mathbf C_{\mathrm E}(\cdot)
       \triangleq(\mathbf E\circ\mathbf C_{\mathrm{cm}})(\cdot)
\]
from $\L(Q_0)$ to
$\L(A_{1:\ell}^{\mathrm{out}}\otimes Q_\ell)$,
with Kraus operators $\{\mathbf E_e\mathbf C_o\}_{e,o}$.
An \textit{ideal realization} is given by
\begin{equation}
    \begin{gathered}
        \mathbf C_{\mathrm{id}}(\cdot)
           \triangleq
           \mathbf I\mathbf C_{\mathrm{cm}}(\cdot)\mathbf I^\dagger
           \quad\text{for}\\
        \mathbf I
           =I_{A_{1:\ell}^{\mathrm{out}}Q_\ell}
              \otimes\vbra{I_{Q_{\ell-1}}}
              \otimes\cdots\otimes\vbra{I_{Q_0}}.
    \end{gathered}
\end{equation}
\end{definition}

\begin{remark}
The circuit map $\mathbf C_{\mathrm{cm}}$ is generally
not trace-preserving. Its unnormalized vectorizations give
\[
    \operatorname{Tr}\mathbf C_{\mathrm{cm}}(\rho)
      =\left(\prod_{r=0}^{\ell-1}\dim Q_r\right)
           \operatorname{Tr}\rho.
\]
However, completeness relations of the circuit components and
\Cref{eq: Noise comb constraints} imply
\[
    \sum_{e,o}
       (\mathbf E_e\mathbf C_o)^\dagger
        (\mathbf E_e\mathbf C_o)=I_{Q_0}.
\]
Thus, both the noisy and ideal realizations of the circuit are trace-preserving.
Both identities are proved in \Cref{app:causal-normalisation}. Moreover, the circuit map satisfies the quantum-comb causal normalization
constraints \cite{chiribella2009theoretical,pollock2018non}
\begin{equation}
    \begin{gathered}
        \sum_o\mathbf C_o\mathbf C_o^\dagger
           =\mathbf C^{(\ell)},
        \qquad \mathbf C^{(\ell)}\succeq0,\\
        \operatorname{Tr}_{A_r^{\mathrm{out}}Q_r}
           \mathbf C^{(r)}
           =I_{Q_{r-1}}\otimes\mathbf C^{(r-1)},
        \qquad r\in\mathbb Z^+_{\ell},\\
        \mathbf C^{(r-1)}\succeq0,
        \qquad \mathbf C^{(0)}=I_{Q_0}.
    \end{gathered}
\end{equation}
See
\Cref{app:causal-normalisation} for more details.
\end{remark}

\begin{remark}[Noise model explicit form]
\label{rem:noise_model_explicit}
Consider a circuit $\mathbf C$ with circuit map
$\mathbf C_{\mathrm{cm}}(\cdot)
=\sum_o\mathbf C_o(\cdot)\mathbf C_o^\dagger$.
A noise map acts on it as
\[
    (\mathbf E\circ\mathbf C_{\mathrm{cm}})(\cdot)
       =\sum_{e,o}\mathbf E_e\mathbf C_o(\cdot)
                        \mathbf C_o^\dagger\mathbf E_e^\dagger.
\]
The noise operators at time-step $r \in \{0,\cdots,\ell\}$ are given by
\begin{equation}
    \begin{aligned}
        E_{e_r}:A_r^{\mathrm{out}}\otimes Q_r\otimes N_r
        &\longrightarrow
        A_r^{\mathrm{out}}\otimes Q_r\otimes N_{r+1},\\
        \sum_{e_r}E_{e_r}^\dagger E_{e_r}
        &=I_{A_r^{\mathrm{out}}Q_rN_r},
        \quad r\in \{0,\cdots,\ell\}.
    \end{aligned}
\end{equation}
Here $A_0^{\mathrm{out}}=\mathbb C$ and $N_r$ is the
noise memory, with $N_0=N_{\ell+1}=\mathbb C$. 
Put $e=(e_0,\ldots,e_\ell)$ and
$I_N=I_{N_\ell}\otimes\cdots\otimes I_{N_1}$.
The spacetime noise operator takes the form
\begin{equation}
    \begin{aligned}
        \mathbf E_e
        &=\bigl(I_{A_{1:\ell-1}^{\mathrm{out}}}
             \otimes E_{e_\ell}
             \otimes\vbra{E_{e_{\ell-1}}^{\star}}
             \otimes\cdots\otimes\vbra{E_{e_0}^{\star}}\bigr)\\
        &\qquad{}\times
          \bigl(\vket{I_{A_{1:\ell-1}^{\mathrm{out}}}}
             \otimes\vket{I_N}\otimes I_{\mathrm{in}}\bigr).
    \end{aligned}
\end{equation}
Here $I_{\mathrm{in}}$ is the identity on the domain
of $\mathbf E_e$. By \Cref{lemma: Vectorization identities}, it can be shown that its explicit action on a circuit
branch is
\begin{equation}
    \mathbf E_e\mathbf C_o
       =E_{e_\ell}(C_{o_{1:\ell}}\otimes I_{N_\ell})
          \cdots E_{e_1}(C_{o_1}\otimes I_{N_1})E_{e_0}.
\end{equation}
For a memoryless noise model, setting $N_r=\mathbb C$
for all $r$, we have
\begin{equation}
    \mathbf E_e\mathbf C_o
       =E_{e_\ell}C_{o_{1:\ell}}E_{e_{\ell-1}}
          \cdots E_{e_1}C_{o_1}E_{e_0}.
\end{equation}
\end{remark}
\begin{figure*}[t]
    \centering
    \begin{quantikz}
        \lstick{$|0\rangle_A$}
            & \qnoise{P(e_{0,0})} & \targ{}
            & \qnoise{P(e_{0,1})} & \targ{}
            & \qnoise{P(e_{0,2})} & \meter{}
            & \qnoise{P(e_{0,3})}\setwiretype{c} & \setwiretype{c}\\
        \lstick{$Q_1$}
            & \qnoise{P(e_{1,0})} & \ctrl{-1}
            & \qnoise{P(e_{1,1})} &
            & \qnoise{P(e_{1,2})} &
            & \qnoise{P(e_{1,3})} & \\
        \lstick{$Q_2$}
            & \qnoise{P(e_{2,0})} &
            & \qnoise{P(e_{2,1})} & \ctrl{-2}
            & \qnoise{P(e_{2,2})} &
            & \qnoise{P(e_{2,3})} & \\
        \lstick{$Q_3$}
            & \qnoise{P(e_{3,0})} &
            & \qnoise{P(e_{3,1})} &
            & \qnoise{P(e_{3,2})} &
            & \qnoise{P(e_{3,3})} &
    \end{quantikz}
    \caption{A fixed Pauli fault pattern for the $Z_1Z_2$
    measurement circuit indicating the Kraus operator in \Cref{eq: Krus operator for fig}. Each $\{P(e_{s,t})\}_{s,t=0}^3$ is a single qubit Pauli fault.}
    \label{fig: component-example-circuit-noise}
\end{figure*}
\begin{example}[Preparation, gate, and measurement components]
\label{example: Circuits and noise}
A circuit may be resolved into elementary state preparations,
unitary gates, and measurement projectors. Let
$Q=Q_1\otimes Q_2\otimes Q_3$ and
$I_Q=I_1\otimes I_2\otimes I_3$.
Here the subscripts label physical data qubits.
Consider the four circuit components
\begin{equation}
    \begin{aligned}
        C^{(1)}&=|0\rangle_A\otimes I_Q,\quad
        C^{(2)}=\cnot_{1\rightarrow A},\\
        C^{(3)}&=\cnot_{2\rightarrow A},\quad
        C_o^{(4)}=|o\rangle\langle o|_A\otimes I_Q,
        \qquad o\in\{0,1\}.
    \end{aligned}
\end{equation}
In circuit notation, the components are illustrated by
\[
\begin{array}{@{}c@{\qquad}c@{}}
C^{(1)}=
\begin{quantikz}
    \lstick{$|0\rangle_A$} & & \\
    \lstick{$Q_1$} & & \\
    \lstick{$Q_2$} & & \\
    \lstick{$Q_3$} & &
\end{quantikz}
&
C^{(2)}=
\begin{quantikz}
    \lstick{$A$} & \targ{} & \\
    \lstick{$Q_1$} & \ctrl{-1} & \\
    \lstick{$Q_2$} & & \\
    \lstick{$Q_3$} & &
\end{quantikz}
\\[1.2em]
C^{(3)}=
\begin{quantikz}
    \lstick{$A$} & \targ{} & \\
    \lstick{$Q_1$} & & \\
    \lstick{$Q_2$} & \ctrl{-2} & \\
    \lstick{$Q_3$} & &
\end{quantikz}
&
C_o^{(4)}=
\begin{quantikz}
    \lstick{$A$} & \meter{} & \rstick{$o$}\setwiretype{c}\\
    \lstick{$Q_1$} & & \\
    \lstick{$Q_2$} & & \\
    \lstick{$Q_3$} & &
\end{quantikz}
\end{array}
\]
Vectorizing the components gives the circuit map
\begin{equation}
    \begin{aligned}
        \mathbf C_{\mathrm{cm}}(\rho)
        &=\sum_{o=0}^1\mathbf C_o\rho\mathbf C_o^\dagger,\\
        \mathbf C_o
        &=\vket{C_o^{(4)}}\otimes\vket{C^{(3)}}\\
        &\quad{}\otimes\vket{C^{(2)}}\otimes\vket{C^{(1)}}
            \otimes I_Q.
    \end{aligned}
\end{equation}
The ideal realization of the circiuit is given by
\begin{equation}
    \begin{aligned}
        K_o&=\mathbf I\mathbf C_o
            =C_o^{(4)}C^{(3)}C^{(2)}C^{(1)}\\
           &=|o\rangle_A\otimes\Pi_o^{Z_1Z_2},\\
        \Pi_o^{Z_1Z_2}
           &\triangleq\tfrac12\bigl(I_Q+(-1)^oZ_1Z_2\bigr).
    \end{aligned}
\end{equation}
Indeed, the two CNOTs write the computational-basis parity
of $Q_1,Q_2$ into $A$, so outcome $o$ projects the data
onto the $(-1)^o$ eigenspace of $Z_1Z_2$. Thus
\begin{equation}
    \mathbf C_{\mathrm{id}}(\rho)
    =\sum_{o=0}^1K_o\rho K_o^\dagger
\end{equation}
is the nondestructive measurement of $Z_1Z_2$ on systems $Q_1,Q_2$, with outcome $o$
stored in $A$ and $Q_3$ left unchanged
\[
K_o=
\begin{quantikz}
    \lstick{$|0\rangle_A$}
        & \targ{} & \targ{} & \meter{}
        & \rstick{$o$}\setwiretype{c}\\
    \lstick{$Q_1$} & \ctrl{-1} & & & \\
    \lstick{$Q_2$} & & \ctrl{-2} & & \\
    \lstick{$Q_3$} & & & &
\end{quantikz}.
\]
Consider a Pauli fault model, 
where noise operators $\{\mathbf E_e\}_e$, are given by
\begin{equation}
    \mathbf E_e \triangleq P(e_3)\otimes\vbra{P(e_2)^\star} \otimes \vbra{P(e_1)^\star} \otimes \vbra{P(e_0)^\star}\otimes\vbra{I_Q}.
\end{equation}
For every $t\in\{0,1,2,3\}$, write $P(e_t)=P(e_{0,t})\otimes P(e_{1,t})\otimes P(e_{2,t})\otimes P(e_{3,t})$, where $P(e_{0,t})\in\mathcal P(A)$ and $P(e_{j,t})\in\mathcal P(Q_j)$ for $j\in\{1,2,3\}$.
The kraus operators for the noisy realization of the circuit is therefore,
\begin{equation}
    \mathbf E_e \mathbf C_o = P(e_3)C^{(4)}_oP(e_2)C^{(3)}P(e_1)C^{(2)}P(e_0)C^{(1)}. \label{eq: Krus operator for fig}
\end{equation}
This operator is illustrated in
\Cref{fig: component-example-circuit-noise}.
Given spacetime fault probabilities $p_{e}$ for fault $\mathbf E_e$, the noisy realization $\mathbf C_E(\cdot)$ is a CPTP map
\begin{equation}
    \begin{aligned}
        \mathbf C_{\mathrm E}(\rho)
        &=(\mathbf E\circ\mathbf C_{\mathrm{cm}})(\rho)\\
        &=\sum_{e,o}p_{e}
          \mathbf E_e\mathbf C_o\rho \mathbf C_o^\dagger\mathbf E^\dagger_e,
    \end{aligned}
\end{equation}
where $p_{e}\geq0$ and $\sum_{ e}p_{e}=1$.
\end{example}


\subsection{Stabilizer circuits}

Syndrome extraction and many logical operations on stabilizer
codes use Clifford gates, Pauli eigen-state preparations and Pauli
measurements. We define stabilizer circuits in the component
form of \Cref{def:quantum_circuit}, allowing different levels
of abstraction and the noise models introduced above. Representing circuits as states enables us to develop a stabilizer formalism for quantum circuits analogous to that for quantum states. In \Cref{sec:circuit_stabilizers}, we show how stabilizers of a circuit's components compose and how they determine
detectors, measured observables and preserved logical information.
In \Cref{sec: Fault tolerance of stabilizer circuits}, we
specialize to non-adaptive circuits under Pauli noise and
use these symmetries to obtain explicit fault signatures
and a minimum-weight characterization of circuit distance.

We first recall stabilizer states and their operator
representation \cite{gottesman1997stabilizer,
kliuchnikov2023stabilizercircuitverification}.
Related circuit and fault-tolerance descriptions are developed
in \cite{Bombin_2023,beverland2024faulttolerancestabilizerchannels}.

\begin{definition}[Stabilizer states and operators]
\label{def: Stab states op}
An $n$-qubit unit vector $|\psi\rangle\in\mathbb C^{2^n}$
is a \textit{pure stabilizer state} if there is a Pauli
subgroup $\mathcal S_\psi\subseteq\mathcal P(\mathbb C^{2^n})$
with $n$ independent generators and
$-I\notin\mathcal S_\psi$ such that
$s|\psi\rangle=|\psi\rangle$ for every
$s\in\mathcal S_\psi$. Such a subgroup is abelian \footnote{If
$P,Q\in\mathcal S_\psi$ anticommute, their group commutator
$PQP^{-1}Q^{-1}=-I$ belongs to $\mathcal S_\psi$,
contrary to $-I\notin\mathcal S_\psi$.}. An $n$-qubit density operator $\rho$ is a
\textit{stabilizer state} if it is a convex mixture of
pure stabilizer states
\[
    \rho=\sum_j p_j|\psi_j\rangle\langle\psi_j|,
    \qquad p_j\geq0,\qquad \sum_jp_j=1.
\]

A linear operator
$C:\mathbb C^{2^{n_\ins}}\to\mathbb C^{2^{n_\outs}}$
is a \textit{stabilizer operator} if there is a Pauli
subgroup
$\mathcal S_{\vket C}\subseteq
\mathcal P(\mathbb C^{2^{n_\outs+n_\ins}})$
with $n_\outs+n_\ins$ independent generators and
$-I\notin\mathcal S_{\vket C}$ such that
$s\vket C=\vket C$ for every $s\in\mathcal S_{\vket C}$.
\end{definition}

\begin{lemma}
\label{Lemma: Fundamental stabilizer lemma}
For a linear operator
$A:\mathbb C^{2^{n_\ins}}\to\mathbb C^{2^{n_\outs}}$,
put $n=n_\ins+n_\outs$. The following are equivalent:
\begin{enumerate}
    \item $A=0$.
    \item There is a Hermitian Pauli
    $P\in\mathcal P(\mathbb C^{2^n})$ such that
    $P\vket A=\vket A$ and $-P\vket A=\vket A$.
    \item $P\vket A=\vket A$ for every
    $P\in\mathcal P(\mathbb C^{2^n})$.
\end{enumerate}
\end{lemma}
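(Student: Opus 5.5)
I will prove the cycle of implications (1)$\Rightarrow$(3)$\Rightarrow$(2)$\Rightarrow$(1). The approach is purely linear-algebraic and treats $\vket A\in\mathbb C^{2^{n_\outs}}\otimes\mathbb C^{2^{n_\ins}}\cong\mathbb C^{2^n}$ simply as a vector. I will rely on the fact from \Cref{def: Vectorized identity} that vectorization is a linear bijection, so $A=0$ if and only if $\vket A=0$.

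\textbf{(1)$\Rightarrow$(3).} If $A=0$, then $\vket A=0$ by linearity of vectorization. Hence $P\vket A=0=\vket A$ for every operator $P$, and in particular for every Hermitian Pauli $P\in\mathcal P(\mathbb C^{2^n})$.

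\textbf{(3)$\Rightarrow$(2).} Take any Hermitian Pauli, for instance $P=I$; both $P$ and $-P$ lie in $\mathcal P(\mathbb C^{2^n})$ by \Cref{def: symplectic Pauli definition}, since signs $(-1)^a$ are allowed. Assumption (3) applied to $P$ and to $-P$ gives $P\vket A=\vket A$ and $-P\vket A=\vket A$, which is (2). For this I only need that $\mathcal P$ is closed under negation, and that holds by definition.

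\textbf{(2)$\Rightarrow$(1).} Adding the two identities $P\vket A=\vket A$ and $-P\vket A=\vket A$ gives $0=2\vket A$, so $\vket A=0$. Injectivity of vectorization then yields $A=0$.

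There is no genuine obstacle. The only point needing care is the index bookkeeping: the Pauli acts on the $n=n_\outs+n_\ins$ qubits of the vectorized space, and the ordering of output and input factors does not affect the argument. The lemma is the operator analogue of the fact that a stabilizer group containing $-I$ stabilizes only the zero vector, which is why the definition of a stabilizer operator requires $-I\notin\mathcal S_{\vket C}$.
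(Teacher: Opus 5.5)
Your proof is correct and follows the paper's own argument: the cycle (1)$\Rightarrow$(3)$\Rightarrow$(2)$\Rightarrow$(1). As in the paper, you take $P=I$ (with $-P=-I\in\mathcal P$) for the middle step, and for the last step you add the two equalities and use injectivity of vectorization.
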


\begin{proof}
If $A=0$, every Pauli fixes $\vket A$, proving
$(1)\Rightarrow(3)$. Taking $P=I$ and $-P=-I$ gives
$(3)\Rightarrow(2)$. Conversely, adding the two equalities
in (2) gives $2\vket A=0$. Vectorization is injective,
so $A=0$, proving $(2)\Rightarrow(1)$.
\end{proof}

\begin{definition}[Stabilizer circuit]
\label{def: stab circuit}
A \textit{stabilizer circuit} $\mathbf C$ is specified by
measurement outcome strings
$o=(o_1,\ldots,o_\ell)$, with
$o_r=(o_{r,1},\ldots,o_{r,m'_r})\in\{0,1\}^{m'_r}$,
and a sequence of $\ell$ operators indexed by these outcomes:
\begin{equation}
    \begin{gathered}
        \mathbf C=\{C_{o_{1:1}},\ldots,C_{o_{1:\ell}}\}_o,
        \qquad
        C_{o_{1:r}}:Q_{r-1}\longrightarrow A'_r\otimes Q_r.
    \end{gathered}
\end{equation}
Here $Q_r\cong\mathbb C^{2^{n_r}}$,
$o_{1:r}=(o_1,\ldots,o_r)$, and
$\{C_{o_{1:r}} \triangleq C_{o_r|o_{1:r-1}}\}_{o_r}$ denotes its $r^{th}$ component. Each component operator is composed of a
$|0^{m_r}\rangle_{A_r}$ preparation, a history-dependent
Clifford unitary $U_{o_{1:r-1}}$ on $A_r\otimes Q_{r-1}$,
a computational-basis projector
$|o_r\rangle\langle o_r|$ on $A'_r$, and a conditional
Pauli correction $P_{o_r} \in \mathcal P(Q_r)$. The register
identification $A_r\otimes Q_{r-1}\cong A'_r\otimes Q_r$
requires
\[
    m_r+n_{r-1}=m'_r+n_r,
\]
and these dimensions may vary with $r$. Explicitly,
\begin{equation}
\resizebox{0.98\columnwidth}{!}{$
    \begin{aligned}
        C_{o_{1:r}}
        &=(|o_r\rangle\langle o_r|_{A'_r}\otimes P_{o_r})
            U_{o_{1:r-1}}
            (|0^{m_r}\rangle_{A_r}\otimes I_{Q_{r-1}})\\
        &=
        \begin{quantikz}
            \setwiretype{n}
              & \lstick[1]{$|0^{m_r}\rangle_{A_r}$}
              & \gate[2]{U_{o_{1:r-1}}}\setwiretype{q}
              & \meter{}
              & \ctrl{0}\wire[d][1][]{c}
                \wire[l][1]["{o_r\in\{0,1\}^{m'_r}}"{above,pos=0}]{c}
                \setwiretype{c}
              & \\
            \setwiretype{n}
              & Q_{r-1}\;
              & \setwiretype{b}
              & \setwiretype{b}
              & \gate[1]{P_{o_r}}
              & \;Q_r
        \end{quantikz}\;.
    \end{aligned}
$}
\end{equation}
The measured register $A'_r$ is denoted as
$A_r^{\mathrm{out}}$ in the circuit map.
\end{definition}

\begin{remark}
    The definition allows adaptive choices. The results below
for non-adaptive circuits require the gates and measured
Paulis, including any physically applied corrections,
to be fixed independently of the outcomes. We leave the analysis for adaptive circuits for future work.
\end{remark}

\begin{definition}[Realizable outcomes, detectors and observables]
\label{def: Det obs gen def}
Let $A'_r$ be the measured ancillary register of component
$r$, with $\log_2\dim A'_r=m'_r$, and put
$\ell_M=\sum_{r\in\mathbb Z^+_{\ell}}m'_r$.
We identify the complete outcome string with the binary row vector
\[
    o=[o_{1,1},\ldots,o_{1,m'_1},\ldots,
         o_{\ell,1},\ldots,o_{\ell,m'_\ell}]
      \in\mathbb F_2^{\ell_M}.
\]
All dot products below are ordinary binary dot products,
$o\cdot d=od^{\mathsf T}$.
An outcome $o$ is \emph{realizable} if and only if
$\mathbf I\mathbf C_o\neq0$; equivalently, it occurs
with nonzero probability for some input state. The
\emph{realizable outcome set} is
\begin{equation}
    \Omega_C\triangleq
    \{o\in\mathbb F_2^{\ell_M}:\mathbf I\mathbf C_o\neq0\}.
\end{equation}
It is nonempty because the ideal circuit is trace preserving.

A vector $d\in\mathbb F_2^{\ell_M}$ is a \emph{detector}
with parity $a_d\in\mathbb F_2$ if, for every outcome $o$,
\begin{equation}
    \mathbf I\mathbf C_o\neq0
    \quad\Longrightarrow\quad o\cdot d=a_d.
\end{equation}
Let $P\in\mathcal P(Q_0)$ be a Hermitian input Pauli,
including its sign, and let
\[
    \Pi_P^b\triangleq\frac{I_{Q_0}+(-1)^bP}{2},
    \qquad b\in\{0,1\},
\]
be the projector onto its $(-1)^b$ eigenspace.
A vector $v_P\in\mathbb F_2^{\ell_M}$ is an
\emph{observable representative} of $P$ with parity
$a_{v_P}\in\mathbb F_2$ if, for every $o$ and $b$,
\begin{equation}
    \mathbf I\mathbf C_o\Pi_P^b\neq0
    \quad\Longrightarrow\quad
    o\cdot v_P=a_{v_P}\oplus b.
\end{equation}
We write
\begin{equation}
    \begin{aligned}
        \mathcal D_C
        &\triangleq\{d:d\text{ is a detector of }\mathbf C\},\\
        \mathcal O_C(P)
        &\triangleq\{v_P:v_P\text{ represents }P\text{ in }\mathbf C\}
    \end{aligned}
\end{equation}
for the detecting set and the observable set associated
with $P$. We call $P$ \emph{measurable} by $\mathbf C$
when $\mathcal O_C(P)\neq\varnothing$, and define the
overall observable set by
\[
    \mathcal O_C\triangleq
    \bigcup_{P\in\mathcal P(Q_0)}\mathcal O_C(P).
\]
Since $\Omega_C\neq\varnothing$, each detector parity is
unique. For fixed $P$, each observable representative
also has a unique parity. 
\end{definition}

\begin{remark}
\label{remark: measureable Identity is detector}
The detector set is precisely the observable set associated
with the identity:
\[
    \mathcal D_C=\mathcal O_C(I).
\]
Indeed, $\Pi_I^0=I$ and $\Pi_I^1=0$, so the observable
condition reduces to
\[
    \mathbf I\mathbf C_o\neq0
    \quad\Longrightarrow\quad o\cdot v_I=a_{v_I},
\]
which is the detector condition. In particular, $0$ is always a detector with parity zero, and the nontrivial detectors
are therefore the elements of $\mathcal D_C\setminus\{0\}$.

Moreover, $0\in\mathcal O_C(P)$ if and only if $P=\pm I$.
To see the forward implication, let $a$ be the parity
of the zero representative. Its defining condition forces
$\mathbf I\mathbf C_o\Pi_P^{a\oplus1}=0$ for every $o$.
Trace preservation of the ideal circuit then gives
\[
    \begin{aligned}
        \Pi_P^{a\oplus1}
        &=\sum_o\Pi_P^{a\oplus1}
          (\mathbf I\mathbf C_o)^\dagger
          (\mathbf I\mathbf C_o)\Pi_P^{a\oplus1}\\
        &=0.
    \end{aligned}
\]
Thus $P=(-1)^aI$. Conversely, $P=(-1)^aI$ has the
zero representative with parity $a$.
For explicit examples of detectors, observables and
preserved logical qubits at levels $0$ and $1$, see
\Cref{example: Z measurement L0,ex:honeycomb-level-1}.
\end{remark}

\begin{proposition}[Observable redundancy]
\label{prop:obs_redundancy}
Let $P$ be a measurable Pauli of a stabilizer circuit
$\mathbf C$. For any two representatives
$v_P,v'_P\in\mathcal O_C(P)$ with parities
$a_{v_P},a_{v'_P}$, there is a detector
$d\in\mathcal D_C$ with parity $a_d$ such that
\[
    v'_P=v_P\oplus d,
    \qquad a_{v'_P}=a_{v_P}\oplus a_d.
\]
\end{proposition}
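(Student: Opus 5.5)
We take the obvious candidate $d\triangleq v_P\oplus v'_P$ and $a_d\triangleq a_{v_P}\oplus a_{v'_P}$. We then check directly from \Cref{def: Det obs gen def} that $d$ is a detector with parity $a_d$. Once this holds, the identities $v'_P=v_P\oplus d$ and $a_{v'_P}=a_{v_P}\oplus a_d$ follow by construction. Uniqueness of detector parity, which holds because $\Omega_C\neq\varnothing$, then confirms that $a_d$ is the parity of $d$.

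Fix an outcome $o$ with $\mathbf I\mathbf C_o\neq0$. The key step is to produce some $b\in\{0,1\}$ with $\mathbf I\mathbf C_o\Pi_P^b\neq0$. Since $\Pi_P^0+\Pi_P^1=I_{Q_0}$, we have
\[
\mathbf I\mathbf C_o=\mathbf I\mathbf C_o\Pi_P^0+\mathbf I\mathbf C_o\Pi_P^1 .
\]
So if both terms vanished, $\mathbf I\mathbf C_o$ would be zero, a contradiction. Choose such a $b$.

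For this single $b$, both representative conditions apply. They give $o\cdot v_P=a_{v_P}\oplus b$ and $o\cdot v'_P=a_{v'_P}\oplus b$. Adding these over $\mathbb F_2$ and using bilinearity of the binary dot product yields
\[
o\cdot d=a_{v_P}\oplus a_{v'_P}=a_d .
\]
The dependence on $b$ cancels, so the conclusion is independent of which nonvanishing projector branch was chosen. Since $o$ was an arbitrary realizable outcome, $d\in\mathcal D_C$ with parity $a_d$.

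The only delicate point is this existence of a nonvanishing branch $b$ for each realizable $o$, and it is settled by the resolution of identity above. If $P=\pm I$, one projector is zero, and the argument still goes through using the other projector. No stabilizer structure is needed; the statement holds for any circuit in the sense of \Cref{def: Det obs gen def}.
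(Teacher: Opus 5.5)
Your proof is correct and takes the same approach as the paper. You make the same choice $d=v_P\oplus v'_P$, $a_d=a_{v_P}\oplus a_{v'_P}$, use $\Pi_P^0+\Pi_P^1=I$ to find a nonvanishing branch $b$ for each realizable $o$, and then add the two representative conditions so that $b$ cancels.
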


\begin{proof}
Set $d=v_P\oplus v'_P$ and
$a_d=a_{v_P}\oplus a_{v'_P}$.
For any $o\in\Omega_C$, the identity
$\Pi_P^0+\Pi_P^1=I$ implies that
$\mathbf I\mathbf C_o\Pi_P^b\neq0$ for at least one
$b\in\{0,1\}$. The two representative conditions give
\[
    o\cdot v_P=a_{v_P}\oplus b,
    \qquad o\cdot v'_P=a_{v'_P}\oplus b.
\]
Adding these equalities modulo two yields $o\cdot d=a_d$.
This holds for every realisable outcome, so $d$ is a
detector with the stated parity.
\end{proof}

\begin{remark}
\label{remark: Adaptive case}
Operationally, a detector $d\in\mathcal D_C$ selects the
outcome indices
$\mathcal I_d=\{i\in\mathbb Z^+_{\ell_M}:d_i=1\}$.
In the absence of noise, every realizable outcome satisfies
\[
    \sum_{i\in\mathcal I_d}o_i=a_d\pmod 2.
\]
A violation of this parity therefore detects an error.
Satisfying all detector parities, however, does not generally
characterize outcome realizability for an adaptive circuit.

For example, consider an adaptive sequence of projective measurements.
First measure $Z$. If $o_1=0$, measure $Z$ again;
if $o_1=1$, measure $X$. Finally, repeat the observable
chosen for the second measurement.
When $o_1=0$, the qubit is in $\lvert0\rangle$, so both
subsequent outcomes are zero. When $o_1=1$, either $X$
outcome is possible, and the third measurement repeats it.
Hence the realizable outcome set is
\[
    \Omega_C=\{000,100,111\}.
\]
Any detector parity
$d_1o_1+d_2o_2+d_3o_3=a_d$ must hold on these three
strings, which forces $a_d=0$, $d_1=0$, and $d_2=d_3$.
Thus the only nontrivial detector parity is
\[
    o_2+o_3=0\pmod 2.
\]
The string $011$ satisfies this parity but is not realizable, namely,
$o_1=0$ requires the second measurement to be $Z$ on
$\lvert0\rangle$, so $o_2$ must be zero.
The adaptive circuit therefore imposes the additional
conditional restriction $o_1=0\Rightarrow o_2=0$,
which the detector parities alone do not capture. We leave characterization of detectors and observables for adaptive circiuts for future work.
\end{remark}

\begin{remark}
Similarly, an observable representative $v_P$ selects
$\mathcal I_{v_P}=\{i\in\mathbb Z^+_{\ell_M}:(v_P)_i=1\}$.
If the input lies in the $(-1)^b$ eigenspace of $P$, then
\[
    \sum_{i\in\mathcal I_{v_P}}o_i=a_{v_P}\oplus b\pmod2.
\]
Thus $o\cdot v_P\oplus a_{v_P}$ gives the measured
eigenvalue bit. By \Cref{prop:obs_redundancy}, different
representatives of the same observable give the same bit
on every realisable outcome.
For an arbitrary input state $\rho$, branches assigned
the bit $b$ annihilate the opposite eigenspace. Together
with trace preservation, this implies that their total
probability is $\operatorname{Tr}(\Pi_P^b\rho)$. A violation in the measured observable parity therefore is a logical measurement error.
\end{remark}

\subsection{Stabilizers of a stabilizer circuit}\label{sec:circuit_stabilizers}
The stabilizers of a vectorized circuit describe which input
Paulis are measured, which detectors are enforced, and
which logical Pauli operators are preserved. We first make
these interpretations precise and show how to obtain the
stabilizers of a composed circuit by matching operators on
its internal wires
(see \Cref{prop: stabilizers of composition}). For non-adaptive
circuits, a common set of relations then gives the detectors,
observable representatives and preserved logical pairs
(\Cref{prop:outcome-independent-stabilizer-relations}).
Finally, \Cref{prop: Code conditioned stabilizers L0,prop: code-conditioned stabilizers level 1}
give explicit binary formulas at levels $0$ and $1$, including
conditioning on an input code. The honeycomb example (see \Cref{ex:honeycomb-level-1})
illustrates how these formulas extract all three types of
information from a measurement schedule.

\begin{definition}[{Input and output stabilizers and preserved operators of a stabilizer circuit}]
    For a nonzero stabilizer operator $C:Q\rightarrow Q'$ with $Q\cong\C^{2^n}$ and $Q'\cong\C^{2^{n'}}$, a Hermitian Pauli operator $s\in\mathcal{P}(Q)$ such that $Cs = C$ is an \textit{input stabilizer} of $C$, illustrated as
    \begin{equation}
        \begin{quantikz}
            \lstick[1]{$s$} \setwiretype{b} & \gate[1]{C} \wire[l][1]["{Q}"{above,pos=0.5}]{q} \wire[r][1]["{Q'}"{above,pos=0.5}]{q} & 
        \end{quantikz}
            =
            \begin{quantikz}
                \setwiretype{b} & \gate[1]{C} \wire[l][1]["{Q}"{above,pos=0.5}]{q} \wire[r][1]["{Q'}"{above,pos=0.5}]{q} & 
            \end{quantikz} \;.
    \end{equation}
    A Pauli operator $s'\in\mathcal{P}(Q')$ such that $s'C=C$ is an \textit{output stabilizer} of $C$, illustrated as
    \begin{equation}
        \begin{quantikz}
            \setwiretype{b} & \gate[1]{C} \wire[l][1]["{Q}"{above,pos=0.5}]{q} \wire[r][1]["{Q'}"{above,pos=0.5}]{q} & \rstick[1]{$s'$}
        \end{quantikz}
            =
            \begin{quantikz}
                \setwiretype{b} & \gate[1]{C} \wire[l][1]["{Q}"{above,pos=0.5}]{q} \wire[r][1]["{Q'}"{above,pos=0.5}]{q} & 
            \end{quantikz} \;.
    \end{equation}
    In the vectorized representation, $s\in\mathcal{P}(Q)$ is an input stabilizer of $C$ if $s_Q^T|C\rr = |C\rr$ and $s'\in\mathcal{P}(Q')$ is an output stabilizer of $C$ if $s_{Q'}'|C\rr = |C\rr$.
    A pair of Hermitian Paulis $L\in\mathcal P(Q)$ and
    $L'\in\mathcal P(Q')$, including their signs, is a
    \textit{preserved operator pair} if
    $L'CL=C$, or equivalently $L'C=CL$.
    In the vectorized representation, this is
    $(L'\otimes L^{\mathsf T})\vket C=\vket C$,
    illustrated as
    \begin{equation}
        \begin{quantikz}
            \lstick[1]{$L$} \setwiretype{b} & \gate[1]{C} \wire[l][1]["{Q}"{above,pos=0.5}]{q} \wire[r][1]["{Q'}"{above,pos=0.5}]{q} & \rstick[1]{$L'$}
        \end{quantikz}
            =
            \begin{quantikz}
                \setwiretype{b} & \gate[1]{C} \wire[l][1]["{Q}"{above,pos=0.5}]{q} \wire[r][1]["{Q'}"{above,pos=0.5}]{q} & 
            \end{quantikz} \;.
    \end{equation}
    Input and output stabilizers are the special cases
    $(L,I)$ and $(I,L')$, respectively. We call a preserved
    pair \emph{nontrivial} when $CL\notin\{C,-C\}$;
    equivalently, $L'C\notin\{C,-C\}$. Such a pair relates
    a nontrivial input logical Pauli to its output counterpart.
    This differs from a \emph{measurable input Pauli}, whose
    value is inferred from the classical record as in
    \Cref{def: Det obs gen def}.
\end{definition}

\begin{remark}
A preserved pair transfers Pauli eigenvalues through the
operator: if $L|\varphi^\pm\rangle=\pm|\varphi^\pm\rangle$,
then
\[
    L'C|\varphi^\pm\rangle
    =CL|\varphi^\pm\rangle
    =\pm C|\varphi^\pm\rangle.
\]
Thus every nonzero image $C|\varphi^\pm\rangle$ retains
the corresponding eigenvalue of $L'$. The signs of $L,L'$
matter; for example, the gate $X$ preserves the pair
$(Z,-Z)$.
\end{remark}

For $C\neq0$, the vector $\vket C$ is proportional to a
pure stabilizer state on $n+n'$ qubits. Its stabilizer group
therefore has $n+n'$ independent generators. We use the
\emph{stacked} notation
\[
    \stk{P'}{P}\triangleq P'\otimes P^{\mathsf T},
\]
with the output Pauli above the input Pauli. By
vectorization,
\[
    \stk{P'}{P}\vket C=\vket C
    \quad\Longleftrightarrow\quad P'CP=C.
\]
The transpose is part of this notation, including its sign
for $Y$ factors.

\begin{proposition}
\label{prop: Abelian}
For a nonzero stabilizer operator $C:Q\to Q'$, its input
and output stabilizers form abelian Pauli subgroups
$\mathcal S_{\mathrm{in}}$ and $\mathcal S_{\mathrm{out}}$,
neither containing $-I$. Every preserved pair $(P,P')$
satisfies
\[
    P\in\mathcal N(\mathcal S_{\mathrm{in}}),
    \qquad P'\in\mathcal N(\mathcal S_{\mathrm{out}}),
\]
where the normalizers are taken in the corresponding
Heisenberg--Weyl groups. Moreover, $(sP,s'P')$ is a
preserved pair for all $s\in\mathcal S_{\mathrm{in}}$
and $s'\in\mathcal S_{\mathrm{out}}$.
\end{proposition}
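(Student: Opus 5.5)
The argument works directly with the defining relations $Cs=C$, $s'C=C$ and $P'C=CP$, using only $C\neq 0$ and Lemma~\ref{Lemma: Fundamental stabilizer lemma}.

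\emph{Subgroup property.} The identity lies in each set. If $Cs=C$ and $Ct=C$, then $C(st)=(Cs)t=Ct=C$, so $\mathcal S_{\mathrm{in}}$ is closed under products; inverses are automatic since Hermitian Paulis square to $I$. One must also check that $st$ is again Hermitian, and this follows from commutativity. For commutativity, suppose $s,t$ anticommute. Then $st=-ts$ gives
\[
C=Cst=-Cts=-C,
\]
so $C=0$, a contradiction. Hence $st$ is a Hermitian Pauli and $\mathcal S_{\mathrm{in}}$ is abelian. Finally, $-I\in\mathcal S_{\mathrm{in}}$ would mean $C=-C$, again forcing $C=0$. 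The output group $\mathcal S_{\mathrm{out}}$ is handled by the mirror argument with left multiplication. Equivalently, in vectorized form these groups embed as $s^{\mathsf T}$ and $s'$ factors into the stabilizer group of $\vket C$. That group is abelian and excludes $-I$ by Lemma~\ref{Lemma: Fundamental stabilizer lemma}, item (2).

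\emph{Normalizer property.} Let $(P,P')$ be preserved and $s\in\mathcal S_{\mathrm{in}}$. Since $P$ and $s$ are Paulis, $sP=\pm Ps$. Suppose they anticommute. Then
\[
C=CP\,P=C\,s\,P\,P
\]
and moving $s$ past $P$ gives
\[
CsP=-CPs=-P'Cs=-P'C=-CP.
\]
This yields $CP=-CP$, so $CP=0$, and hence $C=CPP=0$, a contradiction. So $P$ commutes with every element of $\mathcal S_{\mathrm{in}}$, and since Pauli conjugation only produces signs, $P\mathcal S_{\mathrm{in}}P^{\dagger}=\mathcal S_{\mathrm{in}}$, i.e.\ $P\in\mathcal N(\mathcal S_{\mathrm{in}})$. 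The same computation on the left, using $P'C=CP$ and $s'C=C$, gives $P'\in\mathcal N(\mathcal S_{\mathrm{out}})$.

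\emph{Stability under multiplication by stabilizers.} By the previous step, $sP$ and $s'P'$ are Hermitian Paulis. Using $P'$ commuting with $s'$, we compute
\[
(s'P')\,C\,(sP)=s'P'CsP.
\]
We need to simplify $CsP$. Since $s$ and $P$ commute, $CsP=CPs$, but $Ps$ is not directly of the form $Cs=C$ with $C$ on the left. Instead,
\[
CsP=(Cs)P=CP=P'C.
\]
Substituting, $s'P'CsP=s'P'P'C=s'C=C$. Hence $(sP,s'P')$ is a preserved pair.

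The only subtle point is ensuring that every product Pauli stays Hermitian. This is exactly what the commutation results provide, and it is why the normalizer step must be proved before the last claim.
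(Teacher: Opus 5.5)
Your proof is correct and follows essentially the same route as the paper: the contradiction $C=-C$ from anticommuting stabilizers gives the subgroup claims, the chain forcing $CP=-CP$ (hence $C=0$) gives the normalizer claim, and a direct computation using $Cs=C$, $s'C=C$ and $P'C=CP$ gives the stability of preserved pairs. Your explicit remark that $sP$ and $s'P'$ are Hermitian only because of the commutation proved in the previous step is a point the paper leaves implicit.
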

\begin{proof}
If $Cs=C$ and $Ct=C$, then $Cst=C$. If $s,t$
anticommuted, we would also have $Cst=-Cts=-C$,
contradicting $C\neq0$. Thus the input stabilizers commute,
are closed under multiplication, and exclude $-I$.
The output argument is identical.

For a preserved pair, $P'C=CP$. If an input stabilizer
$s$ anticommuted with $P$, then
\[
    CP=P'Cs=CPs=-CsP=-CP,
\]
which is impossible because $P$ is invertible and $C\neq0$.
Hence $P$ commutes with every input stabilizer; similarly,
$P'$ commutes with every output stabilizer. Within the
Pauli group, these centralizers equal the normalizers:
conjugation can only change a Pauli's sign, and a stabilizer
group cannot contain both signs. Finally,
\[
    (s'P')C=s'CP=CP=C(sP),
\]
so multiplying either endpoint by a stabilizer preserves
the relation.
\end{proof}

\begin{corollary}[Logical Pauli preservation]
\label{Cor: Logical dim Pres}
Let $C:Q\to Q'$ be a nonzero stabilizer operator on $n$
input and $n'$ output qubits. Order its phase-free endpoint
labels as input followed by output, and define
\[
    \begin{gathered}
    V_C\triangleq\bigl\{[p,p']:
       P(p')CP(p)=(-1)^\eta C
       \text{ for some }\eta\in\mathbb F_2\bigr\},\\
    V_{\mathrm{in}}\triangleq\{p:[p,0]\in V_C\},
    \qquad
    V_{\mathrm{out}}\triangleq\{p':[0,p']\in V_C\}.
    \end{gathered}
\]
Here $p\in\mathbb F_2^{2n}$ and
$p'\in\mathbb F_2^{2n'}$; the transpose in the stacked
notation changes signs but not these labels. Write
$r=\dim V_{\mathrm{in}}$ and $r'=\dim V_{\mathrm{out}}$.
The logical Pauli spaces are
\[
    \mathsf L_{\mathrm{in}}
       \triangleq V_{\mathrm{in}}^{\perp_s}/V_{\mathrm{in}},
    \qquad
    \mathsf L_{\mathrm{out}}
       \triangleq V_{\mathrm{out}}^{\perp_s}/V_{\mathrm{out}}.
\]
Their classes are $[p]_{\mathrm{in}}=p+V_{\mathrm{in}}$
and $[p']_{\mathrm{out}}=p'+V_{\mathrm{out}}$, with the
symplectic product inherited from representatives.
Then $n-r=n'-r'\triangleq k$, and $C$ induces a
well-defined linear bijection
\[
    \Phi_C:\mathsf L_{\mathrm{in}}\to\mathsf L_{\mathrm{out}},
    \qquad [p]_{\mathrm{in}}\mapsto[p']_{\mathrm{out}}
    \quad\text{when }[p,p']\in V_C.
\]
Both spaces have dimension $2k$, and the bijection preserves
the symplectic product:
\begin{equation}
    \bigl\langle\Phi_C([p]_{\mathrm{in}}),
                 \Phi_C([q]_{\mathrm{in}})\bigr\rangle_s
    =\langle[p]_{\mathrm{in}},[q]_{\mathrm{in}}\rangle_s.
    \label{eq: Bijection logical symplectic preservation}
\end{equation}
\end{corollary}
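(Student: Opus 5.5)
Since $C\neq0$, the vector $\vket C$ is proportional to a pure stabilizer state on $n+n'$ qubits. I will work with its stabilizer group and the phase-free label space $V_C\subseteq\mathbb F_2^{2(n+n')}$.

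\emph{Step 1: $V_C$ is Lagrangian.} By the stacked notation, $P(p')CP(p)=(-1)^\eta C$ is equivalent to $\pm\stk{P(p')}{P(p)}$ stabilizing $\vket C$. The transpose only changes signs, so the label of $\stk{P(p')}{P(p)}$ is $[p,p']$ up to the fixed reordering of input and output blocks. Hence $V_C$ is exactly the label space of $\mathcal S_{\vket C}$. This space is closed under addition, since products of stabilizers are stabilizers up to sign. It is isotropic by the commutation argument of \Cref{prop: Abelian} and \Cref{prop: Symplectic Pauli Algebra}, and it has dimension $n+n'$ because the stabilizer group has $n+n'$ independent generators. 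Thus $V_C=V_C^{\perp_s}$ in $\mathbb F_2^{2n}\oplus\mathbb F_2^{2n'}$ with the direct-sum form. The space $V_{\mathrm{in}}$ is the input stabilizer label space and is isotropic by \Cref{prop: Abelian}; the same holds for $V_{\mathrm{out}}$.

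\emph{Step 2: projections.} Let $\pi_{\mathrm{in}}$ and $\pi_{\mathrm{out}}$ be the coordinate projections restricted to $V_C$. I claim $\pi_{\mathrm{in}}(V_C)=V_{\mathrm{in}}^{\perp_s}$.
\begin{itemize}
\item[] For $[p,p']\in V_C$ and $q\in V_{\mathrm{in}}$, we have $[q,0]\in V_C$, so isotropy gives $\langle p,q\rangle_s=0$. This shows $\pi_{\mathrm{in}}(V_C)\subseteq V_{\mathrm{in}}^{\perp_s}$.
\item[] Conversely, if $q\in\pi_{\mathrm{in}}(V_C)^{\perp_s}$, then $[q,0]$ is symplectically orthogonal to all of $V_C$. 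Therefore $[q,0]\in V_C^{\perp_s}=V_C$, so $q\in V_{\mathrm{in}}$. Hence $\pi_{\mathrm{in}}(V_C)^{\perp_s}\subseteq V_{\mathrm{in}}$, and taking duals gives $V_{\mathrm{in}}^{\perp_s}\subseteq\pi_{\mathrm{in}}(V_C)$.
\end{itemize}
The same argument gives $\pi_{\mathrm{out}}(V_C)=V_{\mathrm{out}}^{\perp_s}$. By rank--nullity, $\ker\pi_{\mathrm{in}}=\{0\}\oplus V_{\mathrm{out}}$. Therefore $n+n'=\dim V_C=(2n-r)+r'$, which gives $n-r=n'-r'$. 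This dimension count is the main obstacle, and the Lagrangian self-duality is exactly what makes it work.

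\emph{Step 3: the map $\Phi_C$.} For $p\in V_{\mathrm{in}}^{\perp_s}$, choose $p'$ with $[p,p']\in V_C$; such a $p'$ exists by Step 2.
\begin{itemize}
\item[] \emph{Well defined.} Two choices $p'_1,p'_2$ differ by an element of $V_{\mathrm{out}}$, because $[0,p'_1+p'_2]\in V_C$. Changing $p$ by $q\in V_{\mathrm{in}}$ adds $[q,0]\in V_C$, which leaves $p'$ unchanged. So $[p]_{\mathrm{in}}\mapsto[p']_{\mathrm{out}}$ is well defined.
\item[] \emph{Linear.} This follows because $V_C$ is a subspace.
\item[] \emph{Bijective.} Symmetrically, there is a map $\mathsf L_{\mathrm{out}}\to\mathsf L_{\mathrm{in}}$ built the same way, and it is an inverse. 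Alternatively, injectivity follows because $[p']\in V_{\mathrm{out}}$ forces $[p,0]\in V_C$, so $p\in V_{\mathrm{in}}$.
\item[] \emph{Dimensions.} $\dim\mathsf L_{\mathrm{in}}=(2n-r)-r=2k$.
\end{itemize}

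\emph{Step 4: symplectic preservation.} Take $[p,p'],[q,q']\in V_C$. Isotropy gives $\langle p,q\rangle_s+\langle p',q'\rangle_s=0$. Over $\mathbb F_2$ this says $\langle p',q'\rangle_s=\langle p,q\rangle_s$. The inherited form on each quotient is well defined because $V_{\mathrm{in}}$ and $V_{\mathrm{out}}$ lie in their own duals. This establishes \eqref{eq: Bijection logical symplectic preservation}.
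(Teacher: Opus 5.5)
Your proof is correct and has the same structure as the paper's: $\dim V_C=n+n'$ and isotropy of $V_C$ come from the stabilizer group of $\vket C$, rank--nullity is applied to the coordinate projections, and well-definedness, bijectivity and symplectic preservation all follow from isotropy of $V_C$. The one difference is how you get the image equality $\pi_{\mathrm{in}}(V_C)=V_{\mathrm{in}}^{\perp_s}$. You obtain the reverse inclusion directly from the Lagrangian property $V_C=V_C^{\perp_s}$, so a single dimension count suffices. The paper instead proves only the forward inclusions for both projections and adds the inequalities $n+n'-r'\le 2n-r$ and $n+n'-r\le 2n'-r'$ to force equality in both. The two arguments use the same ingredients, isotropy and $\dim V_C=n+n'$, packaged differently.
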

\begin{proof}
The $n+n'$ independent stabilizers of $\vket C$ give
$\dim V_C=n+n'$. Their commutativity implies
\begin{equation}
    \langle p,q\rangle_s+\langle p',q'\rangle_s=0
    \qquad([p,p'],[q,q']\in V_C).
    \label{eq:logical-pair-isotropy}
\end{equation}
In particular, $V_{\mathrm{in}}$ and
$V_{\mathrm{out}}$ are isotropic. Let
$\pi_{\mathrm{in}}([p,p'])=p$ and
$\pi_{\mathrm{out}}([p,p'])=p'$.
Their kernels have dimensions $r'$ and $r$, respectively.
Every projected input label commutes with
$V_{\mathrm{in}}$, and every projected output label
commutes with $V_{\mathrm{out}}$. Rank--nullity and
nondegeneracy of the ambient symplectic forms therefore give
\begin{align}
    n+n'-r'&\leq 2n-r,
        \label{eq: in ineq}\\
    n+n'-r&\leq 2n'-r'.
        \label{eq: out ineq}
\end{align}
Together these imply $n-r=n'-r'=k$, so both inequalities
are equalities. Hence
\[
    \operatorname{im}\pi_{\mathrm{in}}
       =V_{\mathrm{in}}^{\perp_s},
    \qquad
    \operatorname{im}\pi_{\mathrm{out}}
       =V_{\mathrm{out}}^{\perp_s},
\]
and both logical quotients have dimension $2k$.

For any input logical representative $p$, the first equality
supplies an output partner $p'$. If $[p,p'],[q,q']\in V_C$
and $p+q\in V_{\mathrm{in}}$, then
\[
    [0,p'+q']=[p,p']+[q,q']+[p+q,0]\in V_C.
\]
Thus $p'+q'\in V_{\mathrm{out}}$, proving that
$\Phi_C$ is independent of all choices of representatives.
Linearity follows by adding pairs in $V_C$; the second
projection equality gives surjectivity, hence bijectivity.

The quotient symplectic forms are well-defined because
adding a stabilizer label to a representative does not
change its pairing with any commuting label. They are
nondegenerate since
$(V_{\mathrm{in}}^{\perp_s})^{\perp_s}=V_{\mathrm{in}}$,
and similarly at the output. Finally,
\eqref{eq:logical-pair-isotropy} gives
$\langle p,q\rangle_s=\langle p',q'\rangle_s$, proving
\eqref{eq: Bijection logical symplectic preservation}.
\end{proof}

\begin{remark}[Logical action of a stabilizer circuit]
The bijection $\Phi_C$ is intrinsic to $C$, whereas its
matrix depends on the input and output logical Pauli bases.
For ordered symplectic bases
$\mathcal B_{\mathrm{in}}=([g_i^{\mathrm{in}}]_{\mathrm{in}})_{i=1}^{2k}$
and
$\mathcal B_{\mathrm{out}}=([g_i^{\mathrm{out}}]_{\mathrm{out}})_{i=1}^{2k}$,
our row-vector convention defines $F_{\mathrm{log}}$ by
\[
    \Phi_C([g_i^{\mathrm{in}}]_{\mathrm{in}})
    =\sum_{j=1}^{2k}(F_{\mathrm{log}})_{ij}
         [g_j^{\mathrm{out}}]_{\mathrm{out}}.
\]
This symplectic matrix specifies the \emph{phase-free}
logical action. The signs of the preserved pairs additionally
determine the signed logical Pauli action.

For example, $C=H$ exchanges the labels $x,z$ of $X,Z$.
Using $(x,z)$ at both boundaries gives
\[
    F_{\mathrm{log}}=\begin{pmatrix}0&1\\1&0\end{pmatrix}.
\]
Keeping the input basis but choosing $(z,x)$ at the output
gives $F_{\mathrm{log}}=I_2$. This is identity relative to
the new bases; the physical gate and $\Phi_H$ are unchanged.

Compatible Clifford encoders $U_{\mathrm{in}}$ and
$U_{\mathrm{out}}$ can fix these frames by mapping
$|0^r\rangle\otimes Q_{\mathrm L}$ and
$|0^{r'}\rangle\otimes Q_{\mathrm L}$ onto the input and
output stabilizer codes. The logical operator is then
\[
    C_{\mathrm{log}}
    =(\langle0^{r'}|\otimes I)
       U_{\mathrm{out}}^\dagger C U_{\mathrm{in}}
       (|0^r\rangle\otimes I).
\]
It is proportional to a Clifford unitary on $k$ qubits,
as follows from the bipartite stabilizer normal form
\cite[Sec.~2.4]{kliuchnikov2023stabilizercircuitverification}.
The boundary Cliffords may serve only to define the
logical frames; they need not be applied physically
\cite{beverland2024faulttolerancestabilizerchannels}.
\end{remark}

A circuit component may itself be a sequence of simpler
stabilizer operations. We now show how its stabilizers
follow from those of its components, so the same
construction can be applied at any level of abstraction.

\begin{definition}[Parallel and sequential composition]
For stabilizer circuit operators
$C_{o_{t,j}}:Q_{t,j}\to A_{t,j}\otimes Q'_{t,j}$,
$j\in\mathbb Z^+_{q}$, parallel composition is their tensor product:
\[
    \vket{C_{o_t}}
    =\vket{C_{o_{t,q}}}\otimes\cdots\otimes\vket{C_{o_{t,1}}}
    =\begin{quantikz}
       \setwiretype{b}&\gate{C_{o_{t,1}}}&\\
       \setwiretype{b}&\gate{C_{o_{t,2}}}&\\
       \setwiretype{n}&\push{\vdots}&\\
       \setwiretype{b}&\gate{C_{o_{t,q}}}&
      \end{quantikz}.
\]
For $C_{o_r}:Q_{r-1}\to A_r\otimes Q_r$,
$r\in\mathbb Z^+_{\ell}$, sequential composition connects each
output data register to the next input. Identities on
retained records are understood in the operator product
$C_o=C_{o_\ell}\cdots C_{o_1}$. Equivalently,
\[
    \begin{aligned}
    \vket{C_o}
      &=\mathbf I\bigl(
          \vket{C_{o_\ell}}\otimes\cdots\otimes\vket{C_{o_1}}
        \bigr)\\
      &=\begin{quantikz}
        \setwiretype{b}&\gate{C_{o_1}}&\gate{C_{o_2}}
        &\push{\cdots}&\gate{C_{o_\ell}}&
        \end{quantikz},
    \end{aligned}
\]
where
\[
    \mathbf I
    =I_{A_{1:\ell}}\otimes I_{Q_\ell}
       \otimes\Bigl(\bigotimes_{j=\ell-1}^{1}
                      \vbra{I_{Q_j}}\Bigr)\otimes I_{Q_0}.
\]
\end{definition}

\begin{theorem}[Stabilizer composition rule]
\label{prop: stabilizers of composition}
Let $C_{o_r}:Q_{r-1}\to A_r\otimes Q_r$ be nonzero
stabilizer operators, with $n_{r-1},n_r,n_{A_r}$ qubits
in the respective registers. Define
\[
    \vket{C_o^{\mathrm{par}}}
      =\vket{C_{o_\ell}}\otimes\cdots\otimes\vket{C_{o_1}},
    \qquad
    \vket{C_o^{\mathrm{seq}}}
      =\mathbf I\vket{C_o^{\mathrm{par}}}.
\]
The parallel stabilizer group is
\[
    \mathcal S_{\vket{C_o^{\mathrm{par}}}}
       =\mathcal S_{\vket{C_{o_\ell}}}\otimes\cdots
          \otimes\mathcal S_{\vket{C_{o_1}}}.
\]
If $C_o^{\mathrm{seq}}\neq0$, its full stabilizer group
consists of the operators
\[
    (-1)^{\sum_{j=1}^{\ell-1}\lambda_j}
    \stk{(R_\ell\otimes\cdots\otimes R_1)
                  \otimes P_\ell^{\mathrm{out}}}
        {P_1^{\mathrm{in}}},
\]
obtained from local stabilizers
\[
    \stk{R_r\otimes P_r^{\mathrm{out}}}
        {P_r^{\mathrm{in}}}
       \in\mathcal S_{\vket{C_{o_r}}},
       \qquad r\in\mathbb Z^+_{\ell},
\]
whose internal factors match up to sign:
\[
    P_j^{\mathrm{out}}P_{j+1}^{\mathrm{in}}
       =(-1)^{\lambda_j}I_{Q_j},
       \qquad \lambda_j\in\mathbb F_2,
       \quad j\in\mathbb Z^+_{\ell-1}.
\]
Here $R_r,P_r^{\mathrm{in}},P_r^{\mathrm{out}}$ are
Hermitian Paulis on $A_r,Q_{r-1},Q_r$, respectively.
\end{theorem}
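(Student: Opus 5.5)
The plan is to treat the parallel case by a direct count and the sequential case in two halves: first show that every matched product stabilizes $\vket{C_o^{\mathrm{seq}}}$, then show by a dimension count over $\mathbb F_2$ that these products already make up the full stabilizer group. The count will rest on \Cref{Cor: Logical dim Pres}. Throughout, I use that a nonzero stabilizer vector on $N$ qubits has a stabilizer group with exactly $N$ independent generators, not containing $-I$. Hence any abelian Pauli group $G\not\ni -I$ that fixes the vector and has $N$ independent generators is the whole stabilizer group.

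\textbf{Parallel composition.} Each factor $S_r\in\mathcal S_{\vket{C_{o_r}}}$ fixes its own tensor factor, so $S_\ell\otimes\cdots\otimes S_1$ fixes $\vket{C_o^{\mathrm{par}}}$. The product group has $\sum_r(n_{r-1}+n_{A_r}+n_r)$ independent generators, which is the number of qubits. It excludes $-I$: a tensor product of Hermitian Paulis equals $-I$ only if every factor is $\pm I$, and no factor group contains $-I$. Equality of the groups follows.

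\textbf{Sequential composition: sufficiency.} In operator language, a local stabilizer means
\[
(R_r\otimes P_r^{\mathrm{out}})\,C_{o_r}\,P_r^{\mathrm{in}}=C_{o_r}.
\]
I will substitute this into $C_o=C_{o_\ell}\cdots C_{o_1}$. Between consecutive factors the product $P_{j+1}^{\mathrm{in}}P_j^{\mathrm{out}}$ appears. By the matching hypothesis it equals $(-1)^{\lambda_j}I$; the two operators coincide up to sign, so they commute and the order does not matter. The record factors $R_r$ act on distinct registers $A_r$ and commute past all later data operators. Collecting the signs shows that $(-1)^{\sum_j\lambda_j}\stk{(R_\ell\otimes\cdots\otimes R_1)\otimes P_\ell^{\mathrm{out}}}{P_1^{\mathrm{in}}}$ fixes $\vket{C_o^{\mathrm{seq}}}$. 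These operators are closed under multiplication, because products of matched tuples are matched and the signs add. Since $C_o^{\mathrm{seq}}\neq0$, the set cannot contain $-I$, so by \Cref{prop: Abelian}-type reasoning it forms an abelian subgroup $G$ of the stabilizer group.

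\textbf{Sequential composition: completeness (the main obstacle).} It remains to show that $G$ has $N=n_0+n_\ell+\sum_r n_{A_r}$ independent generators. I will induct on $\ell$, so the essential case is $\ell=2$ with a single internal register $Q_1$. Work with phase-free label spaces:
\begin{itemize}
\item $V_1$ for $\vket{C_{o_1}}$, of dimension $n_0+n_1+n_{A_1}$;
\item $V_2$ for $\vket{C_{o_2}}$, of dimension $n_1+n_2+n_{A_2}$;
\item $M=\{(v_1,v_2):\ v_1|_{Q_1}=v_2|_{Q_1}\}$, the matched pairs.
\end{itemize}

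\emph{Image of $V_r$ on $Q_1$.} View $\vket{C_{o_1}}$ as an operator with $Q_1$ on one side and $Q_0A_1$ on the other; vectorization allows any bipartition. \Cref{Cor: Logical dim Pres} then gives that the projection of $V_1$ onto $Q_1$ is $W_1^{\perp_s}$. Here $W_1$ is the set of $Q_1$-labels occurring in $V_1$ with trivial labels elsewhere. Likewise, the projection of $V_2$ onto $Q_1$ is $W_2^{\perp_s}$.

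\emph{Dimension of $M$.} The matching map sends $M$ onto $W_1^{\perp_s}\cap W_2^{\perp_s}$. Since $(W_1+W_2)^{\perp_s}=W_1^{\perp_s}\cap W_2^{\perp_s}$, rank--nullity gives
\[
\dim M=\dim V_1+\dim V_2-2n_1+\dim(W_1\cap W_2).
\]

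\emph{Passing to external labels.} The map $M\to$ external labels has kernel consisting of pairs with all external parts trivial, namely $(p,p)$ with $p\in W_1\cap W_2$. Therefore the image, which is the label space of $G$, has dimension $\dim V_1+\dim V_2-2n_1=N$. Because $-I\notin G$, the label map on $G$ is injective modulo sign, so $G$ has $N$ independent generators and is therefore the full stabilizer group.

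\emph{Induction.} For general $\ell$, apply the $\ell=2$ case to $C_{o_{\ell-1}}\cdots C_{o_1}$, which is nonzero because $C_o^{\mathrm{seq}}\neq0$, and to $C_{o_\ell}$. By the inductive hypothesis, the stabilizers of the first operator are exactly matched products. Nesting the matchings then yields the stated form with the total sign $(-1)^{\sum_j\lambda_j}$. I expect the delicate points to be the kernel identification $W_1\cap W_2$ and the careful use of the corollary under a nonstandard bipartition.
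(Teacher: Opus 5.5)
Your proof is correct, but it proves completeness by a different route from the paper.

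\textbf{The paper's route.} The paper argues analytically. It expands $|\Psi\rangle\langle\Psi|=\frac{\nu}{2^N}\sum_{s\in\mathcal S}s$ and contracts with $\mathbf I$, so each term picks up $\prod_j\operatorname{Tr}(P_j^{\mathrm{out}}P_{j+1}^{\mathrm{in}})$ and every unmatched term vanishes. It then uses trace orthogonality: any Pauli stabilizer $h$ has $\operatorname{Tr}(h\rho)=1$, so it must appear among the matched operators $\theta(t)$. Purity $\operatorname{Tr}\rho^2=1$ then gives $|\mathcal G|=2^{N''}$. This handles all $\ell$ at once, without induction, and shows the contraction is a stabilizer state in the same computation.

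\textbf{Your route.} Your sufficiency step is the operator form of the paper's intertwining relation $\mathbf I t=\theta(t)\mathbf I$. Your completeness step is instead pure $\mathbb F_2$ linear algebra:
\begin{itemize}
\item the projection identity $\pi_{Q_1}(V_r)=W_r^{\perp_s}$, which also follows in one line from $V_r^{\perp_s}=V_r$, so you need not invoke \Cref{Cor: Logical dim Pres} under a reordered bipartition;
\item the count of matched pairs;
\item the identification of $W_1\cap W_2$ as the kernel of the map to external labels;
\item a maximal-isotropic argument.
\end{itemize}

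\textbf{What your route buys.} It exposes structure that the trace argument hides. In the two-step case the dimension count uses only the nonvanishing of the local factors; $C_o^{\mathrm{seq}}\neq0$ enters only to exclude $-I$. The space $W_1\cap W_2$ measures exactly how many matched local pairs collapse onto each global stabilizer. Moreover, any $p\in W_1\cap W_2$ whose matched product equals $-I$ immediately certifies $C_o^{\mathrm{seq}}=0$.

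\textbf{The cost.} You must do the sign bookkeeping in the induction, and it works out. Absorbing $(-1)^{\sum_{j<\ell-1}\lambda_j}$ into the record factor leaves $\lambda_{\ell-1}$ unchanged. Absorbing it into $P_{\ell-1}^{\mathrm{out}}$ shifts $\lambda_{\ell-1}$ by exactly that exponent. So the total sign does not depend on this choice, as the paper also notes for $\theta(t)$.
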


\begin{remark}
    This is the operator-matching construction used in
stabilizer tensor networks
\cite{cao2022quantum,shen2025quantum}.
We give the proof with the transposes and signs required
by our circuit convention.
\end{remark}

\begin{proof}
Set $\ket\Psi=\vket{C_o^{\mathrm{par}}}$,
$N_r=n_{A_r}+n_{r-1}+n_r$, and $N=\sum_rN_r$.
The stabilizer generators of each $\mathcal{S}_{\vket{C_{o_r}}}$ act on disjoint tensor factors,
so together they give $N$ independent stabilizers. 
Thus
$\mathcal S\triangleq\mathcal S_{\ket\Psi}
=\bigotimes_{r=\ell}^{1}\mathcal S_{\vket{C_{o_r}}}$,
proving the parallel rule, and
\begin{equation}
    \ket\Psi\bra\Psi
       =\frac{\nu}{2^N}\sum_{s\in\mathcal S}s,
       \qquad \nu=\langle\Psi|\Psi\rangle.
    \label{eq:composition-parallel-expansion}
\end{equation}

Now let $\ket\psi=\mathbf I\ket\Psi\neq0$.
Write $N'=\sum_{j=1}^{\ell-1}n_j$ and $N''=N-2N'$.
For $s=s_\ell\otimes\cdots\otimes s_1\in\mathcal S$,
let
\[
    B(s)=\stk{(R_\ell\otimes\cdots\otimes R_1)
                    \otimes P_\ell^{\mathrm{out}}}
                  {P_1^{\mathrm{in}}}.
\]
Using \Cref{lemma: Vectorization identities} we get,
\begin{equation}
    \mathbf I s\mathbf I^\dagger
       =B(s)\prod_{j=1}^{\ell-1}
          \operatorname{Tr}
           (P_j^{\mathrm{out}}P_{j+1}^{\mathrm{in}}).
    \label{eq:composition-pauli-contraction}
\end{equation}
Pauli trace orthogonality makes this nonzero exactly when
\begin{equation}
    P_j^{\mathrm{out}}P_{j+1}^{\mathrm{in}}
       =(-1)^{\lambda_j}I_{Q_j}
       \qquad(j\in\mathbb Z^+_{\ell-1}).
    \label{eq:composition-matching}
\end{equation}
Let $\mathcal T$ be the set of matching elements and put
\[
    \theta(t)=(-1)^{\sum_j\lambda_j}B(t)
             =2^{-N'}\mathbf I t\mathbf I^\dagger.
\]
This definition is independent of how signs are distributed
among the local factors. The matching condition also gives
\[
    \vbra{I_{Q_j}}
      (P_j^{\mathrm{out}}\otimes
        (P_{j+1}^{\mathrm{in}})^{\mathsf T})
      =(-1)^{\lambda_j}\vbra{I_{Q_j}}.
\]
Applying it on every internal wire yields
\begin{equation}
    \mathbf I t=\theta(t)\mathbf I,
    \qquad \theta(t)\ket\psi=\ket\psi.
    \label{eq:composition-matching-intertwining}
\end{equation}
Thus every operator in
$\mathcal G=\{\theta(t):t\in\mathcal T\}$ is a
stabilizer of the contracted vector.

To prove completeness, contract
\eqref{eq:composition-parallel-expansion} as $\mathbf I \ketbra{\Psi}{\Psi}\mathbf I^\dagger$ and normalize to obtain
\begin{equation}
    \rho\triangleq\frac{\ket\psi\bra\psi}{\langle\psi|\psi\rangle}
      =\frac{\nu}{2^{N-N'}\langle\psi|\psi\rangle}
          \sum_{t\in\mathcal T}\theta(t).
    \label{eq:composition-contracted-expansion}
\end{equation}
After collecting repeated operators, write
$\rho=\sum_{g\in\mathcal G}c_g g$.
If a Hermitian Pauli $h$ stabilizes $\ket\psi$, then
$\operatorname{Tr}(h\rho)=1$. If $h\notin\mathcal G$,
neither does $-h$, because all elements of $\mathcal G$
fix the same nonzero vector. Pauli trace orthogonality
would then give $\operatorname{Tr}(h\rho)=0$.
Hence $\mathcal G$ contains every Pauli stabilizer.
Such stabilizers commute and form a group: anticommutation
would force $\ket\psi=-\ket\psi$, and products also
fix $\ket\psi$.

For each $g\in\mathcal G$, orthogonality now gives
$1=\operatorname{Tr}(g\rho)=2^{N''}c_g$.
Therefore
\[
    \rho=2^{-N''}\sum_{g\in\mathcal G}g,
    \qquad
    1=\operatorname{Tr}(\rho^2)=2^{-N''}|\mathcal G|.
\]
Thus $|\mathcal G|=2^{N''}$, so the contracted vector is
proportional to a pure stabilizer state with precisely
the claimed group.
\end{proof}

The composition rule determines the stabilizers of each
branch. For a non-adaptive circuit, we can collect the
branch-dependent signs into the retained measurement
record and work with one common relation space. Its
input, output and record support then distinguishes
observables, output checks and detectors. The following proposition establishes this formally.

\begin{definition}[Common stabilizer relations]
\label{def:common-stabilizer-relations}
Let $\mathbf C$ be a non-adaptive stabilizer with branch operators 
\[
    C_o=|o\rangle_A\otimes K_o:Q\longrightarrow A\otimes Q',
    \qquad
    \Omega_{\mathbf C}=\{o:K_o\neq0\},
\]
where $Q\cong\mathbb C^{2^n}$,
$Q'\cong\mathbb C^{2^{n'}}$, and
$A\cong(\mathbb C^2)^{\otimes\ell_M}$.
Define the common stabilizer group
\begin{equation}
    G_{\mathbf C}\triangleq
    \bigcap_{o\in\Omega_{\mathbf C}}
        \mathcal S_{\vket{C_o}}.
    \label{eq:outcome-independent-stabilizer-group}
\end{equation}
For $b\in\mathbb F_2^{\ell_M}$,
$p\in\mathbb F_2^{2n}$, and
$p'\in\mathbb F_2^{2n'}$, put $Z_A^b=P(0,b)_A$.
The \emph{common relation space} $\mathcal W_{\mathbf C}$
is defined to contain triples $\omega=[b,p,p']$ admitting a unique \footnote{Since $\Omega_{\mathbf C}\neq\varnothing$,
$G_{\mathbf C}$ is abelian and excludes $-I$.
Its record factors are $Z$-type, since they must preserve
each nonzero branch's record $|o\rangle$.
Thus $\mathcal W_{\mathbf C}$ is a binary linear space,
and each $c_\omega$ is unique.}
$c_\omega\in\mathbb F_2$ such that
\begin{equation}
    g(\omega)\triangleq
    (-1)^{c_\omega}
    \stk{Z_A^b\otimes P(p')}{P(p)}
    \in G_{\mathbf C}.
    \label{eq:common-stabilizer-relation}
\end{equation}
Define
\begin{equation}
    \begin{aligned}
        \mathcal W_{\mathrm{det}}
        &\triangleq\{[b,0,0]\in\mathcal W_{\mathbf C}\},\\
        \mathcal W_{\mathrm{obs}}
        &\triangleq\{[b,p,0]\in\mathcal W_{\mathbf C}\},\\
        \mathcal W_{\mathrm{out}}
        &\triangleq\{[b,0,p']\in\mathcal W_{\mathbf C}\}.
    \end{aligned}
    \label{eq:common-relation-subspaces}
\end{equation}
These subspaces satisfy
$\mathcal W_{\mathrm{obs}}\cap\mathcal W_{\mathrm{out}}
=\mathcal W_{\mathrm{det}}$.
We define the \emph{preserved-pair quotient} space as
\begin{equation}
    \mathcal L_{\mathrm{POP}}\triangleq
    \mathcal W_{\mathbf C}/
    (\mathcal W_{\mathrm{obs}}+\mathcal W_{\mathrm{out}}).
    \label{eq:common-preserved-pair-quotient}
\end{equation}
\end{definition}

\begin{proposition}[Detectors, observables, and output checks]
\label{prop:common-relation-extraction}
For the circuit of \Cref{def:common-stabilizer-relations},
every $\omega=[b,p,p']\in\mathcal W_{\mathbf C}$ satisfies
\begin{equation}
    (Z_A^b\otimes P(p'))C_o
    =(-1)^{c_\omega}C_oP(p)
    \qquad(\forall o).
    \label{eq:common-operational-relation}
\end{equation}
The detectors and observable representatives are
\begin{equation}
    \begin{aligned}
        \mathcal D_{\mathbf C}
        &=\{b:[b,0,0]\in\mathcal W_{\mathbf C}\},\\
        \mathcal O_{\mathbf C}(P(p))
        &=\{b:[b,p,0]\in\mathcal W_{\mathbf C}\},
    \end{aligned}
    \label{eq:detector-observable-extraction}
\end{equation}
with parities $c_{[b,0,0]}$ and $c_{[b,p,0]}$, respectively. Moreover, the realizable outcome set is given by
\begin{equation}
    \Omega_{\mathbf C}
    =
    \left\{
        o\in\mathbb F_2^{\ell_M}:
        o\cdot d=c_{[d,0,0]}
        \ \text{for every }d\in\mathcal D_{\mathbf C}
    \right\}.
    \label{eq:common-realisable-outcomes}
\end{equation}
\end{proposition}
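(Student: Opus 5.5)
The plan has three parts. First, translate membership in $G_{\mathbf C}$ into the operator relation \eqref{eq:common-operational-relation}. Second, read off detectors and observables from the record-only and input-only relations. Third, prove the realizability characterization through a stabilizer-group argument on the joint vectorized state.

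\emph{Step 1: the operational relation.} For $\omega\in\mathcal W_{\mathbf C}$ and $o\in\Omega_{\mathbf C}$, the element $g(\omega)$ stabilizes $\vket{C_o}$. The stacked-notation identity $\stk{P'}{P}\vket C=\vket C\iff P'CP=C$, applied with $P'=Z_A^b\otimes P(p')$ and $P=P(p)$, gives $(Z_A^b\otimes P(p'))C_oP(p)=(-1)^{c_\omega}C_o$. Multiplying on the right by $P(p)$, using $P(p)^2=I$, yields \eqref{eq:common-operational-relation}. For $o\notin\Omega_{\mathbf C}$ we have $C_o=0$, so the relation holds trivially.

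\emph{Step 2: detectors and observables.} Since $Z_A^b|o\rangle=(-1)^{o\cdot b}|o\rangle$, the relation for $[b,p,0]$ reads
\[
(-1)^{o\cdot b}K_o=(-1)^{c_\omega}K_oP(p).
\]
Multiplying by $\Pi^{\beta}_{P(p)}$ turns $K_oP(p)\Pi^\beta_{P(p)}$ into $(-1)^\beta K_o\Pi^\beta_{P(p)}$. Hence $K_o\Pi^\beta\neq0$ forces $o\cdot b=c_\omega\oplus\beta$, so $b\in\mathcal O_{\mathbf C}(P(p))$ with parity $c_\omega$. The case $p=0$ gives detectors.

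For the converse, let $v$ represent $P=P(u)$ with parity $a$. I would show that $(-1)^a\stk{Z_A^v}{P(u)}$ stabilizes every $\vket{C_o}$. Decompose $K_o=K_o\Pi^0+K_o\Pi^1$. On each nonzero branch $K_o\Pi^\beta$ we have $(-1)^{o\cdot v}=(-1)^{a\oplus\beta}$. Therefore $Z_A^v C_o=(-1)^{o\cdot v}C_o$, and
\[
(-1)^{o\cdot v}K_o=(-1)^a\bigl(K_o\Pi^0-K_o\Pi^1\bigr)=(-1)^aK_oP,
\]
using $P\Pi^\beta=(-1)^\beta\Pi^\beta$. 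This places $[v,u,0]$ in $\mathcal W_{\mathbf C}$. The only care needed is the sign convention when $P$ carries a sign: a sign on $P$ is absorbed into $c_\omega$.

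\emph{Step 3: realizability.} The inclusion ``$\subseteq$'' in \eqref{eq:common-realisable-outcomes} is immediate from the definition of a detector. The reverse inclusion is where I expect the real work, because one must show that detector parities alone cut out $\Omega_{\mathbf C}$.

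My plan is to use non-adaptivity. Write the ideal circuit as a single Clifford $U$ acting on $Q\otimes|0\rangle$, followed by $Z$-basis measurement of the $\ell_M$ record qubits, with fixed Pauli corrections. Then the set of outcomes $o$ with $K_o\neq0$ is the set of $o$ for which $(\langle o|\otimes I)U(I\otimes|0\rangle)\neq0$.

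Next, consider the maximally entangled input, i.e. the stabilizer state $\vket{\mathbf I\mathbf C}$ obtained by coherently keeping the record. Its reduced distribution on the record is uniform over an affine subspace. This is the standard fact that a $Z$-basis measurement of a stabilizer state yields outcomes uniform on a coset of the annihilator of the $Z$-type record stabilizers. That coset consists exactly of the $o$ with $K_o\neq0$, since the reference qubits cover all inputs. Its defining equations are $o\cdot b=c$ for the stabilizers of the form $(-1)^cZ_A^b\otimes I$.

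Finally, I would identify these stabilizers with elements of $G_{\mathbf C}$ of the form $[b,0,0]$. Each such stabilizer restricts to every branch, so it lies in the intersection over $o$; hence its support $b$ is a detector. This gives ``$\supseteq$''.

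The main obstacle is making this identification rigorous using only \Cref{prop: stabilizers of composition}. In particular, one must check that the branch-dependent signs coming from the matching phases $\lambda_j$ are absorbed into $Z_A^b$ uniformly across $o$, which is exactly where non-adaptivity is used.
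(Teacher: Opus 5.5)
Your proposal is correct and takes essentially the paper's route. The operational relation comes from the vectorization identity, detectors and observables are read off through the projectors $\Pi^\beta_{P(p)}$, and realizability goes through the stabilizer state $2^{-n/2}\vket{B}$, where $B=\sum_oC_o$ is the unmeasured Clifford isometry; the paper proves your ``standard fact'' inline by expanding $|o\rangle\langle o|_A=2^{-\ell_M}\sum_d(-1)^{o\cdot d}Z_A^d$ and using that Pauli expectations in a stabilizer state are $\pm1$ or $0$. The obstacle you anticipate does not arise: $Z_A^b\otimes I$ is diagonal in the orthogonal record basis, so $(-1)^cZ_A^b\otimes I$ stabilizes $\sum_o\vket{C_o}$ if and only if $(-1)^{c+o\cdot b}=1$ for every $o$ with $C_o\neq0$, i.e.\ if and only if $b$ is a detector of parity $c$; non-adaptivity is needed only to make $\sum_oC_o$ a Clifford isometry, not to control the matching phases $\lambda_j$.
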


\begin{proof}
\Cref{eq:common-operational-relation} follows from \Cref{lemma: Vectorization identities}.
For $p=p'=0$, this is exactly the detector condition.
For $p'=0$, it is equivalent to
\[
    C_o\Pi_{P(p)}^{\,1+c_\omega+o\cdot b}=0
    \qquad(\forall o),
\]
which is the observable condition. 

Non-adaptivity by \Cref{def:common-stabilizer-relations} gives a Clifford isometry $V$, formed
from the fixed gates and preparations, such that
$C_o=(\ketbra{o}{o}_A\otimes I)V$.
Since $\sum_o\ketbra{o}{o}_A=I_A$, we have
$B=\sum_o C_o=V$.
Thus $|\psi\rangle=2^{-n/2}\vket{B}$ is a normalized
stabilizer state.

By orthogonality of the records,
$(-1)^aZ_A^d\otimes I$ stabilizes $|\psi\rangle$
exactly when $d$ is a detector of parity $a$.
For a non-detector, $Z_A^d\otimes I$ anticommutes
with some stabilizer of $|\psi\rangle$ and therefore
has zero expectation.
Expanding the record projector gives
\[
\begin{aligned}
    2^{-n}\operatorname{Tr}(C_o^\dagger C_o)
    &=\langle\psi|
      (\ketbra{o}{o}_A\otimes I)|\psi\rangle\\
    &=2^{-\ell_M}
      \sum_{d\in\mathbb F_2^{\ell_M}}
      (-1)^{o\cdot d}
      \langle\psi|(Z_A^d\otimes I)|\psi\rangle\\
    &=2^{-\ell_M}
      \sum_{d\in\mathcal D_{\mathbf C}}
      (-1)^{c_{[d,0,0]}+o\cdot d}.
\end{aligned}
\]
Linearity of detector parities makes the last sum
equal to $|\mathcal D_{\mathbf C}|$ when all parities
hold and zero otherwise by pairwise cancellation.
The left-hand side is positive exactly when $C_o\neq0$,
proving \eqref{eq:common-realisable-outcomes}.
\end{proof}

\begin{proposition}[Outcome independence and preserved logical information]
\label{prop:outcome-independent-stabilizer-relations}
For the circuit of \Cref{def:common-stabilizer-relations},
the endpoint spaces $V_{K_o}$ of
\Cref{Cor: Logical dim Pres} coincide for all
$o\in\Omega_{\mathbf C}$.
Denote their common value by $V$, and put
\[
    V_{\mathrm{in}}\triangleq\{p:[p,0]\in V\},
    \qquad
    V_{\mathrm{out}}\triangleq\{p':[0,p']\in V\}.
\]
It then holds
\begin{equation}
    \begin{aligned}
        \mathcal W_{\mathbf C}/\mathcal W_{\mathrm{det}}
        &\cong V,\\
        \mathcal W_{\mathrm{obs}}/\mathcal W_{\mathrm{det}}
        &\cong V_{\mathrm{in}},\\
        \mathcal W_{\mathrm{out}}/\mathcal W_{\mathrm{det}}
        &\cong V_{\mathrm{out}}.
    \end{aligned}
    \label{eq:common-endpoint-identifications}
\end{equation}
The phase-free logical bijections $\Phi_{K_o}$ in \Cref{Cor: Logical dim Pres} also
coincide for all $o \in \Omega_{\mathbf C}$. Each nonzero branch preserves $k$ logical
qubits, where
\begin{equation}
    \begin{gathered}
        k=n-\dim V_{\mathrm{in}}
          =n'-\dim V_{\mathrm{out}},\\
        \dim\mathcal L_{\mathrm{POP}}=2k.
    \end{gathered}
    \label{eq:common-preserved-pair-dimension}
\end{equation}
\end{proposition}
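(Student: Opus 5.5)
The plan is to reduce everything to the single stabilizer state $|\psi\rangle=2^{-n/2}\vket{B}$ with $B=\sum_o C_o=V$, the Clifford isometry built in the proof of \Cref{prop:common-relation-extraction}. Each branch is a contraction of this state with a record projector. By non-adaptivity, $C_o=(\ketbra{o}{o}_A\otimes I)V$ and $K_o=(\bra{o}_A\otimes I)V$.

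\textbf{Step 1: stabilizers of a branch.} Apply \Cref{prop: stabilizers of composition} to the composition of $V$ with the stabilizer operator $\ketbra{o}{o}_A\otimes I$, for $o\in\Omega_{\mathbf C}$ so that the branch is nonzero. The projector $\ketbra{o}{o}$ has stabilizers generated by $(-1)^{o_i}Z_i$ on its output side and on its input side. The matching condition therefore keeps exactly those $s\in\mathcal S_{\vket V}$ whose record factor is $Z$-type, say $Z_A^{b}$ up to sign. It produces stabilizers of $\vket{C_o}$ of the form
\[
(-1)^{c_s+o\cdot b''}\,\stk{Z_A^{b+b''}\otimes P(p')}{P(p)}, \qquad b''\in\mathbb F_2^{\ell_M}.
\]
Consequently the phase-free labels of $\mathcal S_{\vket{C_o}}$ are $\mathbb F_2^{\ell_M}\times V_o$. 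Here $V_o$ is the projection onto $[p,p']$ of the subspace of labels of $\mathcal S_{\vket V}$ with $Z$-type record part. This projection does not depend on $o$, so $V_{K_o}=V_o\triangleq V$ is common to all realizable branches. Indeed, removing the record factor $|o\rangle$ of $\vket{C_o}$ gives exactly the labels of $\vket{K_o}$ with endpoint part $[p,p']$. Since the $\Phi_{K_o}$ of \Cref{Cor: Logical dim Pres} are defined purely from $V_{K_o}$, they coincide as well. The same corollary then gives $k=n-\dim V_{\mathrm{in}}=n'-\dim V_{\mathrm{out}}$.

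\textbf{Step 2: lifting to common relations.} This is the step I expect to be the crux. Take $[p,p']\in V$ and pick $s=(-1)^{c_s}\stk{Z_A^b\otimes P(p')}{P(p)}\in\mathcal S_{\vket V}$. The operator $Z_A^b$ commutes with $\ketbra{o}{o}_A$, so
\[
(Z_A^b\otimes P(p'))\,C_o=(-1)^{c_s}\,C_oP(p)
\]
holds for every $o$ with the same sign. Hence $g=s\in G_{\mathbf C}$ and $[b,p,p']\in\mathcal W_{\mathbf C}$. The projection $\pi:\mathcal W_{\mathbf C}\to V$, $[b,p,p']\mapsto[p,p']$, is therefore well defined and surjective. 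Its kernel is $\{[b,0,0]\in\mathcal W_{\mathbf C}\}=\mathcal W_{\mathrm{det}}$, which gives the first isomorphism in \eqref{eq:common-endpoint-identifications}. Restricting $\pi$ to $\mathcal W_{\mathrm{obs}}$ lands in $V_{\mathrm{in}}$, and it is onto because the lift of $[p,0]$ constructed above has $p'=0$. The analogous restriction to $\mathcal W_{\mathrm{out}}$ lands onto $V_{\mathrm{out}}$. Both restrictions have kernel $\mathcal W_{\mathrm{det}}$, which gives the other two isomorphisms.

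\textbf{Step 3: dimension count.} We use $\mathcal W_{\mathrm{obs}}\cap\mathcal W_{\mathrm{out}}=\mathcal W_{\mathrm{det}}$ together with the isomorphisms from Step 2. This gives
\[
\dim(\mathcal W_{\mathrm{obs}}+\mathcal W_{\mathrm{out}})=\dim\mathcal W_{\mathrm{det}}+\dim V_{\mathrm{in}}+\dim V_{\mathrm{out}}.
\]
Moreover $\dim\mathcal W_{\mathbf C}=\dim\mathcal W_{\mathrm{det}}+\dim V$, and $\dim V=n+n'$ as in the proof of \Cref{Cor: Logical dim Pres}. Therefore
\[
\dim\mathcal L_{\mathrm{POP}}=n+n'-\dim V_{\mathrm{in}}-\dim V_{\mathrm{out}}=2k.
\]

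The main obstacle is Step 2, namely controlling signs. One must check that the sign of $s$ is branch-independent once the record factor is kept explicit. One must also check that Step 1 captures all of $V_{K_o}$, i.e.\ that every endpoint relation of a branch arises from $\mathcal S_{\vket V}$. For the latter I would rely on the completeness part of \Cref{prop: stabilizers of composition}.
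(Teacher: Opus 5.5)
Your proof is correct, but you use a coarser decomposition than the paper does.

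The paper argues component by component. It lists the stabilizer generators of Clifford gates, $|0\rangle$ preparations and $Z$ measurements, and notes that their binary labels do not depend on the outcome; only the measurement signs vary. Applying \Cref{prop: stabilizers of composition} along the whole chain, it concludes that the $V_{K_o}$ coincide. It then lifts $[p,p']\in V$ by fixing one matching tuple of local stabilizers, collecting the measurement signs into $(-1)^{o\cdot b}$, and trading them for $Z_A^b$ via $Z_A^b|o\rangle=(-1)^{o\cdot b}|o\rangle$.

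You instead defer all measurements. You write $C_o=(|o\rangle\langle o|_A\otimes I)B$, with $B$ the fixed Clifford isometry from the proof of \Cref{prop:common-relation-extraction}, and invoke the composition rule only once, for $B$ followed by the record projector. This removes all sign bookkeeping. The common relation is literally a fixed $s\in\mathcal S_{\vket B}$ with $Z$-type record factor. Its sign is branch-independent simply because $Z_A^b$ commutes with $|o\rangle\langle o|_A$, so $c_\omega=c_s$. You also make explicit, through $\vket{C_o}=|o\rangle_A\otimes\vket{K_o}$, the passage from stabilizers of $\vket{C_o}$ to those of $\vket{K_o}$, which the paper leaves implicit.

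Summing $(-1)^{c}(Z_A^b\otimes P(p'))C_oP(p)=C_o$ over $o$ gives the converse inclusion. Hence $G_{\mathbf C}$ is exactly the subgroup of $\mathcal S_{\vket B}$ with $Z$-type record factors, which ties this proposition to the same state used in \Cref{prop:common-relation-extraction}. The paper's component-wise route, for its part, is the template reused in \Cref{prop: code-conditioned stabilizers level 1}. There each Pauli-measurement projector is matched separately and $b$ is read off measurement by measurement, without first building a global isometry.

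Two small points. The obstacles you flag at the end are already handled by your own Steps 1 and 2. You also use $V$ both for the isometry and for the common endpoint space; rename the isometry (e.g.\ $B$, as above) so that ``pick $s\in\mathcal S_{\vket V}$'' for $[p,p']\in V$ is unambiguous.
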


\begin{proof}
We begin by noting down the stabilizer generators of the possible circuit components. For a fixed Clifford gate $U$, the stabilizer generators
of $\vket U$ are $\{\stk{UPU^\dagger}{P}\}_{P \in \mathcal{P}(\mathrm{dom} U)}$; identity wires correspond to $U=I$.
An $m$-qubit preparation $|0\rangle^{\otimes m}$ has
generators $Z_j$, $j\in\mathbb Z^+_{m}$.
For a $Z$ measurement, $\vket{\Pi_t^Z}$ has generators
\[
    (-1)^tZ\otimes I,
    \qquad
    (-1)^tI\otimes Z.
\]
Thus all local binary label spaces are fixed; only
the measurement signs depend on the outcomes.
By \Cref{prop: stabilizers of composition}, matching
equal binary Pauli labels on the shared registers
gives all branch endpoint relations.
The available labels and matching conditions are
outcome-independent, so all $V_{K_o}$ coincide.

For any $[p,p']\in V$, fix matching local stabilizers.
Their measurement signs multiply to $(-1)^{o\cdot b}$,
while the remaining phases give a fixed sign $(-1)^c$.
Hence
\[
    P(p')K_oP(p)=(-1)^{c+o\cdot b}K_o
    \qquad(o\in\Omega_{\mathbf C}),
\]
for some $c\in\mathbb F_2$ and
$b\in\mathbb F_2^{\ell_M}$.
Since $Z_A^b|o\rangle=(-1)^{o\cdot b}|o\rangle$,
this gives $[b,p,p']\in\mathcal W_{\mathbf C}$.

For $w,w' \in \mathcal W_{\mathbf{C}}$ with same $p,p'$, we have $w- w' = [b+b',0,0]\in\mathcal W_{\mathrm{det}}$.
Conversely, by definition for arbitray $d \in \mathcal{W}_{\operatorname{det}}$, $w + d , w'+d$ have the same endpoint labels $p,p'$ by definition. Thus, the cosets of $\mathcal W_{\mathrm{det}}$
correspond bijectively and linearly to the pairs in $V$.
Restricting to $p'=0$ or $p=0$ gives the other two
identifications in \eqref{eq:common-endpoint-identifications}.

The logical bijection in \Cref{Cor: Logical dim Pres}
is determined by $V$, so all $\Phi_{K_o}$ coincide.
From  \Cref{Cor: Logical dim Pres} and its proof it then follows that
$k=n-\dim V_{\mathrm{in}}=n'-\dim V_{\mathrm{out}}$
and $\dim V=n+n'$.
Finally, using \eqref{eq:common-endpoint-identifications}
and
$\mathcal W_{\mathrm{obs}}\cap\mathcal W_{\mathrm{out}}
=\mathcal W_{\mathrm{det}}$,
\[
    \begin{aligned}
        \dim\mathcal L_{\mathrm{POP}}
        &=\dim\mathcal W_{\mathbf C}
          -\dim(\mathcal W_{\mathrm{obs}}
                +\mathcal W_{\mathrm{out}})\\
        &=\dim V-\dim V_{\mathrm{in}}
                 -\dim V_{\mathrm{out}}\\
        &=2k.
    \end{aligned}
\]
\end{proof}

\begin{definition}[Encoding circuit for {$[[n,k,d]]$} stabilizer code]
Let $\mathcal{S}\subseteq\mathcal{P}(Q)$ be the stabilizer group of an $[[n,k,d]]$ stabilizer code with codespace $\mathsf{T}(\mathcal{S})\subseteq Q$. Let $Q_{\L}\cong\mathbb{C}^{2^k}$ be a $k$-qubit logical input register and let $A=A_1\otimes\cdots\otimes A_{n-k}\cong\mathbb{C}^{2^{n-k}}$ be an $(n-k)$-qubit ancilla register, where $A_i\cong\mathbb{C}^2$, such that $Q\cong A\otimes Q_{\L}\cong\mathbb{C}^{2^n}$. An \emph{encoding circuit} $\mathbf{C}_{\mathrm{enc}}$ for $\mathcal{S}$ is a stabilizer circuit defined by a sequence of two linear operators
\begin{equation}
\resizebox{0.98\columnwidth}{!}{$
\mathbf{C}_{\mathrm{enc}}
\triangleq
\left\{C^{(1)},C^{(2)}\right\},
\qquad
C^{(1)}:Q_{\L}\longrightarrow Q,
\qquad
C^{(2)}:Q\longrightarrow Q,
$}
\end{equation}
where the preparation component and Clifford component are respectively given by $C^{(1)}
=
\ket{\mathbf s}_{A}\otimes I_{Q_{\mathrm{L}}}$ and $C^{(2)}
=
U_{\mathrm{enc}}$, for a Clifford unitary $U_{\mathrm{enc}}\in\mathrm{Cliff}(Q)$
and a fixed string $\mathbf s=[s_1,\ldots,s_{n-k}]\in\mathbb F_2^{n-k}$. 
{We require $U_{\mathrm{enc}}$ and $\mathbf s$ to satisfy $\mathcal S=\left\langle(-1)^{s_i}U_{\mathrm{enc}}Z_{A_i}U_{\mathrm{enc}}^\dagger:i\in\mathbb Z^+_{n-k}\right\rangle$.}
The circuit operator $C_{\mathrm{enc}}: Q_\L \to Q$ can thus be decomposed as 
\begin{equation}
\resizebox{0.98\columnwidth}{!}{$
\begin{aligned}
C_{\mathrm{enc}}
=
C^{(2)}C^{(1)}
=
U_{\mathrm{enc}}
\left(
    \ket{\mathbf{s}}_{A}\otimes I_{Q_{\L}}
\right) = \begin{quantikz}[
    row sep=0.9cm,
    column sep=0.5cm
]
\setwiretype{n}
    & \phase[
          label style={
              label position=above,
              anchor=south,
              yshift=-1pt
          }
      ]{\text{\scriptsize fixed bits}}
    & \rstick{$\mathbf{s}$}\setwiretype{c}
    \\
\setwiretype{n}
    & \lstick{$\ket{\mathbf{s}}_{A}$\;}
      \gategroup[
          2,
          steps=2,
          style={
              dashed,
              rounded corners,
              fill=blue!05,
              inner xsep=1pt
          },
          background,
          label style={
              label position=above,
              anchor=north,
              yshift=0.4cm
          }
      ]{$C_{\mathrm{enc}}$}
    & \gate[2]{U_{\mathrm{enc}}}\setwiretype{q}
    & \rstick[2]{$Q$}
    \\
\setwiretype{n}
    & \lstick{$Q_{\L}$\;}
    & \setwiretype{b}
    &
\end{quantikz}.
\end{aligned}
$}
\end{equation}
\end{definition}

\begin{remark}
For fixed $U_{\mathrm{enc}}$, changing $\mathbf s$ changes
the signs of the encoded stabilizers and hence selects a
different syndrome sector, rather than the same signed code.
A known choice of $\mathbf s$ can be absorbed into the
encoder by replacing $U_{\mathrm{enc}}$ with
$U_{\mathrm{enc}}(X_A^{\mathbf s}\otimes I)$.
Thus we take $\mathbf s=0$ from now on, without restricting
the encoded stabilizer code or its logical frame.
\end{remark}

\begin{proposition}
    The encoding circuit satisfies $\operatorname{im}(C_{\mathrm{enc}})=\mathsf{T}(\mathcal{S})$, or equivalently, $\mathcal{S}= \left\langle(-1)^{s_i}U_{\mathrm{enc}}Z_{A_i}U_{\mathrm{enc}}^{\dagger}:i\in\mathbb Z^+_{n-k}\right\rangle$. Decompose $Q_\L = Q_{\L,1} \otimes \cdots \otimes Q_{\L,k}$ where for $i \in \mathbb Z^+_{k}$ each $Q_{\L,i} \cong \C^2$. 
We then have the $2k$ logical operator representatives as $\{\bar{X}_i \triangleq U_{\mathrm{enc}}X_{Q_{\L,i}}U^\dagger_{\mathrm{enc}}\}_{i \in \mathbb Z^+_{k}}$ and $ \{\bar{Z}_j \triangleq U_{\mathrm{enc}}Z_{Q_{\L,j}}U^\dagger_{\mathrm{enc}}\}_{j \in \mathbb Z^+_{k}}$.
\end{proposition}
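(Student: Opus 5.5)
Writing $S_i\triangleq(-1)^{s_i}U_{\mathrm{enc}}Z_{A_i}U_{\mathrm{enc}}^\dagger$, the plan is to transport the trivial code $\ket{\mathbf s}_A\otimes Q_{\L}$ through the Clifford $U_{\mathrm{enc}}$. First we identify the image of the preparation component $C^{(1)}=\ket{\mathbf s}_A\otimes I_{Q_{\L}}$. It is exactly $\ket{\mathbf s}_A\otimes Q_{\L}$. This subspace is the joint $+1$ eigenspace of the commuting independent Paulis $(-1)^{s_i}Z_{A_i}$, $i\in\mathbb Z^+_{n-k}$, because $Z_{A_i}\ket{\mathbf s}_A=(-1)^{s_i}\ket{\mathbf s}_A$. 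Conversely, the joint projector $\prod_i\tfrac12\bigl(I+(-1)^{s_i}Z_{A_i}\bigr)=\ketbra{\mathbf s}{\mathbf s}_A\otimes I_{Q_{\L}}$ has range equal to that subspace.

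Next we conjugate by $U_{\mathrm{enc}}$. Since $U_{\mathrm{enc}}$ is unitary, $\operatorname{im}(C_{\mathrm{enc}})=U_{\mathrm{enc}}(\ket{\mathbf s}_A\otimes Q_{\L})$. A vector $\ket\phi$ lies in this image iff $U_{\mathrm{enc}}^\dagger\ket\phi$ is fixed by every $(-1)^{s_i}Z_{A_i}$, i.e. iff $S_i\ket\phi=\ket\phi$ for every $i$. By \Cref{Prop: Symplectic Clifford} each $S_i$ is a Hermitian Pauli. They remain commuting and independent, since conjugation preserves products and commutation and maps $-I\mapsto -I$, so $-I\notin\langle S_i\rangle$. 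Therefore $\operatorname{im}(C_{\mathrm{enc}})$ is the code space of $\langle S_i:i\in\mathbb Z^+_{n-k}\rangle$.

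The equivalence with $\mathcal S=\langle S_i:i\in\mathbb Z^+_{n-k}\rangle$ is the step I expect to need the most care. One direction is immediate from the previous paragraph: if the groups agree, so do their code spaces. For the converse, suppose $\mathsf T(\mathcal S)=\mathsf T(\langle S_i\rangle)$. Both groups have $n-k$ independent generators, so it suffices to show that each $S_i$ fixes $\mathsf T(\mathcal S)$ and hence lies in $\mathcal S$. I would use the fact that the Hermitian Paulis fixing a $2^k$-dimensional stabilizer code space pointwise form exactly its stabilizer group. This follows by expanding the code projector $2^{-(n-k)}\sum_{s\in\mathcal S}s$ and using Pauli trace orthogonality, as in the completeness part of the proof of \Cref{prop: stabilizers of composition}. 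Together with the matching dimension count this gives equality of the groups.

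For the logical operators, $X_{Q_{\L,i}}$ and $Z_{Q_{\L,j}}$ commute with every $Z_{A_m}$, act on the disjoint factor $Q_{\L}$, and satisfy the $k$-qubit canonical commutation relations. Conjugating by $U_{\mathrm{enc}}$ preserves all of these relations. Hence $\bar X_i,\bar Z_j$ commute with every $S_m$, lie in $\mathcal N(\mathcal S)$, and obey $\bar X_i\bar Z_j=(-1)^{\delta_{ij}}\bar Z_j\bar X_i$. None of them lies in $\mathcal S$ up to sign: each anticommutes with a partner that commutes with all of $\mathcal S$, while every element of $\mathcal S$ commutes with that partner. Their labels are linearly independent modulo the label space of $\mathcal S$ by nondegeneracy of the induced symplectic pairing. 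This yields $2k$ representatives of a symplectic basis of $\mathcal S^{\perp_s}/\mathcal S$, matching the $2k$-dimensional logical space of \Cref{Cor: Logical dim Pres}. Finally, $\bar X_iC_{\mathrm{enc}}=C_{\mathrm{enc}}X_{Q_{\L,i}}$ shows that they implement the logical Paulis on the encoded register.
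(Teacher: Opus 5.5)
Your proof is correct. The paper gives no argument of its own: it cites Gottesman's thesis (Secs.~4.1--4.2 and~5.3). What you have written is a self-contained version of the standard argument that the citation outsources. You transport the trivial code $|\mathbf s\rangle_A\otimes Q_{\mathrm L}$, which is stabilized by the $(-1)^{s_i}Z_{A_i}$, through the Clifford $U_{\mathrm{enc}}$.

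The citation buys brevity. Your version removes the dependence on the external reference and uses only tools already in the paper: \Cref{Prop: Symplectic Clifford}, the trace-orthogonality step from the proof of \Cref{prop: stabilizers of composition}, and the $2k$-dimensional count of \Cref{Cor: Logical dim Pres}. It also makes explicit two points the statement glosses over.

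\begin{itemize}
\item The word ``equivalently'' genuinely needs your converse. Equal code spaces force $\langle S_i\rangle\subseteq\mathcal S$, and the order count $|\langle S_i\rangle|=|\mathcal S|=2^{n-k}$ closes the argument.
\item You check that the $2k$ conjugated Paulis are independent modulo $\mathcal S$; pairing a putative relation against $\bar x_k,\bar z_k$ kills each coefficient. You also check that they intertwine with the encoder, $\bar X_iC_{\mathrm{enc}}=C_{\mathrm{enc}}X_{Q_{\mathrm L,i}}$. This intertwining is what justifies calling them logical representatives in the frame fixed by $U_{\mathrm{enc}}$.
\end{itemize}

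One small observation on structure. The definition of the encoding circuit already imposes $\mathcal S=\langle S_i\rangle$ as a hypothesis. Your second paragraph alone therefore establishes $\operatorname{im}(C_{\mathrm{enc}})=\mathsf T(\mathcal S)$, and your third paragraph is needed only to justify the stated equivalence.
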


\begin{proof}
The proof is given in
\cite[Secs.~4.1--4.2 and~5.3]{gottesman1997stabilizer}.
\end{proof}

\begin{remark}[Code conditioning]
\label{rem:code-conditioning-common-relations}
For the code-conditioned circuit Kraus operators $\{\widetilde C_o=C_oC_{\mathrm{enc}}\}_o$,
$C_o=0$ implies $\widetilde C_o=0$, so conditioning can only
remove realizable outcomes. Any detector equation $o\cdot d=a_d$
valid on $\Omega_{\mathbf C}$ remains valid on the restricted
outcome set. Hence
\[
    \varnothing\neq
    \Omega_{\widetilde{\mathbf C}}\subseteq\Omega_{\mathbf C},
    \qquad
    \mathcal D_{\mathbf C}\subseteq
    \mathcal D_{\widetilde{\mathbf C}},
\]
with unchanged detector parities. Nonemptiness follows from
trace preservation and the isometric encoder. In particular, let $S$ be the input code
stabilizer group. If $v\in\mathcal O_{\mathbf C}(s)$ has parity
$a_v$ for some $s\in S$, then $sC_{\mathrm{enc}}=C_{\mathrm{enc}}$
gives
\[
    (Z_A^v\otimes I_{Q'})\widetilde C_o
    =(-1)^{a_v}\widetilde C_o.
\]
Thus an observable representative of an input code stabilizer
becomes a detector with the same parity. If $\mathbf C$ is non-adaptive, so is the conditioned
circuit, and the extraction rules of
\Cref{prop:outcome-independent-stabilizer-relations} apply
directly to $\widetilde{\mathbf C}$.
\end{remark}
At level $0$, the Pauli relations are obtained by propagation
through the encoder and Clifford gates. The following
proposition expresses this as a binary linear system;
restricting its endpoint labels then gives detectors,
observable representatives and output checks.

\begin{proposition}[Code-conditioned common relations space at level $0$]
\label{prop: Code conditioned stabilizers L0}
Let $U=U_N\cdots U_1:A\otimes Q\to A'\otimes Q'$
be a fixed Clifford circuit on $m+n=m'+n'$ qubits,
where $A,Q,A',Q'$ contain $m,n,m',n'$ qubits, respectively.
The gates $U_i$ act on this common total register; its
input and output partitions may differ. For the previously defined encoder
$C_{\mathrm{enc}}:Q_\L\to Q$, with $\dim Q_\L=2^k$, write
\[
    \begin{aligned}
        C_o&=(|o\rangle\langle o|_{A'}\otimes I_{Q'})
             U(|0^m\rangle_A\otimes I_Q),\\
        \widetilde C_o&=C_oC_{\mathrm{enc}},
        \qquad o\in\mathbb F_2^{m'}.
    \end{aligned}
\]
Let $F_U,F_{U_{\mathrm{enc}}}$ and $s_U,s_{U_{\mathrm{enc}}}$
be the symplectic matrices and sign maps of $U,U_{\mathrm{enc}}$,
respectively (see \Cref{Prop: Symplectic Clifford}). The common relations space $\mathcal W_{\widetilde{\mathbf C}}$
consists precisely of the labels $\omega=[b,p,p']$, with
$b\in\mathbb F_2^{m'}$, $p=[x_\L,z_\L]\in\mathbb F_2^{2k}$
and $p'=[x_{Q'},z_{Q'}]\in\mathbb F_2^{2n'}$, for which there exist
$z_A\in\mathbb F_2^m$, $z_{A_{\mathrm{enc}}}\in\mathbb F_2^{n-k}$
and $x_Q,z_Q\in\mathbb F_2^n$ satisfying
\begin{equation}
    \begin{aligned}
        [0,x_\L,z_{A_{\mathrm{enc}}},z_\L]F_{U_{\mathrm{enc}}}
        &=[x_Q,z_Q],\\
        [0,x_Q,z_A,z_Q]F_U
        &=[0,x_{Q'},b,z_{Q'}].
    \end{aligned}
    \label{eq: L0 code conditioned relation space}
\end{equation}
These witnesses $z_A,z_{A_{\mathrm{enc}}},x_Q,z_Q$ are unique for each $\omega$, and its common
stabilizer sign is
\begin{equation}
    \begin{aligned}
        c_\omega
        =s_{U_{\mathrm{enc}}}
          ([0,x_\L,z_{A_{\mathrm{enc}}},z_\L])+s_U([0,x_Q,z_A,z_Q]).
    \end{aligned}
    \label{eq: L0 code conditioned common sign}
\end{equation}
\end{proposition}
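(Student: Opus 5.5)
The plan is to apply the stabilizer composition rule (\Cref{prop: stabilizers of composition}) to the three-stage decomposition $\widetilde C_o=(\ketbra{o}{o}_{A'}\otimes I_{Q'})\,U\,(\ket{0^m}_A\otimes I_Q)\,U_{\mathrm{enc}}\,(\ket{0^{n-k}}_{A_{\mathrm{enc}}}\otimes I_{Q_\L})$. I will use the local stabilizer generators listed in the proof of \Cref{prop:outcome-independent-stabilizer-relations}. For a Clifford $W$ these are $\stk{WPW^\dagger}{P}$. For $\ket{0}$ preparations they are $Z_j$. For $\vket{\ketbra{o}{o}}$ they are $(-1)^{o_j}Z$ on either side. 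These components are chained on the internal wires: $A_{\mathrm{enc}}\otimes Q_\L$ into $U_{\mathrm{enc}}$, then $Q$ and $A$ into $U$, then $A'$ into the measurement, with $Q'$ left open.

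\textbf{Step 1 (forward inclusion).} I will show every element of $\mathcal W_{\widetilde{\mathbf C}}$ arises from a solution of \eqref{eq: L0 code conditioned relation space}. Take $g(\omega)\in G_{\widetilde{\mathbf C}}$ and fix any realizable $o$. By completeness in the composition theorem, $g(\omega)$ is a product of matched local stabilizers.

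\begin{itemize}
\item The preparation factor on $A_{\mathrm{enc}}$ must be $Z$-type, with some label $z_{A_{\mathrm{enc}}}$.
\item Matching with the encoder stabilizer $\stk{U_{\mathrm{enc}}P(u)U_{\mathrm{enc}}^\dagger}{P(u)}$ forces $u=[0,x_\L,z_{A_{\mathrm{enc}}},z_\L]$, so its output label is $uF_{U_{\mathrm{enc}}}=[x_Q,z_Q]$.
\item The $A$-preparation contributes $z_A$, so the input label of $U$ is $[0,x_Q,z_A,z_Q]$.
\item On $A'$, the measurement factors are $Z$-type on both sides. The $A'$ part of the output label of $U$ therefore has zero $X$-component and some $Z$-component. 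This component is carried to the record as $Z_A^b$, which yields the second equation.
\end{itemize}

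Signs then come from \Cref{Prop: Symplectic Clifford}. The encoder contributes $(-1)^{s_{U_{\mathrm{enc}}}(\cdot)}$ and $U$ contributes $(-1)^{s_U(\cdot)}$. The preparations contribute no sign. The measurement contributes $(-1)^{o\cdot b}$, which is absorbed by $Z_A^b$ acting on $\ket{o}$. The residual outcome-independent sign is \eqref{eq: L0 code conditioned common sign}. I still need to check how transposes in the stacked notation interact with $Y$ factors, but since input labels here are matched by the same convention on both sides of the trace contraction, I expect the matching sign $\lambda_j$ to vanish.

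\textbf{Step 2 (converse and uniqueness).} For the converse, any solution of \eqref{eq: L0 code conditioned relation space} supplies matched local stabilizers for every $o$ simultaneously. The contraction identity \eqref{eq:composition-matching-intertwining} then gives a stabilizer of every nonzero $\vket{\widetilde C_o}$, with the same outcome-independent sign after extracting $Z_A^b$. That stabilizer therefore lies in $G_{\widetilde{\mathbf C}}$.

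For uniqueness, $F_{U_{\mathrm{enc}}}$ and $F_U$ are invertible. Given $p'$ and $b$, the second equation determines $[0,x_Q,z_A,z_Q]$ via $F_U^{-1}$. The first equation then determines $z_{A_{\mathrm{enc}}}$ via $F_{U_{\mathrm{enc}}}^{-1}$. Consistency with the given $p$ is exactly the solvability condition. The main obstacle I expect is the sign bookkeeping in Step 1, confirming that no extra phase enters from the matching factors $\lambda_j$ or from transposition.
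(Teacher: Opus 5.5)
Your route is the same as the paper's: the composition rule on the chain of preparations, $U_{\mathrm{enc}}$, $U$ and the record projector, Clifford propagation of labels, and invertibility of $F_U,F_{U_{\mathrm{enc}}}$ for uniqueness. Your converse and uniqueness arguments are fine. The sign worry is also harmless. Transposes never cost anything because $\stk{P'}{P}\vket{C}=\vket{P'CP}$. Pushing $P(p)$ directly through $\widetilde C_o$ gives $\widetilde C_oP(p)=(-1)^{s_{U_{\mathrm{enc}}}(u)+s_U(v)}(Z_{A'}^b\otimes P(p'))\widetilde C_o$ with $u=[0,x_\L,z_{A_{\mathrm{enc}}},z_\L]$ and $v=[0,x_Q,z_A,z_Q]$, which is exactly the claimed $c_\omega$.

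The gap is in Step~1, at the record projector. The local stabilizers of $\vket{\ketbra{o}{o}_{A'}}$ are generated by $(-1)^{o_j}Z_j$ on the output and, \emph{independently}, on the input. Matching therefore only tells you that the record label is $b=b'+b''$, where $b'$ is the $Z$-label leaving $U$ on $A'$ and $b''$ is arbitrary, at the cost of an outcome-dependent sign $(-1)^{o\cdot b''}$. Your assertion that the label leaving $U$ ``is carried to the record as $Z_A^b$'' is thus unjustified. Because you fix one realizable $o$ and never use that $g(\omega)$ is common to all branches, the argument proves too much. In \Cref{example: Z measurement L0}, $Z_{A_2}$ on the record stabilizes $\vket{C^Z_{[0,0]}}$, yet $[\,[1,0],0,0\,]\notin\mathcal W_{\mathbf C}$ since $b_2+b_1+z+z'=1$ (it has sign $-1$ on the branch $[1,1]$). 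Commonality does exclude such terms, but it only shows that $b''$ is a detector. To reach the second equation you would still need that every detector $d$ of $\widetilde{\mathbf C}$ is itself a solution $[d,0,0]$ of the propagation equations.

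The cleanest repair avoids branches altogether. For $\omega\in\mathcal W_{\widetilde{\mathbf C}}$, sum the relation $(Z_{A'}^b\otimes P(p'))\widetilde C_o=(-1)^{c_\omega}\widetilde C_oP(p)$ of \Cref{prop:common-relation-extraction} over all $o$; it holds trivially when $\widetilde C_o=0$. Since $\sum_o\widetilde C_o=\widetilde V\triangleq U(\ket{0^m}_A\otimes I_Q)C_{\mathrm{enc}}$, the operator $(-1)^{c_\omega}\stk{Z_{A'}^b\otimes P(p')}{P(p)}$ stabilizes $\vket{\widetilde V}$.

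Now apply \Cref{prop: stabilizers of composition} to this measurement-free chain. Alternatively, count: the map $(x_\L,z_\L,z_{A_{\mathrm{enc}}},z_A)\mapsto[p,vF_U]$ is injective on a space of dimension $k+m'+n'$, and each image is a stabilizer label of $\vket{\widetilde V}$. Either way, the stabilizer labels of $\vket{\widetilde V}$ are exactly $[p,vF_U]$ with $[x_Q,z_Q]$ given by the first equation. Requiring the $A'$ factor to be $Z_{A'}^b$ is then precisely the second equation, with the same $b$. This is the device $B=\sum_oC_o=V$ that the paper uses in proving \Cref{prop:common-relation-extraction}.
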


\begin{proof}
The relation space and sign formula follow directly
from \Cref{prop: stabilizers of composition}, using
the Clifford conjugation rule of
\Cref{Prop: Symplectic Clifford} and the $Z$ stabilizers
of the prepared ancillas.
The witnesses are unique because symplectic matrices are invertible and $F_U^{-1}$ determines
$x_Q,z_Q,z_A$, and $F_{U_{\mathrm{enc}}}^{-1}$
then determines $z_{A_{\mathrm{enc}}}$.
\end{proof}

\begin{example}[Single-qubit $Z$ measurement with two ancillas]
\label{example: Z measurement L0}
Take $C_{\mathrm{enc}}=I_Q$, so $n=k=n'=1$ and $m=m'=2$.
Order the output registers as $A_2,A_1,Q'$, and define
\[
    \begin{aligned}
        U&=\cnot_{A_1\to A_2}\cnot_{Q\to A_1},\\
        B&=U(|00\rangle_{A_2A_1}\otimes I_Q),\\
        C_o^Z&=(|o\rangle\langle o|_{A_2A_1}\otimes I_{Q'})B,
        \qquad o=[o_2,o_1].
    \end{aligned}
\]
Here $Q'$ is the data wire after the gates:
\[
\begin{quantikz}
\lstick{$|0\rangle_{A_2}$}
    & \qw & \targ{} & \meter{} & \rstick{$o_2$}\setwiretype{c}\\
\lstick{$|0\rangle_{A_1}$}
    & \targ{} & \ctrl{-1} & \meter{} & \rstick{$o_1$}\setwiretype{c}\\
\lstick{$Q$}
    & \ctrl{-1} & \qw & \qw & \rstick{$Q'$}
\end{quantikz}.
\]

The relevant symplectic action is
\begin{equation}
    \begin{aligned}
        &[0,0,x,u_2,u_1,z]F_U\\
        &\qquad=[x,x,x,u_2,u_2+u_1,u_1+z].
    \end{aligned}
    \label{eq: F action example Z 2Anc}
\end{equation}
Matching with $[0,0,x',b_2,b_1,z']$ as in
\Cref{prop: Code conditioned stabilizers L0} gives
\begin{equation}
    \begin{gathered}
        \mathcal W_{\mathbf C}
        =\bigl\{[b,[0,z],[0,z']]:
        b=[b_2,b_1]\in\mathbb F_2^2,\;
        z,z'\in\mathbb F_2,\\
        b_2+b_1+z+z'=0\bigr\}.
    \end{gathered}
    \label{eq: Z two ancillas common relations}
\end{equation}
The matching equations force $x=x'=0$, so the
relations involve only $I$ and $Z$ factors. The CNOT gates introduce no signs on these operators;
hence $c_\omega=0$ for every
$\omega\in\mathcal W_{\mathbf C}$.
Equivalently,
\[
    G_{\mathbf C}
    =
    \left\langle
        \stk{ZZI}{I},\;
        \stk{IZI}{Z},\;
        \stk{IZZ}{I}
    \right\rangle,
\]
with upper factors ordered as $A_2,A_1,Q'$ and lower
factor the input $Q$. By \Cref{prop:common-relation-extraction},
\[
    \begin{aligned}
        \mathcal D_{\mathbf C}
        &=\mathcal O_{\mathbf C}(I)
          =\operatorname{span}\{[1,1]\},\\
        \mathcal O_{\mathbf C}(Z)
        &=[0,1]+\mathcal D_{\mathbf C}
          =\{[0,1],[1,0]\},\\
        \mathcal O_{\mathbf C}(X)
        &=\mathcal O_{\mathbf C}(Y)=\varnothing,\\
        \Omega_{\mathbf C}
        &=\{[0,0],[1,1]\}.
    \end{aligned}
\]
All displayed parities are zero.
Thus $Z_Q$ is the sole nontrivial measured input Pauli,
up to sign. Either record bit reports its eigenvalue;
the two representatives differ by the detector $[1,1]$,
which checks that the bits agree.

Setting $p=0$ in \Cref{eq:common-operational-relation}
gives the common output checks generated by
$Z_{A_2}Z_{A_1}$ and $Z_{A_1}Z_{Q'}$.
The second is the single independent check beyond the
detector, namely a noiseless measurement of $Z_{Q'}$ by a decoder tests the outcome bit matches with $o_1$.
Here
\[
    \begin{aligned}
        \dim\mathcal W_{\mathbf C}&=3,
        &\dim\mathcal W_{\mathrm{det}}&=1,\\
        \dim\mathcal W_{\mathrm{obs}}&=2,
        &\dim\mathcal W_{\mathrm{out}}&=2.
    \end{aligned}
\]
Hence $\mathcal W_{\mathbf C}
=\mathcal W_{\mathrm{obs}}+\mathcal W_{\mathrm{out}}$.
By \Cref{prop:outcome-independent-stabilizer-relations},
$\mathcal L_{\mathrm{POP}}=\{0\}$ and each nonzero
branch preserves $k_{\mathrm{pres}}=1-(2-1)=0$
logical qubits.

Indeed, $B|t\rangle=|t,t,t\rangle$, so the nonzero
branches are
\[
    C_{[t,t]}^Z
    =|t,t\rangle_{A_2A_1}\otimes
      |t\rangle_{Q'}\langle t|_Q,
    \qquad t\in\mathbb F_2.
\]
The measured eigenvalue is $(-1)^t$, and its eigenstate
remains on $Q'$.
Moreover, $\vket{C_{[t,t]}^Z}=|t\rangle^{\otimes4}$,
whose full stabilizer group is generated by
$(-1)^tZ$ on each factor.

Finally, condition on the input code
$\operatorname{span}\{|0\rangle_Q\}$ using
$C_{\mathrm{enc}}:\mathbb C\to Q$,
$\alpha\mapsto\alpha|0\rangle_Q$.
Since $Z_QC_{\mathrm{enc}}=C_{\mathrm{enc}}$,
\Cref{rem:code-conditioning-common-relations}
turns both $Z_Q$ observable representatives into
detectors with parity zero. Thus
\[
    \mathcal D_{\widetilde{\mathbf C}}=\mathbb F_2^2,
    \qquad
    \Omega_{\widetilde{\mathbf C}}=\{[0,0]\}.
\]
\end{example}

At level $1$, we consider each component is a Pauli measurement. The following proposition gives explicit expressions for computing the common
stabilizer relations of non-adaptive level-$1$ circuits conditioned
on an input code. These expressions are derived for fixed qubit registers throughout time and determine the detectors, measured
observables and preserved logical operators. We illustrate their
application to a finite Hastings--Haah honeycomb Floquet-code
circuit in \Cref{app:honeycomb}.

\begin{definition}[Measurement schedule and spacetime commutation matrices]
\label{def:L1-schedule-matrices}
Fix an outcome-independent sequence of nonidentity
Paulis $M_r=P(\mathbf m_r)$ on $n$ data qubits with
$r\in\mathbb Z^+_{\ell}$.
Define the \emph{measurement schedule matrix}
$\mathbf M\in\mathbb F_2^{\ell\times2n}$,
its \emph{symplectic dual}
$\mathbf Q\in\mathbb F_2^{\ell\times2n}$,
and the \emph{spacetime commutation matrix}
$\mathbf T\in\mathbb F_2^{\ell\times\ell}$ by
\begin{equation}
    \begin{gathered}
        \mathbf M\triangleq
        \begin{bmatrix}
            \mathbf m_1\\
            \vdots\\
            \mathbf m_\ell
        \end{bmatrix},
        \qquad
        \mathbf Q\triangleq\mathbf M\Omega_n,\\
        \mathbf T_{rs}\triangleq
        \begin{cases}
            \langle\mathbf m_s,\mathbf m_r\rangle_s,&s<r,\\
            0,&s\ge r,
        \end{cases}
        \qquad r,s\in\mathbb Z^+_{\ell}.
    \end{gathered}
    \label{eq:L1-matrices}
\end{equation}
\end{definition}

\begin{proposition}[Code-conditioned common relation space at level $1$]
\label{prop: code-conditioned stabilizers level 1}
\label{prop: Measurement gadget 1}
For the schedule of \Cref{def:L1-schedule-matrices},
let $C_{\mathrm{enc}}:Q_\L\to Q_0$ encode $k$ qubits
into $n$ qubits, with ancillas prepared in
$|0^{n-k}\rangle$.
Write $F_{\mathrm{enc}},s_{\mathrm{enc}}$ for the
symplectic matrix and sign map of its encoding Clifford.
For $o\in\mathbb F_2^\ell$, set
\[
    \begin{aligned}
        \Pi_t^{M_r}&=\frac{I+(-1)^tM_r}{2},\\
        C_o&=|o\rangle_A\otimes
              \Pi_{o_\ell}^{M_\ell}\cdots\Pi_{o_1}^{M_1},\\
        \widetilde C_o&=C_oC_{\mathrm{enc}}.
    \end{aligned}
\]
A label $\omega=[b,p,p']$, with
$b\in\mathbb F_2^\ell$,
$p=[x_\L,z_\L]\in\mathbb F_2^{2k}$ and
$p'\in\mathbb F_2^{2n}$, belongs to
$\mathcal W_{\widetilde{\mathbf C}}$ exactly when
some $z_{A_{\mathrm{enc}}}\in\mathbb F_2^{n-k}$ satisfies
\begin{equation}
    \begin{aligned}
        \nu&=[0,x_\L,z_{A_{\mathrm{enc}}},z_\L]
              F_{\mathrm{enc}},\\
        \nu\mathbf Q^{\mathsf T}
            +b\mathbf T^{\mathsf T}&=0,\\
        p'&=\nu+b\mathbf M.
    \end{aligned}
    \label{eq: code-conditioned level-1 relation space}
\end{equation}
The witnesses $\nu,z_{A_{\mathrm{enc}}}$ are unique. Put
\begin{equation}
    \nu_r=\nu+\sum_{s<r}b_s\mathbf m_s,
    \qquad r\in\mathbb Z^+_{\ell+1}.
    \label{eq:L1-prefix-and-w}
\end{equation}
The middle equation in \Cref{eq: code-conditioned level-1 relation space} gives
$\langle\nu_r,\mathbf m_r\rangle_s=0$ for $r\in\mathbb Z^+_{\ell}$.
For $[P(u),P(v)] = 0$, define
$\gamma(u,v) \triangleq \kappa(u,v)/2\pmod2$, as in
\Cref{prop: Symplectic Pauli Algebra}.
Then the common stabilizer sign is
\begin{equation}
    \begin{gathered}
        c_\omega
        = s_{\mathrm{enc}}
          ([0,x_\L,z_{A_{\mathrm{enc}}},z_\L])
          +\Theta(\nu,b), \quad \mathrm{where} \\
          \Theta(\nu,b)
        \triangleq\sum_{r=1}^{\ell}
            b_r\gamma(\mathbf m_r,\nu_r).
    \end{gathered}
    \label{eq:L1-matrix-phase}
\end{equation}
\end{proposition}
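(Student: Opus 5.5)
The plan is to apply the stabilizer composition rule (\Cref{prop: stabilizers of composition}) to the chain of components: the encoder, then the $\ell$ projectors $\Pi_{o_r}^{M_r}$, each viewed as a nondestructive measurement component that writes $o_r$ into the record. The proof then consists of identifying the matching local stabilizers and tracking labels and signs.

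First, the local stabilizers. For the encoder, \Cref{prop: Code conditioned stabilizers L0} with trivial $U$ gives the relations $\stk{P(\nu)}{P([x_\L,z_\L])}$, where $\nu=[0,x_\L,z_{A_{\mathrm{enc}}},z_\L]F_{\mathrm{enc}}$, with sign $s_{\mathrm{enc}}(\cdot)$. The witness $z_{A_{\mathrm{enc}}}$ is unique because $F_{\mathrm{enc}}$ is invertible.

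For the measurement component $|o_r\rangle\otimes\Pi^{M_r}_{o_r}$, I would show that its stabilizer group is generated by two kinds of element:
\begin{itemize}
\item $(-1)^{o_r}Z_{A_r}$ and $(-1)^{o_r}M_r$ on the output side, together with $M_r$ on the input side;
\item $\stk{P(u)}{P(u)}$ for every $u$ with $\langle u,\mathbf m_r\rangle_s=0$.
\end{itemize}
These form $n$ pairs plus two extra generators. A count shows they give the full $2n+1$ independent stabilizers once commuting labels are taken modulo $\mathbf m_r$. In stacked form, the relation through component $r$ maps an input label $\nu_r$ to the output label $\nu_r+b_r\mathbf m_r$. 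It requires $\langle\nu_r,\mathbf m_r\rangle_s=0$ and carries the record factor $Z_{A_r}^{b_r}$. The sign is $(-1)^{b_r o_r}$ times the phase that arises from writing $M_r^{b_r}P(\nu_r)$ as $P(\nu_r+b_r\mathbf m_r)$.

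Second, the matching. Chaining these through the components forces the internal labels to be exactly the prefix sums $\nu_r$ of \eqref{eq:L1-prefix-and-w}, and the final output label is $p'=\nu_{\ell+1}=\nu+b\mathbf M$. Expanding the commutation condition gives
\[
\langle\nu_r,\mathbf m_r\rangle_s=\langle\nu,\mathbf m_r\rangle_s+\sum_{s<r}b_s\langle\mathbf m_s,\mathbf m_r\rangle_s,
\]
which is the $r$-th entry of $\nu\mathbf Q^{\mathsf T}+b\mathbf T^{\mathsf T}$. This uses $\mathbf Q=\mathbf M\Omega_n$ and the lower-triangular definition of $\mathbf T$, so the middle equation is equivalent to the matching conditions.

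The record dependence $(-1)^{o\cdot b}$ is absorbed into $Z_A^b$. This yields a label in $\mathcal W_{\widetilde{\mathbf C}}$ with an outcome-independent sign. The converse follows from the completeness part of the composition rule: every common stabilizer must arise from matched local stabilizers. Uniqueness of $\nu$ comes from invertibility of $F_{\mathrm{enc}}$ and the fact that $p'$ and $b$ determine $\nu$.

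The main obstacle is the sign. I would use \Cref{prop: Symplectic Pauli Algebra}: multiplying the commuting Paulis $P(\mathbf m_r)P(\nu_r)=\iota^{\kappa}P(\nu_r+\mathbf m_r)$ gives the phase $(-1)^{\gamma(\mathbf m_r,\nu_r)}$ exactly when $b_r=1$. The transposes in the stacked notation and the signs $\lambda_j$ at internal wires must be checked to cancel, which holds because matched labels are equal. Summing over $r$ gives $\Theta(\nu,b)$, and adding the encoder sign gives \eqref{eq:L1-matrix-phase}. I would verify the ordering convention in $\kappa$ (measurement Pauli on the left, $\nu_r$ on the right) carefully against the operator order $\Pi^{M_r}$ applied after the earlier components.
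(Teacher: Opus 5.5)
Your plan follows the paper's own proof. It uses the local stabilizers of each projector and its record, matching through \Cref{prop: stabilizers of composition}, the prefix labels $\nu_r$, the identification of $\langle\nu_r,\mathbf m_r\rangle_s$ with the $r$th entry of $\nu\mathbf Q^{\mathsf T}+b\mathbf T^{\mathsf T}$, and the phase $\gamma(\mathbf m_r,\nu_r)$. The forward inclusion and the sign formula are fine. The converse, however, does not follow from branch-wise completeness as you claim; the paper's proof passes over the same point. With your generators, $(-1)^{o_r}Z_{A_r}$ and $(-1)^{o_r}\stk{M_r}{I}$ are independent local stabilizers. Completeness applied to a single branch therefore only returns a matched chain in which the exponent $c_r$ of $M_r$ on the data wire need not equal the record exponent $b_r$. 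For instance, every detector $[d,0,0]$ arises in branch $o$ from the record generators alone ($c=0$), with sign $(-1)^{o\cdot d}$. What you actually obtain is $[b,p,p']=[c,p,p']+[b+c,0,0]$, where $[c,p,p']$ solves \eqref{eq: code-conditioned level-1 relation space} and $b+c$ is a detector in the realizability sense. To finish, you need every detector $d$ of $\widetilde{\mathbf C}$ to satisfy $d\mathbf T=0$ and $d\mathbf M=[0,0,z,0]F_{\mathrm{enc}}$ for some $z$. That is the case $p=p'=0$ of the proposition itself, and completeness for one nonzero branch says nothing about which outcome strings are realizable.

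A clean repair is to work with $B=\sum_o\widetilde C_o$ instead of the branches.
\begin{itemize}
\item As in the proof of \Cref{prop:common-relation-extraction}, orthogonality of the records shows that a Pauli with $Z$-type record factor lies in $G_{\widetilde{\mathbf C}}$ exactly when it stabilizes $\vket{B}$, with the same sign.
\item $B$ is the sequential composition of $C_{\mathrm{enc}}$ with the isometries $V_r=\sum_{t\in\mathbb F_2}|t\rangle_{A_r}\otimes\Pi^{M_r}_t$.
\item Since $Z_{A_r}V_r=(I_{A_r}\otimes M_r)V_r\neq\pm V_r$, the stabilizers of $\vket{V_r}$ with record factor $I$ or $Z$ are generated, all with sign $+1$, by $Z_{A_r}\otimes\stk{M_r}{I}$ and the pairs $\stk{P(\mu)}{P(\mu)}$ with $\langle\mu,\mathbf m_r\rangle_s=0$. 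These are $2n$ independent elements of an index-two subgroup, because $X_{A_r}\otimes\stk{P(\eta)}{P(\eta)}$ with $\langle\eta,\mathbf m_r\rangle_s=1$ is also a stabilizer.
\item Record wires are never contracted. Hence \Cref{prop: stabilizers of composition}, applied to this chain, forces the record exponent to equal the data-chain exponent at every step, with no outcome-dependent signs.
\end{itemize}
Your matching and phase computation then go through verbatim and give both inclusions.

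Alternatively, you can stay branch-wise and use what the proof of the composition theorem shows: a branch vanishes exactly when some matched chain contracts to $-I$. This identifies $\Omega_{\widetilde{\mathbf C}}$ as the solution set of the propagated parities $o\cdot c=a_c$. A parity that is constant on that affine set must then lie in their span.
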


\begin{proof}
The stabilizers of $\vket{\Pi_t^{M_r}}$ are generated
by $(-1)^t\stk{M_r}{I}$ and
$\stk{P(\mu)}{P(\mu)}$ for
$\mu\in\mathbb F_2^{2n}$ satisfying
$\langle\mu,\mathbf m_r\rangle_s=0$.
The retained record $|t\rangle_{A_r}$ contributes
the generator $(-1)^tZ_{A_r}$. By \Cref{prop: stabilizers of composition}, matching
these local Pauli labels gives
\[
    \nu_{r+1}=\nu_r+b_r\mathbf m_r,
    \qquad
    \langle\nu_r,\mathbf m_r\rangle_s=0.
\]
Together with the encoder relation, these are exactly
\eqref{eq: code-conditioned level-1 relation space}.
Adding the local sign exponents, including the encoder
sign, gives \eqref{eq:L1-matrix-phase} by
\Cref{prop: Symplectic Pauli Algebra}.
Finally, $\nu=p'+b\mathbf M$ and invertibility of
$F_{\mathrm{enc}}$ give uniqueness of the witnesses.
\end{proof}

\begin{corollary}[Detectors, observables, and logical qubits]
\label{cor:L1-extraction}
For $C_{\mathrm{enc}}=I$, the detectors and observable
representatives are
\begin{equation}
    \begin{aligned}
        \mathcal D_{\mathbf C}
        &=\ker[\,\mathbf T\ \mathbf M\,]
          =\ker[\,\mathbf T^{\mathsf T}\ \mathbf M\,],\\
        \mathcal O_{\mathbf C}(P(p))
        &=\{v:v\mathbf T=0,\ v\mathbf M=p\},
    \end{aligned}
    \label{eq:L1-unconditioned-kernels}
\end{equation}
with parities $\Theta(0,d)$ and $\Theta(p,v)$,
respectively. Each nonzero branch preserves
\begin{equation}
    k_{\mathrm{pres}}
    =n-\operatorname{rank}[\,\mathbf T\ \mathbf M\,]
       +\operatorname{rank}\mathbf T
    \label{eq:L1-unconditioned-ranks}
\end{equation}
logical qubits.
\end{corollary}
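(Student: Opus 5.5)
With $C_{\mathrm{enc}}=I$ there are no encoder ancillas, $k=n$, $F_{\mathrm{enc}}=I$ and $s_{\mathrm{enc}}\equiv0$. The plan is to specialize \Cref{prop: code-conditioned stabilizers level 1} to this case and then read off the subspaces through \Cref{prop:common-relation-extraction,prop:outcome-independent-stabilizer-relations}. In this specialization the witness is $\nu=p$. So $\omega=[b,p,p']\in\mathcal W_{\mathbf C}$ exactly when
\[
    p\mathbf Q^{\mathsf T}+b\mathbf T^{\mathsf T}=0,
    \qquad p'=p+b\mathbf M,
\]
and $c_\omega=\Theta(p,b)$.

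The one algebraic identity I need relates $\mathbf T$ and its transpose. Since $\mathbf Q^{\mathsf T}=\Omega_n\mathbf M^{\mathsf T}$, the Gram matrix of the schedule is
\[
    G\triangleq\mathbf M\Omega_n\mathbf M^{\mathsf T},
    \qquad G_{rs}=\langle\mathbf m_r,\mathbf m_s\rangle_s .
\]
It is symmetric with zero diagonal, because $\langle u,u\rangle_s=2x z^{\mathsf T}=0$. By the definition of $\mathbf T$, its strictly lower part is $\mathbf T$ and its strictly upper part is $\mathbf T^{\mathsf T}$. Hence $G=\mathbf T+\mathbf T^{\mathsf T}$, and for any $b$,
\[
    b\mathbf T+b\mathbf T^{\mathsf T}=(b\mathbf M)\,\Omega_n\mathbf M^{\mathsf T}.
\]

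For the detectors, set $p=p'=0$. The conditions become $b\mathbf T^{\mathsf T}=0$ and $b\mathbf M=0$, which is the left kernel of $[\,\mathbf T^{\mathsf T}\ \mathbf M\,]$. When $b\mathbf M=0$ the identity above gives $b\mathbf T=b\mathbf T^{\mathsf T}$, so this kernel equals $\ker[\,\mathbf T\ \mathbf M\,]$. The parity is $c_{[d,0,0]}=\Theta(0,d)$.

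For the observables, set $p'=0$, which forces $p=v\mathbf M$. The remaining condition is
\[
    v\mathbf M\Omega_n\mathbf M^{\mathsf T}+v\mathbf T^{\mathsf T}
    = v(\mathbf T+\mathbf T^{\mathsf T})+v\mathbf T^{\mathsf T}
    = v\mathbf T=0 .
\]
This gives $\mathcal O_{\mathbf C}(P(p))=\{v:v\mathbf T=0,\ v\mathbf M=p\}$ with parity $\Theta(p,v)$. Both sets then follow from \eqref{eq:detector-observable-extraction}.

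For the logical count, \Cref{prop:outcome-independent-stabilizer-relations} gives
\[
    k_{\mathrm{pres}}=n-\dim V_{\mathrm{in}},
    \qquad
    V_{\mathrm{in}}\cong\mathcal W_{\mathrm{obs}}/\mathcal W_{\mathrm{det}}.
\]
By the previous step, the map $v\mapsto[v,v\mathbf M,0]$ is a linear bijection from $\ker\mathbf T$ onto $\mathcal W_{\mathrm{obs}}$; it is injective because $v$ is a component of the label. Therefore
\[
    \dim\mathcal W_{\mathrm{obs}}=\ell-\operatorname{rank}\mathbf T,
    \qquad
    \dim\mathcal W_{\mathrm{det}}=\ell-\operatorname{rank}[\,\mathbf T\ \mathbf M\,].
\]
Subtracting gives $\dim V_{\mathrm{in}}=\operatorname{rank}[\,\mathbf T\ \mathbf M\,]-\operatorname{rank}\mathbf T$, and hence \eqref{eq:L1-unconditioned-ranks}.

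The main obstacle is the second step, the identity $G=\mathbf T+\mathbf T^{\mathsf T}$ together with its zero diagonal. It is what converts the $\mathbf T^{\mathsf T}$ condition coming from the relation space into the $\mathbf T$ condition in the statement. Everything else is direct substitution and rank–nullity.
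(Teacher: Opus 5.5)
Your proposal is correct and follows essentially the same route as the paper. You specialize the level-$1$ relation space with $\nu=p$, and you use $\mathbf M\mathbf Q^{\mathsf T}=\mathbf T+\mathbf T^{\mathsf T}$ to turn the $p'=0$ condition into $v\mathbf T=0$ and to identify the two detector kernels. You then count $\dim\mathcal W_{\mathrm{obs}}=\ell-\operatorname{rank}\mathbf T$ and $\dim\mathcal W_{\mathrm{det}}=\ell-\operatorname{rank}[\,\mathbf T\ \mathbf M\,]$ and invoke the outcome-independence proposition, as the paper does; the only difference is that you spell out the zero-diagonal Gram-matrix argument and the explicit bijection $v\mapsto[v,v\mathbf M,0]$, which the paper leaves implicit.
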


\begin{proof}
For $C_{\mathrm{enc}}=I$, we have $\nu=p$ and
$c_\omega=\Theta(p,b)$.
Using $\mathbf M\mathbf Q^{\mathsf T}
=\mathbf T+\mathbf T^{\mathsf T}$, the matching
conditions of \Cref{prop: stabilizers of composition} with $p'=0$ become
$p=v\mathbf M$ and $v\mathbf T=0$.
This gives the observable formula and, on setting
$p=0$, the detector formula, with the stated parities.
The two detector kernels coincide because
$v\mathbf M=0$ implies
$v\mathbf T=v\mathbf T^{\mathsf T}$. Finally,
\[
    \begin{aligned}
        \dim\mathcal W_{\mathrm{obs}}
        &=\ell-\operatorname{rank}\mathbf T,\\
        \dim\mathcal W_{\mathrm{det}}
        &=\ell-\operatorname{rank}[\,\mathbf T\ \mathbf M\,].
    \end{aligned}
\]
The logical-qubit count follows from
\Cref{prop:outcome-independent-stabilizer-relations}.
\end{proof}
\section{Fault tolerance of general quantum circuits}
\label{sec: Fault tolerance}
We now ask when the noisy realization of a circuit can be
corrected to reproduce its ideal action, including the
retained measurement record. In \Cref{sec: Gen iff FT}, we provide algebraic
necessary and sufficient conditions for exact recovery
under a prescribed noise model, allowing correlations
across time. These conditions recover the
Knill--Laflamme criterion and give an exact SDP test for
a fixed circuit. We then construct an SDP relaxation for
circuit synthesis.

\subsection{Necessary and sufficient conditions for fault tolerance}
\label{sec: Gen iff FT}

We first formalize the notion of fault tolerance for general quantum circuits and then establish necessary and sufficient conditions for achieving it.
Here fault tolerance means exact recoverability of the intended circuit operation including its measurement outcomes. A scalable threshold or composition with noisy recovery
requires additional hypotheses \cite{HeNguyenPattison2025Composable}.

\begin{definition}[Fault tolerance criterion]
\label{def: Fault tolerance criteria}
Let $\mathbf C$ be an $\ell$-time-step quantum circuit
with circuit map $\mathbf C_{\mathrm{cm}}$ and ideal
realization $\mathbf C_{\mathrm{id}}$, and let
$\mathbf E$ be a noise model satisfying
\Cref{def: Circuit map and noise model}.
Both the noisy and ideal realizations are channels from
$\L(Q_0)$ to $\L(\mathcal H_{\mathrm{out}})$, where
\[
    \mathcal H_{\mathrm{out}}
       \triangleq A_{1:\ell}^{\mathrm{out}}\otimes Q_\ell.
\]
The circuit is \emph{fault-tolerant} under $\mathbf E$
if there exists a CPTP recovery map
$\mathcal R:\L(\mathcal H_{\mathrm{out}})
\to\L(\mathcal H_{\mathrm{out}})$ such that
\begin{equation}
    \mathcal R\circ\mathbf E\circ\mathbf C_{\mathrm{cm}}
       =\mathbf C_{\mathrm{id}}.
    \label{eq: fault tolerance criteria}
\end{equation}
This is an equality of channels, required for every input
state. The recovery is applied noiselessly.
\end{definition}

\begin{remark}
The ideal outcome record is classical in the computational
basis. A known flip of a record bit can be corrected by
$X|o_j\rangle=|o_j\oplus1\rangle$, or by the corresponding
classical update if the record is stored classically.
\end{remark}

The following criterion is a Kraus-operator form of the
exact channel-recovery duality of
\cite[Theorem~1]{BenyOreshkov2010}, specialized to an ideal
circuit that retains its full outcome record.
A detailed treatment of this duality is given in
\cite{BenyOreshkov2011Simulation}.
We give a constructive proof in the circuit notation.

\begin{theorem}[Algebraic necessary and sufficient conditions for fault tolerance]
\label{Thm: Iff Fault tolerance}
For a circuit map $\mathbf C_{\mathrm{cm}}$ with Kraus operators
$\{\mathbf C_o\}_o$, let $\mathbf E$ be a noise
model with spacetime Kraus operators $\{\mathbf E_e\}_e$.
Then $\mathbf C$ is fault-tolerant under $\mathbf E$
if and only if there exist Hermitian matrices
$\{\Lambda^{(k)}\}_k$, indexed by the combined noisy
Kraus indices $(o,e)$, satisfying, for all $o,e,o',e'$,
\begin{equation}
    \begin{gathered}
    \mathbf C_{o'}^\dagger\mathbf E_{e'}^\dagger
          \mathbf E_e\mathbf C_o
       =\sum_k\lambda^{(k)}_{(o,e),(o',e')}
          \mathbf C_k^\dagger\mathbf I^\dagger
             \mathbf I\mathbf C_k,\\
    [\Lambda^{(k)}]_{(o,e),(o',e')}
       \triangleq\lambda^{(k)}_{(o,e),(o',e')},\\
    \Lambda^{(k)}\succeq0,\quad
    \operatorname{Tr}\Lambda^{(k)}=1
       \quad\text{for every }k.
    \label{eq: Gen Iff}
    \end{gathered}
\end{equation}
\end{theorem}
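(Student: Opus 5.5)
The plan is to reduce the statement to a Knill--Laflamme-type criterion for recovering a channel from its noisy version. Write $N_{(o,e)}\triangleq\mathbf E_e\mathbf C_o$ for the noisy Kraus operators and $K_k\triangleq\mathbf I\mathbf C_k$ for the ideal ones. Both families are trace-preserving on $\L(Q_0)$ by the normalization identities of \Cref{app:causal-normalisation}. The condition \eqref{eq: Gen Iff} then reads
\[
N_{(o',e')}^\dagger N_{(o,e)}=\sum_k\lambda^{(k)}_{(o,e),(o',e')}K_k^\dagger K_k .
\]
I will prove both directions for this abstract pair of channels $\mathcal N=\sum_a N_a(\cdot)N_a^\dagger$ and $\mathcal N_0=\sum_k K_k(\cdot)K_k^\dagger$.

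\emph{Necessity.} Suppose $\mathcal R$ has Kraus operators $\{R_j\}$ and $\mathcal R\circ\mathcal N=\mathcal N_0$. The two Kraus families $\{R_jN_a\}_{j,a}$ and $\{K_k\}_k$ then describe the same channel. By unitary freedom of Kraus representations, padding the shorter family with zeros, there is an isometry $u$ with $R_jN_a=\sum_k u_{(j,a),k}K_k$. Using $\sum_jR_j^\dagger R_j=I$,
\[
N_{a'}^\dagger N_a=\sum_j N_{a'}^\dagger R_j^\dagger R_jN_a=\sum_{k,k'}\Big(\sum_j \overline{u_{(j,a'),k'}}u_{(j,a),k}\Big)K_{k'}^\dagger K_k .
\]
The cross terms $k\neq k'$ must be removed. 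Here I use that the ideal branches carry orthogonal records: $K_k=\ket{k}_{A^{\mathrm{out}}}\otimes(\cdots)$ up to the labelling of outcome strings, so $K_{k'}^\dagger K_k=0$ for $k\ne k'$. When the record labels Kraus operators only coarsely, I first refine $\mathcal N_0$ by grouping the Kraus operators with the same record, so that distinct groups still have orthogonal ranges. Only the diagonal terms survive, which gives $\lambda^{(k)}_{a,a'}=\sum_j u_{(j,a),k}\overline{u_{(j,a'),k}}$. Each $\Lambda^{(k)}$ is a Gram matrix and therefore PSD and Hermitian, though its index convention must be matched to the statement, possibly up to a transpose.

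The trace normalization $\operatorname{Tr}\Lambda^{(k)}=1$ follows from the columns of $u$ being unit vectors: $\sum_{j,a}|u_{(j,a),k}|^2=1$. This requires choosing $u$ to be an isometry from the $k$-index space, which is possible after removing zero ideal Kraus operators, i.e.\ non-realizable outcomes $k$; for these the equation is vacuous and $\Lambda^{(k)}$ can be taken to be any density matrix.

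\emph{Sufficiency.} Diagonalize each $\Lambda^{(k)}=\sum_\mu \alpha^{(k)}_\mu v^{(k)}_\mu v^{(k)\dagger}_\mu$ and set $M_{k\mu}\triangleq\sum_a \overline{(v^{(k)}_\mu)_a}N_a$, so that $M_{k\mu}^\dagger M_{k\mu}=\alpha^{(k)}_\mu K_k^\dagger K_k$ and the cross terms over $\mu$ vanish. Polar decomposition gives partial isometries $W_{k\mu}$ with $M_{k\mu}=\sqrt{\alpha^{(k)}_\mu}\,W_{k\mu}K_k$ on the relevant support. The key point is that the ranges of the $M_{k\mu}$ for different $(k,\mu)$ are mutually orthogonal. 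This follows from the hypothesis applied to the mixed products $N_{a'}^\dagger N_a$, which are proportional to $\sum_k\lambda^{(k)}K_k^\dagger K_k$, combined with the orthogonality of the $K_k$ ranges. The recovery is then $R_{k\mu}\triangleq K_kK_k^{+}\,W_{k\mu}^\dagger$-type maps, completed by a projector onto the complement of the joint range. One checks directly that $\sum_{k,\mu}\alpha^{(k)}_\mu K_k\rho K_k^\dagger=\sum_k K_k\rho K_k^\dagger$ using $\sum_\mu\alpha^{(k)}_\mu=\operatorname{Tr}\Lambda^{(k)}=1$.

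I expect the main obstacle to be this sufficiency construction: establishing the orthogonality of the ranges and well-definedness of the partial isometries when $K_k^\dagger K_k$ is not a projector. I would try working with $\sqrt{K_k^\dagger K_k}$ and its pseudo-inverse, following Bény--Oreshkov \cite{BenyOreshkov2010}.
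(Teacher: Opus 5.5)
\textbf{Necessity.} Your argument here is correct and essentially the paper's. The paper derives $R_jN_a=\sum_k u_{(j,a),k}K_k$ and $\sum_{j,a}|u_{(j,a),k}|^2=1$ directly from the Choi identity and Hilbert--Schmidt orthogonality of the $K_k$. You instead invoke unitary freedom of Kraus representations, which is legitimate after discarding the zero $K_k$. Both routes then use $\sum_jR_j^\dagger R_j=I$ and $K_{k'}^\dagger K_k=\delta_{k'k}K_k^\dagger K_k$. Your formula $\lambda^{(k)}_{a,a'}=\sum_ju_{(j,a),k}\overline{u_{(j,a'),k}}$ already matches the statement's convention, so no transpose is needed. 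The ``coarse record'' hedge is also unnecessary, since $K_k=\ket{k}\otimes B_k$ with one Kraus operator per complete outcome string.

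\textbf{Sufficiency: the gap.} With your definition of $M_{k\mu}$, the hypothesis gives
\[
M_{k\mu}^\dagger M_{k'\nu}=\sum_{k''}\bigl(v^{(k')\dagger}_{\nu}\,\Lambda^{(k'')}\,v^{(k)}_{\mu}\bigr)K_{k''}^\dagger K_{k''},
\]
so every $\Lambda^{(k'')}$ contributes, not only $\Lambda^{(k)}$. Hence $M_{k\mu}^\dagger M_{k\mu}=\alpha^{(k)}_\mu K_k^\dagger K_k$ fails unless $v^{(k)}_\mu$ lies in the kernel of every other $\Lambda^{(k'')}$. The $\Lambda^{(k)}$ need not commute, so no choice of eigenbases repairs this. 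Orthogonality of the ranges of the ideal $K_k$ in $\mathcal H_{\mathrm{out}}$ says nothing about the ranges of the noisy $M_{k\mu}$.

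Here is a concrete counterexample.
\begin{itemize}
\item Let $Q_0=\mathbb C$, and let the ideal circuit prepare $\ket{+}$ and measure $Z$, so $K_k=2^{-1/2}\ket{k}$.
\item Let the noise flip the record with probability $\tfrac12$, so $N_{(o,e)}=\tfrac12\ket{o\oplus e}$.
\item The hypothesis is satisfied by $\Lambda^{(0)}=\Lambda^{(1)}$ with entries $\tfrac14\delta_{o\oplus e,\,o'\oplus e'}$, whose eigenvalues are $\tfrac12,\tfrac12,0,0$.
\item For every unit $v$ in the $\tfrac12$-eigenspace, $M^\dagger M=\tfrac12\neq\tfrac14=\alpha K_k^\dagger K_k$.
\item The four nonzero $M_{k\mu}$ cannot have mutually orthogonal ranges in the two-dimensional record space.
\end{itemize}
Another admissible choice of $\Lambda^{(k)}$ would make your construction work here, but a sufficiency proof must handle whatever $\Lambda^{(k)}$ the hypothesis supplies. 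The real obstacle is therefore the overlapping, non-commuting $\Lambda^{(k)}$, not that $K_k^\dagger K_k$ fails to be a projector. Pseudo-inverses of $\sqrt{K_k^\dagger K_k}$ will not fix it.

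\textbf{The missing idea.} The paper does not decompose the noisy operators at all. Instead it builds a fictitious family with the same operator-valued Gram matrix.
\begin{itemize}
\item Factor $\Lambda^{(k)}=T^{(k)}T^{(k)\dagger}$ and define $G_a\ket{\psi}=\sum_{k,j}t^{(k)}_{a,j}K_k\ket{\psi}\otimes\ket{k,j}$ into $\mathcal H_{\mathrm{out}}\otimes\mathcal F$.
\item Orthonormality of the auxiliary labels alone gives $G_{a'}^\dagger G_a=\sum_k\lambda^{(k)}_{a,a'}K_k^\dagger K_k=N_{a'}^\dagger N_a$.
\item Hence $\sum_aN_a\ket{\psi_a}\mapsto\sum_aG_a\ket{\psi_a}$ is a well-defined isometry $V$ on the span of the ranges of the $N_a$.
\item Set $\mathcal R(X)=\operatorname{Tr}_{\mathcal F}(VXV^\dagger)$, completed by a trace-preserving term on the orthogonal complement of that span. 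This is CPTP.
\item It yields $\mathcal R(\mathcal N(\rho))=\sum_k\bigl(\sum_{a,j}|t^{(k)}_{a,j}|^2\bigr)K_k\rho K_k^\dagger=\sum_kK_k\rho K_k^\dagger$, using $\operatorname{Tr}\Lambda^{(k)}=1$.
\end{itemize}
The register $\ket{k,j}$ supplies exactly the orthogonality you tried to extract from the $M_{k\mu}$. This direction does not even need orthogonality of the records.
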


\begin{proof}
Write $\mathcal H_{\mathrm{out}}
=A^{\mathrm{out}}_{1:\ell}\otimes Q_\ell$ and set
\[
    K_k=\mathbf I\mathbf C_k,
    \qquad F_a=\mathbf E_e\mathbf C_o,
    \qquad a=(o,e).
\]
These are the ideal and noisy Kraus operators, respectively,
from $Q_0$ to $\mathcal H_{\mathrm{out}}$, with
\[
    \sum_kK_k^\dagger K_k
       =\sum_aF_a^\dagger F_a=I_{Q_0}.
\]
The equalities follow from the circuit and noise
normalization constraints in \Cref{app:causal-normalisation}.
Retaining the full record gives
$K_k=|k\rangle\otimes B_k$, where $B_k:Q_0\to Q_\ell$.
Consequently,
\begin{equation}
    K_{k'}^\dagger K_k
    =\delta_{k',k}K_k^\dagger K_k.
    \label{eq:ft-ideal-kraus-orthogonality}
\end{equation}
This holds for both adaptive and non-adaptive circuits.
In particular, the nonzero $K_k$ are Hilbert--Schmidt
orthogonal.

\emph{Necessity ($\implies$).}
Suppose a CPTP recovery with Kraus operators $\{R_\alpha\}_\alpha$
exists. Equality of the recovered and ideal channels gives
the Choi identity
\begin{equation}
    \sum_{\alpha,a}
    \vket{R_\alpha F_a}\vbra{R_\alpha F_a}
    =\sum_k\vket{K_k}\vbra{K_k}.
    \label{eq:ft-choi-equality}
\end{equation}
First consider only terms with $K_k \neq 0$. If $|w\rangle$ is orthogonal
to every $\vket{K_k}$, taking its expectation above yields
\[
    \sum_{\alpha,a}
    \bigl|\langle w\mid R_\alpha F_a\rangle\!\rangle\bigr|^2=0.
\]
Every summand is nonnegative, so each vanishes.
Thus every $R_\alpha F_a$ belongs to the span of the
nonzero $K_k$, and we may write
\begin{equation}
    R_\alpha F_a=\sum_kv_{\alpha,a,k}K_k.
    \label{eq:ft-kraus-expansion}
\end{equation}
For $d_k=\operatorname{Tr}(K_k^\dagger K_k)>0$,
Hilbert--Schmidt orthogonality gives
$\operatorname{Tr}(K_k^\dagger R_\alpha F_a)
=d_kv_{\alpha,a,k}$.
Taking the expectation of
\eqref{eq:ft-choi-equality} in $\vket{K_k}$ and dividing
by $d_k^2$ gives
\begin{equation}
    \sum_{\alpha,a}|v_{\alpha,a,k}|^2=1.
    \label{eq:ft-coefficient-normalization}
\end{equation}
Since $\sum_\alpha R_\alpha^\dagger R_\alpha=I$, using
\eqref{eq:ft-ideal-kraus-orthogonality} gives
\begin{equation}
    \begin{aligned}
        F_{a'}^\dagger F_a
        &=\sum_\alpha
          (R_\alpha F_{a'})^\dagger(R_\alpha F_a)\\
        &=\sum_k\lambda^{(k)}_{a,a'}K_k^\dagger K_k,
    \end{aligned}
    \label{eq:ft-necessary-identity}
\end{equation}
where
\begin{equation}
    \lambda^{(k)}_{a,a'}
    =\sum_\alpha
      v_{\alpha,a,k}\overline{v_{\alpha,a',k}}.
    \label{eq:ft-lambda-construction}
\end{equation}
For the column vectors $v_{\alpha,k}=(v_{\alpha,a,k})_a$,
\[
    \Lambda^{(k)}
    =\sum_\alpha v_{\alpha,k}v_{\alpha,k}^\dagger\succeq0,
    \qquad \operatorname{Tr}\Lambda^{(k)}=1,
\]
where the trace follows from
\eqref{eq:ft-coefficient-normalization}.
For $K_k=0$, any positive semidefinite $\Lambda^{(k)}$
of trace one suffices, since its operator contribution
vanishes.

\emph{Sufficiency ($\impliedby$).}
Conversely, suppose the stated matrices exist. Factor
$\Lambda^{(k)}=T^{(k)}T^{(k)\dagger}$ and write
$t_{a,j}^{(k)}=[T^{(k)}]_{a,j}$. Then
\begin{equation}
    \begin{aligned}
        \lambda^{(k)}_{a,a'}
        &=\sum_jt_{a,j}^{(k)}\overline{t_{a',j}^{(k)}},\\
        \sum_{a,j}|t_{a,j}^{(k)}|^2
        &=\operatorname{Tr}\Lambda^{(k)}=1.
    \end{aligned}
    \label{eq:ft-positive-factorization}
\end{equation}
Let $\mathcal F$ have orthonormal basis $\{|k,j\rangle\}$
and define $G_a:Q_0\to\mathcal H_{\mathrm{out}}\otimes\mathcal F$
by
\[
    G_a|\psi\rangle
    \triangleq \sum_{k,j}t_{a,j}^{(k)}K_k|\psi\rangle\otimes|k,j\rangle.
\]
The auxiliary basis $\{\ket{k,j}\}_{k,j}$ is orthonormal, so the assumed
operator identity implies
\begin{equation}
    G_{a'}^\dagger G_a
    =\sum_k\lambda^{(k)}_{a,a'}K_k^\dagger K_k
    =F_{a'}^\dagger F_a.
    \label{eq:ft-gram-equality}
\end{equation}
Thus the $F_a$ and $G_a$ have the same operator-valued
Gram matrix. Let $\mathcal S$ be the subspace spanned
by the ranges of the noisy Kraus operators:
\begin{equation}
    \mathcal S\triangleq
    \left\{
        \sum_a F_a|\psi_a\rangle:
        |\psi_a\rangle\in Q_0\ \text{for every }a
    \right\}.
\end{equation}
The vectors $|\psi_a\rangle$ are arbitrary, so the
displayed set is a linear subspace. Every noisy output
state is supported on $\mathcal S$.
On this subspace, define
\begin{equation}
    V_{\mathcal S}\Bigl(\sum_aF_a|\psi_a\rangle\Bigr)
    \triangleq\sum_aG_a|\psi_a\rangle.
    \label{eq:ft-recovery-isometry}
\end{equation}
This is independent of the chosen representation because
\begin{equation}
    \begin{aligned}
        \Bigl\|\sum_aG_a|\psi_a\rangle\Bigr\|^2
        &=\sum_{a,a'}\langle\psi_{a'}|
          F_{a'}^\dagger F_a|\psi_a\rangle\\
        &=\Bigl\|\sum_aF_a|\psi_a\rangle\Bigr\|^2.
    \end{aligned}
\end{equation}
Indeed, subtracting two representations of the same vector
gives zero on the right and therefore also on the left.
The resulting linear map $V_{\mathcal S}$ preserves norms
and is therefore an isometry.

Let $\Pi_{\mathcal S}$ project onto $\mathcal S$ and extend
$V_{\mathcal S}$ by zero, setting
$V=V_{\mathcal S}\Pi_{\mathcal S}$. Then
\[
    V^\dagger V=\Pi_{\mathcal S},
    \qquad VF_a=G_a.
\]
Choose a unit vector $|\xi\rangle\in\mathcal H_{\mathrm{out}}$
and define
\begin{equation}
    \begin{aligned}
        \mathcal R(X)
        &=\operatorname{Tr}_{\mathcal F}(VXV^\dagger)\\
        &\quad+\operatorname{Tr}[(I-\Pi_{\mathcal S})X]
          |\xi\rangle\langle\xi|.
    \end{aligned}
    \label{eq:ft-recovery-channel}
\end{equation}
The first term is completely positive. For an orthonormal
basis $\{|u_s\rangle\}_s$ of $\mathcal S^\perp$,
the second term has Kraus operators $|\xi\rangle\langle u_s|$
and is therefore also completely positive.
Moreover,
\[
    \operatorname{Tr}\mathcal R(X)
    =\operatorname{Tr}(\Pi_{\mathcal S}X)
     +\operatorname{Tr}[(I-\Pi_{\mathcal S})X]
    =\operatorname{Tr}X.
\]
Thus $\mathcal R$ is CPTP. Since every $F_a$ has range in
$\mathcal S$, the second term vanishes on noisy outputs.
Using $VF_a=G_a$ and
\eqref{eq:ft-positive-factorization}, we obtain
\begin{equation}
    \begin{aligned}
        \mathcal R\Bigl(\sum_aF_a\rho F_a^\dagger\Bigr)
        &=\sum_a\operatorname{Tr}_{\mathcal F}
          (G_a\rho G_a^\dagger)\\
        &=\sum_k\Bigl(\sum_{a,j}|t_{a,j}^{(k)}|^2\Bigr)
          K_k\rho K_k^\dagger\\
        &=\sum_kK_k\rho K_k^\dagger.
    \end{aligned}
\end{equation}
This proves \eqref{eq: fault tolerance criteria}.
\end{proof}
\begin{remark}[Operational interpretation of the coefficient matrices]
For $F_a=\mathbf E_e\mathbf C_o$, with $a=(o,e)$,
choose an environment basis $\{|a\rangle\}_a$.
The complementary channel of the noisy realization, obtained from
an associated Stinespring dilation \cite{Stinespring1955}, is
\[
    \mathbf C_{\mathbf E}^{\,c}(\rho)
      =\sum_{a,a'}\operatorname{Tr}(F_{a'}^\dagger F_a\rho)
                      |a\rangle\langle a'|.
\]
It gives the environment output when the circuit output
is discarded. \Cref{Thm: Iff Fault tolerance} expresses it as
\[
    \begin{aligned}
    \mathbf C_{\mathbf E}^{\,c}(\rho)
       &=\sum_kp_k(\rho)\Lambda^{(k)},\\
    p_k(\rho)&\triangleq\operatorname{Tr}(K_k\rho K_k^\dagger),
       \qquad K_k=\mathbf I\mathbf C_k.
    \end{aligned}
\]
Since the ideal branches retain distinct records,
\eqref{eq:ft-ideal-kraus-orthogonality} gives
\[
    \mathbf C_{\mathrm{id}}^{\,c}(\rho)
       =\sum_kp_k(\rho)|k\rangle\langle k|.
\]
Consequently,
\[
    \begin{aligned}
    \mathbf C_{\mathbf E}^{\,c}
       &=\mathcal T\circ\mathbf C_{\mathrm{id}}^{\,c},\\
    \mathcal T(X)
       &\triangleq\sum_k\langle k|X|k\rangle\Lambda^{(k)}.
    \end{aligned}
\]
Positivity and unit trace of each $\Lambda^{(k)}$ make
$\mathcal T$ a CPTP measure-and-prepare map, namely, it reads
$k$ and prepares the environment state $\Lambda^{(k)}$.
Thus exact recovery is possible precisely when the noisy
environment can be simulated from the outcome
record alone. This is the exact-recovery instance of
complementary-channel duality \cite{BenyOreshkov2010}.
An information-theoretic correctability criterion is also
given for strategic codes in
\cite[Theorem~2]{tanggara2024strategiccodeunifiedspatiotemporal}.
For a single isometric ideal branch, the complementary
channel is constant, recovering the usual
Knill--Laflamme interpretation.
\end{remark}

\begin{corollary}[Knill--Laflamme conditions \cite{knill1997theory} as a special case]
\label{cor: Knill-Laflamme conditions special case}
Let $C_{\mathrm{enc}}:Q_\L\to Q$ be an isometric encoder,
put $P=C_{\mathrm{enc}}C_{\mathrm{enc}}^\dagger$, and write
$\mathbf C_{\mathrm{enc},\mathrm{id}}(\rho)
=C_{\mathrm{enc}}\rho C_{\mathrm{enc}}^\dagger$.
For a CPTP noise channel $\mathcal E$ on $Q$ with
Kraus operators $\{E_a\}_a$, a CPTP recovery
$\mathcal R$ on $Q$ satisfying
\[
    \mathcal R\circ\mathcal E\circ
    \mathbf C_{\mathrm{enc},\mathrm{id}}
    =\mathbf C_{\mathrm{enc},\mathrm{id}}
\]
exists if and only if
\begin{equation}
    PE_a^\dagger E_bP=\alpha_{ab}P
    \qquad(\forall a,b)
    \label{eq: Knill-Laflamme conditions}
\end{equation}
for some $\alpha_{ab}\in\mathbb C$.
\end{corollary}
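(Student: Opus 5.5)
The plan is to specialize \Cref{Thm: Iff Fault tolerance} to a trivial one-round circuit with a single outcome. Take $\ell=1$, no measured ancillas, and the single component $C=C_{\mathrm{enc}}$. Then the ideal realization has exactly one Kraus operator, $K=\mathbf I\mathbf C=C_{\mathrm{enc}}$. Take the noise model to act only after the encoder, with Kraus operators $E_a$ on $Q$. The noisy Kraus operators are then $F_a=E_aC_{\mathrm{enc}}$, indexed by $a$ alone, since there is a single outcome. The recovery goal $\mathcal R\circ\mathcal E\circ\mathbf C_{\mathrm{enc},\mathrm{id}}=\mathbf C_{\mathrm{enc},\mathrm{id}}$ is exactly \eqref{eq: fault tolerance criteria} in this setting. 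I would check that the comb constraints \eqref{eq: Noise comb constraints} reduce to $\sum_aE_a^\dagger E_a=I$ and that $\mathbf C_{\mathrm{cm}}$ and $\mathbf I$ reduce as described. Either way, the proof of \Cref{Thm: Iff Fault tolerance} only uses the Kraus families $\{F_a\}$ and $\{K_k\}$ and their completeness, so it applies verbatim.

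With a single index $k$, condition \eqref{eq: Gen Iff} reads
\[
C_{\mathrm{enc}}^\dagger E_b^\dagger E_aC_{\mathrm{enc}}=\lambda_{a,b}\,C_{\mathrm{enc}}^\dagger C_{\mathrm{enc}}=\lambda_{a,b}I_{Q_\L},
\]
with $\Lambda=(\lambda_{a,b})\succeq0$ and $\operatorname{Tr}\Lambda=1$.

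\emph{Necessity.} Suppose a recovery exists. The theorem yields such a $\Lambda$. Conjugating by $C_{\mathrm{enc}}$, that is, multiplying on the left by $C_{\mathrm{enc}}$ and on the right by $C_{\mathrm{enc}}^\dagger$, gives $PE_b^\dagger E_aP=\lambda_{a,b}P$. This is \eqref{eq: Knill-Laflamme conditions} with $\alpha_{ba}=\lambda_{a,b}$, up to the index convention.

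\emph{Sufficiency.} Suppose \eqref{eq: Knill-Laflamme conditions} holds. Compressing by $C_{\mathrm{enc}}^\dagger(\cdot)C_{\mathrm{enc}}$ and using $C_{\mathrm{enc}}^\dagger C_{\mathrm{enc}}=I$ gives $C_{\mathrm{enc}}^\dagger E_a^\dagger E_bC_{\mathrm{enc}}=\alpha_{ab}I$. It then remains to show that the matrix $\Lambda$ defined by $\lambda_{a,b}=\alpha_{ba}$ is positive semidefinite with unit trace.

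\emph{Positivity.} For any vector $c$ and any unit $\ket\psi\in Q_\L$,
\[
\sum_{a,b}\bar c_b c_a\lambda_{a,b}=\Bigl\|\sum_a c_aE_aC_{\mathrm{enc}}\ket\psi\Bigr\|^2\ge0 .
\]

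\emph{Unit trace.} Completeness $\sum_aE_a^\dagger E_a=I$ gives $\sum_a\alpha_{aa}I=C_{\mathrm{enc}}^\dagger C_{\mathrm{enc}}=I$, so $\operatorname{Tr}\Lambda=1$.

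\emph{Hermiticity.} This follows from taking the adjoint of the relation.

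Applying \Cref{Thm: Iff Fault tolerance} then gives the recovery.

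The only real obstacle is the bookkeeping. One must identify the encoder-plus-noise setup as a valid instance of the circuit and noise-comb formalism, for instance by putting the noise at time step $r=1$ with trivial $E_{e_0}$ and trivial memory. One must also handle the transpose of indices $(a,b)$ versus $(a',a)$ in \eqref{eq: Gen Iff}. A fallback is to rerun the short Gram-matrix argument from the theorem's proof directly, which avoids the formalism entirely.
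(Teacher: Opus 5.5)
Your proposal is correct and follows essentially the same route as the paper. Both specialize \Cref{Thm: Iff Fault tolerance} to the single ideal Kraus operator $C_{\mathrm{enc}}$ with noisy Kraus operators $E_aC_{\mathrm{enc}}$, pass between the compressed and projected forms using $C_{\mathrm{enc}}^\dagger C_{\mathrm{enc}}=I$, and observe that the coefficient matrix is automatically positive semidefinite (it is the Gram matrix of the vectors $E_aC_{\mathrm{enc}}\ket\psi$) and has unit trace by completeness. Your explicit bookkeeping $\lambda_{a,b}=\alpha_{ba}$ is slightly more careful than the paper, which identifies the theorem's matrix with $[\alpha_{ab}]$ directly; this is harmless, since transposition preserves positivity and trace.
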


\begin{proof}
With ideal Kraus operator $C_{\mathrm{enc}}$ and noisy
Kraus operators $E_aC_{\mathrm{enc}}$,
\Cref{Thm: Iff Fault tolerance} gives the conditions
\[
    C_{\mathrm{enc}}^\dagger E_a^\dagger E_b
    C_{\mathrm{enc}}=\alpha_{ab}I_{Q_\L},
\]
which are equivalent to
\eqref{eq: Knill-Laflamme conditions} because
$C_{\mathrm{enc}}$ is an isometry.
For any unit vector $|\psi\rangle\in\operatorname{im}
C_{\mathrm{enc}}$, the coefficients satisfy
$\alpha_{ab}=\langle E_a\psi|E_b\psi\rangle$.
Thus $\alpha=[\alpha_{ab}]$ is positive semidefinite
and has unit trace by trace preservation of $\mathcal E$,
so the coefficient requirements of \Cref{Thm: Iff Fault tolerance} are automatic.
\end{proof}

\begin{corollary}[Semidefinite feasibility formulation]
\label{cor:ft-sdp}
Fix a circuit $\mathbf C$ and a noise model $\mathbf E$
satisfying the hypotheses of
\Cref{Thm: Iff Fault tolerance}. Let $m$ be the number
of combined indices $a=(o,e)$, and define
\[
\begin{aligned}
    F_a&\triangleq\mathbf E_e\mathbf C_o,
    \qquad a=(o,e),\\
    M_k&\triangleq
    \mathbf C_k^\dagger\mathbf I^\dagger
    \mathbf I\mathbf C_k.
\end{aligned}
\]
Then $\mathbf C$ is fault-tolerant under $\mathbf E$
if and only if the following semidefinite feasibility
problem, with Hermitian $m\times m$ matrix variables
$\{\Lambda^{(k)}\}_k$, is feasible:
\begin{equation}
\begin{aligned}
    \mathrm{find}\quad
    &\{\Lambda^{(k)}\}_k\\
    \mathrm{subject\; to}\quad
    &\Lambda^{(k)}\succeq0,
    \qquad \operatorname{Tr}\Lambda^{(k)}=1
    \quad\forall k,\\
    &F_{a'}^\dagger F_a
    =
    \sum_k[\Lambda^{(k)}]_{a,a'}M_k
    \quad\forall a,a'.
\end{aligned}
\label{eq:ft-sdp}
\end{equation}
The real and imaginary parts of the operator equalities
are imposed entrywise.
\end{corollary}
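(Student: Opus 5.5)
This corollary is a direct repackaging of \Cref{Thm: Iff Fault tolerance}. I will prove it by matching the SDP constraints one by one with the conditions in \eqref{eq: Gen Iff}, and then checking that the problem really is a semidefinite feasibility problem.

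\emph{Step 1 (translation).} With $F_a=\mathbf E_e\mathbf C_o$ and $a=(o,e)$, the left-hand side $\mathbf C_{o'}^\dagger\mathbf E_{e'}^\dagger\mathbf E_e\mathbf C_o$ of \eqref{eq: Gen Iff} is exactly $F_{a'}^\dagger F_a$. The operator $\mathbf C_k^\dagger\mathbf I^\dagger\mathbf I\mathbf C_k$ is exactly $M_k$. The coefficients $\lambda^{(k)}_{(o,e),(o',e')}$ are the entries $[\Lambda^{(k)}]_{a,a'}$. The positivity and unit-trace requirements are identical in the two formulations. Hence a family $\{\Lambda^{(k)}\}_k$ satisfies \eqref{eq: Gen Iff} if and only if it is a feasible point of \eqref{eq:ft-sdp}. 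Applying the theorem in both directions then gives the equivalence with fault tolerance.

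\emph{Step 2 (SDP form).} Each operator constraint $F_{a'}^\dagger F_a=\sum_k[\Lambda^{(k)}]_{a,a'}M_k$ is affine in the variables. Expanded in the fixed basis of $Q_0$, it becomes finitely many complex linear equations.
\begin{itemize}
\item Write each Hermitian $\Lambda^{(k)}$ in terms of its real symmetric and real antisymmetric parts.
\item Split every complex equation into its real and imaginary parts, as the statement prescribes.
\end{itemize}
This gives finitely many real affine equalities. The constraints $\operatorname{Tr}\Lambda^{(k)}=1$ are also affine. The conditions $\Lambda^{(k)}\succeq0$ are linear matrix inequalities; equivalently, the block-diagonal matrix $\bigoplus_k\Lambda^{(k)}$ is positive semidefinite. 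All sets involved are finite because the Hilbert spaces are finite dimensional and the index sets are finite, so the feasible set is that of a standard SDP.

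\emph{Main obstacle.} The only point needing care is Hermiticity. The theorem asks for Hermitian $\Lambda^{(k)}$, and the SDP variables are declared Hermitian, so no separate argument is needed. One should still note that the constraints are consistent with this: taking adjoints of the operator equality exchanges $a$ and $a'$ and conjugates the coefficients, because each $M_k$ is Hermitian. I expect no genuine difficulty beyond this bookkeeping.
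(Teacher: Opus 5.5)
Your proposal is correct and follows the paper's own argument: the identification $\lambda^{(k)}_{a,a'}=[\Lambda^{(k)}]_{a,a'}$ turns \Cref{Thm: Iff Fault tolerance} into the stated equivalence. With $F_a$ and $M_k$ fixed, the equalities are affine and the remaining constraints are positive semidefiniteness conditions, so the problem is an SDP. Your extra remarks on real/imaginary splitting and on Hermitian consistency only elaborate this same argument.
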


\begin{proof}
Since the circuit and noise model are fixed, the operators
$F_a$ and $M_k$ are known. All equality constraints in
\eqref{eq:ft-sdp} are therefore affine in the matrix
variables, and the remaining constraints are positive
semidefiniteness conditions. Thus \eqref{eq:ft-sdp} is
an SDP feasibility problem. Its equivalence to exact
CPTP recovery follows from
\Cref{Thm: Iff Fault tolerance} upon identifying
$\lambda_{a,a'}^{(k)}=[\Lambda^{(k)}]_{a,a'}$.
\end{proof}
\subsubsection{SDP for fault-tolerant circuit synthesis}
\label{sec:general-ft-synthesis}

We seek a circuit and a recovery that together realize a
prescribed operation under a fixed noise model. Fix the depth
$\ell$ and the registers of \Cref{def:quantum_circuit}, with
\[
    Q_r\cong\mathbb C^{2^{n_r}},\qquad
    A_r\cong\mathbb C^{2^{m_r}},\qquad
    A_r^{\mathrm{out}}\cong\mathbb C^{2^{m_r'}}.
\]
Here $Q_r$ is the data register, $A_r$ is the preparation
register and $A_r^{\mathrm{out}}$ is the retained outcome
register. The indices run over $0\leq r\leq\ell$ with $\ell$ fixed, and for every $r \in \mathbb Z^+_{\ell}$ we also fix
$m_r+n_{r-1}=m_r'+n_r$. The spaces and their dimensions are fixed throughout the search.

Let $\mathfrak C_\Theta=\{\mathbf C_\theta:\theta\in\Theta\}$
be a prescribed circuit family. For each component $r$ and
preceding outcome history $h=o_{1:r-1}$, its operators satisfy
\[
    \begin{gathered}
        C_{\theta,o_r|h}:Q_{r-1}\longrightarrow
                   A_r^{\mathrm{out}}\otimes Q_r,\\
        \sum_{o_r}C_{\theta,o_r|h}^\dagger
                         C_{\theta,o_r|h}=I_{Q_{r-1}}.
    \end{gathered}
\]
The parameter $\theta$ labels the permitted choices of
component operations, gate supports, gate orderings and
dependence on previous measurement outcomes within the
fixed depth and register structure. (\emph{Note: Explicit knowledge about the circuit family parametrized by $\Theta$
may yield additional affine constraints or linear matrix
inequalities that strengthen the SDP search in \Cref{prop:joint-ft-sdp}.}) Fixing $\theta$
specifies every component operator $C_{\theta,o_r|h}$
and hence the complete circuit $\mathbf C_\theta$.
The recovery $\mathcal R_\theta$ is chosen for that
circuit and may therefore vary with $\theta$. Set
\[
    \mathcal H_{\mathrm{out}}
       =A_{1:\ell}^{\mathrm{out}}\otimes Q_\ell,
\]
and use the external input copy $Q_0^{\mathrm{ext}}\cong Q_0$
of \Cref{app:causal-normalisation}. Fix a noise model
$\mathbf E$ satisfying \Cref{def: Circuit map and noise model}
on the common ports, and a target CPTP map
\[
    \mathbf C_{\mathrm{tar}}:
    \L(Q_0^{\mathrm{ext}})\longrightarrow
    \L(\mathcal H_{\mathrm{out}}).
\]
We ask whether for some $\theta\in\Theta$ there exists a CPTP recovery map
$\mathcal R_\theta:\L(\mathcal H_{\mathrm{out}})
\to\L(\mathcal H_{\mathrm{out}})$ such that
\begin{equation}
    \begin{aligned}
        \mathbf C_{\theta,\mathrm{id}}
            &=\mathbf C_{\mathrm{tar}},\\
        \mathcal R_\theta\circ\mathbf E\circ
            \mathbf C_{\theta,\mathrm{cm}}
            &=\mathbf C_{\mathrm{tar}}.
    \end{aligned}
    \label{eq:ft-family-synthesis}
\end{equation}
These equalities concern the complete operation, including
the quantum output and measurement record. The recovery may
depend on the chosen circuit and use its record, but must
work for every input without access to unobserved faults or
the noise memory. It is noiseless, and its implementation
cost is excluded from the circuit resource budget.
\Cref{fig:ft-synthesis-comb} illustrates the search question.

\begin{figure*}[t]
\centering
\begingroup
\tikzset{
  ft circuit/.style={
    draw=blue!55!black,fill=blue!6,
    rounded corners=1pt
  },
  ft noise/.style={
    draw=black!60,fill=black!7,
    rounded corners=1pt
  },
  ft recovery/.style={
    draw=teal!65!black,fill=teal!7,
    rounded corners=1pt
  }
}
\resizebox{\textwidth}{!}{%
\begin{quantikz}[
  row sep=0.72cm,
  column sep=0.30cm,
  wire types={b,b},
  execute at end picture={
    \node[below=5mm,text=blue!55!black]
      at ($(ftcfirst.south)!0.5!(ftclast.south)$)
      {$\mathbf C_\theta:\ \theta\in\Theta;
        \quad\ell \text{ fixed}$};
    \node[above=4mm,text=black!70]
      at ($(ftnfirst.north)!0.5!(ftnlast.north)$)
      {Fixed noise comb $\mathbf E$};
    \node[above=1mm,font=\scriptsize]
      at (ftcfirst.center |- ftnfirst.north)
      {$N_1$};
    \node[above=1mm,font=\scriptsize]
      at (ftclast.center |- ftnlast.north)
      {$N_\ell$};
  }
]
  \setwiretype{n}
  & \gate[2,style={ft noise,alias=ftnfirst}]
      {E_{e_0}}
  & \setwiretype{b}
  & \gate[2,style={ft noise}]
      {E_{e_1}}
  & \cdots\setwiretype{b}
  & \gate[2,style={ft noise}]
      {E_{e_{\ell-1}}}\setwiretype{b}
  &
  & \gate[2,style={ft noise,alias=ftnlast}]
      {E_{e_\ell}}
  & \setwiretype{n}
  & \setwiretype{n}
  \\
  \lstick{$Q_0^{\mathrm{ext}}$}
  &
  & \gate[style={ft circuit,alias=ftcfirst}]
      {C_{\theta,o_1}}
  &
  & \cdots\setwiretype{b}
  & \setwiretype{b}
  & \gate[style={ft circuit,alias=ftclast}]
      {C_{\theta,o_\ell\mid o_{1:\ell-1}}}
  &
  & \gate[style={ft recovery}]
      {\mathcal R_\theta}
  & \rstick{$\mathcal H_{\mathrm{out}}$}
\end{quantikz}%
}
\endgroup
\par\smallskip
\(\displaystyle
  \mathbf C_{\theta,\mathrm{id}}
    =\mathbf C_{\mathrm{tar}},
  \qquad
  \mathcal R_\theta\circ\mathbf E\circ
    \mathbf C_{\theta,\mathrm{cm}}
    =\mathbf C_{\mathrm{tar}}.
\)
\caption{Fault-tolerant circuit synthesis with fixed
depth $\ell$, register dimensions, target operation and
noise comb $\mathbf E$. The circuit $\mathbf C_\theta$ and
its recovery $\mathcal R_\theta$ are chosen jointly subject
to the displayed ideal-target and exact-recovery conditions. The local boxes show a fixed
noise-Kraus index sequence $e=(e_0,\ldots,e_\ell)$
and circuit outcome sequence $o=(o_1,\ldots,o_\ell)$;
summing over $e$ and $o$ gives the noisy channel
before recovery. The upper wire carries the noise memory;
record registers and classical control are suppressed.
Noise acts on the ports specified by the model. For adaptive
circuits, the outcome history selects the components as in
\Cref{rem:noise_model_explicit}.}
\label{fig:ft-synthesis-comb}
\end{figure*}

We first describe the circuit search variables. Use the
open-wire prefix spaces
\[
    \mathcal H_r=
       \bigotimes_{i\in\mathbb Z^+_{r}}
       \left(A_i^{\mathrm{out}}\otimes Q_i^{\mathrm{out}}
                            \otimes Q_{i-1}^{\mathrm{in}}\right)
\]
of \Cref{app:causal-normalisation}, with $\mathcal H_0=\mathbb C$. Set
$d_0=\dim Q_0^{\mathrm{ext}}$. An actual circuit has prefix
normalization operators, for $1\leq r\leq\ell$,
\[
    \mathbf C_\theta^{(r)}
       =\sum_{o_{1:r}}\mathbf C_{\theta,o_{1:r}}
                         \mathbf C_{\theta,o_{1:r}}^\dagger,
    \qquad
    \mathbf C_\theta^{(0)}=I_{Q_0^{\mathrm{ext}}},
\]
where the prefix Kraus operators are defined in \Cref{app:causal-normalisation}. Now, introduce Hermitian variables
$\mathbf C^{(r)}\in\L(\mathcal H_r\otimes Q_0^{\mathrm{ext}})$
and impose causal normalization constraints (see \Cref{app:causal-normalisation}):
\begin{equation}
    \begin{aligned}
        \mathbf C^{(0)}&=I_{Q_0^{\mathrm{ext}}},\\
        \mathbf C^{(r)}&\succeq0,\\
        \operatorname{Tr}_{A_r^{\mathrm{out}}Q_r^{\mathrm{out}}}
             \mathbf C^{(r)}
          &=I_{Q_{r-1}^{\mathrm{in}}}\otimes\mathbf C^{(r-1)}
              \quad(1\leq r\leq\ell).
    \end{aligned}
    \label{eq:layout-comb-normalisation}
\end{equation}
We also retain
\begin{equation}
    \begin{aligned}
        \mathbf C^{(\ell)}
          &=\frac{\operatorname{Tr}_{Q_0^{\mathrm{ext}}}
                         \mathbf C^{(\ell)}}{d_0}
                         \otimes I_{Q_0^{\mathrm{ext}}},\\
        \mathbf C^{(\ell)}
          &=\sum_oP_o\mathbf C^{(\ell)}P_o,\\
        (\mathbf C^{(\ell)})^{\mathsf T_{1:r}}
          &\succeq0\qquad(1\leq r<\ell).
    \end{aligned}
    \label{eq:synthesis-circuit-structure}
\end{equation}
Here $P_o=|o\rangle\langle o|_{A_{1:\ell}^{\mathrm{out}}}
\otimes I$, and $\mathsf T_{1:r}$ transposes the complete
input and output factors of components $1,\ldots,r$.
These constraints express the external identity factor,
classical records and partial-transpose positivity.

Let $\mathfrak S_\Theta$ be the set of Hermitian prefix tuples
satisfying \eqref{eq:layout-comb-normalisation} and
\eqref{eq:synthesis-circuit-structure}, together with any
specified finite collection of affine equalities and linear
matrix inequalities valid for every circuit in
$\mathfrak C_\Theta$. These additional constraints encode
necessary properties of the chosen family.

\begin{proposition}[Circuit search set]
\label{prop:synthesis-search-space}
The set $\mathfrak S_\Theta$ is convex, compact and
SDP-representable, and
\begin{equation}
    \left\{(\mathbf C_\theta^{(0)},\ldots,
           \mathbf C_\theta^{(\ell)}):\theta\in\Theta\right\}
       \subseteq\mathfrak S_\Theta.
    \label{eq:synthesis-family-inclusion}
\end{equation}
Every tuple in this set satisfies
\begin{equation}
    \operatorname{Tr}\mathbf C^{(r)}
       =d_0\prod_{i=0}^{r-1}\dim Q_i,
       \qquad 0\leq r\leq\ell,
    \label{eq:synthesis-prefix-trace}
\end{equation}
where the product is one for $r=0$.
\end{proposition}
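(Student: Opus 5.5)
Four things have to be shown: the trace identity \eqref{eq:synthesis-prefix-trace}, convexity and SDP-representability, compactness, and the family inclusion \eqref{eq:synthesis-family-inclusion}. I will do them in that order.

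\emph{Trace identity.} Take the trace of the partial-trace constraint in \eqref{eq:layout-comb-normalisation}. This gives
\[
    \operatorname{Tr}\mathbf C^{(r)}=\dim Q_{r-1}\,\operatorname{Tr}\mathbf C^{(r-1)}
\]
for $1\leq r\leq\ell$. Starting from $\operatorname{Tr}\mathbf C^{(0)}=\operatorname{Tr}I_{Q_0^{\mathrm{ext}}}=d_0$, induction on $r$ yields \eqref{eq:synthesis-prefix-trace}.

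\emph{Convexity and SDP-representability.} Every defining condition of $\mathfrak S_\Theta$ is of one of two kinds.
\begin{itemize}
\item Affine equalities in the Hermitian variables: the normalization $\mathbf C^{(0)}=I$, the partial-trace relations, the external identity factor, the record-pinching identity $\mathbf C^{(\ell)}=\sum_oP_o\mathbf C^{(\ell)}P_o$, and the extra affine constraints of the family.
\item Linear matrix inequalities: $\mathbf C^{(r)}\succeq0$, and $(\mathbf C^{(\ell)})^{\mathsf T_{1:r}}\succeq0$, which is an LMI because partial transpose is linear.
\end{itemize}
The set is therefore a finite intersection of affine subspaces and spectrahedra. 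Hence it is convex and SDP-representable.

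\emph{Compactness.} The set is closed, since it is defined by finitely many closed conditions. For boundedness, each $\mathbf C^{(r)}\succeq0$ has fixed trace by the identity above, so its operator norm is at most that trace. The set is thus bounded in finite dimension, hence compact.

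\emph{Family inclusion.} For each $\theta\in\Theta$ I check that the actual prefix operators $\mathbf C_\theta^{(r)}$ satisfy every constraint.
\begin{itemize}
\item Positivity is immediate, since each is a sum of terms $XX^\dagger$.
\item The causal normalization relations are exactly the comb constraints already established in \Cref{app:causal-normalisation}; I will cite them directly.
\item The external identity factor holds because each prefix Kraus operator has the form $(\cdots)\otimes I_{Q_0^{\mathrm{ext}}}$-type wiring to the external copy. More precisely, the open-wire representation makes $\mathbf C_\theta^{(\ell)}=X\otimes I_{Q_0^{\mathrm{ext}}}$, and taking the partial trace over the external factor fixes $X=\operatorname{Tr}_{Q_0^{\mathrm{ext}}}\mathbf C_\theta^{(\ell)}/d_0$.
\item Record pinching holds because each branch vector $\vket{C_{\theta,o_{1:\ell}}}$ carries the record $|o\rangle$. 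Hence $P_o\mathbf C_{\theta,o'}=\delta_{oo'}\mathbf C_{\theta,o'}$, so the cross terms vanish.
\item The partial-transpose conditions hold by the product structure. Each summand $\mathbf C_{\theta,o}\mathbf C_{\theta,o}^\dagger$ restricted to components $1{:}r$ is a tensor product $\bigotimes_i\vket{C_i}\vbra{C_i}$ (before the external identity contraction). Transposing whole component factors maps each rank-one projector $\vket{C_i}\vbra{C_i}$ to $(\vket{C_i}\vbra{C_i})^{\mathsf T}=\vket{C_i^\star}\vbra{C_i^\star}\succeq0$. Tensor products and sums of PSD operators remain PSD.
\item The additional family constraints hold by hypothesis.
\end{itemize}

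\textbf{Main obstacle.} The step I expect to be most delicate is the partial-transpose and external-identity verification. There the precise open-wire prefix convention of \Cref{app:causal-normalisation} matters: if the prefix operators contract internal wires rather than keep them open, the transpose no longer factorizes. I will rely on the fact that the prefix spaces $\mathcal H_r$ keep every $Q_i^{\mathrm{in}},Q_i^{\mathrm{out}}$ wire open, so that each branch term is a genuine tensor product over components, with only the external input copy tied to $Q_0^{\mathrm{in}}$ through a maximally entangled identity factor. That factor is itself a tensor factor, so it is untouched by $\mathsf T_{1:r}$ (up to treating $Q_0^{\mathrm{in}}$ as part of component 1, whose transpose of $\vket{I}\vbra{I}$ is still PSD as the swap-free rank-one projector $\vket{I}\vbra{I}^{\mathsf T}=\vket{I}\vbra{I}$).
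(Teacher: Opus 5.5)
Your argument is correct and follows the paper's proof essentially step for step: you trace the partial-trace constraint to get the trace recursion, describe $\mathfrak S_\Theta$ by finitely many affine equalities and LMIs, obtain compactness from closedness plus a trace-based norm bound, and use the branch-wise product structure $\mathbf C_{\theta,o}\mathbf C_{\theta,o}^\dagger=\bigotimes_i\vket{C_i}\vbra{C_i}\otimes I_{Q_0^{\mathrm{ext}}}$ for the family inclusion and the partial-transpose positivity (bounding the operator norm instead of the paper's Frobenius norm is immaterial). One correction to your closing remark: in the appendix convention the external copy enters only as the separate factor $I_{Q_0^{\mathrm{ext}}}$, as your own bullet states, while $Q_0^{\mathrm{in}}$ sits inside $\vket{C_1}$; there is no maximally entangled tie between $Q_0^{\mathrm{ext}}$ and $Q_0^{\mathrm{in}}$, and if there were one, transposing $Q_0^{\mathrm{in}}$ alone would give the non-positive swap operator, so your appeal to the full transpose of $\vket{I}\vbra{I}$ would not have rescued that step.
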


\begin{proof}
For an actual circuit $\mathbf C_\theta$, causal normalization
follows from \Cref{app:causal-normalisation}. Its Kraus
representation gives the external identity factor and block
diagonality in the retained outcome record. Each branch operator
$\mathbf C_{\theta,o}\mathbf C_{\theta,o}^\dagger$
is a tensor product of positive semidefinite operators on the
complete component factors and $I_{Q_0^{\mathrm{ext}}}$.
Transposing any collection of complete component factors
therefore preserves positivity. Summing over the branches
establishes the partial-transpose constraints and hence
\eqref{eq:synthesis-circuit-structure}. The additional family
constraints hold by assumption, proving
\eqref{eq:synthesis-family-inclusion}.

The defining constraints are a finite collection of affine
equalities and linear matrix inequalities in the Hermitian
prefix variables. Thus $\mathfrak S_\Theta$ is convex and
SDP-representable. It is also closed, since the affine maps
are continuous and the positive semidefinite cone is closed.

It remains to establish boundedness. Taking traces in
\eqref{eq:layout-comb-normalisation} gives
\[
    \operatorname{Tr}\mathbf C^{(r)}
       =(\dim Q_{r-1})
          \operatorname{Tr}\mathbf C^{(r-1)}.
\]
Starting from $\mathbf C^{(0)}=I_{Q_0^{\mathrm{ext}}}$ and
iterating proves \eqref{eq:synthesis-prefix-trace}.

To bound each prefix, use the Frobenius norm
$\|A\|_{\mathrm F}
=\sqrt{\operatorname{Tr}(A^\dagger A)}$.
Since $\mathbf C^{(r)}\succeq0$, its eigenvalues
$\lambda_j$ are nonnegative, and therefore
\[
    \begin{aligned}
        \|\mathbf C^{(r)}\|_{\mathrm F}^2
          &=\operatorname{Tr}\bigl[(\mathbf C^{(r)})^2\bigr]
            =\sum_j\lambda_j^2\\
          &\leq\left(\sum_j\lambda_j\right)^2\\
          &=\left(d_0\prod_{i=0}^{r-1}\dim Q_i\right)^2.
    \end{aligned}
\]
The number of prefixes and all register dimensions are fixed,
so these bounds make the entire prefix tuple bounded.
Hence $\mathfrak S_\Theta$ is closed and bounded in a
finite-dimensional real vector space, and is therefore compact.
\end{proof}

\begin{remark}
    Membership in $\mathfrak S_\Theta$ need not provide a circuit
decomposition in $\mathfrak C_\Theta$. This is the first
relaxation of the search problem, namely, we only retain necessary matrix constraints instead
of requiring an admissible circuit realization.
\end{remark}

We express the target and recovery conditions using the
unnormalized Choi operator
\begin{equation}
    J_{\mathcal N}
       =(\mathcal N\otimes\operatorname{id})
             (\vket I\vbra I)
       =\sum_a\vket{K_a}\vbra{K_a}
    \label{eq:synthesis-choi-correspondence}
\end{equation}
for $\mathcal N(\rho)=\sum_aK_a\rho K_a^\dagger$. Equality of Choi operators is equivalent to equality of the
corresponding maps \cite{chiribella2009theoretical}. We can
therefore formulate the search for a CPTP recovery map as
a search for its Choi operator, subject to positivity and
the partial-trace constraint enforcing trace preservation.

Let $R_0\cong Q_0^{\mathrm{ext}} \cong Q_0$ be a reference register,
and put
\[
    \Phi_0=\vket{I_{Q_0}}\vbra{I_{Q_0}},\qquad
    J_{\mathrm{tar}}
       =(\mathbf C_{\mathrm{tar}}\otimes\operatorname{id}_{R_0})
          (\Phi_0),
\]
where $\Phi_0$ acts on $Q_0^{\mathrm{ext}}\otimes R_0$.
Define the fixed linear map
\begin{equation}
    J_{\mathbf E}(Y)
       =(\mathbf E\otimes\operatorname{id}_{R_0})
       \left[
          \frac{\operatorname{Tr}_{Q_0^{\mathrm{ext}}}Y}{d_0}
          \otimes\Phi_0
       \right]
    \label{eq:comb-fault-choi-map}
\end{equation}
from $\L(\mathcal H_\ell\otimes Q_0^{\mathrm{ext}})$ to
$\L(\mathcal H_{\mathrm{out}}\otimes R_0)$. The normalized
partial trace gives
\[
    \mathbf C_{\theta,\mathrm{cm}}(\rho)
       =\frac{\operatorname{Tr}_{Q_0^{\mathrm{ext}}}
                   \mathbf C_\theta^{(\ell)}}{d_0}\otimes\rho.
\]
Define $J_{\mathrm{id}}(\cdot)$ by replacing $\mathbf E$ with $\mathbf I$ in the definition of $J_{\mathbf E}(\cdot)$.
For actual circuits,
\begin{equation}
    \begin{aligned}
        J_{\mathbf E}(\mathbf C_\theta^{(\ell)})
           &=J_{\mathbf E\circ\mathbf C_{\theta,\mathrm{cm}}},\\
        J_{\mathrm{id}}(\mathbf C_\theta^{(\ell)})
           &=J_{\mathbf C_{\theta,\mathrm{id}}}.
    \end{aligned}
    \label{eq:comb-operation-choi-identification}
\end{equation}
Both maps also apply to relaxed matrices in $\mathfrak S_\Theta$ for which no
circuit realization is known.

Let $\mathcal H'_{\mathrm{out}}\cong\mathcal H_{\mathrm{out}}$
label the recovery output. A recovery $\mathcal R_\theta$
for the circuit $\mathbf C_\theta$ is represented by its
unnormalized Choi operator
\[
    J_{\mathcal R_\theta}\in
    \L(\mathcal H'_{\mathrm{out}}\otimes
       \mathcal H_{\mathrm{out}}).
\]
The inverse Choi formula~\cite{chiribella2009theoretical}
determines $\mathcal R_\theta$ linearly from
$J_{\mathcal R_\theta}$ as
\[
    \mathcal R_\theta(A)
       =\operatorname{Tr}_{\mathcal H_{\mathrm{out}}}
          \left[
             J_{\mathcal R_\theta}
             \bigl(I_{\mathcal H'_{\mathrm{out}}}
                    \otimes A^{\mathsf T}\bigr)
          \right].
\]
This map is CPTP precisely when
\begin{equation}
    J_{\mathcal R_\theta}\succeq0,\qquad
    \operatorname{Tr}_{\mathcal H'_{\mathrm{out}}}
        J_{\mathcal R_\theta}
       =I_{\mathcal H_{\mathrm{out}}}.
    \label{eq:synthesis-recovery-choi}
\end{equation}
All transposes use fixed bases. We identify
$\mathcal H'_{\mathrm{out}}$ with $\mathcal H_{\mathrm{out}}$
in these bases when comparing the recovered operation
with the target.

For a circuit $\mathbf C_\theta$ that implements the
target ideally, exact recovery is equivalent to
\eqref{eq:synthesis-recovery-choi} and
\begin{equation}
    (\mathcal R_\theta\otimes\operatorname{id}_{R_0})
       \bigl(J_{\mathbf E}(\mathbf C_\theta^{(\ell)})\bigr)
       =J_{\mathrm{tar}}.
    \label{eq:synthesis-exact-recovery}
\end{equation}
For a fixed circuit, this equation is linear in
$J_{\mathcal R_\theta}$ and gives an exact recovery SDP.
When the circuit also varies, it contains products of
unknown entries of $J_{\mathcal R_\theta}$ and
$\mathbf C^{(\ell)}$. The equation is therefore
bilinear, and the joint problem is generally nonconvex
even after replacing the circuit constraints by
$\mathfrak S_\Theta$.

To express this bilinear equation as a linear constraint,
introduce a Hermitian variable
\[
    Z\in\L\bigl(
        \mathcal H'_{\mathrm{out}}\otimes
        \mathcal H_{\mathrm{out}}\otimes
        \mathcal H_\ell\otimes Q_0^{\mathrm{ext}}\bigr).
\]
For a genuine circuit--recovery pair, we would have
$\mathbf C^{(\ell)}=\mathbf C_\theta^{(\ell)}$ and
\begin{equation}
    Z=J_{\mathcal R_\theta}\otimes\mathbf C^{(\ell)}.
    \label{eq:synthesis-product-operator}
\end{equation}
Define a fixed linear map $\mathcal K_{\mathbf E}$
on product matrices by
\begin{equation}
    \mathcal K_{\mathbf E}(J_{\mathcal R_\theta}\otimes Y)
       =(\mathcal R_\theta\otimes\operatorname{id}_{R_0})
          \bigl(J_{\mathbf E}(Y)\bigr).
    \label{eq:synthesis-linear-contraction}
\end{equation}
For this definition,
$J_{\mathcal R_\theta}\in
\L(\mathcal H'_{\mathrm{out}}\otimes\mathcal H_{\mathrm{out}})$
and
$Y\in\L(\mathcal H_\ell\otimes Q_0^{\mathrm{ext}})$
may be arbitrary operators.
The corresponding linear map $\mathcal R_\theta$ is
given by the inverse Choi formula \cite{chiribella2009theoretical} applied to $J_{\mathcal{R}_\theta}$, without imposing
the CPTP constraints.
The right-hand side is bilinear in
$J_{\mathcal R_\theta}$ and $Y$. Its values on a product
matrix basis therefore define a unique linear map on
all $Z$, with output in
$\L(\mathcal H'_{\mathrm{out}}\otimes R_0)$.
This construction is independent of the chosen
$\theta$. For a genuine circuit-recovery pair $(\mathbf C_\theta, \mathcal{R}_\theta)$, the recovery equation
becomes
\begin{equation}
    \mathcal K_{\mathbf E}(Z)=J_{\mathrm{tar}},
    \label{eq:synthesis-product-contraction}
\end{equation}
which is linear in $Z$.

The product requirement
\eqref{eq:synthesis-product-operator} remains nonconvex.
We relax it by retaining necessary semidefinite
constraints. Let $\mathsf T_{\mathrm{rec}}$ transpose
the complete recovery factor
$\mathcal H'_{\mathrm{out}}\otimes\mathcal H_{\mathrm{out}}$,
leaving the circuit factors
$\mathcal H_\ell\otimes Q_0^{\mathrm{ext}}$ unchanged.
Every genuine pair satisfies
\begin{equation}
    \begin{aligned}
        Z&\succeq0,\\
        Z^{\mathsf T_{\mathrm{rec}}}
           &=J_{\mathcal R_\theta}^{\mathsf T}
                \otimes\mathbf C^{(\ell)}
             \succeq0,\\
        \operatorname{Tr}_{\mathcal H'_{\mathrm{out}}}Z
           &=I_{\mathcal H_{\mathrm{out}}}
                \otimes\mathbf C^{(\ell)}.
    \end{aligned}
    \label{eq:synthesis-joint-constraints}
\end{equation}
The second line follows because the full transpose
$J_{\mathcal R_\theta}^{\mathsf T}$ is positive
semidefinite, and the last follows from trace
preservation of $\mathcal R_\theta$.
These constraints are necessary for the product form
but do not enforce it. Replacing the product requirement
by them therefore gives the second relaxation in our search problem.

Combining these constraints with the circuit search
set gives the following SDP.

\begin{theorem}[SDP relaxation for fault-tolerant synthesis]
\label{prop:joint-ft-sdp}
For the prescribed registers, circuit family
$\mathfrak C_\Theta$, noise model $\mathbf E$ and target
$\mathbf C_{\mathrm{tar}}$, consider
\begin{equation}
    \begin{aligned}
        \mathrm{find}\quad
           &\{\mathbf C^{(r)}\}_{r=0}^{\ell},\ Z\\
        \mathrm{subject\ to}\quad
           &(\mathbf C^{(0)},\ldots,\mathbf C^{(\ell)})
                 \in\mathfrak S_\Theta,\\
           &J_{\mathrm{id}}(\mathbf C^{(\ell)})
                 =J_{\mathrm{tar}},\\
           &Z\succeq0,\qquad
                 Z^{\mathsf T_{\mathrm{rec}}}\succeq0,\\
           &\operatorname{Tr}_{\mathcal H'_{\mathrm{out}}}Z
                 =I_{\mathcal H_{\mathrm{out}}}
                    \otimes\mathbf C^{(\ell)},\\
           &\mathcal K_{\mathbf E}(Z)=J_{\mathrm{tar}}.
    \end{aligned}
    \label{eq:joint-ft-sdp}
\end{equation}
This is an SDP. Every circuit--recovery pair satisfying
\eqref{eq:ft-family-synthesis} gives a feasible point.
Consequently, infeasibility rules out every circuit in
$\mathfrak C_\Theta$ that implements the target ideally
and admits a CPTP recovery under $\mathbf E$.
The recovery may depend on the selected circuit.
\end{theorem}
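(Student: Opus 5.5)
The argument has two parts. First we check that \eqref{eq:joint-ft-sdp} is a semidefinite feasibility problem. Then we exhibit an explicit feasible point built from any genuine circuit--recovery pair. Infeasibility ruling out every such pair is then simply the contrapositive of the second part.

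\emph{SDP structure.} The variables are the Hermitian prefix tuple $(\mathbf C^{(r)})_r$ and the Hermitian $Z$, all in fixed finite-dimensional spaces. By \Cref{prop:synthesis-search-space}, membership in $\mathfrak S_\Theta$ is a finite list of affine equalities and linear matrix inequalities. The maps $J_{\mathrm{id}}$, $J_{\mathbf E}$ and $\mathcal K_{\mathbf E}$ are fixed linear maps. For $J_{\mathrm{id}}$ and $J_{\mathbf E}$ this is seen from \eqref{eq:comb-fault-choi-map}. For $\mathcal K_{\mathbf E}$ it holds because it is the unique linear extension of the bilinear expression \eqref{eq:synthesis-linear-contraction}, which does not depend on $\theta$. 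The map $Z\mapsto Z^{\mathsf T_{\mathrm{rec}}}$ and the partial trace are also linear. Hence every equality constraint is affine, and every remaining constraint is a positive semidefiniteness condition on an affine image of the variables. This is exactly an SDP feasibility problem.

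\emph{Feasible point.} Fix $\theta$ and a CPTP $\mathcal R_\theta$ satisfying \eqref{eq:ft-family-synthesis}. Set $\mathbf C^{(r)}=\mathbf C_\theta^{(r)}$ for every $r$ and $Z=J_{\mathcal R_\theta}\otimes\mathbf C_\theta^{(\ell)}$. We verify the constraints in turn.
\begin{enumerate}
\item Membership in $\mathfrak S_\Theta$ is \eqref{eq:synthesis-family-inclusion}.
\item The target constraint follows from \eqref{eq:comb-operation-choi-identification}: $J_{\mathrm{id}}(\mathbf C_\theta^{(\ell)})=J_{\mathbf C_{\theta,\mathrm{id}}}=J_{\mathrm{tar}}$, using the first line of \eqref{eq:ft-family-synthesis} and injectivity of the Choi isomorphism.
\item $Z\succeq0$ because it is a tensor product of two positive semidefinite operators. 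Here $J_{\mathcal R_\theta}\succeq0$ by \eqref{eq:synthesis-recovery-choi}, and $\mathbf C_\theta^{(\ell)}\succeq0$ as a sum of the terms $\mathbf C_{\theta,o}\mathbf C_{\theta,o}^\dagger$.
\item $Z^{\mathsf T_{\mathrm{rec}}}=J_{\mathcal R_\theta}^{\mathsf T}\otimes\mathbf C_\theta^{(\ell)}\succeq0$, since the transpose of a positive semidefinite matrix has the same spectrum.
\item The partial-trace constraint follows from $\operatorname{Tr}_{\mathcal H'_{\mathrm{out}}}J_{\mathcal R_\theta}=I_{\mathcal H_{\mathrm{out}}}$, which holds because $\mathcal R_\theta$ is trace preserving.
\item By the defining identity \eqref{eq:synthesis-linear-contraction} and then \eqref{eq:comb-operation-choi-identification},
\[
\mathcal K_{\mathbf E}(Z)=(\mathcal R_\theta\otimes\mathrm{id}_{R_0})\bigl(J_{\mathbf E\circ\mathbf C_{\theta,\mathrm{cm}}}\bigr)=J_{\mathcal R_\theta\circ\mathbf E\circ\mathbf C_{\theta,\mathrm{cm}}}.
\]
This equals $J_{\mathrm{tar}}$ by the second line of \eqref{eq:ft-family-synthesis}.
\end{enumerate}

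The main point needing care is step 6. One must confirm that $J_{\mathbf E}(\mathbf C_\theta^{(\ell)})$ really is the Choi operator of the noisy realization. This relies on the external identity factor $\mathbf C^{(\ell)}=(\operatorname{Tr}_{Q_0^{\mathrm{ext}}}\mathbf C^{(\ell)}/d_0)\otimes I$ and on the normalized partial trace reproducing $\mathbf C_{\theta,\mathrm{cm}}$. One must also confirm that applying $\mathcal R_\theta\otimes\mathrm{id}$ to a Choi operator gives the Choi operator of the composition, which is linearity of the Choi map. These are the identities \eqref{eq:comb-operation-choi-identification} that the paper states for actual circuits, so we take them as given.

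Finally, because the recovery only enters through $Z$ and $Z$ is built separately for each $\theta$, the recovery may depend on the selected circuit. The contrapositive of the feasible-point construction then gives the infeasibility claim.
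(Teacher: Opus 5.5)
Your proposal is correct and follows the paper's proof essentially step for step. Both establish the SDP structure by the same linearity argument, use the same feasible point $\mathbf C^{(r)}=\mathbf C_\theta^{(r)}$ and $Z=J_{\mathcal R_\theta}\otimes\mathbf C_\theta^{(\ell)}$ checked against \eqref{eq:synthesis-family-inclusion}, \eqref{eq:synthesis-joint-constraints} and \eqref{eq:comb-operation-choi-identification}, and finish by contraposition. One small wording point: in step 2 you need only that equal maps have equal Choi operators, not injectivity of the Choi isomorphism.
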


\begin{proof}
By \Cref{prop:synthesis-search-space},
$\mathfrak S_\Theta$ is SDP-representable.
The remaining constraints are affine equalities and
linear matrix inequalities, since
$J_{\mathrm{id}}$, $\mathcal K_{\mathbf E}$, partial
trace and partial transpose are fixed linear maps.
Thus \eqref{eq:joint-ft-sdp} is an SDP.

Suppose that a circuit $\mathbf C_\theta$ and a CPTP
recovery $\mathcal R_\theta$ satisfy
\eqref{eq:ft-family-synthesis}. Set
\[
    \mathbf C^{(r)}=\mathbf C_\theta^{(r)},
    \qquad
    Z=J_{\mathcal R_\theta}
         \otimes\mathbf C_\theta^{(\ell)}.
\]
The circuit tuple belongs to $\mathfrak S_\Theta$
by \eqref{eq:synthesis-family-inclusion}, and the
ideal-target equality gives
\[
    J_{\mathrm{id}}(\mathbf C_\theta^{(\ell)})
       =J_{\mathrm{tar}}.
\]
Since $\mathcal R_\theta$ is CPTP,
\eqref{eq:synthesis-joint-constraints} gives positivity
of $Z$, positivity of its indicated partial transpose
and the required partial trace. Finally, the definition
of $\mathcal K_{\mathbf E}$ and exact recovery give
\[
    \begin{aligned}
        \mathcal K_{\mathbf E}(Z)
          &=(\mathcal R_\theta\otimes\operatorname{id}_{R_0})
             \bigl(J_{\mathbf E}
                    (\mathbf C_\theta^{(\ell)})\bigr)\\
          &=J_{\mathrm{tar}}.
    \end{aligned}
\]
Hence every such pair gives a feasible point.
The infeasibility statement follows by contraposition.
\end{proof}

\begin{remark}[Feasibility and recovery]
A feasible point need not identify an admissible circuit.
Suppose that its final normalization operator admits a
realization
$\mathbf C^{(\ell)}=\mathbf C_\theta^{(\ell)}$
for some $\theta\in\Theta$. For this circuit, define
$\mathcal R_\theta$ through the Choi operator
\begin{equation}
    J_{\mathcal R_\theta}
       =\frac{
          \operatorname{Tr}_{\mathcal H_\ell
                             Q_0^{\mathrm{ext}}}Z}
          {\operatorname{Tr}\mathbf C^{(\ell)}}.
    \label{eq:synthesis-extracted-recovery}
\end{equation}
The denominator is fixed and strictly positive by
\eqref{eq:synthesis-prefix-trace}. Since $Z\succeq0$,
we have $J_{\mathcal R_\theta}\succeq0$. Moreover,
the SDP partial-trace constraint gives
\[
    \begin{aligned}
        \operatorname{Tr}_{\mathcal H'_{\mathrm{out}}}
             J_{\mathcal R_\theta}
          &=
          \frac{
             \operatorname{Tr}_{\mathcal H_\ell
                                Q_0^{\mathrm{ext}}}
                \bigl(I_{\mathcal H_{\mathrm{out}}}
                      \otimes\mathbf C^{(\ell)}\bigr)}
               {\operatorname{Tr}\mathbf C^{(\ell)}}\\
          &=I_{\mathcal H_{\mathrm{out}}}.
    \end{aligned}
\]
Thus $\mathcal R_\theta$ is CPTP. Sufficient conditions for synthesis are therefore
\begin{equation}
    \begin{aligned}
        \mathbf C^{(\ell)}
           &=\mathbf C_\theta^{(\ell)}
               \quad\text{for some }\theta\in\Theta, \quad \mathrm{and}\\
        Z&=J_{\mathcal R_\theta}\otimes\mathbf C^{(\ell)},
    \end{aligned}
    \label{eq:synthesis-realization-certificate}
\end{equation}
with $J_{\mathcal R_\theta}$ given by
\eqref{eq:synthesis-extracted-recovery}.
The first condition supplies an admissible circuit.
The second turns the final SDP equality into
\[
    (\mathcal R_\theta\otimes\operatorname{id}_{R_0})
       \bigl(J_{\mathbf E}
               (\mathbf C_\theta^{(\ell)})\bigr)
       =J_{\mathrm{tar}}.
\]
Together with the ideal-target constraint, equality of
Choi operators then gives
\[
    \mathbf C_{\theta,\mathrm{id}}
       =\mathbf C_{\mathrm{tar}},
    \qquad
    \mathcal R_\theta\circ\mathbf E\circ
       \mathbf C_{\theta,\mathrm{cm}}
       =\mathbf C_{\mathrm{tar}}.
\]
Feasibility alone guarantees neither an admissible
circuit realization nor that the extracted CPTP map
recovers its ideal operation. Positivity,
partial-transpose positivity and even separability
do not by themselves establish the product identity
in \eqref{eq:synthesis-realization-certificate}.

When $\mathfrak C_\Theta$ contains every implementation
allowed by a given resource budget, and the fixed noise
model applies throughout, infeasibility excludes all
such implementations. Recovery may vary with the
circuit, but remains noiseless and its implementation
cost is excluded from that budget.
\end{remark}

\begin{remark}[Fixed recovery]
If the same CPTP recovery is prescribed for every
circuit, then $\mathcal R_\theta=\mathcal R_*$
for all $\theta\in\Theta$, with fixed Choi operator
$J_{\mathcal R_*}$. We may then impose the exact
product constraint
\[
    Z=J_{\mathcal R_*}\otimes\mathbf C^{(\ell)},
\]
which is linear in $\mathbf C^{(\ell)}$.
Eliminating $Z$ gives the fixed-recovery SDP, with
correction condition
\[
    (\mathcal R_*\otimes\operatorname{id}_{R_0})
       \bigl(J_{\mathbf E}(\mathbf C^{(\ell)})\bigr)
       =J_{\mathrm{tar}}.
\]
Both positivity constraints and the partial-trace
constraint on $Z$ hold automatically because
$\mathcal R_*$ is CPTP and
$\mathbf C^{(\ell)}\succeq0$. If the SDP for variable recovery is infeasible then the SDP for fixed recovery is automatically infeasible.
\end{remark}


\section{Fault tolerance of non-adaptive stabilizer circuits}
\label{sec: Fault tolerance of stabilizer circuits}

In this section, we characterize fault-tolerance of non-adaptive stabilizer circuits under
stochastic Pauli noise, with a noiseless encoding circuit, all outcomes retained, and noiseless recovery
on the output data and measurement ancillas.
\subsection{Stochastic Pauli noise model}
\label{subsec:Stoch Pauli noise}
We begin by formalizing the notion of spacetime locations on which faults occur before defining the stochastic Pauli noise model.

\begin{definition}[Circuit Pauli fault space and fault locations]
\label{def:binary-fault-space}
Fix an abstraction level $\zeta\in\{0,1\}$ and consider
a non-adaptive stabilizer circuit $\mathbf C
    \triangleq
    \{C^{(1)}_{o_1},\ldots,C^{(\ell)}_{o_\ell}\}
        _{o\in\mathbb F_2^{\ell_M}}$, of depth $\ell$, with component operators 
\[
    C^{(t)}_{o_t}:Q_{t-1}\longrightarrow
    A_t^{\mathrm{out}}\otimes Q_t,
    \qquad t\in \mathbb Z^+_\ell,
\]
as in \Cref{def: stab circuit}.
Here $Q_t$ contains $n_t$ data qubits,
$A_t^{\mathrm{out}}$ contains $m'_t$ ancilla qubits,
and set
\[
    o=[o_1,\ldots,o_\ell] \in \mathbb F_2^{\ell_M},
    \qquad o_t\in\mathbb F_2^{m'_t},
    \qquad \ell_M=\sum_{t=1}^{\ell}m'_t,
\]
with $A_0^{\mathrm{out}}=\mathbb C$ and $m'_0=0$. Furthermore, for $t \in \{0,\cdots , \ell\}$, fix decompositions
\[
    A_t^{\mathrm{out}}
    \cong\bigotimes_{s=1}^{m'_t}Q_{s,t},
    \qquad
    Q_t\cong
\bigotimes_{s=m'_t+1}^{m'_t+n_t}Q_{s,t},
    \qquad Q_{s,t}\cong\mathbb C^2.
\]
Define the permitted \emph{fault-location set} $\Lambda^{(\zeta)}$ and the \emph{spacetime fault space} $\mathcal E^{(\zeta)}$ of $\mathbf{C}$ as
\begin{equation}
\begin{aligned}
    \Lambda^{(\zeta)}
    &\triangleq
    \{(s,t):0\leq t\leq\ell,\ s\in\mathbb Z^+_{m'_t+n_t}\}, \\
     \mathcal E^{(\zeta)}
    &\triangleq
    \left\{
        e= [e_{s,t}]_{(s,t)\in\Lambda^{(\zeta)}}:
    e_{s,t}\in\mathbb F_2^2
    \right\}.
\end{aligned}
    \label{eq:pauli-fault-loc-and-space}
\end{equation}
With componentwise addition and scalar multiplication,
\[
    \mathcal E^{(\zeta)}\cong\mathbb F_2^{d_{\zeta}},
    \qquad d_{\zeta}\triangleq2|\Lambda^{(\zeta)}|.
\]
\end{definition}

\begin{definition}[Stochastic Pauli noise model]
\label{def:stochastic-pauli-noise}
For the circuit and fault space of
\Cref{def:binary-fault-space}, a
\emph{stochastic Pauli noise model} $\mathbf E$ has multi-step
Kraus operators
\[
    \{\sqrt{\mu_{e}}\,\mathbf E_{e}\}
        _{e\in\mathcal E^{(\zeta)}},
    \qquad
    \mu_{e}\geq0,
    \qquad
    \sum_{e\in\mathcal E^{(\zeta)}}\mu_{e}=1.
\]
For each fault pattern $e\in \mathcal{E}^{(\zeta)}$, we define its record
and data Paulis by
\[
    \begin{aligned}
        P_t^A(e)
        &\triangleq
        \bigotimes_{s=1}^{m'_t}P(e_{s,t})
        \in\mathcal P(A_t^{\mathrm{out}}),
        &&t\in\mathbb Z^+_\ell,\\
        P_t^Q(e)
        &\triangleq
        \bigotimes_{s=m'_t+1}^{m'_t+n_t}
            P(e_{s,t})
        \in\mathcal P(Q_t),
        &&0\leq t\leq\ell.
    \end{aligned}
\]
The explicit form of the multi-step operators is defined as
\begin{equation}
    \begin{aligned}
        \mathbf E_{e}
        \triangleq
        \bigotimes_{t=1}^{\ell}
            P_t^A(e)
        \otimes P_\ell^Q(e) \otimes
        \vbra{
        P_{\ell-1}^Q(e)^\star}
        \otimes\cdots\otimes
        \vbra{P_0^Q(e)^\star}.
    \end{aligned}
    \label{eq:pauli-fault-contraction}
\end{equation}
\end{definition}

\begin{proposition}[Channels for fixed Pauli faults]
\label{prop:fixed-pauli-fault-channels}
For the circuit and noise model of
\Cref{def:binary-fault-space,def:stochastic-pauli-noise},
let $\mathbf C_o$ be the circuit-map Kraus operators
of \Cref{def: Circuit map and noise model}.
For each $e\in\mathcal E^{(\zeta)}$, define
\begin{equation}
    F_{e,o}\triangleq\mathbf E_{e}\mathbf C_o,
    \qquad
    \mathcal N_{e}(\rho)
    \triangleq
    \sum_{o\in\mathbb F_2^{\ell_M}}
        F_{e,o}\rho F_{e,o}^\dagger.
    \label{eq:ft-fixed-fault-channel}
\end{equation}
Each $\mathcal N_{e}:
\L(Q_0)\to\L(A_{1:\ell}^{\mathrm{out}}\otimes Q_\ell)$
is CPTP, and $\mathcal N_0=\mathbf C_{\mathrm{id}}$.
The stochastic noisy realization $\mathbf C_{\mathbf E} = \sum_{e}\mu_{e}\mathcal N_{e}$.
\end{proposition}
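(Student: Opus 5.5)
The plan is to make the explicit branch formula of \Cref{rem:noise_model_explicit} concrete for Pauli faults, and then read off trace preservation from the completeness of the circuit components. First, I would use \Cref{lemma: Vectorization identities} to contract \eqref{eq:pauli-fault-contraction} against $\mathbf C_o=\vket{C_{o_\ell}}\otimes\cdots\otimes\vket{C_{o_1}}\otimes I_{Q_0}$. Each internal factor $\vbra{P_t^Q(e)^\star}$ pairs with the output leg $Q_t$ of $\vket{C_{o_t}}$ and the input leg $Q_t$ of $\vket{C_{o_{t+1}}}$. The identity $\vbra{I}(A\otimes B^{\mathsf T})\vket{I}=\operatorname{Tr}(AB)$ together with $(A\otimes B^{\mathsf T})\vket{C}=\vket{ACB}$ should then collapse the whole expression into the memoryless sequential product
\[
    F_{e,o}=\Bigl(\textstyle\bigotimes_{t}P_t^A(e)\otimes P_\ell^Q(e)\Bigr)\,C_{o_\ell}P_{\ell-1}^Q(e)\cdots P_1^Q(e)\,C_{o_1}P_0^Q(e),
\]
where the record Paulis $P_t^A(e)$ act on the retained registers $A_t^{\mathrm{out}}$ and commute past later components, because those components act trivially on earlier records. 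The main bookkeeping obstacle is the conjugation convention: $\vbra{P^\star}=\vket{P^\star}^\dagger$ contains $P^{\mathsf T}$ entrywise, so I need to check that the contraction yields $P$ itself rather than $P^{\mathsf T}$ or $P^\star$. I would verify this directly from $\vbra{A}\vket{B}=\operatorname{Tr}(A^\dagger B)$, which gives $\operatorname{Tr}((P^\star)^\dagger X)=\operatorname{Tr}(P^{\mathsf T}X)$, exactly the coefficient needed for the insertion of $P$ on a wire. As a fallback, I would note that Hermitian Paulis satisfy $P^\star=\pm P^{\mathsf T}$ and $P^{\mathsf T}=\pm P$, so any discrepancy is only a global sign, and a global sign is irrelevant in $F\rho F^\dagger$.

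Second, I would prove trace preservation. The Paulis are unitary, so $F_{e,o}^\dagger F_{e,o}$ telescopes once we sum over the outcomes from the last round inward. Summing over $o_\ell$ and using $\sum_{o_\ell}C_{o_\ell}^\dagger C_{o_\ell}=I$ from \Cref{def: circuit component} (the non-adaptive case, where there is no history dependence) removes round $\ell$. Then the unitary $P_{\ell-1}^Q(e)$ cancels against its adjoint, and induction down to $P_0^Q(e)^\dagger P_0^Q(e)=I$ gives $\sum_o F_{e,o}^\dagger F_{e,o}=I_{Q_0}$. The map $\mathcal N_e$ is completely positive because it is written in Kraus form.

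Third, I would handle the remaining two claims. For $e=0$, every $P(e_{s,t})=I$. The contraction then reduces to $\mathbf E_0=I\otimes\vbra{I_{Q_{\ell-1}}}\otimes\cdots\otimes\vbra{I_{Q_0}}=\mathbf I$, using that $I^\star=I$. Hence $F_{0,o}=\mathbf I\mathbf C_o$ and $\mathcal N_0=\mathbf C_{\mathrm{id}}$. Finally, the noisy realization has Kraus operators $\{\sqrt{\mu_e}\mathbf E_e\mathbf C_o\}_{e,o}$ by \Cref{def: Circuit map and noise model}, so
\[
    \mathbf C_{\mathbf E}(\rho)=\sum_{e,o}\mu_eF_{e,o}\rho F_{e,o}^\dagger=\sum_e\mu_e\mathcal N_e(\rho).
\]
Since $\sum_e\mu_e=1$, this is a convex combination of CPTP maps and is therefore CPTP as well.
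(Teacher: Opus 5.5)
Your proposal is correct and follows essentially the same route as the paper. Both arguments reduce $F_{e,o}$ to the sequential branch product: the paper simply cites \Cref{rem:noise_model_explicit}, whereas you rederive it and check the conjugation convention, with the harmless global-sign fallback. Both then telescope $\sum_o F_{e,o}^\dagger F_{e,o}=I_{Q_0}$ using component completeness and Pauli unitarity, and conclude from $\mathbf E_0=\mathbf I$ and the weighted Kraus operators $\{\sqrt{\mu_e}\,\mathbf E_e\mathbf C_o\}$.
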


\begin{proof}
By \Cref{rem:noise_model_explicit}, $F_{e,o}$
is the composed noisy branch operator with the specified Pauli faults.
Summing successively over $o_\ell,\ldots,o_1$,
component completeness relations and Pauli unitarity give
\[
    \sum_oF_{e,o}^\dagger F_{e,o}
    =I_{Q_0}.
\]
Thus the Kraus form defines a CPTP map.
Finally, $\mathbf E_0=\mathbf I$ gives the ideal
realization, and the weighted Kraus operators give
the stated mixture.
\end{proof}

\subsection{Fault signatures and effect}
\label{subsec:fault-signatures}

Fix $\zeta\in\{0,1\}$ and a noiseless stabilizer encoder
$C_{\mathrm{enc}}:Q_\L\to Q_0$, where $Q_\L \cong \C^{2^{k_{\mathrm{in}}}}$. Throughout the remainder of
this section, include this encoder in the fixed-fault
channels of \Cref{prop:fixed-pauli-fault-channels}, so that
\[
    F_{e,o}
    =\mathbf E_{e}\mathbf C_oC_{\mathrm{enc}},
    \qquad
    F_{0,o}=\mathbf I\mathbf C_oC_{\mathrm{enc}}\triangleq\widetilde {\mathbf {C}}_o.
\]
Denote the resulting channels again by $\mathcal N_{e}$.
Write $A=A_{1:\ell}^{\mathrm{out}}$ and
$H_{\mathrm{out}}^{(\zeta)}=A\otimes Q_\ell$, with
$n_R=\ell_M$ record qubits and $n_{\mathrm{out}}=n_\ell$
data qubits. Let $\mathcal W^{(\zeta)} \triangleq \mathcal W_{\widetilde{\mathbf C}}$
and retain its subspaces
$\mathcal W_{\mathrm{det}}$, $\mathcal W_{\mathrm{obs}}$,
and $\mathcal W_{\mathrm{out}}$ from
\Cref{def:common-stabilizer-relations}.
For $\omega=[b,p,p']\in\mathcal W^{(\zeta)}$, put
\begin{equation}
    R_\omega\triangleq
    (-1)^{c_\omega}Z_A^b\otimes P(p'),
    \qquad
    R_\omega\widetilde C_o=\widetilde C_oP(p).
    \label{eq:ft-signed-output-relation}
\end{equation}

\begin{proposition}[Bi-linear fault relations]
\label{lem:ft-signature-completeness}
There exists a unique bilinear map
\[
    \chi_{\zeta}:\mathcal E^{(\zeta)}\times\mathcal W^{(\zeta)}
    \longrightarrow\mathbb F_2
\]
such that, for every $e \in \mathcal{E}^{(\zeta)}$, $\omega=[b,p,p'] \in \mathcal{W}^{(\zeta)}$,
and outcome string $o \in \mathbb F_2^{\ell_M}$,
\begin{equation}
    R_\omega F_{e,o}
    =(-1)^{\chi_{\zeta}(e,\omega)}
       F_{e,o}P(p).
    \label{eq:ft-noisy-common-relation}
\end{equation}
Moreover, $\mathcal N_{e}=\mathcal N_{\mathbf f}$
if and only if
$\chi_{\zeta}(e,\cdot)=\chi_{\zeta}(\mathbf f,\cdot)$.
\end{proposition}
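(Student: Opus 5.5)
The plan is to construct $\chi_\zeta$ explicitly by propagating each relation $R_\omega$ backwards through the noisy branch operator. First I would handle existence. By \Cref{rem:noise_model_explicit} and \Cref{def:stochastic-pauli-noise}, $F_{e,o}$ is an alternating product of the component operators $C^{(t)}_{o_t}$, the encoder, and the single-location Pauli faults $P(e_{s,t})$. Fix $\omega=[b,p,p']\in\mathcal W^{(\zeta)}$. \Cref{prop: Code conditioned stabilizers L0} (at $\zeta=0$) or \Cref{prop: code-conditioned stabilizers level 1} (at $\zeta=1$) supplies unique matching witnesses: a Pauli label $\pi_{s,t}(\omega)$ on every wire $(s,t)\in\Lambda^{(\zeta)}$, together with the $Z$-type record labels $b$. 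These are the local stabilizer factors whose contraction, via \Cref{prop: stabilizers of composition}, yields $g(\omega)$. Uniqueness of the witnesses together with linearity of the defining equations (\eqref{eq: L0 code conditioned relation space}, \eqref{eq: code-conditioned level-1 relation space}) makes $\omega\mapsto\pi_{s,t}(\omega)$ linear. Commuting $R_\omega$ from the output back to the input, each fault $P(e_{s,t})$ is crossed by the Pauli $P(\pi_{s,t}(\omega))$ and contributes the sign $(-1)^{\langle e_{s,t},\pi_{s,t}(\omega)\rangle_s}$ by \Cref{prop: Symplectic Pauli Algebra}. The components themselves are crossed using the ideal relation \eqref{eq:ft-signed-output-relation}, so the sign $c_\omega$ is reproduced exactly. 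I would therefore set
\[
\chi_\zeta(e,\omega)\triangleq\sum_{(s,t)\in\Lambda^{(\zeta)}}\langle e_{s,t},\pi_{s,t}(\omega)\rangle_s ,
\]
which is bilinear by construction.

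For uniqueness I would use trace preservation from \Cref{prop:fixed-pauli-fault-channels}: $\sum_oF_{e,o}^\dagger F_{e,o}=I$, and $P(p)$ is unitary, so for each $e$ some $F_{e,o}P(p)$ is nonzero. On such an $o$, the sign in \eqref{eq:ft-noisy-common-relation} is forced, so $\chi_\zeta(e,\omega)$ is determined.

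For the channel equivalence I would push every fault forward to the output instead. Clifford propagation gives $F_{e,o}=\phi_{e,o}\,W_e\,\widetilde{\mathbf C}_{o}$ for a fixed Pauli $W_e$ on $A\otimes Q_\ell$ (record $X$ components absorb measurement flips) and phases $\phi_{e,o}$ of unit modulus. Hence $\mathcal N_e(\rho)=W_e\,\mathcal N_0(\rho)\,W_e^\dagger$. Comparing with \eqref{eq:ft-signed-output-relation} shows that $\chi_\zeta(e,\omega)=\langle w_e,[b,p']\rangle_s$, where $w_e$ is the binary label of $W_e$.

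Put $W=W_eW_f$. Then $\chi_\zeta(e,\cdot)=\chi_\zeta(f,\cdot)$ is equivalent to $W$ commuting with every $R_\omega$. It remains to show that $\mathcal N_e=\mathcal N_f$, that is, $W\mathcal N_0(\rho)W^\dagger=\mathcal N_0(\rho)$ for all $\rho$, holds exactly when $W$ commutes with all output projections $[b,p']$ of $\mathcal W^{(\zeta)}$.

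\begin{itemize}
\item[($\Rightarrow$)] Suppose $W$ anticommutes with some $R_\omega$. Apply both channels to an input eigenstate of $P(p)$. Then $\mathcal N_0(\rho)$ lies in the $+1$ eigenspace of $\pm R_\omega$, while $W\mathcal N_0(\rho)W^\dagger$ lies in the opposite eigenspace, so the two channels differ.
\item[($\Leftarrow$)] Suppose $W$ commutes with every $R_\omega$. I would use the Choi state of $\mathcal N_0$, which is $2^{-\ell_M}$-diagonal over records with stabilizer-state blocks. Its stabilizer group is generated by the $\stk{R_\omega}{P(p)}$, with $\omega$ ranging over $\mathcal W^{(\zeta)}$ and the input conditioned on the code. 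A Pauli $W\otimes I$ commuting with all of these normalizes the stabilizer group of that (mixed) stabilizer state, and therefore preserves the state under conjugation.
\end{itemize}

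I expect the main obstacle to be this last step. Commutation with the generators shows only that conjugation by $W$ fixes the commutant, not that it fixes the state. Making it rigorous requires the expansion $\rho=2^{-N}\sum_g g$ over the group, as in the proof of \Cref{prop: stabilizers of composition}, applied blockwise per record. There one checks that conjugation by a commuting Pauli fixes each term $g$, hence fixes the Choi operator.
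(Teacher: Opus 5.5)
Your existence and uniqueness steps are sound, and your eigenstate argument for $\mathcal N_e=\mathcal N_f\Rightarrow\chi_\zeta(e,\cdot)=\chi_\zeta(f,\cdot)$ is correct. There is one small slip: propagating a fault through a projector shifts its index, so the identity is $F_{e,o}\doteq W_e\widetilde C_{o+s(e)}$, not $W_e\widetilde C_o$. This is harmless once you sum over $o$.

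The gap is in the converse. You propose to expand each block $\vket{\widetilde C_o}\vbra{\widetilde C_o}$ over its own stabilizer group and check that a Pauli $W$ commuting with every $R_\omega$ fixes each term. That fails. Each branch group contains branch-specific elements such as $(-1)^{o_i}Z_{A_i}$ that are not in $G_{\mathbf C}$, and $W$ need not commute with them. In fact such a $W$ generally permutes the realizable blocks rather than fixing them.

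For example, let $C_{\mathrm{enc}}$ prepare $|0\rangle$ and follow it with one $X$ measurement. Then $\widetilde C_o=|o\rangle_A\otimes\Pi^X_o|0\rangle$, and $\mathcal W^{(1)}$ is spanned by the single relation with $R_\omega=Z_A\otimes X_Q$. A $Z$ fault before the measurement propagates to $W=X_AZ_Q$. This $W$ commutes with $R_\omega$, yet $W\widetilde C_o=\widetilde C_{o\oplus1}$, and $W$ anticommutes with the block stabilizer $(-1)^oZ_A$.

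Your earlier claim has the same problem. You say that $W\otimes I$ normalizing the group generated by the $\stk{R_\omega}{P(p)}$ preserves the Choi operator. That presupposes the mixed operator is determined by that group, which is exactly what needs proof.

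The missing ingredient is an identity for the full record-dephased Choi operator, not for its blocks:
\begin{itemize}
\item Since $\mathbf C$ is non-adaptive, $\sum_o\widetilde C_o=VC_{\mathrm{enc}}$ is a stabilizer isometry, and $\vket{\widetilde C_o}=\Pi_o\vket{VC_{\mathrm{enc}}}$ with $\Pi_o=|o\rangle\langle o|_A\otimes I$.
\item Expand the pure stabilizer state $|\psi\rangle=2^{-k_{\mathrm{in}}/2}\vket{VC_{\mathrm{enc}}}$ over its stabilizer group. Dephasing by $\sum_o\Pi_o(\cdot)\Pi_o$ annihilates every term whose record factor has an $X$ component.
\item The surviving terms have $Z$-type record factors, so each stabilizes every nonzero branch separately. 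They are therefore exactly the common stabilizers $g(\omega)$.
\end{itemize}
This gives $J_0=2^{-(n_R+n_{\mathrm{out}})}\sum_{\omega\in\mathcal W^{(\zeta)}}g(\omega)$. Conjugating by $W_e$ gives $J_e=2^{-(n_R+n_{\mathrm{out}})}\sum_{\omega}(-1)^{\chi_\zeta(e,\omega)}g(\omega)$. Hilbert--Schmidt orthogonality of the $g(\omega)$ then yields both directions of the equivalence at once, which is how the paper concludes. A blockwise computation can be rescued only by showing that the non-common terms cancel when summed over $\Omega_{\widetilde{\mathbf C}}$, and that amounts to the same identity.
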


\begin{proof}
Clifford conjugation propagates Pauli labels linearly,
while
\[
    \Pi_t^{P(m)}P(u)
    =P(u)\Pi_{t+\langle u,m\rangle_s}^{P(m)}
\]
shifts measurement outcomes. Consequently, applying the Clifford conjugation rule of
\Cref{Prop: Symplectic Clifford} and the measurement identity above successively to the composed
branches of \Cref{rem:noise_model_explicit} gives
$\mathbb F_2$-linear maps
\[
    \beta^{(\zeta)}:\mathcal E^{(\zeta)}
    \longrightarrow\mathbb F_2^{2(n_R+n_{\mathrm{out}})},
    \qquad
    s:\mathcal E^{(\zeta)}
    \longrightarrow\mathbb F_2^{n_R}
\]
such that
\[
    F_{e,o}
    \doteq P(\beta^{(\zeta)}(e))
            \widetilde C_{o+s(e)}
    \qquad(\forall\,e,o),
\]
where $\doteq$ denotes equality up to an overall phase. Summing over all outcomes gives
\begin{equation}
    \mathcal N_{e}(\rho)
    =
    P(\beta^{(\zeta)}(e))\mathcal N_0(\rho)
    P(\beta^{(\zeta)}(e))^\dagger.
    \label{eq: stab noise propagation}
\end{equation}
If $r(\omega)$ is the binary label of $R_\omega$, then
\begin{equation}
    \chi_{\zeta}(e,\omega)
    =\langle\beta^{(\zeta)}(e),r(\omega)\rangle_s
    \label{eq:ft-signature-commutation}
\end{equation}
is bilinear and satisfies
\eqref{eq:ft-noisy-common-relation}.
Trace preservation guarantees a nonzero branch for
each $e$, establishing uniqueness. Finally, for $\omega=[b,p,p']\in\mathcal W^{(\zeta)}$, put
\[
    g(\omega)\triangleq
    R_\omega\otimes P(p)^{\mathsf T},
    \qquad
    J_{e}\triangleq
    \sum_o\vket{F_{e,o}}\vbra{F_{e,o}}.
\]
Then define the normalized stabilizer state
\[
    |\psi\rangle
    =2^{-k_{\mathrm{in}}/2}
      \sum_o\vket{\widetilde C_o}.
\]
For $\Pi_o=|o\rangle\langle o|_A\otimes I$, its
stabilizer expansion therefore gives
\[
    J_0
    =2^{k_{\mathrm{in}}}
      \sum_o\Pi_o|\psi\rangle\langle\psi|\Pi_o
    =2^{-(n_R+n_{\mathrm{out}})}
      \sum_{\omega\in\mathcal W^{(\zeta)}}g(\omega).
\]
Conjugating $J_0$ by the propagated Pauli
$P(\beta^{(\zeta)}(e))$ now yields
\begin{equation}
        J_{e}
    =2^{-(n_R+n_{\mathrm{out}})}
      \sum_{\omega\in\mathcal W^{(\zeta)}}
      (-1)^{\chi_{\zeta}(e,\omega)}g(\omega). \label{eq: noisy choi}
\end{equation}
The $g(\omega)$ are mutually orthogonal, so their
coefficients are unique. Thus the Choi correspondence
\cite{chiribella2009theoretical} gives
\[
    \begin{aligned}
        \mathcal N_{e}=\mathcal N_{f}
        &\iff J_{e}=J_{f}\\
        &\iff \chi_{\zeta}(e,\omega)
             =\chi_{\zeta}( f,\omega)
             \quad\forall\,\omega\in\mathcal W^{(\zeta)}.
    \end{aligned}
\]
\end{proof}

The following two corollaries enable us to compute $\chi_{\zeta}$ explicitly for any level $\zeta\in \{0,1\}$.

\begin{corollary}[Evaluation at level $0$]
\label{prop:ft-level0-signature}
Consider the circuit of
\Cref{prop: Code conditioned stabilizers L0},
with $U=U_N\cdots U_1$.
Resolve preparation, the $N$ gates, and final measurement
into $\ell=N+2$ components.
Write $e_0=[x_0,z_0]$ for the fault label on the physical
input, $e_1$ for the label after preparation,
$e_{j+1}$ for the label after $U_j$, and $e_{N+2}$
for the label after measurement.
Define
\begin{equation}
    \begin{aligned}
        u_0&=[0^m,x_0,0^m,z_0]+e_1,\\
        u_j&=u_{j-1}F_{U_j}+e_{j+1},
        \qquad j\in\mathbb Z^+_N.
    \end{aligned}
    \label{eq:ft-level0-propagation}
\end{equation}
For $\omega=[b,p,p'] \in \mathcal{W}^{(0)}$ with $p'=[x',z']$,
\begin{equation}
    \chi_0(e,\omega)
    =
    \bigl\langle
        u_N+e_{N+2},[0^{n_R},x',b,z']
    \bigr\rangle_s.
    \label{eq:ft-level0-signature}
\end{equation}
\end{corollary}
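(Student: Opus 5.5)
The plan is to specialize the proof of \Cref{lem:ft-signature-completeness}. There, $\chi_0(e,\omega)=\langle\beta^{(0)}(e),r(\omega)\rangle_s$, where $\beta^{(0)}(e)$ is the propagated Pauli with $F_{e,o}\doteq P(\beta^{(0)}(e))\widetilde C_{o+s(e)}$ and $r(\omega)$ is the binary label of $R_\omega=(-1)^{c_\omega}Z_A^b\otimes P(p')$. So it suffices to compute $\beta^{(0)}(e)$ explicitly and then evaluate a symplectic pairing. Since $r(\omega)$ has $x$-part $[0^{n_R},x']$ and $z$-part $[b,z']$ in the ordering $A'\otimes Q'$, the pairing against $r(\omega)$ is exactly the right-hand side of \eqref{eq:ft-level0-signature}. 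What remains is to identify $\beta^{(0)}(e)$ with $u_N+e_{N+2}$ modulo components that pair trivially.

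The propagation proceeds component by component, using the explicit branch form $F_{e,o}=P(e_{N+2})\,\Pi_o\,P(e_{N+1})U_N\cdots P(e_2)U_1P(e_1)(|0^m\rangle\otimes P(e_0))C_{\mathrm{enc}}$ from \Cref{rem:noise_model_explicit}.

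\emph{Input fault.} The input fault $P(e_0)$ on $Q$ commutes past the ancilla preparation. On $A\otimes Q$ it becomes the label $[0^m,x_0,0^m,z_0]$, and adding the post-preparation fault gives $u_0$. Faults on the prepared ancillas are kept in full, not dropped. Even $Z$-type ancilla faults, which merely fix $|0\rangle$ up to sign, can be retained harmlessly: the linear formula stays correct because the sign is absorbed into the overall phase.

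\emph{Gates.} Each gate satisfies $U_jP(u)=\pm P(uF_{U_j})U_j$ by \Cref{Prop: Symplectic Clifford}. Pushing faults leftward through the gates yields the recursion \eqref{eq:ft-level0-propagation}, with the sign discarded into the phase. Note that $e_{j+1}$ occurs after $U_j$, so it adds after conjugation.

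\emph{Measurement.} This is the main step. The measurement $\Pi_o=|o\rangle\langle o|_{A'}\otimes I$ composed with $P(u_N)$ gives $P(u_N)\Pi_{o+x_{A'}(u_N)}$, where $x_{A'}(u_N)$ denotes the $x$-part of $u_N$ on the ancillas. Thus the $X$ components on $A'$ are not killed; they are converted into the outcome shift $s(e)$. The full Pauli $P(u_N)$, ancilla $X$ part included, still sits to the left on the record register. The record $|o\rangle$ is kept as an output factor, so $X$ on it simply flips the retained bit. Hence $\beta^{(0)}(e)=u_N+e_{N+2}$, with the record-register $x$-part corresponding to the $A'$ factors in the stated order $[0^{n_R},x',b,z']$.

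It then remains to check $\langle\beta,r(\omega)\rangle_s=x_\beta\cdot[b,z']^{\mathsf T}+z_\beta\cdot[0,x']^{\mathsf T}$. This is precisely the pairing in \eqref{eq:ft-level0-signature}, with the zero block killing the $Z$ record components of $\beta$ (they commute with $Z_A^b$).

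\emph{Expected obstacle.} The main obstacle is bookkeeping of registers and orderings. The label conventions $[x_A,x_Q,z_A,z_Q]$ on the input versus $A'\otimes Q'$ on the output must match the matching equations of \Cref{prop: Code conditioned stabilizers L0}. One must also confirm that $\Pi_t^{Z}P(u)=P(u)\Pi^{Z}_{t+x}$ correctly interchanges "outcome shift" with "Pauli on retained record." I would verify the ordering on \Cref{example: Z measurement L0} with a single $X$ fault on $A_1$ before measurement as a sanity check.
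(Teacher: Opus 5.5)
Your proposal is correct and follows the paper's own argument. You propagate all faults through the preparation and the Clifford gates to obtain $u_N$, turn the record $X$-part into an outcome shift at the projector so that $F_{e,o}\doteq P(u_N+e_{N+2})\widetilde C_{o+s(e)}$, and read off $\chi_0$ as the symplectic pairing with the label $[0^{n_R},x',b,z']$ of $R_\omega$. Your write-up only makes explicit the bookkeeping (ancilla $Z$ faults, sign absorption, record-$Z$ components) that the paper's three-line proof leaves implicit.
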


\begin{proof}
\Cref{lem:ft-signature-completeness} enables us to propagate all faults to the output data-record registers. Moving the resulting Pauli through its
projector shifts the outcome index, leaving output
label $u_N+e_{N+2}$. Its commutation sign with
$R_\omega$ gives the formula.
\end{proof}

\begin{corollary}[Evaluation at level $1$]
\label{prop:ft-level1-signature}
Consider the measurement sequence of
\Cref{prop: code-conditioned stabilizers level 1}.
Let $e_r$ be the label of $P_r^Q(e)$ and
$[a_r,d_r]$ the label of $P_r^A(e)$.
For $\omega=[b,p,p']  \in \mathcal{W}^{(1)}$, let $\nu_r$ be its matched
data labels from \Cref{eq:L1-prefix-and-w}.
Then
\begin{equation}
    \chi_1(e,\omega)
    =
    \sum_{r=0}^{\ell}
        \langle e_r,\nu_{r+1}\rangle_s
    +\sum_{r=1}^{\ell}a_rb_r.
    \label{eq:ft-level1-signature}
\end{equation}
In particular, record $Z$ faults do not affect the signature.
\end{corollary}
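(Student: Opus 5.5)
The plan is to apply \Cref{lem:ft-signature-completeness}: it suffices to find an $e$-dependent sign $(-1)^{\chi}$ with $R_\omega F_{e,o}=(-1)^{\chi}F_{e,o}P(p)$ for every $o$. Instead of first propagating all faults to the output as in the level-$0$ corollary, I will commute $R_\omega$ backwards through the noisy branch one component at a time. Each time it passes a fault, I record the resulting sign.

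First write the noisy level-$1$ branch from \Cref{rem:noise_model_explicit} in memoryless form. The encoder output is followed by the data fault $P(e_0)$. Then, for $r=1,\ldots,\ell$, we apply the projector $\Pi^{M_r}_{o_r}$, write the record $|o_r\rangle_{A_r}$, apply the record fault $P^A_r(e)$ with label $[a_r,d_r]$, and apply the data fault $P(e_r)$.

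Next, use the matched local stabilizers from the proof of \Cref{prop: code-conditioned stabilizers level 1}. Between component $r$ and component $r+1$, the data wire carries the matched label $\nu_{r+1}$, with $\nu_1=\nu$ at the encoder boundary and $\nu_{\ell+1}=p'$ at the output. The record $A_r$ carries $Z^{b_r}$.

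Now push $R_\omega=(-1)^{c_\omega}Z_A^b\otimes P(p')$ leftward.

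\textbf{Final data fault.} Passing $P(p')=P(\nu_{\ell+1})$ through $P(e_\ell)$ costs $(-1)^{\langle e_\ell,\nu_{\ell+1}\rangle_s}$ by \Cref{prop: Symplectic Pauli Algebra}.

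\textbf{Record faults.} Passing $Z^{b_r}$ through $P^A_r(e)$ costs $(-1)^{a_rb_r}$, since only the $X$-part $a_r$ anticommutes with $Z$. The $Z$-part $d_r$ contributes nothing, which gives the final claim about record $Z$ faults.

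\textbf{Ideal components.} Passing through the ideal component $r$ uses the noiseless local relation. This moves $Z^{b_r}_{A_r}\otimes P(\nu_{r+1})$ to $P(\nu_r)$ on the previous data wire. It picks up exactly the noiseless sign contributions whose total is $c_\omega$, as in the composition rule \Cref{prop: stabilizers of composition}.

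\textbf{Intermediate data faults.} Passing $P(\nu_r)$ through $P(e_{r-1})$ then costs $\langle e_{r-1},\nu_r\rangle_s$.

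After all components have been passed, the noiseless encoder relation converts $P(\nu_1)$ into $C_{\mathrm{enc}}P(p)$. The noiseless signs cancel $c_\omega$, leaving exactly \eqref{eq:ft-level1-signature}. Since the faults are unitary and $\chi_1$ is defined by this relation on nonzero branches, uniqueness in \Cref{lem:ft-signature-completeness} identifies the computed sign with $\chi_1$. Bilinearity in $e$ is evident from the formula.

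The main obstacle is bookkeeping the noiseless local signs so that they reproduce $c_\omega$ exactly in the presence of faults. The issue is that the faults sit between components, so the sign convention of \Cref{prop: stabilizers of composition} must be applied per component rather than to the contracted operator. I would handle this by stating a one-component identity,
\[
\bigl(Z^{b_r}_{A_r}\otimes P(\nu_{r+1})\bigr)\bigl(|o_r\rangle\otimes\Pi^{M_r}_{o_r}\bigr)=(-1)^{b_r\gamma(\mathbf m_r,\nu_r)}\bigl(|o_r\rangle\otimes\Pi^{M_r}_{o_r}\bigr)P(\nu_r),
\]
which I would verify from $\langle\nu_r,\mathbf m_r\rangle_s=0$ and $M_r\Pi^{M_r}_{o_r}=(-1)^{o_r}\Pi^{M_r}_{o_r}$. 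The outcome-dependent factor $(-1)^{o_rb_r}$ is absorbed by $Z^{b_r}$ acting on the record. These per-component signs multiply to $\Theta(\nu,b)$, and together with the encoder sign they give $c_\omega$.
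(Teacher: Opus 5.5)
Your proposal is correct and follows the paper's route. The paper's proof also pushes the matched relation ($P(\nu_{r+1})$ on the data after measurement $r$, $Z^{b_r}$ on its record) through each Pauli fault and collects the symplectic and record signs. Your explicit one-component identity, with signs multiplying to $\Theta(\nu,b)$ and then combining with the encoder sign to cancel $c_\omega$, simply spells out the bookkeeping that the paper leaves to the composition rule of \Cref{prop: stabilizers of composition}.
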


\begin{proof}
The matched relation acts by $P(\nu_{r+1})$ on the
data after measurement $r$ and by $Z^{b_r}$ on its
record. Propagating these operators through the faults
contributes the displayed symplectic and record signs.
The composition rule of
\Cref{prop: stabilizers of composition} adds these signs.
\end{proof}

\begin{definition}[Fault signature]
\label{def:ft-error-signature}
The \emph{fault signature} of
$e\in\mathcal E^{(\zeta)}$ is the linear functional
\[
    \chi_{\zeta}(e,\cdot):
    \mathcal W^{(\zeta)}\longrightarrow\mathbb F_2.
\]
A relation $\omega \in \mathcal{X} \subseteq \mathcal{W}^{(\zeta)}$ is defined to be \emph{violated} 
when $\chi_{\zeta}(e,\omega)=1$.
\end{definition}

\begin{definition}[Operational meaning of fault signatures]
\label{rem:ft-operational-signatures}
For a subspace $\mathcal X\subseteq\mathcal W^{(\zeta)}$,
define its indicator $\delta_{\mathcal X}$ as
\begin{equation}
    \delta_{\mathcal X}(e)
    \triangleq
    \begin{cases}
        0,&\chi_{\zeta}(e,\omega)=0
           \text{ for every }\omega\in\mathcal X,\\
        1,&\text{otherwise}.
    \end{cases}
    \label{eq:ft-violation-indicator}
\end{equation}
Abbreviate the indicators for
$\mathcal W_{\mathrm{det}}$, $\mathcal W_{\mathrm{out}}$,
$\mathcal W_{\mathrm{obs}}$, and $\mathcal W^{(\zeta)}$ (c.f.
\Cref{def:common-stabilizer-relations})
as $\delta_{\mathrm{det}}$, $\delta_{\mathrm{out}}$,
$\delta_{\mathrm{obs}}$, and $\delta_{\mathcal W}$. The operational meaning of these indicators is presented in \Cref{tab:ft-operational-signatures}.
\end{definition}

\begin{table}[t]
\centering
\small
\setlength{\tabcolsep}{3pt}
\renewcommand{\arraystretch}{1.15}
\begin{tabular}{@{}ccccp{0.48\columnwidth}@{}}
\toprule
$\delta_{\mathrm{det}}$
& $\delta_{\mathrm{out}}$
& $\delta_{\mathrm{obs}}$
& $\delta_{\mathcal W}$
& Operational meaning\\
\midrule
1 & 1 & $*$ & 1
& A record parity detects the fault.\\
0 & 1 & $*$ & 1
& A noiseless output-data check detects the fault.\\
0 & 0 & 1 & 1
& A measured-input value is corrupted.\\
0 & 0 & 0 & 1
& Preserved logical information is corrupted.\\
0 & 0 & 0 & 0
& The channel is unchanged.\\
\bottomrule
\end{tabular}
\caption{
Classification of signature violations. An asterisk denotes that any value of that indicator does not change the operational meaning depicted in that row.
}
\label{tab:ft-operational-signatures}
\end{table}

\begin{definition}[Detection and logical-effect matrices]
\label{def:ft-signature-matrices}
Let $\epsilon_1,\ldots,\epsilon_{d_{\zeta}}$ be the coordinate
basis of $\mathcal E^{(\zeta)}$, and define
\begin{equation}
    \gamma_{\zeta}(\omega)
    \triangleq
    [\,\chi_{\zeta}(\epsilon_1,\omega),\ldots,
       \chi_{\zeta}(\epsilon_{d_{\zeta}},\omega)\,].
    \label{eq:ft-signature-row}
\end{equation}
Thus
$\gamma_{\zeta}(\omega)e^{\mathsf T}
=\chi_{\zeta}(e,\omega)$. Choose a basis $\omega_1,\ldots,\omega_h$ of
$\mathcal W_{\mathrm{out}}$ and extend it to a basis
$\omega_1,\ldots,\omega_q$ of $\mathcal W^{(\zeta)}$.
The \emph{detection matrix} and
\emph{logical-effect matrix} are then defined by
\begin{equation}
    H_{\zeta}\triangleq
    \begin{bmatrix}
        \gamma_{\zeta}(\omega_1)\\
        \vdots\\
        \gamma_{\zeta}(\omega_h)
    \end{bmatrix},
    \qquad
    L_{\zeta}\triangleq
    \begin{bmatrix}
        \gamma_{\zeta}(\omega_{h+1})\\
        \vdots\\
        \gamma_{\zeta}(\omega_q)
    \end{bmatrix}.
    \label{eq:ft-HL-matrices}
\end{equation}
For each fault pattern $e$, its
\emph{syndrome} and \emph{logical-effect signature} are defined as
\[
    \sigma_{\zeta}(e)\triangleq H_{\zeta}e^{\mathsf T},
    \qquad
    \lambda_{\zeta}(e)\triangleq L_{\zeta}e^{\mathsf T}.
\]
The syndrome records all input-independent output-check
violations. Together, the two vectors specify the full
fault signature and hence the channel
$\mathcal N_{e}$.
\end{definition}

\begin{remark}
    Changing the chosen relation bases gives
\begin{equation}
    H'_{\zeta}=U_HH_{\zeta},
    \qquad
    L'_{\zeta}=U_LL_{\zeta}+RH_{\zeta},
    \label{eq:ft-change-representatives}
\end{equation}
with $U_H,U_L$ invertible.
Thus zero syndrome and equality of logical-effect
signatures at a fixed syndrome are independent of
these choices.
\end{remark}

\begin{proposition}[Detection and logical effect matrices and indicators violation]
\label{prop:ft-matrix-indicators}
For every $e\in\mathcal E^{(\zeta)}$,
\[
    \begin{aligned}
        H_{\zeta}e^{\mathsf T}=0
        &\iff \delta_{\mathrm{out}}(e)=0,\\
        H_{\zeta}e^{\mathsf T}
        =L_{\zeta}e^{\mathsf T}=0
        &\iff \delta_{\mathcal W}(e)=0.
    \end{aligned}
\]
Consequently, the four signature cases imply the
indicator values in \Cref{tab:ft-matrix-meaning}.
\end{proposition}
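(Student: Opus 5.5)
The plan is to reduce both equivalences to linear algebra on the fault signature $\chi_\zeta(e,\cdot)$. By \Cref{lem:ft-signature-completeness} this signature is a linear functional on $\mathcal W^{(\zeta)}$, and by \Cref{def:ft-signature-matrices} it is evaluated on basis relations through $\gamma_\zeta(\omega)e^{\mathsf T}=\chi_\zeta(e,\omega)$. The first step is to note that bilinearity makes $\omega\mapsto\gamma_\zeta(\omega)$ linear. Then $H_\zeta e^{\mathsf T}$ is exactly the vector $(\chi_\zeta(e,\omega_i))_{i=1}^h$, the values of $\chi_\zeta(e,\cdot)$ on the basis $\omega_1,\ldots,\omega_h$ of $\mathcal W_{\mathrm{out}}$. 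Likewise, stacking $H_\zeta$ over $L_\zeta$ gives the values on the basis $\omega_1,\ldots,\omega_q$ of $\mathcal W^{(\zeta)}$.

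For the first equivalence, suppose $H_\zeta e^{\mathsf T}=0$. Then $\chi_\zeta(e,\cdot)$ vanishes on a basis of $\mathcal W_{\mathrm{out}}$, so by linearity it vanishes on every $\omega\in\mathcal W_{\mathrm{out}}$. By \eqref{eq:ft-violation-indicator} this means $\delta_{\mathrm{out}}(e)=0$. Conversely, if $\delta_{\mathrm{out}}(e)=0$, then in particular $\chi_\zeta(e,\omega_i)=0$ for $i\le h$, so $H_\zeta e^{\mathsf T}=0$.

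The second equivalence uses the same argument with the extended basis. Both syndrome vectors vanish exactly when $\chi_\zeta(e,\cdot)$ vanishes on all of $\omega_1,\ldots,\omega_q$. That holds exactly when $\chi_\zeta(e,\cdot)\equiv0$ on $\mathcal W^{(\zeta)}$, which is the statement $\delta_{\mathcal W}(e)=0$.

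For the consequence about \Cref{tab:ft-matrix-meaning}, the four cases are the combinations of whether $H_\zeta e^{\mathsf T}$ and $L_\zeta e^{\mathsf T}$ are zero. I would read off the indicators using two facts: the equivalences above, and the monotonicity of $\delta_{\mathcal X}$ under inclusion, since $\mathcal X\subseteq\mathcal Y$ implies $\delta_{\mathcal X}\le\delta_{\mathcal Y}$.

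\begin{itemize}
\item If $H_\zeta e^{\mathsf T}\ne0$, then $\delta_{\mathrm{out}}=1$ and hence $\delta_{\mathcal W}=1$, because $\mathcal W_{\mathrm{out}}\subseteq\mathcal W^{(\zeta)}$.
\item If $H_\zeta e^{\mathsf T}=0$, then $\delta_{\mathrm{out}}=0$, and also $\delta_{\mathrm{det}}=0$ because $\mathcal W_{\mathrm{det}}\subseteq\mathcal W_{\mathrm{out}}$.
\item If additionally $L_\zeta e^{\mathsf T}\ne0$, then $\delta_{\mathcal W}=1$. If $L_\zeta e^{\mathsf T}=0$, then all indicators vanish, and \Cref{lem:ft-signature-completeness} gives $\mathcal N_e=\mathcal N_0$.
\end{itemize}

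The main obstacle is the remaining freedom in $\delta_{\mathrm{obs}}$ when $H_\zeta e^{\mathsf T}=0$ and $L_\zeta e^{\mathsf T}\neq0$. In that case $\chi_\zeta(e,\cdot)$ descends to a functional on $\mathcal W^{(\zeta)}/\mathcal W_{\mathrm{out}}$. The value of $\delta_{\mathrm{obs}}$ then depends on whether this functional vanishes on the image of $\mathcal W_{\mathrm{obs}}$ in that quotient, which is isomorphic to $V_{\mathrm{in}}$ by \Cref{prop:outcome-independent-stabilizer-relations}. I would handle this by stating that the table leaves this entry as determined by that restriction, or as unconstrained, rather than claiming a specific value.
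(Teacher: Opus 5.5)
Your proposal is correct and follows essentially the same route as the paper: the rows of $H_\zeta$ (resp.\ $H_\zeta$ stacked with $L_\zeta$) evaluate the linear functional $\chi_\zeta(e,\cdot)$ on a basis of $\mathcal W_{\mathrm{out}}$ (resp.\ $\mathcal W^{(\zeta)}$), so linearity gives both equivalences, and the table entries then follow from the inclusions $\mathcal W_{\mathrm{det}}\subseteq\mathcal W_{\mathrm{out}}\subseteq\mathcal W^{(\zeta)}$. The freedom in $\delta_{\mathrm{obs}}$ that you call the main obstacle is not one, since the asterisk in \Cref{tab:ft-matrix-meaning} asserts nothing there, which is exactly how the paper's proof treats it.
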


\begin{proof}
The rows of $H_{\zeta}$ evaluate the signature on a basis
of $\mathcal W_{\mathrm{out}}$, while the rows of
$H_{\zeta}$ and $L_{\zeta}$ together evaluate it on a basis
of $\mathcal W^{(\zeta)}$. Linearity gives both equivalences.

If both signatures vanish, all four indicators vanish.
If only the syndrome vanishes, then
$\delta_{\mathrm{det}}=\delta_{\mathrm{out}}=0$
and $\delta_{\mathcal W}=1$, while
$\delta_{\mathrm{obs}}$ may be zero or one.
If the syndrome is nonzero, then
$\delta_{\mathrm{out}}=\delta_{\mathcal W}=1$
regardless of whether the logical-effect signature
vanishes; neither $\delta_{\mathrm{det}}$ nor
$\delta_{\mathrm{obs}}$ is determined in general.
\end{proof}

\begin{table}[t]
\centering
\small
\setlength{\tabcolsep}{3pt}
\renewcommand{\arraystretch}{1.4}
\begin{tabular}{@{}ccc@{}}
\toprule
$H_{\zeta}e^{\mathsf T}$
& $L_{\zeta}e^{\mathsf T}$
& Indicator violation\\
\midrule
$0$ & $0$
& $\begin{gathered}
    \delta_{\mathrm{det}}=\delta_{\mathrm{out}}=0\\
    \delta_{\mathrm{obs}}=\delta_{\mathcal W}=0
   \end{gathered}$\\
$0$ & $\neq0$
& $\begin{gathered}
    \delta_{\mathrm{det}}=\delta_{\mathrm{out}}=0\\
    \delta_{\mathrm{obs}}=*,\quad\delta_{\mathcal W}=1
   \end{gathered}$\\
$\neq0$ & $0$
& $\begin{gathered}
    \delta_{\mathrm{out}}=\delta_{\mathcal W}=1\\
    \delta_{\mathrm{det}}=*,\quad\delta_{\mathrm{obs}}=*
   \end{gathered}$\\
$\neq0$ & $\neq0$
& $\begin{gathered}
    \delta_{\mathrm{out}}=\delta_{\mathcal W}=1\\
    \delta_{\mathrm{det}}=*,\quad\delta_{\mathrm{obs}}=*
   \end{gathered}$\\
\bottomrule
\end{tabular}
\caption{
An asterisk denotes an indicator not determined by
the displayed signature conditions on $H_{\zeta}e^{\mathsf T}$ and $L_{\zeta}e^{\mathsf T}$. 
}
\label{tab:ft-matrix-meaning}
\end{table}

\subsection{Exact recovery and circuit distance}
\label{subsec:ft-recovery-distance}

We begin by establishing necessary and sufficient conditions for fault tolerance of non-adaptive stabilizer circuits. Then following the footsteps of \cite{beverland2024faulttolerancestabilizerchannels,kliuchnikov2023stabilizercircuitverification} we define circuit distance in terms of the detection and logical effect matrices.

\begin{theorem}[Exact recovery]
\label{thm:ft-HL-criterion}
Let
\[
    \mu_{e}\geq0,
    \qquad
    \sum_{e\in\mathcal E^{(\zeta)}}\mu_{e}=1,
    \qquad
    \mathcal N_\mu
    \triangleq
    \sum_{e\in\mathcal E^{(\zeta)}}
        \mu_{e}\mathcal N_{e}.
\]
A CPTP recovery $\mathcal R$ on $H_{\mathrm{out}}^{(\zeta)}$
satisfying
$\mathcal R\circ\mathcal N_\mu=\mathcal N_0$
exists if and only if, for every
$e, f\in\mathcal E^{(\zeta)}$
with $\mu_{e},\mu_{ f}>0$,
\begin{equation}
    H_{\zeta}e^{\mathsf T}=H_{\zeta}f^{\mathsf T}
    \quad\Longrightarrow\quad
    L_{\zeta}e^{\mathsf T}=L_{\zeta} f^{\mathsf T}.
    \label{eq:ft-HL-pairwise-criterion}
\end{equation}
Operationally, faults having the same syndrome must have the same logical
effect for a common correction based on their syndrome to succeed.
\end{theorem}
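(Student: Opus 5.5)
The plan is to reduce everything to the Pauli-conjugation structure established in \Cref{lem:ft-signature-completeness}. There, each fixed-fault channel satisfies $\mathcal N_e(\rho)=P(\beta(e))\mathcal N_0(\rho)P(\beta(e))^\dagger$, and $\mathcal N_e$ is determined by the functional $\chi_\zeta(e,\cdot)$, that is, by the pair $(H_\zeta e^{\mathsf T},L_\zeta e^{\mathsf T})$. I will treat the two directions separately. Sufficiency is handled by an explicit syndrome-based recovery. Necessity is handled by \Cref{Thm: Iff Fault tolerance}, or more directly by a distinguishability argument.

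\emph{Sufficiency.} Assume \eqref{eq:ft-HL-pairwise-criterion}. For each syndrome $\sigma$ attained by some positive-probability fault, choose a representative $f_\sigma$. By hypothesis, every positive-probability $e$ with $H_\zeta e^{\mathsf T}=\sigma$ has the same logical signature as $f_\sigma$. So $\chi_\zeta(e,\cdot)=\chi_\zeta(f_\sigma,\cdot)$, and \Cref{lem:ft-signature-completeness} gives $\mathcal N_e=\mathcal N_{f_\sigma}$. Hence $\mathcal N_\mu=\sum_\sigma q_\sigma\,\mathcal P_\sigma\circ\mathcal N_0$, where $\mathcal P_\sigma(\cdot)=P_\sigma(\cdot)P_\sigma^\dagger$ with $P_\sigma=P(\beta(f_\sigma))$, and $q_\sigma$ is the total weight of that syndrome class.

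Next I will show that the supports of $\mathcal P_\sigma\circ\mathcal N_0$ lie in mutually orthogonal subspaces. Let $\Pi_{\mathrm{out}}$ be the projector onto the joint $+1$ eigenspace of $\{R_\omega:\omega\in\mathcal W_{\mathrm{out}}\}$. These operators commute and square to identity; $\mathcal W_{\mathrm{out}}$ consists of $p=0$ relations, so by \eqref{eq:ft-signed-output-relation} each one fixes every noiseless output $\widetilde C_o\rho\widetilde C_o^\dagger$. Conjugating by $P_\sigma$ flips the eigenvalue of $R_{\omega_i}$ exactly when $\sigma_i=1$, since $\chi$ is the symplectic pairing \eqref{eq:ft-signature-commutation}. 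So the outputs of different syndrome classes lie in distinct joint eigenspaces of the commuting family $\{R_{\omega_i}\}_{i\le h}$. The recovery $\mathcal R(X)=\sum_\sigma P_\sigma^\dagger\Pi_\sigma X\Pi_\sigma P_\sigma$, plus a trivial completion on unattained eigenspaces as in \eqref{eq:ft-recovery-channel}, is CPTP. It measures the output checks and undoes $P_\sigma$. This gives $\mathcal R\circ\mathcal N_\mu=\sum_\sigma q_\sigma\mathcal N_0=\mathcal N_0$.

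\emph{Necessity.} Suppose $e,f$ have positive probability, $H_\zeta e^{\mathsf T}=H_\zeta f^{\mathsf T}$, but $L_\zeta e^{\mathsf T}\ne L_\zeta f^{\mathsf T}$. I will show this forbids recovery by a linearity/extremality argument. From $\mathcal R\circ\mathcal N_\mu=\mathcal N_0$, write $\mathcal N_0=\sum_g\mu_g\,\mathcal R\circ\mathcal N_g$. The Choi operator $J_0/\operatorname{Tr}$ in \eqref{eq: noisy choi} is proportional to a projector-like stabilizer sum, so each $\mathcal R\circ\mathcal N_g$ with $\mu_g>0$ must itself equal $\mathcal N_0$. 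I expect to justify this via \Cref{Thm: Iff Fault tolerance}, whose condition forces $F_{a'}^\dagger F_a\propto$ sums of $M_k$, or more simply by noting that $\mathcal N_0$ restricted to each record branch is an isometric-type extremal map. Hence $\mathcal R\circ\mathcal N_e=\mathcal R\circ\mathcal N_f=\mathcal N_0$.

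Now choose $\omega\in\mathcal W^{(\zeta)}\setminus\mathcal W_{\mathrm{out}}$, a relation with $p\neq 0$ or carrying logical information, such that $\chi_\zeta(e,\omega)\ne\chi_\zeta(f,\omega)$; one exists because $e+f$ has zero syndrome and nonzero logical effect, and the signatures are linear. Then $\mathcal N_e$ and $\mathcal N_f$ send the same input eigenstates of $P(p)$ to outputs whose $R_\omega$-values are opposite. Concretely, using \eqref{eq:ft-noisy-common-relation} with the zero-syndrome difference, $\mathcal N_f=\mathcal P\circ\mathcal N_e$ for a Pauli $\mathcal P$ that commutes with all output checks but anticommutes with $R_\omega$. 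I will then apply a single $\mathcal R$ to both channels on a common input. Because both outputs lie in the same syndrome sector and differ by a Pauli acting nontrivially on the logical/record information, no channel can map both to $\mathcal N_0$. To make this rigorous I will evaluate the trace of $R_\omega$ against $P(p)$-correlated inputs through the Choi expansion \eqref{eq: noisy choi}: the coefficient of $g(\omega)$ flips sign between $J_e$ and $J_f$. The hard part is showing $\mathcal R$ cannot undo this sign uniformly. I would try to show that $\mathcal R$ must act as identity on the sector's logical algebra, via the complementary-channel interpretation after \Cref{Thm: Iff Fault tolerance}. I expect this necessity step, specifically the reduction from the mixture to each $\mathcal R\circ\mathcal N_g=\mathcal N_0$ and then the contradiction, to be the main obstacle.
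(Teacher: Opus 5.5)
Your sufficiency direction is correct and is the paper's construction: project onto the syndrome eigenspaces $\Pi_\sigma$ of $R_{\omega_1},\dots,R_{\omega_h}$, then undo a representative Pauli $P(\beta^{(\zeta)}(e_\sigma))$. The necessity direction, however, has a genuine gap.

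Its first step claims that $\mathcal R\circ\mathcal N_\mu=\mathcal N_0$ forces $\mathcal R\circ\mathcal N_g=\mathcal N_0$ for every $g$ with $\mu_g>0$. This is false in general, because $\mathcal N_0$ is not extremal. Its Kraus operators satisfy $\widetilde C_{k'}^\dagger\widetilde C_k=0$ for $k\neq k'$, so Choi's extremality criterion fails as soon as two outcomes are realizable. Proportionality of its Choi operator to a projector does not rescue this.

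Here is a concrete counterexample.
\begin{itemize}
\item Leave one data qubit untouched.
\item Prepare an ancilla $A_1$ in $|0\rangle$, apply a Hadamard, and measure it, so $o_1$ is uniform.
\item Prepare and measure an ancilla $A_2$, so $o_2=0$ is a detector.
\item Put an $X$ fault on the $A_2$ record with probability $1/2$.
\end{itemize}
Take the recovery that reads $A_2$; if $o_2=0$ it resets $A_1$ to $|0\rangle$, and if $o_2=1$ it resets $A_2$ to $|0\rangle$ and $A_1$ to $|1\rangle$. This recovery satisfies $\mathcal R\circ\mathcal N_\mu=\mathcal N_0$. Yet $\mathcal R\circ\mathcal N_0(\rho)=|00\rangle\langle 00|_{A_1A_2}\otimes\rho\neq\mathcal N_0(\rho)$.

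So an arbitrary valid recovery need not correct each fault individually. In addition, the step you flag as the main obstacle is never carried out: you do not show that $\mathcal R$ cannot uniformly undo the sign flip of $g(\omega)$ between $J_e$ and $J_f$.

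What is missing is an argument at the level of Kraus Gram operators, which holds for every valid recovery.
\begin{enumerate}
\item Pick a logical basis relation $\omega=[b,p,p']$ with $\chi_\zeta(e,\omega)\neq\chi_\zeta(f,\omega)$.
\item From $R_\omega\widetilde C_k=\widetilde C_kP(p)$ and Hermiticity of $R_\omega$, every $\widetilde C_k^\dagger\widetilde C_k$ commutes with $P(p)$.
\item Since $\mu_e,\mu_f>0$, \Cref{Thm: Iff Fault tolerance} places each $F_{f,o'}^\dagger F_{e,o}$ in the span of these operators, so it commutes with $P(p)$.
\item But \eqref{eq:ft-noisy-common-relation} shows that $F_{f,o'}^\dagger F_{e,o}$ anticommutes with $P(p)$. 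Hence $F_{f,o'}^\dagger F_{e,o}=0$ for all $o,o'$.
\item Tracing the input out of \eqref{eq: noisy choi} gives $\sum_oF_{a,o}F_{a,o}^\dagger=c\,\Pi_\sigma$ for both $a\in\{e,f\}$, with the same $c>0$ and $\Pi_\sigma\neq0$, because the two syndromes agree.
\item Therefore $c^2\Pi_\sigma=\sum_{o,o'}F_{f,o'}(F_{f,o'}^\dagger F_{e,o})F_{e,o}^\dagger=0$, a contradiction.
\end{enumerate}
This is the paper's proof. It replaces your extremality reduction with the commute/anticommute clash. That clash is exactly what forces the output ranges of the two faults to be orthogonal while each must fill the same syndrome sector.
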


\begin{proof}
($\implies$) Suppose recovery exists and pick arbitrary $e ,  f \in \mathcal{E}^{(\zeta)}$ with $H_{\zeta}e^{\mathsf T}=H_{\zeta} f^{\mathsf T}$ and $\mu_{e},\mu_{ f} \neq 0$. We must show that $L_{\zeta}e^{\mathsf T}=L_{\zeta} f^{\mathsf T}$. Now assume $L_{\zeta}e^{\mathsf T}\neq L_{\zeta}  f^{\mathsf T}$. Then by \Cref{def:ft-signature-matrices}, there exists $\omega_i \in \{\omega_{h+1},\cdots, \omega_q\}$ such that $\chi_{\zeta}(e,\omega_i) \neq \chi_{\zeta}( f,\omega_i)$. Now letting $\omega_i = [b_i,p_i,p'_i]$, \Cref{eq:ft-signed-output-relation} gives for all $j$, 
\begin{equation*}
    \begin{gathered}
        P(p_i)\widetilde C^\dagger_j \widetilde C_j = \widetilde C^\dagger_jR_{\omega_i}\widetilde C_j = \widetilde C^\dagger_j\widetilde C_jP(p_i) \\ \implies [\widetilde C^\dagger_j \widetilde C_j,P(p_i)] = 0.
    \end{gathered}
\end{equation*}
Now by \Cref{Thm: Iff Fault tolerance}, existence of recovery implies 
\begin{equation*}
    \begin{gathered}
        F^\dagger_{ f,o'}F_{e,o} \in \operatorname{span} \{\widetilde C^\dagger_j\widetilde C_j\}, \quad \forall o,o' \\\implies [F^\dagger_{ f,o'}F_{e,o},P(p_i)] = 0.
    \end{gathered}
\end{equation*}
Again, by \Cref{eq:ft-noisy-common-relation} we have
\begin{equation*}
   \begin{aligned}
       P(p_i)F^\dagger_{ f,o'}F_{e,o} &=  (-1)^{\chi_{\zeta}( f,\omega_i)}F^\dagger_{ f,o'}R_{\omega_i}F_{e,o} \\
       &= (-1)^{\chi_{\zeta}( f,\omega_i) + \chi_{\zeta}(e,\omega_i)}F^\dagger_{ f,o'}F_{e,o} P(p_i).
   \end{aligned}
\end{equation*}
However, $\chi_{\zeta}(e,\omega_i)\neq\chi_{\zeta}( f,\omega_i)$ implies $\{F^\dagger_{f,o'}F_{e,o},P(p_i)\} = 0$. This forces $F^\dagger_{ f,o'}F_{e,o} = 0$. Put $\sigma=H_{\zeta}e^{\mathsf T}
=H_{\zeta} f^{\mathsf T}$ and taking $\omega_j \in \{\omega_1,\cdots , \omega_h\}$ from \Cref{def:ft-signature-matrices} define
\[
    \Pi_\sigma
    \triangleq
    \prod_{j=1}^{h}
        \frac{I+(-1)^{\sigma_j}R_{\omega_j}}{2} \neq 0.
\]
Since $\operatorname{Tr}(P(p)^{\mathsf T})
=2^{k_{\mathrm{in}}}\delta_{p,0}$, tracing out the
input registers in $J_{e},J_{ f}$ in
\Cref{eq: noisy choi} gives, for
$ a\in\{e, f\}$,
\begin{equation*}
    \begin{aligned}
        \sum_oF_{ a,o}F_{ a,o}^\dagger
        &=2^{k_{\mathrm{in}}-n_R-n_{\mathrm{out}}}
          \sum_{\omega\in\mathcal W_{\mathrm{out}}}
          (-1)^{\chi_{\zeta}( a,\omega)}R_\omega\\
        &=c\Pi_\sigma , \quad c\triangleq2^{k_{\mathrm{in}}+h-n_R-n_{\mathrm{out}}}>0.
    \end{aligned}
\end{equation*}
The last equality follows by expanding the product
defining $\Pi_\sigma$.
Trace preservation gives
$c\operatorname{Tr}\Pi_\sigma=2^{k_{\mathrm{in}}}>0$.
However, the vanishing cross-products imply
\[
    \begin{aligned}
        c^2\Pi_\sigma
        &=\sum_{o',o}
          F_{ f,o'}
          \bigl(F_{ f,o'}^\dagger
                F_{e,o}\bigr)
          F_{e,o}^\dagger\\
        &=0,
    \end{aligned}
\]
a contradiction. Hence
$L_{\zeta}e^{\mathsf T}=L_{\zeta} f^{\mathsf T}$.($\impliedby$) The converse directly follows from \Cref{lem:ft-signature-completeness}. Now assume \Cref{eq:ft-HL-pairwise-criterion} holds. For each syndrome $\sigma$ attained by a fault of
positive probability, choose such a fault
$e_\sigma$ and set
\[
    U_\sigma
    \triangleq
    P(\beta^{(\zeta)}(e_\sigma))^\dagger.
\]
Set $U_\sigma=I$ for all remaining syndromes and define
\[
    \mathcal R(\tau)
    \triangleq
    \sum_{\sigma\in\mathbb F_2^h}
        U_\sigma\Pi_\sigma\tau\Pi_\sigma U_\sigma^\dagger.
\]
The projectors $\Pi_\sigma$ are mutually orthogonal
and sum to $I$, so $\mathcal R$ is CPTP. Fix $e$ with $\mu_{e}>0$ and put
$\sigma=H_{\zeta}e^{\mathsf T}$.
The assumed condition gives
$L_{\zeta}e^{\mathsf T}
=L_{\zeta}e_\sigma^{\mathsf T}$, hence
$\mathcal N_{e}=\mathcal N_{e_\sigma}$
by \Cref{lem:ft-signature-completeness}.
Moreover, since $\omega_1,\ldots,\omega_h$ have zero input labels,
\eqref{eq:ft-noisy-common-relation} gives
$\Pi_\sigma F_{e,o}=F_{e,o}$.
Therefore, using \eqref{eq: stab noise propagation},
\[
    \begin{aligned}
        \mathcal R(\mathcal N_{e}(\rho))
        &=U_\sigma\mathcal N_{e}(\rho)U_\sigma^\dagger\\
        &=U_\sigma\mathcal N_{e_\sigma}(\rho)
          U_\sigma^\dagger\\
        &=\mathcal N_0(\rho).
    \end{aligned}
\]
Thus one recovery corrects every positive-probability
fault. By linearity and $\sum_{e}\mu_{e}=1$,
$\mathcal R\circ\mathcal N_\mu=\mathcal N_0$.
\end{proof}

\begin{definition}[Spacetime Pauli weight]
\label{def:spacetime-pauli-weight}
\label{def:spacetime Pauli weight}
For $e\in\mathcal E^{(\zeta)}$, we define its
\emph{spacetime weight} as
\begin{equation}
    \operatorname{wt}_{\mathrm{ST}}(e)
    \triangleq
    \bigl|\{(s,t)\in\Lambda^{(\zeta)}:
                  e_{s,t}\neq[0,0]\}\bigr|.
    \label{eq:ft-spacetime-weight}
\end{equation}
\end{definition}

\begin{definition}[Circuit distance]
\label{def:ft-circuit-distance}
Fix $\zeta\in\{0,1\}$ and a binary linear fault subspace
$\mathcal V^{(\zeta)}\subseteq\mathcal E^{(\zeta)}$.
The \emph{circuit distance} of $\widetilde{\mathbf C}$
relative to $\mathcal V^{(\zeta)}$ is
\begin{equation}
    d_{\mathrm{circ},\zeta}
    \triangleq
    \min_{\substack{
        e\in\mathcal V^{(\zeta)}\\
        H_{\zeta}e^{\mathsf T}=0\\
        L_{\zeta}e^{\mathsf T}\neq0}}
        \operatorname{wt}_{\mathrm{ST}}(e),
    \label{eq:ft-HL-distance}
\end{equation}
with $\min\varnothing=\infty$.
Operationally, it is the minimum weight of an allowed spacetime fault
with trivial syndrome and nontrivial logical effect.
\end{definition}

\begin{proposition}[Exact recovery of faults with bounded weight]
\label{cor:ft-correction-radius}
For every integer $t\geq0$ with
$2t<d_{\mathrm{circ},\zeta}$, one CPTP recovery corrects
all patterns $e\in\mathcal V^{(\zeta)}$ satisfying
$\operatorname{wt}_{\mathrm{ST}}(e)\leq t$,
and hence every stochastic mixture of these patterns.
In particular, for finite distance, all faults up to
weight
\[
    \left\lfloor\frac{d_{\mathrm{circ},\zeta}-1}{2}\right\rfloor
\]
are correctable.
\end{proposition}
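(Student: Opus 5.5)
The plan is to reduce the claim to \Cref{thm:ft-HL-criterion}, applied to a stochastic Pauli model supported on low-weight faults. Fix $t$ with $2t<d_{\mathrm{circ},\zeta}$ and let
\[
    \mathcal B_t\triangleq\{e\in\mathcal V^{(\zeta)}:\operatorname{wt}_{\mathrm{ST}}(e)\leq t\}.
\]
I will build one recovery that works for every distribution $\mu$ supported on $\mathcal B_t$. The recovery constructed in the sufficiency part of \Cref{thm:ft-HL-criterion} depends on $\mu$ only through the set of syndromes attained and a chosen representative $e_\sigma$ for each. So I first take $\mu$ to be any distribution with full support on $\mathcal B_t$, for example the uniform one. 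The resulting $\mathcal R$ is then checked to correct every $e\in\mathcal B_t$ individually, which covers all mixtures by linearity.

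The key step is to verify the pairwise criterion \eqref{eq:ft-HL-pairwise-criterion} on $\mathcal B_t$. Take $e,f\in\mathcal B_t$ with $H_\zeta e^{\mathsf T}=H_\zeta f^{\mathsf T}$, and set $g=e+f$.
\begin{itemize}
    \item Since $\mathcal V^{(\zeta)}$ is a binary linear subspace, $g\in\mathcal V^{(\zeta)}$.
    \item By linearity of $H_\zeta$, the fault $g$ has zero syndrome.
    \item The locations where $g_{s,t}\neq[0,0]$ are contained in the union of the supports of $e$ and $f$. Hence spacetime weight is subadditive, and $\operatorname{wt}_{\mathrm{ST}}(g)\leq 2t<d_{\mathrm{circ},\zeta}$.
\end{itemize}
By \Cref{def:ft-circuit-distance}, $g$ cannot have nonzero logical effect, so $L_\zeta g^{\mathsf T}=0$. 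Linearity then gives $L_\zeta e^{\mathsf T}=L_\zeta f^{\mathsf T}$. This holds even when $d_{\mathrm{circ},\zeta}=\infty$, where every zero-syndrome fault in $\mathcal V^{(\zeta)}$ is harmless.

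Applying \Cref{thm:ft-HL-criterion} with the full-support $\mu$ gives a CPTP recovery $\mathcal R$ with $\mathcal R\circ\mathcal N_\mu=\mathcal N_0$. From the proof of that theorem, $\mathcal R(\mathcal N_e(\rho))=\mathcal N_0(\rho)$ for every $e$ with $\mu_e>0$, that is, for all $e\in\mathcal B_t$. Any stochastic mixture $\sum_e\nu_e\mathcal N_e$ with $\nu$ supported on $\mathcal B_t$ is then corrected by the same $\mathcal R$, by linearity and $\sum\nu_e=1$.

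The main subtlety is the uniformity of $\mathcal R$ across mixtures, and it is handled by the full-support choice. An alternative avoids reading the proof of \Cref{thm:ft-HL-criterion}: since $\mathcal N_0\circ$ is fixed and each $\mathcal R(\mathcal N_e)$ is CPTP, the equality $\sum\mu_e\mathcal R(\mathcal N_e)=\mathcal N_0$ with $\mu_e>0$ for all $e$ would by itself need an extremality argument. So I will cite the explicit per-fault identity from the proof instead. For the final clause, with finite $d$ the largest integer $t$ satisfying $2t<d$ is $t=\lfloor (d-1)/2\rfloor$.
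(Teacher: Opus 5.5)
Your proposal is correct and follows the paper's proof. Closure of $\mathcal V^{(\zeta)}$ under addition, subadditivity of $\operatorname{wt}_{\mathrm{ST}}$, and the distance definition force $L_\zeta(e+f)^{\mathsf T}=0$ whenever $H_\zeta e^{\mathsf T}=H_\zeta f^{\mathsf T}$, after which \Cref{thm:ft-HL-criterion} supplies the recovery. Your full-support choice of $\mu$, combined with the per-fault identity $\mathcal R\circ\mathcal N_e=\mathcal N_0$ from that theorem's proof, makes explicit the uniformity over mixtures that the paper leaves implicit, and you are right not to rely on the aside about an extremality argument.
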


\begin{proof}
Since $\mathcal V^{(\zeta)}$ is a vector space, closure under addition gives, for all $e ,  f \in \mathcal V^{(\zeta)}$ satisfying $\operatorname{wt}_{\mathrm{ST}}(e),\operatorname{wt}_{\mathrm{ST}}( f)\leq t$,
$e+ f\in\mathcal V^{(\zeta)}$ and
\[
    \operatorname{wt}_{\mathrm{ST}}(e+ f)
    \leq
    \operatorname{wt}_{\mathrm{ST}}(e)
    +\operatorname{wt}_{\mathrm{ST}}( f)
    \leq2t<d_{\mathrm{circ},\zeta}.
\]
If their syndromes agree, the distance definition
therefore forces
$L_{\zeta}(e+ f)^{\mathsf T}=0$.
Their logical-effect signatures consequently agree,
and \Cref{thm:ft-HL-criterion} supplies a common recovery.
\end{proof}

In \Cref{app:honeycomb}, we explicitly
compute this distance for the finite honeycomb circuit
under its specified fault locations.


\subsection{Weight enumerators}
\label{sec:weight-enumerators}

For each non-adaptive stabilizer circuit in a prescribed family, we use the signature
matrices $H_{\zeta},L_{\zeta}$ (\Cref{def:ft-signature-matrices}) to count
Pauli faults by their weights in blocks of spacetime locations,
detection syndrome, and logical effect. These counts define
nonnegative integer-valued distributions. We derive
Fourier--MacWilliams identities using character duality for
additive codes
\cite{MacWilliams1963Weights,GluesingLuerssen2015Fourier}
and combine them with circuit and boundary constraints.
Related quantum and circuit enumerators appear in
\cite{ShorLaflamme1997MacWilliams,Rains1998QuantumWeightEnumerators,
CaoLackey2024TensorEnumerators,KuklianskyLackey2024CircuitEnumerators}.
In \Cref{sec:linear-programming-bounds}, we relax the counts
to real variables and construct linear programs that give
upper bounds on circuit distance, lower bounds on undetected
logical failure probability, and upper bounds on pseudo-threshold
throughout the family under stochastic Pauli noise model.

\subsubsection{Circuit families and weight distributions}

We begin by defining a family of non-adaptive stabilizer circuits implementing the same linear map on some given input space.

\begin{definition}[Non-adaptive stabilizer circuit family and fault space]
\label{def:we-circuit-family}
Fix $\zeta\in\{0,1\}$ and a family
$\{\mathbf C_\theta:\theta\in\Theta\}$ of non-adaptive stabilizer
circuits on common input and output registers, with a fixed
noiseless encoder $C_{\mathrm{enc}}$ and the complete measurement record
retained. All members implement the same quantum channel, namely, $\forall \theta \in \Theta$ it holds
\begin{equation}
 \sum_o\widetilde C_{\theta,o}\rho\widetilde C_{\theta,o}^{\dagger}
 \triangleq \mathcal N_0(\rho),\qquad
 \widetilde C_{\theta,o}\triangleq C_{\theta,o}C_{\mathrm{enc}}.
 \label{eq:we-common-ideal-channel}
\end{equation}
Each circuit $\mathbf C_\theta$ has permitted spacetime locations and a partition $\Pi_{\theta}$ given by
\begin{equation}
 \begin{gathered}
 \Lambda_\theta^{(\zeta)}\triangleq \bigsqcup_{b=1}^{B_\theta}I_{\theta,b},\quad
 \Pi_\theta\triangleq \{I_{\theta,1},\ldots,I_{\theta,B_\theta}\}, \quad \forall b\in \mathbb Z^+_{B_{\theta}} , I_{\theta,b}\neq \varnothing,\\
N_{\theta,b}\triangleq |I_{\theta,b}|,\qquad
 |\Lambda_\theta^{(\zeta)}| \triangleq N_\theta\triangleq \sum_bN_{\theta,b}\leq N^{\Theta}_{\max}\in \mathbb Z<\infty.
 \end{gathered}
 \label{eq:we-location-partition}
\end{equation}
The spacetime fault space for each member is given by
$\mathcal E_\theta^{(\zeta)} \cong \mathbb F_2^{2N_{\theta}}$
with dimension $d_{\zeta,\theta}\triangleq 2N_\theta$.
The permitted patterns form a binary subspace
$\mathcal V_\theta^{(\zeta)}\subseteq\mathcal E_\theta^{(\zeta)}$.
\end{definition}
\begin{definition}[Block weight and weight profile]
\label{def:weight-profile}
For $e\in\mathcal E_\theta^{(\zeta)}$ and
$b\in\mathbb Z^+_{B_\theta}$, define its \emph{block weight} by
\begin{equation}
 \operatorname{wt}_{\theta,b}(e)
 \triangleq
 \left|\{(s,t)\in I_{\theta,b}:
                    e_{s,t}\ne0\}\right|
 \leq N_{\theta,b}.
 \label{eq: Block weight}
\end{equation}
Its \emph{weight profile} under $\Pi_\theta$ is
\begin{equation}
 \operatorname{wt}_{\Pi_\theta}(e)
 \triangleq
 \bigl(\operatorname{wt}_{\theta,1}(e),\ldots,
       \operatorname{wt}_{\theta,B_\theta}(e)\bigr).
 \label{eq: weight profile}
\end{equation}
The set of all weight profiles is defined as
\begin{equation}
 J_{\Pi_\theta}\triangleq
 \left\{
 (w_1,\ldots,w_{B_\theta})\in\mathbb Z^{B_\theta}:
 0\leq w_b\leq N_{\theta,b}\ \forall b\in\mathbb Z^+_{B_\theta}
 \right\}.
\end{equation}
\end{definition}
\begin{definition}[Syndrome-trivial and channel-trivial spaces]
For circuit $\mathbf C_\theta$, construct the detection
matrix $H_{\zeta,\theta}$ and logical-effect matrix
$L_{\zeta,\theta}$ by \Cref{def:ft-signature-matrices}, with
columns indexed by the coordinates of
$\mathcal E_\theta^{(\zeta)}$, and define
\[
 M_{\zeta,\theta}\triangleq
 \begin{bmatrix}
 H_{\zeta,\theta}\\
 L_{\zeta,\theta}
 \end{bmatrix}.
\]
For the permitted binary linear fault space
$\mathcal V_\theta^{(\zeta)}\subseteq\mathcal E_\theta^{(\zeta)}$,
define the \emph{syndrome-trivial space} and
\emph{channel-trivial space} by
\begin{equation}
 \begin{aligned}
 \mathcal K_{\zeta,\theta}
   &\triangleq
   \{e\in\mathcal V_\theta^{(\zeta)}:
                H_{\zeta,\theta}e^{\mathsf T}=0\},\\
 \mathcal T_{\zeta,\theta}
   &\triangleq
   \{e\in\mathcal V_\theta^{(\zeta)}:
                M_{\zeta,\theta}e^{\mathsf T}=0\}.
 \end{aligned}
 \label{eq:we-fault-subspaces}
\end{equation}
Define their character duals by
\begin{equation}
 \begin{aligned}
 \mathcal K_{\zeta,\theta}^{\perp}
 &\triangleq
 \left\{x\in\mathcal E_\theta^{(\zeta)}:
 xe^{\mathsf T}=0
 \ \forall\,e\in\mathcal K_{\zeta,\theta}\right\},\\
 \mathcal T_{\zeta,\theta}^{\perp}
 &\triangleq
 \left\{x\in\mathcal E_\theta^{(\zeta)}:
 xe^{\mathsf T}=0
 \ \forall\,e\in\mathcal T_{\zeta,\theta}\right\},
 \end{aligned}
\end{equation}
where $xe^{\mathsf T}$ is the ordinary binary dot product.
\end{definition}

\begin{definition}[Weight enumerators and generating functions]
\label{def:Weight-enumerators-circuit}
For $\mathbf w\in J_{\Pi_\theta}$, define the weight enumerators $U_{\theta,\mathbf w},T_{\theta,\mathbf w},\widehat U_{\theta,\mathbf w},\widehat T_{\theta,\mathbf w}$ as
\begin{equation}
 \begin{aligned}
 U_{\theta,\mathbf w}
   &\triangleq
   |\{e\in\mathcal K_{\zeta,\theta}:
      \operatorname{wt}_{\Pi_\theta}(e)=\mathbf w\}|,\\
 T_{\theta,\mathbf w}
   &\triangleq
   |\{e\in\mathcal T_{\zeta,\theta}:
      \operatorname{wt}_{\Pi_\theta}(e)=\mathbf w\}|,\\
 \widehat U_{\theta,\mathbf w}
   &\triangleq
   |\{e\in\mathcal K_{\zeta,\theta}^{\perp}:
      \operatorname{wt}_{\Pi_\theta}(e)=\mathbf w\}|,\\
 \widehat T_{\theta,\mathbf w}
   &\triangleq
   |\{e\in\mathcal T_{\zeta,\theta}^{\perp}:
      \operatorname{wt}_{\Pi_\theta}(e)=\mathbf w\}|.
 \end{aligned}
 \label{eq:we-nested-distributions}
\end{equation}
The difference $U_{\theta,\mathbf w}-T_{\theta,\mathbf w}$
counts harmful undetected faults with weight profile
$\mathbf w$. Let $\sigma,\lambda$ be binary column vectors with
lengths equal to the row counts of $H_{\zeta,\theta}$ and
$L_{\zeta,\theta}$, respectively. Define the
\emph{signature-resolved weight coefficients} and their
generating function by
\begin{align}
 A_{\theta,\mathbf w}^{(\zeta)}(\sigma,\lambda)
 &\triangleq
 \left|\left\{
 e\in\mathcal V_\theta^{(\zeta)}:
 \begin{array}{l}
 H_{\zeta,\theta}e^{\mathsf T}=\sigma,\\
 L_{\zeta,\theta}e^{\mathsf T}=\lambda,\\
 \operatorname{wt}_{\Pi_\theta}(e)=\mathbf w
 \end{array}
 \right\}\right|,
 \label{eq:we-signature-coefficients}\\
 W_{\theta;\sigma,\lambda}^{(\zeta),\Pi_\theta}
       (u,v)
 &\triangleq
 \sum_{\mathbf w\in J_{\Pi_\theta}}
 A_{\theta,\mathbf w}^{(\zeta)}(\sigma,\lambda)
 \prod_{b=1}^{B_\theta}
 u_b^{N_{\theta,b}-w_b}v_b^{w_b}.
 \label{eq:we-signature-polynomial}
\end{align}
Here $u=(u_1,\ldots,u_{B_\theta})$ and
$v=(v_1,\ldots,v_{B_\theta})$ are tuples of
formal variables. Unattainable signatures have zero
coefficients. In particular,
\[
 U_{\theta,\mathbf w}
   =\sum_\lambda A_{\theta,\mathbf w}^{(\zeta)}(0,\lambda),
 \qquad
 T_{\theta,\mathbf w}
   =A_{\theta,\mathbf w}^{(\zeta)}(0,0),
\]
where the sum runs over all logical-effect signatures.
\end{definition}

\begin{proposition}[Dual spaces and cardinalities]
\label{lem:we-dual-row-spaces}
For $A\in\{H_{\zeta,\theta},M_{\zeta,\theta}\}$, we denote
$A|_{\mathcal V_\theta^{(\zeta)}}$ as a linear map
$e\mapsto Ae^{\mathsf T}$ with domain
$\mathcal V_\theta^{(\zeta)}$ and define its rank as the dimension
of its image. Taking all duals in $\mathcal E_\theta^{(\zeta)}$,
\begin{align}
 \mathcal K_{\zeta,\theta}^{\perp}
   &=(\mathcal V_\theta^{(\zeta)})^\perp
       +\operatorname{row}H_{\zeta,\theta}, \notag\\
 \mathcal T_{\zeta,\theta}^{\perp}
   &=(\mathcal V_\theta^{(\zeta)})^\perp
       +\operatorname{row}M_{\zeta,\theta},
 \label{eq:we-dual-inclusion}\\
 |\mathcal K_{\zeta,\theta}|
   &=2^{\dim\mathcal V_\theta^{(\zeta)}
       -\operatorname{rank}
        (H_{\zeta,\theta}|_{\mathcal V_\theta^{(\zeta)}})}, \notag\\
 |\mathcal T_{\zeta,\theta}|
   &=2^{\dim\mathcal V_\theta^{(\zeta)}
       -\operatorname{rank}
        (M_{\zeta,\theta}|_{\mathcal V_\theta^{(\zeta)}})},
 \label{eq:we-nested-cardinalities}\\
 |\mathcal K_{\zeta,\theta}^{\perp}|
   &=\frac{2^{d_{\zeta,\theta}}}{|\mathcal K_{\zeta,\theta}|},
 \notag\\
 |\mathcal T_{\zeta,\theta}^{\perp}|
   &=\frac{2^{d_{\zeta,\theta}}}{|\mathcal T_{\zeta,\theta}|}.
 \label{eq:we-dual-cardinalities}
\end{align}
\end{proposition}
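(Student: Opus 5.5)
The argument is pure finite-dimensional linear algebra over $\mathbb F_2$, carried out in the ambient space $\mathcal E_\theta^{(\zeta)}\cong\mathbb F_2^{d_{\zeta,\theta}}$ with the ordinary dot product. I will treat $A\in\{H_{\zeta,\theta},M_{\zeta,\theta}\}$ uniformly. The plan is to write $\mathcal K_{\zeta,\theta}$ and $\mathcal T_{\zeta,\theta}$ as intersections of two subspaces, namely
\[
\mathcal K_{\zeta,\theta}=\mathcal V_\theta^{(\zeta)}\cap\ker H_{\zeta,\theta}^{\mathsf T},\qquad
\mathcal T_{\zeta,\theta}=\mathcal V_\theta^{(\zeta)}\cap\ker M_{\zeta,\theta}^{\mathsf T},
\]
using the left-kernel convention of \Cref{sec:preliminaries}. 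Here $\ker A^{\mathsf T}=\{e:Ae^{\mathsf T}=0\}=(\operatorname{row}A)^\perp$, since $Ae^{\mathsf T}=0$ says exactly that $e$ is orthogonal to every row of $A$.

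\textbf{Step 1 (dual identities).} I will first invoke the standard facts that, for subspaces $X,Y\subseteq\mathbb F_2^d$ under a nondegenerate bilinear form, $(X\cap Y)^\perp=X^\perp+Y^\perp$ and $(X^\perp)^\perp=X$.

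The inclusion $X^\perp+Y^\perp\subseteq(X\cap Y)^\perp$ is immediate. Equality then follows by a dimension count: $\dim X^\perp=d-\dim X$ by nondegeneracy of the dot product, and Grassmann's formula gives
\[
\dim(X^\perp+Y^\perp)=\dim X^\perp+\dim Y^\perp-\dim(X^\perp\cap Y^\perp).
\]
Here $X^\perp\cap Y^\perp=(X+Y)^\perp$, and the two sides agree once $\dim(X+Y)=\dim X+\dim Y-\dim(X\cap Y)$ is substituted.

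Applying this with $X=\mathcal V_\theta^{(\zeta)}$ and $Y=(\operatorname{row}A)^\perp$ gives $(\mathcal V_\theta^{(\zeta)})^\perp+\operatorname{row}A$. This proves \eqref{eq:we-dual-inclusion}.

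\textbf{Step 2 (cardinalities of $\mathcal K,\mathcal T$).} The restricted map $A|_{\mathcal V_\theta^{(\zeta)}}:e\mapsto Ae^{\mathsf T}$ has kernel exactly $\mathcal K_{\zeta,\theta}$ (respectively $\mathcal T_{\zeta,\theta}$). Rank--nullity then gives
\[
\dim\mathcal K_{\zeta,\theta}=\dim\mathcal V_\theta^{(\zeta)}-\operatorname{rank}\bigl(H_{\zeta,\theta}|_{\mathcal V_\theta^{(\zeta)}}\bigr),
\]
and similarly for $\mathcal T_{\zeta,\theta}$ with $M_{\zeta,\theta}$. A binary space of dimension $r$ has $2^r$ elements, which yields \eqref{eq:we-nested-cardinalities}.

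\textbf{Step 3 (dual cardinalities).} For any subspace $X$, $\dim X^\perp=d_{\zeta,\theta}-\dim X$. This holds because $X^\perp$ is the kernel of the surjective map $x\mapsto(xg_i^{\mathsf T})_i$ for a basis $\{g_i\}$ of $X$; surjectivity follows from the linear independence of the $g_i$. Hence $|X^\perp|=2^{d_{\zeta,\theta}}/|X|$, which gives \eqref{eq:we-dual-cardinalities}.

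The only point needing care is Step 1. The binary dot product is nondegenerate on $\mathbb F_2^d$, but it is not anisotropic, so $X\cap X^\perp$ may be nonzero. The argument therefore has to rely only on dimension counts and never on direct-sum decompositions such as $X\oplus X^\perp=\mathbb F_2^d$. I expect no other obstacle.
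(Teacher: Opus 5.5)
Your proof is correct and follows the paper's argument step for step. The paper likewise writes $\mathcal K_{\zeta,\theta}$ and $\mathcal T_{\zeta,\theta}$ as intersections of $\mathcal V_\theta^{(\zeta)}$ with $(\operatorname{row}H_{\zeta,\theta})^\perp$ and $(\operatorname{row}M_{\zeta,\theta})^\perp$, applies $(X\cap Y)^\perp=X^\perp+Y^\perp$ and the double-dual identity, and then uses rank--nullity and nondegeneracy of the dot product for the cardinalities. The only difference is that you justify the intersection--sum identity by an explicit dimension count, correctly avoiding any appeal to $X\oplus X^\perp$, whereas the paper cites it as a standard fact.
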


\begin{proof}
By definition,
\[
 \begin{aligned}
 \mathcal K_{\zeta,\theta}
   &=\mathcal V_\theta^{(\zeta)}
       \cap(\operatorname{row}H_{\zeta,\theta})^\perp,\\
 \mathcal T_{\zeta,\theta}
   &=\mathcal V_\theta^{(\zeta)}
       \cap(\operatorname{row}M_{\zeta,\theta})^\perp.
 \end{aligned}
\]
The orthogonal complement of an intersection is the sum
of the orthogonal complements, and taking the orthogonal
complement twice returns the original subspace.
Applying these facts give \Cref{eq:we-dual-inclusion}. The spaces $\mathcal K_{\zeta,\theta}$ and
$\mathcal T_{\zeta,\theta}$ are the kernels of
$H_{\zeta,\theta}|_{\mathcal V_\theta^{(\zeta)}}$ and
$M_{\zeta,\theta}|_{\mathcal V_\theta^{(\zeta)}}$, respectively.
Rank--nullity therefore gives their dimensions.
Since a binary linear space has cardinality two raised
to its dimension, \Cref{eq:we-nested-cardinalities} follows. Finally, non-degeneracy of the binary dot product gives
\[
 \begin{aligned}
 \dim\mathcal K_{\zeta,\theta}^{\perp}
   &=d_{\zeta,\theta}-\dim\mathcal K_{\zeta,\theta},\\
 \dim\mathcal T_{\zeta,\theta}^{\perp}
   &=d_{\zeta,\theta}-\dim\mathcal T_{\zeta,\theta}.
 \end{aligned}
\]
Exponentiating these equalities with base two proves
\Cref{eq:we-dual-cardinalities}.
\end{proof}

\subsubsection{MacWilliams identities and constraints}
We next derive constraints on these enumerators by combining
the character-duality approach to MacWilliams identities
\cite{GluesingLuerssen2015Fourier,CaoLackey2024TensorEnumerators,
KuklianskyLackey2024CircuitEnumerators}
with the circuit signature characterization.
\begin{proposition}[Split MacWilliams identities]
\label{prop:we-signature-macwilliams}
\label{cor:we-split-macwilliams}
\label{lem:we-krawtchouk-coefficients}
For integers $q\geq2$, $n\geq0$, and $0\leq i,j\leq n$,
the \emph{$q$-ary Krawtchouk polynomials}
\cite{GluesingLuerssen2015Fourier} are defined as
\begin{equation}
 \mathbb K_j^{(q)}(i;n)
 \triangleq
 \sum_{a=0}^{j}(-1)^a(q-1)^{j-a}
       \binom ia\binom{n-i}{j-a},
 \label{eq:we-krawtchouk}
\end{equation}
where the summand vanishes whenever $a>i$ or $j-a>n-i$.
For $\mathbf j,\mathbf w\in J_{\Pi_\theta}$, define
\begin{equation}
 \mathbb K_{\mathbf j}^{\Pi_\theta}(\mathbf w)
 \triangleq
 \prod_{b=1}^{B_\theta}
 \mathbb K_{j_b}^{(4)}(w_b;N_{\theta,b}).
 \label{eq:we-split-krawtchouk}
\end{equation}
For every circuit $\mathbf C_\theta$, the split MacWilliams identities are given by
\begin{equation}
 \begin{aligned}
 \widehat U_{\theta,\mathbf j}
   &=\frac{1}{|\mathcal K_{\zeta,\theta}|}
     \sum_{\mathbf w\in J_{\Pi_\theta}}
       U_{\theta,\mathbf w}
       \mathbb K_{\mathbf j}^{\Pi_\theta}(\mathbf w),\\
 \widehat T_{\theta,\mathbf j}
   &=\frac{1}{|\mathcal T_{\zeta,\theta}|}
     \sum_{\mathbf w\in J_{\Pi_\theta}}
       T_{\theta,\mathbf w}
       \mathbb K_{\mathbf j}^{\Pi_\theta}(\mathbf w).
 \end{aligned}
 \label{eq:we-split-macwilliams}
\end{equation}
\end{proposition}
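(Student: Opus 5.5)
The plan is the standard character-sum (Poisson summation) proof of the MacWilliams identity for a binary linear code $\mathcal C\subseteq\mathcal E_\theta^{(\zeta)}\cong\mathbb F_2^{2N_\theta}$. We then specialize to $\mathcal C\in\{\mathcal K_{\zeta,\theta},\mathcal T_{\zeta,\theta}\}$, both of which are binary linear subspaces by \eqref{eq:we-fault-subspaces}. The key structural fact is that coordinates come in pairs $e_{s,t}\in\mathbb F_2^2$, and the weight counts a location as nonzero when its pair is nonzero. The space is therefore a product of $N_\theta$ copies of the additive group $\mathbb F_2^2$ (alphabet size $4$), with the dot product $xe^{\mathsf T}=\sum_{(s,t)}x_{s,t}e_{s,t}^{\mathsf T}$ furnishing a nondegenerate pairing on each factor. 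This is why $q=4$ Krawtchouk polynomials appear even though the arithmetic is binary.

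\textbf{Step 1 (Poisson summation).} For any function $f$ on $\mathcal E_\theta^{(\zeta)}$, set $\widehat f(x)=\sum_{e}(-1)^{xe^{\mathsf T}}f(e)$. Then
\[
\sum_{x\in\mathcal C^\perp}f(x)=\frac{1}{|\mathcal C|}\sum_{e\in\mathcal C}\widehat f(e).
\]
This follows by using that $\sum_{e\in\mathcal C}(-1)^{xe^{\mathsf T}}$ equals $|\mathcal C|$ when $x\in\mathcal C^\perp$ and $0$ otherwise, together with the cardinality identity $|\mathcal C^\perp|=2^{d_{\zeta,\theta}}/|\mathcal C|$ from \Cref{lem:we-dual-row-spaces}.

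\textbf{Step 2 (choice of $f$).} Fix a profile $\mathbf j\in J_{\Pi_\theta}$ and take $f=\mathbf 1[\operatorname{wt}_{\Pi_\theta}(x)=\mathbf j]$. The left side of Step 1 is then $\widehat U_{\theta,\mathbf j}$ (respectively $\widehat T_{\theta,\mathbf j}$). It remains to show that $\widehat f(e)$ depends only on $\operatorname{wt}_{\Pi_\theta}(e)=\mathbf w$ and equals $\mathbb K^{\Pi_\theta}_{\mathbf j}(\mathbf w)$. Grouping the terms of $\sum_{e\in\mathcal C}\widehat f(e)$ by profile then gives $\sum_{\mathbf w}U_{\theta,\mathbf w}\mathbb K^{\Pi_\theta}_{\mathbf j}(\mathbf w)$, and likewise with $T$.

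\textbf{Step 3 (Krawtchouk evaluation, the main computation).} The character factorizes over locations, and the profile constraint factorizes over blocks. Hence $\widehat f(e)=\prod_b S_b$, where
\[
S_b=\sum_{x|_{I_{\theta,b}}:\ \mathrm{wt}=j_b}\ \prod_{(s,t)\in I_{\theta,b}}(-1)^{x_{s,t}e_{s,t}^{\mathsf T}}.
\]
Within block $b$, suppose $e$ is nonzero on exactly $w_b$ locations. We choose which $j_b$ locations carry a nonzero $x_{s,t}$, say $a$ of them among the $w_b$ locations where $e\neq0$ and $j_b-a$ among the remaining $N_{\theta,b}-w_b$ locations.
\begin{itemize}
\item At a location with $e_{s,t}=0$, each of the $3$ nonzero choices of $x_{s,t}$ contributes $1$, giving a factor $3$.
\item At a location with $e_{s,t}\neq0$, the sum $\sum_{y\neq0}(-1)^{ye_{s,t}^{\mathsf T}}$ equals $-1$, because summing over all $y\in\mathbb F_2^2$ gives $0$ by nondegeneracy of the pairing.
\end{itemize}
Summing over $a$ yields
\[
S_b=\sum_a(-1)^a3^{j_b-a}\binom{w_b}{a}\binom{N_{\theta,b}-w_b}{j_b-a}=\mathbb K^{(4)}_{j_b}(w_b;N_{\theta,b}),
\]
which matches \eqref{eq:we-krawtchouk}; the product over blocks is \eqref{eq:we-split-krawtchouk}.

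The only delicate point is the bookkeeping that the $\mathbb F_2$ dot product on $\mathbb F_2^{2N_\theta}$ restricts to a nondegenerate pairing on each location's $\mathbb F_2^2$. This is immediate from the definition, so I expect no real obstacle; the remaining care lies in the per-location character sum in Step 3.
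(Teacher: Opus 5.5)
Your proof is correct and follows essentially the same route as the paper. Both arguments use character orthogonality over $\mathcal K_{\zeta,\theta}$ (resp.\ $\mathcal T_{\zeta,\theta}$), factor the sum over locations, and use the local sums $3$ and $-1$ over nonzero labels in $\mathbb F_2^2$.

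The only difference is presentational. The paper passes through the generating polynomial $\prod_b(u_b+3v_b)^{N_{\theta,b}-w_b}(u_b-v_b)^{w_b}$ and extracts coefficients by the binomial theorem, whereas you evaluate the transform of the profile indicator directly. Also, the cardinality identity you cite in Step 1 is not actually needed for the form of Poisson summation you use.
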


\begin{proof}
For every $x\in\mathcal E_\theta^{(\zeta)}$, it holds
\begin{equation}
 \label{eq:we-character-orthogonality}
 \sum_{e\in\mathcal K_{\zeta,\theta}}
       (-1)^{xe^{\mathsf T}}
 =
 \begin{cases}
 |\mathcal K_{\zeta,\theta}|,&
       x\in\mathcal K_{\zeta,\theta}^{\perp},\\
 0,&\text{otherwise}.
 \end{cases}
\end{equation}
For each $(s,t)\in\Lambda_\theta^{(\zeta)}$,
\begin{equation}
 \sum_{x_{s,t}\in\mathbb F_2^2\setminus\{0\}}
 (-1)^{x_{s,t}e_{s,t}^{\mathsf T}}
 =
 \begin{cases}
 3,&e_{s,t}=0,\\
 -1,&e_{s,t}\ne0.
 \end{cases}
 \label{eq:we-local-character-sum}
\end{equation}
Using \Cref{eq:we-character-orthogonality} and then factoring
the sum over the location labels gives
\[
\begin{aligned}
 &\sum_{\mathbf j\in J_{\Pi_\theta}}
   \widehat U_{\theta,\mathbf j}
   \prod_bu_b^{N_{\theta,b}-j_b}v_b^{j_b}\\
 &=\frac{1}{|\mathcal K_{\zeta,\theta}|}
   \sum_{e\in\mathcal K_{\zeta,\theta}}
   \sum_{x\in\mathcal E_\theta^{(\zeta)}}
   (-1)^{xe^{\mathsf T}}
   \prod_b
   u_b^{N_{\theta,b}-\operatorname{wt}_{\theta,b}(x)}
   v_b^{\operatorname{wt}_{\theta,b}(x)}\\
 &=\frac{1}{|\mathcal K_{\zeta,\theta}|}
   \sum_{e\in\mathcal K_{\zeta,\theta}}
   \prod_b\prod_{(s,t)\in I_{\theta,b}}
   \left(
     u_b+v_b
     \sum_{x_{s,t}\in\mathbb F_2^2\setminus\{0\}}
     (-1)^{x_{s,t}e_{s,t}^{\mathsf T}}
   \right)\\
 &=\frac{1}{|\mathcal K_{\zeta,\theta}|}
   \sum_{\mathbf w\in J_{\Pi_\theta}}
   U_{\theta,\mathbf w}
   \prod_b
   (u_b+3v_b)^{N_{\theta,b}-w_b}(u_b-v_b)^{w_b}.
\end{aligned}
\]
The last equality uses \Cref{eq:we-local-character-sum}
and groups $e$ by its weight profile.
By the binomial theorem and
\Cref{eq:we-krawtchouk,eq:we-split-krawtchouk},
\[
 \prod_b
 (u_b+3v_b)^{N_{\theta,b}-w_b}(u_b-v_b)^{w_b}
 =
 \sum_{\mathbf j\in J_{\Pi_\theta}}
 \mathbb K_{\mathbf j}^{\Pi_\theta}(\mathbf w)
 \prod_bu_b^{N_{\theta,b}-j_b}v_b^{j_b}.
\]
Substitution and comparison of coefficients give
\[
 \widehat U_{\theta,\mathbf j}
 =
 \frac{1}{|\mathcal K_{\zeta,\theta}|}
 \sum_{\mathbf w\in J_{\Pi_\theta}}
 U_{\theta,\mathbf w}
 \mathbb K_{\mathbf j}^{\Pi_\theta}(\mathbf w).
\]
Replacing $\mathcal K_{\zeta,\theta}$ by
$\mathcal T_{\zeta,\theta}$ proves the second identity in \Cref{eq:we-split-macwilliams}.
\end{proof}

\begin{proposition}[Counting and distance constraints]
\label{prop:we-enumerator-constraints}
For $\mathbf w\in J_{\Pi_\theta}$, write
\[
 \Gamma_{\theta,\mathbf w}
 \triangleq
 \prod_{b=1}^{B_\theta}
 3^{w_b}\binom{N_{\theta,b}}{w_b},
 \qquad
 |\mathbf w|\triangleq\sum_{b=1}^{B_\theta}w_b.
\]
For every circuit $\mathbf C_\theta$, the four weight
distributions are nonnegative integer-valued and satisfy
\begin{equation}
 \begin{aligned}
 0&\leq T_{\theta,\mathbf w}
   \leq U_{\theta,\mathbf w}
   \leq\Gamma_{\theta,\mathbf w},\\
 0&\leq\widehat U_{\theta,\mathbf w}
   \leq\widehat T_{\theta,\mathbf w}
   \leq\Gamma_{\theta,\mathbf w},
 \end{aligned}
 \label{eq:lp-capacity}
\end{equation}
\begin{equation}
 U_{\theta,\mathbf0}=T_{\theta,\mathbf0}
 =\widehat U_{\theta,\mathbf0}
 =\widehat T_{\theta,\mathbf0}=1,
 \label{eq:lp-zero}
\end{equation}
and
\begin{equation}
 \begin{aligned}
 \sum_{\mathbf w\in J_{\Pi_\theta}}U_{\theta,\mathbf w}
   &=|\mathcal K_{\zeta,\theta}|,&
 \sum_{\mathbf w\in J_{\Pi_\theta}}T_{\theta,\mathbf w}
   &=|\mathcal T_{\zeta,\theta}|,\\
 \sum_{\mathbf w\in J_{\Pi_\theta}}\widehat U_{\theta,\mathbf w}
   &=\frac{2^{d_{\zeta,\theta}}}{|\mathcal K_{\zeta,\theta}|},&
 \sum_{\mathbf w\in J_{\Pi_\theta}}\widehat T_{\theta,\mathbf w}
   &=\frac{2^{d_{\zeta,\theta}}}{|\mathcal T_{\zeta,\theta}|}.
 \end{aligned}
 \label{eq:lp-cardinalities}
\end{equation}
Furthermore, for every integer $D\geq1$,
\begin{equation}
 d_{\mathrm{circ},\zeta}(\theta)\geq D
 \quad\Longleftrightarrow\quad
 U_{\theta,\mathbf w}=T_{\theta,\mathbf w}
 \quad\text{for all }0<|\mathbf w|<D.
 \label{eq:we-distance-equalities}
\end{equation}
\end{proposition}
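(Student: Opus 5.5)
The plan is to derive each constraint directly from the definitions of $\mathcal K_{\zeta,\theta}$, $\mathcal T_{\zeta,\theta}$ and their character duals, using \Cref{lem:we-dual-row-spaces} for the cardinalities.

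\emph{Nonnegativity, integrality and the capacity bounds.} Every enumerator counts elements of a finite set, so it is a nonnegative integer. The number of vectors $e\in\mathcal E_\theta^{(\zeta)}$ with block profile $\mathbf w$ is $\Gamma_{\theta,\mathbf w}$: choose $w_b$ of the $N_{\theta,b}$ locations in each block, and then one of the three nonzero labels in $\mathbb F_2^2$ at each chosen location. This gives all four upper bounds. The inner inequalities come from nesting of the subspaces. Since $M_{\zeta,\theta}$ contains the rows of $H_{\zeta,\theta}$, we have $\mathcal T_{\zeta,\theta}\subseteq\mathcal K_{\zeta,\theta}$, hence $T_{\theta,\mathbf w}\le U_{\theta,\mathbf w}$. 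Dualizing reverses the inclusion, so $\mathcal K^{\perp}_{\zeta,\theta}\subseteq\mathcal T^{\perp}_{\zeta,\theta}$ and $\widehat U_{\theta,\mathbf w}\le\widehat T_{\theta,\mathbf w}$. This is also visible from \eqref{eq:we-dual-inclusion}, because $\operatorname{row}H\subseteq\operatorname{row}M$.

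\emph{Zero profile and the sum constraints.} The only vector of profile $\mathbf0$ is $e=0$, and it lies in every subspace and every dual. This gives \eqref{eq:lp-zero}. Summing an enumerator over all $\mathbf w\in J_{\Pi_\theta}$ counts every element of the corresponding set exactly once, because every vector has exactly one profile. The sums therefore equal $|\mathcal K_{\zeta,\theta}|$, $|\mathcal T_{\zeta,\theta}|$ and the dual cardinalities, which are given by \eqref{eq:we-dual-cardinalities}.

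\emph{Distance equivalence.} For each $\mathbf w$, the difference $U_{\theta,\mathbf w}-T_{\theta,\mathbf w}$ counts $e\in\mathcal V_\theta^{(\zeta)}$ with $H_{\zeta,\theta}e^{\mathsf T}=0$, $L_{\zeta,\theta}e^{\mathsf T}\neq0$ and profile $\mathbf w$. Here we use $\mathcal T\subseteq\mathcal K$ together with the fact that $M e^{\mathsf T}=0$ is equivalent to $He^{\mathsf T}=0$ and $Le^{\mathsf T}=0$. The spacetime weight satisfies $\operatorname{wt}_{\mathrm{ST}}(e)=|\operatorname{wt}_{\Pi_\theta}(e)|$, since the blocks partition $\Lambda_\theta^{(\zeta)}$.

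By \Cref{def:ft-circuit-distance}, the condition $d_{\mathrm{circ},\zeta}(\theta)\ge D$ says that no harmful undetected fault in $\mathcal V_\theta^{(\zeta)}$ has weight below $D$. This is equivalent to $U_{\theta,\mathbf w}=T_{\theta,\mathbf w}$ for all $|\mathbf w|<D$. The profile $\mathbf w=\mathbf0$ can be excluded: $e=0$ has $L e^{\mathsf T}=0$, and the two counts agree there by \eqref{eq:lp-zero}.

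No step is a genuine obstacle. The only point needing care is the identification of spacetime weight with $|\mathbf w|$, which holds because the blocks cover $\Lambda_\theta^{(\zeta)}$ disjointly.
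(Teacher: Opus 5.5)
Your proposal is correct and follows the paper's proof essentially step by step: the nesting $\mathcal T_{\zeta,\theta}\subseteq\mathcal K_{\zeta,\theta}$ and its reversal under duality, the count of locations and nonzero labels giving $\Gamma_{\theta,\mathbf w}$, the zero vector, summation over profiles combined with \Cref{lem:we-dual-row-spaces}, and reading $U_{\theta,\mathbf w}-T_{\theta,\mathbf w}$ as the count of harmful undetected faults. Your explicit remark that $\operatorname{wt}_{\mathrm{ST}}(e)=|\operatorname{wt}_{\Pi_\theta}(e)|$, because the blocks partition $\Lambda_\theta^{(\zeta)}$, spells out a step the paper leaves implicit.
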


\begin{proof}
The inclusions
\[
 \mathcal T_{\zeta,\theta}\subseteq\mathcal K_{\zeta,\theta},
 \qquad
 \mathcal K_{\zeta,\theta}^{\perp}
 \subseteq\mathcal T_{\zeta,\theta}^{\perp}
\]
give the inequalities between the distributions.
There are exactly $\Gamma_{\theta,\mathbf w}$ labels
of profile $\mathbf w$ in $\mathcal E_\theta^{(\zeta)}$, namely,
in each block, choose $w_b$ locations and one of three
nonzero labels at each chosen location.
This proves \Cref{eq:lp-capacity}.
Each subspace contains the zero vector, the unique label
of zero weight, proving \Cref{eq:lp-zero}.
Summing over profiles counts all elements of each
subspace; \Cref{lem:we-dual-row-spaces} then gives
\Cref{eq:lp-cardinalities}.

By \Cref{eq:we-fault-subspaces,eq:we-nested-distributions},
$U_{\theta,\mathbf w}-T_{\theta,\mathbf w}$ counts
the faults in
$\mathcal K_{\zeta,\theta}\setminus\mathcal T_{\zeta,\theta}$
with profile $\mathbf w$.
The signature characterization of circuit distance
in \Cref{def:ft-circuit-distance} directly gives \Cref{eq:we-distance-equalities}.
\end{proof}

\begin{proposition}[Stabilizer constraints at the boundaries]
\label{sec:lp-boundary-constraints}
\label{prop:lp-boundary-enumerators}
Let $V_{\mathrm{code}}$, $V_{\mathrm{out},\theta}$, and
$\widetilde V_{\mathrm{out}}$ be the output-stabilizer
label spaces of the encoder, unconditioned circuit, and
code-conditioned circuit, respectively
(\Cref{Cor: Logical dim Pres,prop:outcome-independent-stabilizer-relations}).
Define
\[
\resizebox{\linewidth}{!}{$\displaystyle
 \begin{aligned}
 A_j^{\mathrm{in}}
   &\triangleq
   |\{p\in V_{\mathrm{code}}:\operatorname{wt}P(p)=j\}|,&
 B_j^{\mathrm{in}}
   &\triangleq
   |\{p\in V_{\mathrm{code}}^{\perp_s}:
                         \operatorname{wt}P(p)=j\}|,\\
 A_{\theta,j}^{\mathrm{out}}
   &\triangleq
   |\{p\in V_{\mathrm{out},\theta}:
                         \operatorname{wt}P(p)=j\}|,&
 B_{\theta,j}^{\mathrm{out}}
   &\triangleq
   |\{p\in V_{\mathrm{out},\theta}^{\perp_s}:
                         \operatorname{wt}P(p)=j\}|,\\
 \widetilde A_j^{\mathrm{out}}
   &\triangleq
   |\{p\in\widetilde V_{\mathrm{out}}:
                         \operatorname{wt}P(p)=j\}|,&
 \widetilde B_j^{\mathrm{out}}
   &\triangleq
   |\{p\in\widetilde V_{\mathrm{out}}^{\perp_s}:
                         \operatorname{wt}P(p)=j\}|.
 \end{aligned}
$}
\]
for $0\leq j\leq n_{\mathrm{in}}$ () in the input pair $(A^{\mathrm{in}}_j,B^{\mathrm{in}}_j)$
and $0\leq j\leq n_{\mathrm{out}}$ in the output pairs $(A^{\mathrm{out}}_{\theta,j},B^{\mathrm{out}}_{\theta,j})$;$(\widetilde A^{\mathrm{out}}_j,\widetilde B^{\mathrm{out}}_j)$. Each pair $(A_j,B_j)$ satisfies
\begin{equation}
 \begin{gathered}
 B_j=2^{-r}\sum_{i=0}^{n}
       A_i\mathbb K_j^{(4)}(i;n)
       \qquad(0\leq j\leq n),\\
 A_0=B_0=1,\qquad
 0\leq A_j\leq B_j\leq3^j\binom nj
       \qquad(0\leq j\leq n),\\
 \sum_{j=0}^{n}A_j=2^r,\qquad
 \sum_{j=0}^{n}B_j=2^{2n-r},
 \end{gathered}
 \label{eq:lp-output-macwilliams}
\end{equation}
where $r$ is the dimension of the corresponding stabilizer
space, and $n=n_{\mathrm{in}}$ for the input pair or
$n=n_{\mathrm{out}}$ for either output pair.
If the corresponding boundary code has distance at least
$d_0$, then also $A_j=B_j$ for $0<j<d_0$.
\end{proposition}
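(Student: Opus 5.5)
The plan is to treat each boundary pair $(A_j,B_j)$ uniformly. Let $V\subseteq\mathbb F_2^{2n}$ be an isotropic subspace of dimension $r$; this covers $V_{\mathrm{code}}$, $V_{\mathrm{out},\theta}$, and $\widetilde V_{\mathrm{out}}$. Isotropy holds by \Cref{Cor: Logical dim Pres} and \Cref{prop:outcome-independent-stabilizer-relations}, since these are label spaces of abelian stabilizer groups excluding $-I$. The only structural inputs are that $V\subseteq V^{\perp_s}$ and $\dim V^{\perp_s}=2n-r$. The latter holds by nondegeneracy of $\Omega_n$, as recorded in the preliminaries. Then $A_j$ and $B_j$ are the single-qubit-weight distributions of $V$ and $V^{\perp_s}$.

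For the MacWilliams identity, I would reduce the symplectic dual to an ordinary binary dual. Since $V^{\perp_s}=\{u: u\Omega_n v^{\mathsf T}=0\ \forall v\in V\}=(V\Omega_n)^{\perp}$, and $\Omega_n$ merely swaps the $x$ and $z$ blocks, the map $v\mapsto v\Omega_n$ sends each qubit label $(x_j,z_j)$ to $(z_j,x_j)$. This map preserves whether the qubit is nonzero, so $V\Omega_n$ has the same weight distribution $A_j$ as $V$. Next, I would rerun the character argument of \Cref{prop:we-signature-macwilliams} with a single block of $N=n$ locations, each location being one qubit with label in $\mathbb F_2^2$. Replacing $\mathcal K_{\zeta,\theta}$ by $V\Omega_n$ and its dual by $(V\Omega_n)^\perp=V^{\perp_s}$, the same computation (character orthogonality, the local sum $3$ or $-1$, and the binomial expansion into $\mathbb K^{(4)}_j$) gives $B_j=|V|^{-1}\sum_i A_i\mathbb K^{(4)}_j(i;n)$ with $|V|=2^r$.

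The remaining constraints are routine. $A_0=B_0=1$ because the zero label is the unique weight-zero label and lies in every subspace. The inequality $A_j\le B_j$ follows from isotropy, $V\subseteq V^{\perp_s}$. The bound $B_j\le 3^j\binom nj$ counts all labels of weight $j$. The sums equal $|V|=2^r$ and $|V^{\perp_s}|=2^{2n-r}$. For the distance clause, I would interpret boundary-code distance $d_0$ in the stabilizer sense: every element of $V^{\perp_s}\setminus V$ has weight at least $d_0$. Then for $0<j<d_0$ every weight-$j$ element of $V^{\perp_s}$ lies in $V$, giving $A_j=B_j$.

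The main point needing care is the identification of each boundary label space as isotropic of the stated dimension $r$. For $\widetilde V_{\mathrm{out}}$ this is the code-conditioned version, which I would justify by applying \Cref{prop:outcome-independent-stabilizer-relations} to $\widetilde{\mathbf C}$, as permitted by \Cref{rem:code-conditioning-common-relations}. A second point is the precise sense of distance. For a degenerate $k=0$ boundary, $V^{\perp_s}=V$ and the equality $A_j=B_j$ is automatic, so the clause holds trivially there.
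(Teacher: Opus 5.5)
Your proposal is correct and follows the same route as the paper: each boundary label space is isotropic, and the standard stabilizer weight-enumerator constraints then apply. The paper only cites those constraints, whereas you make them self-contained by identifying $V^{\perp_s}$ with the ordinary dual of $V\Omega_n$ (a weight-preserving $x\leftrightarrow z$ swap) and rerunning the character computation of the split MacWilliams identities with a single block.
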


\begin{proof}
Each boundary space is isotropic, so the standard
stabilizer weight-enumerator constraints apply
\cite{calderbank1998quantum,Rains1998QuantumWeightEnumerators}.
\end{proof}

\begin{proposition}[Boundary constraints and Singleton bound]
\label{prop:lp-input-boundary}
\label{prop:lp-final-output}
\label{cor:boundary-singleton}
Let $\alpha\in\{\mathrm{in},\mathrm{out}\}$ denote the
input or output boundary, immediately after the encoder
or the final operation, respectively.
Suppose all data-qubit locations at boundary $\alpha$
form a block $I_{\theta,\alpha}\in\Pi_\theta$, and every
Pauli pattern supported on that block belongs to
$\mathcal V_\theta^{(\zeta)}$.
Let $\iota_{\theta,\alpha}(j)$ denote the weight profile
with weight $j$ in block $I_{\theta,\alpha}$ and zero elsewhere. Using the boundary distributions of
\Cref{prop:lp-boundary-enumerators}, the input case
$\alpha=\mathrm{in}$ gives
\begin{equation}
 T_{\theta,\iota_{\theta,\mathrm{in}}(j)}
   \geq A_j^{\mathrm{in}},\qquad
 U_{\theta,\iota_{\theta,\mathrm{in}}(j)}
   \geq B_j^{\mathrm{in}}
 \quad(0\leq j\leq n_{\mathrm{in}}).
 \label{eq:lp-input-coupling}
\end{equation}
The output case $\alpha=\mathrm{out}$ gives, for
$0\leq j\leq n_{\mathrm{out}}$,
\begin{equation}
 \begin{aligned}
 T_{\theta,\iota_{\theta,\mathrm{out}}(j)}
   &=\widetilde A_j^{\mathrm{out}}
     \geq A_{\theta,j}^{\mathrm{out}},\\
 U_{\theta,\iota_{\theta,\mathrm{out}}(j)}
   &=\widetilde B_j^{\mathrm{out}}
     \leq B_{\theta,j}^{\mathrm{out}}.
 \end{aligned}
 \label{eq:lp-output-coupling}
\end{equation}
If all final-data Pauli fault patterns belong to
$\mathcal V_\theta^{(\zeta)}$ and
$k_{\mathrm{pres}}
=n_{\mathrm{out}}-\dim\widetilde V_{\mathrm{out}}>0$,
then, for any partition $\Pi_\theta$,
\begin{equation}
 n_{\mathrm{out}}-k_{\mathrm{pres}}
 \geq2\bigl(d_{\mathrm{circ},\zeta}(\theta)-1\bigr).
 \label{eq:boundary-singleton}
\end{equation}
\end{proposition}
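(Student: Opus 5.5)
The plan is to restrict attention to fault patterns supported entirely on the boundary block $I_{\theta,\alpha}$ and compute their signatures $\chi_\zeta(e,\cdot)$ explicitly. A pattern $e$ supported on $I_{\theta,\alpha}$ is the same thing as a Pauli label $p$ on the $n_\alpha$ boundary data qubits. Its profile is $\iota_{\theta,\alpha}(\operatorname{wt}P(p))$, and by hypothesis it lies in $\mathcal V^{(\zeta)}_\theta$. The map $p\mapsto e$ is injective and weight preserving. Each claimed inequality or equality will therefore follow once we identify the sets of such $e$ lying in $\mathcal K_{\zeta,\theta}$ and $\mathcal T_{\zeta,\theta}$ with the appropriate boundary label spaces.

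\emph{Input boundary.} The strategy is to show two inclusions. First, if $p\in V_{\mathrm{code}}$, then $e\in\mathcal T_{\zeta,\theta}$: we have $P(p)C_{\mathrm{enc}}=\pm C_{\mathrm{enc}}$, so $F_{e,o}=\pm\widetilde C_o$, and $\mathcal N_e=\mathcal N_0$. By \Cref{lem:ft-signature-completeness}, $\chi_\zeta(e,\cdot)=0$, hence $M_{\zeta,\theta}e^{\mathsf T}=0$. Second, if $p\in V_{\mathrm{code}}^{\perp_s}$, then $e\in\mathcal K_{\zeta,\theta}$. For $\omega\in\mathcal W_{\mathrm{out}}$ the input logical label vanishes, so the matched label on $Q_0$ is $\nu=[0,0,z_{A_{\mathrm{enc}}},0]F_{\mathrm{enc}}$. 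This is the form in \Cref{prop: Code conditioned stabilizers L0}, \Cref{prop: code-conditioned stabilizers level 1}, and it lies in $V_{\mathrm{code}}$. Propagating as in \Cref{prop:ft-level0-signature,prop:ft-level1-signature} gives $\chi_\zeta(e,\omega)=\langle p,\nu\rangle_s=0$. Counting by weight $j$ then yields \eqref{eq:lp-input-coupling}.

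\emph{Output boundary.} A final-data fault $P(p')$ contributes $\chi_\zeta(e,\omega)=\langle p',p'_\omega\rangle_s$, with no record term. By \eqref{eq:common-endpoint-identifications}, applied to the code-conditioned circuit, the output labels of $\mathcal W_{\mathrm{out}}$ range exactly over $\widetilde V_{\mathrm{out}}$, since detectors contribute $p'=0$. Hence $H_{\zeta,\theta}e^{\mathsf T}=0$ if and only if $p'\in\widetilde V_{\mathrm{out}}^{\perp_s}$. The output labels of all of $\mathcal W^{(\zeta)}$ form $\pi_{\mathrm{out}}(V)$. The proof of \Cref{Cor: Logical dim Pres} shows $\pi_{\mathrm{out}}(V)=\widetilde V_{\mathrm{out}}^{\perp_s}$, so $M_{\zeta,\theta}e^{\mathsf T}=0$ if and only if $p'\in\widetilde V_{\mathrm{out}}$. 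These two characterizations give the equalities in \eqref{eq:lp-output-coupling}. For the inequalities, \Cref{rem:code-conditioning-common-relations} shows that conditioning only enlarges the common relations, so $V_{\mathrm{out},\theta}\subseteq\widetilde V_{\mathrm{out}}$. Taking symplectic duals gives $\widetilde V_{\mathrm{out}}^{\perp_s}\subseteq V_{\mathrm{out},\theta}^{\perp_s}$, and comparing the weight-$j$ counts of these nested spaces yields both inequalities.

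\emph{Singleton bound.} By the output characterization, every $p'\in\widetilde V_{\mathrm{out}}^{\perp_s}\setminus\widetilde V_{\mathrm{out}}$ is an allowed fault with zero syndrome and nonzero logical effect. Consequently $d_{\mathrm{circ},\zeta}(\theta)$ is at most the distance of the output stabilizer code $\widetilde V_{\mathrm{out}}$. That code is $[[n_{\mathrm{out}},k_{\mathrm{pres}}]]$, and $k_{\mathrm{pres}}>0$ ensures such a $p'$ exists. The quantum Singleton bound then gives $n_{\mathrm{out}}-k_{\mathrm{pres}}\ge2(d-1)$, and hence \eqref{eq:boundary-singleton}. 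Since \Cref{def:ft-circuit-distance} does not involve $\Pi_\theta$, the bound holds for any partition.

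The step I expect to be the main obstacle is the projection identity $\pi_{\mathrm{out}}(V)=\widetilde V_{\mathrm{out}}^{\perp_s}$, together with the claim that boundary faults pick up no extra record or sign contributions in $\chi_\zeta$. I would verify both directly from the two signature corollaries.
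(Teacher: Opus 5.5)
Your proposal is correct and follows the same route as the paper's proof. Input labels in $V_{\mathrm{code}}$ are channel-trivial and those in $V_{\mathrm{code}}^{\perp_s}$ preserve every output check; output faults are undetected exactly on $\widetilde V_{\mathrm{out}}^{\perp_s}$ and channel-trivial exactly on $\widetilde V_{\mathrm{out}}$ (via $\operatorname{im}\pi_{\mathrm{out}}=\widetilde V_{\mathrm{out}}^{\perp_s}$); the inclusion $V_{\mathrm{out},\theta}\subseteq\widetilde V_{\mathrm{out}}$ gives the inequalities; and the quantum Singleton bound is applied to $d_{\mathrm{out}}\geq d_{\mathrm{circ},\zeta}(\theta)$.

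The differences are presentational. You verify input syndrome-triviality through the witness $\nu\in V_{\mathrm{code}}$ in the level-$0$/level-$1$ signature formulas, whereas the paper uses that such labels act as logical input Paulis, an argument that does not depend on those explicit circuit forms. Also, $V_{\mathrm{out},\theta}\subseteq\widetilde V_{\mathrm{out}}$ follows most directly by right-multiplying an unconditioned output relation by $C_{\mathrm{enc}}$; the code-conditioning remark you cite only treats detectors and observables.
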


\begin{proof}
Input labels in $V_{\mathrm{code}}$ are channel-trivial.
Those in $V_{\mathrm{code}}^{\perp_s}$ act as logical
input Paulis and preserve every ideal output check.
Counting these faults proves \Cref{eq:lp-input-coupling}.

By \Cref{Cor: Logical dim Pres,prop:outcome-independent-stabilizer-relations},
an output Pauli fault is undetected precisely when its
label lies in $\widetilde V_{\mathrm{out}}^{\perp_s}$,
and channel-trivial precisely when its label lies in
$\widetilde V_{\mathrm{out}}$.
Counting gives the output equalities.
Composing unconditioned output relations with the encoder
gives $V_{\mathrm{out},\theta}\subseteq\widetilde V_{\mathrm{out}}$;
this inclusion and its reversed symplectic-dual inclusion
give the output inequalities. Consequently,
\begin{equation}
 d_{\mathrm{circ},\zeta}(\theta)\leq d_{\mathrm{out}}
 \triangleq
 \min_{p\in\widetilde V_{\mathrm{out}}^{\perp_s}
                  \setminus\widetilde V_{\mathrm{out}}}
       \operatorname{wt}P(p),
 \label{eq:lp-output-distance}
\end{equation}
with $\min\varnothing=\infty$, whenever all final-data
faults are permitted.
For $k_{\mathrm{pres}}>0$, the quantum Singleton bound
\cite[Sec.~7.1]{gottesman1997stabilizer} gives
\[
 n_{\mathrm{out}}-k_{\mathrm{pres}}
 \geq2(d_{\mathrm{out}}-1)
 \geq2\bigl(d_{\mathrm{circ},\zeta}(\theta)-1\bigr).
\]
\end{proof}

\begin{remark}
    For a given circuit family, the constraints in \Cref{cor:boundary-singleton} are applicable only under the stated assumptions therein.
\end{remark}
\subsection{Linear programming bounds}
\label{sec:linear-programming-bounds}
We now use the constraints derived in the earlier section to obtain LP upper bounds on circuit
distance and pseudo-threshold, and lower bounds on undetected logical
failure probability.

\subsubsection{Distance bounds}
\label{sec:lp-distance}
We first group circuits whose enumerator constraints have the same
numerical data. We formalize this below.

\begin{definition}[Circuit classes]
\label{def:lp-circuit-classes}
Partition the parameter set into finitely many nonempty classes,
\begin{equation}
 \Theta=\bigsqcup_{c\in\mathfrak C}\Theta_c,
 \label{eq:lp-class-partition}
\end{equation}
with common block sizes and cardinalities within each class, namely,
\begin{equation}
 \begin{gathered}
 (N_{\theta,1},\ldots,N_{\theta,B_\theta})
   =(N_{c,1},\ldots,N_{c,B_c}),\\
 |\mathcal K_{\zeta,\theta}|=\kappa_c,\qquad
 |\mathcal T_{\zeta,\theta}|=\tau_c
 \qquad(\theta\in\Theta_c).
 \end{gathered}
 \label{eq:lp-class-data}
\end{equation}
The applicability, block indices, and numerical data of the
boundary constraints used are also common within each class.
Such a finite partition exists because
$N_\theta\leq N_{\max}^{\Theta}$ and the boundary registers
are fixed. Write $N_c=\sum_{b=1}^{B_c}N_{c,b}$ and denote the corresponding
profile set, capacities, and Krawtchouk coefficients by
$J_c$, $\Gamma_{c,\mathbf w}$, and
$\mathbb K_{\mathbf j}^{c}(\mathbf w)$ respectively.
\end{definition}

\begin{definition}[Spacetime linear program]
\label{def:lp-spacetime}
For an integer $D\geq1$, we define the \emph{spacetime distance linear program} as the feasibility LP
$\mathrm{STLP}_\Theta(D)$ which has four real arrays
\[
 U^{(c)}_{\mathbf w},\quad T^{(c)}_{\mathbf w},\quad
 \widehat U^{(c)}_{\mathbf w},\quad
 \widehat T^{(c)}_{\mathbf w}
 \qquad(c\in\mathfrak C,\ \mathbf w\in J_c),
\]
and weights
\begin{equation}
 \eta_c\geq0,\qquad
 \sum_{c\in\mathfrak C}\eta_c=1.
 \label{eq:lp-class-weights}
\end{equation}
For each class $c$ in \Cref{def:lp-circuit-classes}, impose the MacWilliams identities
\Cref{eq:we-split-macwilliams} using its class data, and
the distance constraints
\[
 U^{(c)}_{\mathbf w}=T^{(c)}_{\mathbf w}
 \qquad(\mathbf w\in J_c,\ 0<|\mathbf w|<D).
\]
These equalities retain their form. The counting constraints
\Cref{eq:lp-capacity,eq:lp-zero,eq:lp-cardinalities}
become
\begin{equation}
 \begin{gathered}
 U^{(c)}_{\mathbf0}=T^{(c)}_{\mathbf0}
 =\widehat U^{(c)}_{\mathbf0}
 =\widehat T^{(c)}_{\mathbf0}=\eta_c,\\
 0\leq T^{(c)}_{\mathbf w}\leq U^{(c)}_{\mathbf w}
       \leq\eta_c\Gamma_{c,\mathbf w},\\
 0\leq\widehat U^{(c)}_{\mathbf w}
       \leq\widehat T^{(c)}_{\mathbf w}
       \leq\eta_c\Gamma_{c,\mathbf w}
       \qquad(\mathbf w\in J_c),\\
 \begin{aligned}
 \sum_{\mathbf w\in J_c}U^{(c)}_{\mathbf w}
   &=\eta_c\kappa_c,&
 \sum_{\mathbf w\in J_c}T^{(c)}_{\mathbf w}
   &=\eta_c\tau_c,\\
 \sum_{\mathbf w\in J_c}\widehat U^{(c)}_{\mathbf w}
   &=\eta_c\frac{4^{N_c}}{\kappa_c},&
 \sum_{\mathbf w\in J_c}\widehat T^{(c)}_{\mathbf w}
   &=\eta_c\frac{4^{N_c}}{\tau_c}.
 \end{aligned}
 \end{gathered}
 \label{eq:lp-scaled-constraints}
\end{equation}
Include the boundary constraints
\Cref{eq:lp-input-coupling,eq:lp-output-coupling}
when applicable, multiplying every boundary count on their
right-hand sides by $\eta_c$. The boundary data satisfy
\Cref{eq:lp-output-macwilliams} and any prescribed boundary-code
distance. When \Cref{eq:boundary-singleton} applies, also impose
\[
 \eta_c(n_{\mathrm{out}}-k_{\mathrm{pres}})
 \geq2\eta_c(D-1).
\]
\end{definition}

\begin{theorem}[LP distance bound]
\label{thm:lp-obstruction}
Every circuit with $d_{\mathrm{circ},\zeta}(\theta)\geq D$
gives a feasible point of $\mathrm{STLP}_\Theta(D)$.
Consequently,
\begin{equation}
 \begin{gathered}
 \mathrm{STLP}_\Theta(D)\text{ infeasible}\\
 \Longrightarrow\quad
 d_{\mathrm{circ},\zeta}(\theta)\leq D-1
 \quad\text{for every }\theta\in\Theta.
 \end{gathered}
 \label{eq:lp-obstruction}
\end{equation}
Define
\begin{equation}
 \begin{aligned}
 d_{\mathrm{LP}}\triangleq\min\bigl\{D-1:\;&
 D\in\{1,\ldots,N_{\max}^{\Theta}+1\},\\
 &\mathrm{STLP}_\Theta(D)\text{ infeasible}\bigr\},
 \end{aligned}
 \label{eq:lp-distance-bound}
\end{equation}
with $\min\varnothing=\infty$. Then
$d_{\mathrm{circ},\zeta}(\theta)\leq d_{\mathrm{LP}}$
for every family member. The bound is finite whenever
$\kappa_c>\tau_c$ in every class.
\end{theorem}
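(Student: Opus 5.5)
The plan is to build, from any circuit $\mathbf C_{\theta^\ast}$ with $d_{\mathrm{circ},\zeta}(\theta^\ast)\geq D$, an explicit point of $\mathrm{STLP}_\Theta(D)$. Let $c^\ast$ be the class of $\theta^\ast$ (\Cref{def:lp-circuit-classes}). Put $\eta_{c^\ast}=1$ and $\eta_c=0$ for $c\neq c^\ast$. For $c^\ast$ we take the arrays to be the actual enumerators $U_{\theta^\ast,\mathbf w},T_{\theta^\ast,\mathbf w},\widehat U_{\theta^\ast,\mathbf w},\widehat T_{\theta^\ast,\mathbf w}$; all arrays of the other classes are set to zero.

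Next we check the constraints class by class. For $c^\ast$:
\begin{itemize}
\item The MacWilliams identities hold by \Cref{prop:we-signature-macwilliams}, using the class data $\kappa_{c^\ast}=|\mathcal K_{\zeta,\theta^\ast}|$ and $\tau_{c^\ast}=|\mathcal T_{\zeta,\theta^\ast}|$.
\item The scaled counting constraints \eqref{eq:lp-scaled-constraints} reduce, at $\eta=1$, to \Cref{eq:lp-capacity,eq:lp-zero,eq:lp-cardinalities}. Here $4^{N_c}=2^{d_{\zeta,\theta}}$.
\item The distance equalities follow from \eqref{eq:we-distance-equalities}.
\item The boundary couplings hold by \Cref{prop:lp-boundary-enumerators,prop:lp-input-boundary} whenever they are imposed, since their applicability is common to the class. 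For the Singleton inequality, \eqref{eq:boundary-singleton} gives $n_{\mathrm{out}}-k_{\mathrm{pres}}\geq 2(d_{\mathrm{circ},\zeta}-1)\geq 2(D-1)$. If $d_{\mathrm{circ},\zeta}=\infty$ while $k_{\mathrm{pres}}>0$, the output distance $d_{\mathrm{out}}$ is finite and bounds $d_{\mathrm{circ},\zeta}$, so that case cannot occur.
\end{itemize}
For $c\neq c^\ast$, every constraint is homogeneous in $(\eta_c,\text{arrays})$, because all right-hand sides carry the factor $\eta_c$. The MacWilliams identities are linear and homogeneous too, so the all-zero assignment satisfies them.

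The implication \eqref{eq:lp-obstruction} then follows by contraposition. For $d_{\mathrm{LP}}$, suppose $\mathrm{STLP}_\Theta(D)$ is infeasible for some $D$. Every $\theta$ then has $d_{\mathrm{circ},\zeta}(\theta)\leq D-1$, and taking the minimal such $D$ gives $d_{\mathrm{circ},\zeta}\leq d_{\mathrm{LP}}$. If no such $D$ exists, then $d_{\mathrm{LP}}=\infty$ and the bound is trivial.

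The main point needing care is finiteness. Assume $\kappa_c>\tau_c$ for every class, and take $D=N_{\max}^\Theta+1$. Every profile with $|\mathbf w|>0$ satisfies $|\mathbf w|\leq N_c<D$, so the distance constraints force $U^{(c)}_{\mathbf w}=T^{(c)}_{\mathbf w}$ for all such $\mathbf w$. The zero-profile constraint gives $U^{(c)}_{\mathbf 0}=T^{(c)}_{\mathbf 0}=\eta_c$. Summing over profiles yields $\eta_c\kappa_c=\eta_c\tau_c$, hence $\eta_c=0$ for every class. This contradicts $\sum_c\eta_c=1$. So $\mathrm{STLP}_\Theta(N_{\max}^\Theta+1)$ is infeasible, and $d_{\mathrm{LP}}\leq N^\Theta_{\max}$ is finite.
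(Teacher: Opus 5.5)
Your proposal is correct and follows the paper's proof essentially step for step. You embed a circuit with $d_{\mathrm{circ},\zeta}\geq D$ as a point with $\eta_c=1$ on its class and zeros elsewhere, conclude by contraposition, and obtain finiteness by showing that $\mathrm{STLP}_\Theta(N_{\max}^{\Theta}+1)$ forces $\eta_c(\kappa_c-\tau_c)=0$ in every class. Your extra checks are sound refinements of the same argument: the homogeneity of the zeroed classes, the zero-profile equality $U^{(c)}_{\mathbf 0}=T^{(c)}_{\mathbf 0}=\eta_c$ (which the paper's ``every profile'' glosses over), and the infinite-distance case of the Singleton constraint.
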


\begin{proof}
For a circuit $\theta\in\Theta_c$ of distance at least $D$,
set $\eta_c=1$ and assign its four enumerators to the
class-$c$ arrays. Set all other weights and arrays to zero.
By
\Cref{prop:we-signature-macwilliams,prop:we-enumerator-constraints,prop:lp-input-boundary},
these assignments satisfy every applicable constraint.
Thus infeasibility excludes every circuit of distance
at least $D$, proving the bounds.

At $D=N_{\max}^{\Theta}+1$, the distance constraints
give $U^{(c)}_{\mathbf w}=T^{(c)}_{\mathbf w}$ for every
profile. Summing and using the cardinality constraints yields
$\eta_c(\kappa_c-\tau_c)=0$ for every class.
If $\kappa_c>\tau_c$ throughout for every $c$, all weights vanish,
contradicting $\sum_{c\in\mathfrak C}\eta_c=1$.
\end{proof}

\subsubsection{Performance bounds}
\label{sec:iid-performance}
Circuit distance identifies the smallest harmful undetected fault.
Performance under a specified noise model also depends on the number
of fault patterns of each weight and signature. We use the enumerators
to bound undetected logical failure probability and pseudo-threshold
throughout the family. In this subsection, all Pauli fault patterns on
$\Lambda_\theta^{(\zeta)}$ are permitted, so
$\mathcal V_\theta^{(\zeta)}=\mathcal E_\theta^{(\zeta)}$
for every $\theta\in\Theta$. Additionally, in what follows, we present our results for i.i.d. depolarizing noise and leave the generalization to stochastic noise model for future work.

\begin{definition}[i.i.d. depolarizing noise]
\label{def:iid-noise}
For $0\leq p\leq1$, the \emph{i.i.d. depolarizing noise model}
$\mathbf E_\theta(p)$ is the special case of
\Cref{def:stochastic-pauli-noise} with probabilities
\begin{equation}
 \mu_{\theta,e}(p)
 =\pi_{\theta,\mathbf w}(p)
 \triangleq
 (1-p)^{N_\theta-|\mathbf w|}(p/3)^{|\mathbf w|},
 \label{eq:iid-uniform-probability}
\end{equation}
where $e\in\mathcal E_\theta^{(\zeta)}$ and
$\mathbf w=\operatorname{wt}_{\Pi_\theta}(e)$.
Thus the Pauli labels at the locations in
$\Lambda_\theta^{(\zeta)}$ are independent, each assigning
probability $1-p$ to $I$ and $p/3$ to each of $X,Y,Z$.
On a classical record, this gives bit-flip probability $2p/3$.
\end{definition}

\begin{definition}[Attainable signatures and deterministic decoder]
\label{def:iid-signatures}
\label{def:iid-decoder}
The attainable signature set is
\begin{equation}
 \Sigma_{\zeta,\theta}
 \triangleq
 \left\{
 \bigl(H_{\zeta,\theta}e^{\mathsf T},
       L_{\zeta,\theta}e^{\mathsf T}\bigr):
 e\in\mathcal E_\theta^{(\zeta)}
 \right\}.
 \label{eq:iid-attainable-signatures}
\end{equation}
A fixed deterministic syndrome decoder assigns
$\widehat{e}_{\zeta,\theta}(\sigma)\in\mathcal E_\theta^{(\zeta)}$
to every $\sigma$ for which
$(\sigma,\lambda)\in\Sigma_{\zeta,\theta}$ for some $\lambda$, with
\begin{equation}
 \begin{gathered}
 H_{\zeta,\theta}
 \widehat{e}_{\zeta,\theta}(\sigma)^{\mathsf T}
 =\sigma,\qquad
 \widehat{e}_{\zeta,\theta}(0)=0,\\
 \delta_\theta(\sigma)
 \triangleq
 L_{\zeta,\theta}
 \widehat{e}_{\zeta,\theta}(\sigma)^{\mathsf T}.
 \end{gathered}
 \label{eq:iid-decoder}
\end{equation}
It applies correction
$P(\beta_\theta^{(\zeta)}(\widehat{e}_{\zeta,\theta}(\sigma)))^\dagger$ at the circuit data-record output
as in \Cref{thm:ft-HL-criterion}, with noiseless
output-check measurements and recovery.
\end{definition}

\begin{definition}[Undetected logical failure probability]
\label{def:iid-performance-metrics}
\label{def:iid-undetected}
Under \Cref{def:iid-noise}, the unconditional probability
of a harmful undetected fault is
\begin{equation}
 P_{\mathrm{ud},\theta}^{(\zeta)}(p)
 \triangleq
 \left.
 \sum_{\substack{(0,\lambda)\in\Sigma_{\zeta,\theta}\\
                  \lambda\ne0}}
 W_{\theta;0,\lambda}^{(\zeta),\Pi_\theta}(u,v)
 \right|_{\substack{u_b=1-p\\v_b=p/3}}.
 \label{eq:iid-undetected-bound}
\end{equation}
Here and below, the substitutions apply to every
$b\in\mathbb Z^+_{B_\theta}$.
\end{definition}

\begin{definition}[Decoded failure probability]
\label{def:iid-decoded-failure}
Under \Cref{def:iid-noise}, the probability of a fault pattern for which, after the decoder in \Cref{def:iid-decoder} applies a correction, the resulting channel is different from $\mathcal{N}_0$ is
\begin{equation}
 P_{\mathrm{fail},\theta}^{(\zeta),\delta_\theta}(p)
 \triangleq
 \left.
 \sum_{\substack{(\sigma,\lambda)\in\Sigma_{\zeta,\theta}\\
                  \lambda\ne\delta_\theta(\sigma)}}
 W_{\theta;\sigma,\lambda}^{(\zeta),\Pi_\theta}
       (u,v)
 \right|_{\substack{u_b=1-p\\v_b=p/3}}.
 \label{eq:iid-failure-probability}
\end{equation}
\end{definition}

\begin{definition}[Pseudo-threshold]
\label{def:iid-pseudothreshold}
Fix $0<p_{\max}\leq1$. For an integer $j\geq0$,
define the pseudo-threshold relative to the target
$10^{-j}p$ by
\begin{equation}
 \begin{aligned}
 p_{\mathrm{ps},\theta}^{(\zeta),\delta_\theta}(j)
 \triangleq\sup\bigl\{x\in[0,p_{\max}]:{}&\\[-3pt]
 P_{\mathrm{fail},\theta}^{(\zeta),\delta_\theta}(p)
 \leq10^{-j}p
 \quad\text{for every }0<p<x&\bigr\}.
 \end{aligned}
 \label{eq:iid-pseudothreshold}
\end{equation}
The case $j=0$ gives the usual pseudo-threshold
\cite[Sec.~1.2]{ChamberlandBeverland2018Flags};
$j\geq1$ requires suppression by a factor of at least
$10^j$ relative to the physical error rate.
The value is zero if no positive interval qualifies.
\end{definition}

\begin{definition}[Circuit performance linear program]
\label{def:iid-performance-lp}
Fix integers $D\geq1$, $j\geq0$, and
$p\in[0,p_{\max}]$. Put
$\pi_{c,\mathbf w}(p)
=(1-p)^{N_c-|\mathbf w|}(p/3)^{|\mathbf w|}$ and define
\begin{equation}
 \underline P_\Theta(D,p)
 \triangleq
 \min_{\mathrm{STLP}_\Theta(D)}
 \sum_{c\in\mathfrak C}\sum_{\mathbf w\in J_c}
 (U^{(c)}_{\mathbf w}-T^{(c)}_{\mathbf w})
 \pi_{c,\mathbf w}(p),
 \label{eq:iid-lp-objective}
\end{equation}
with an empty minimum equal to $+\infty$.
The feasibility LP $\mathrm{PerfLP}_\Theta(D,p,j)$ consists
of $\mathrm{STLP}_\Theta(D)$ and the single additional constraint
\begin{equation}
 \sum_{c\in\mathfrak C}\sum_{\mathbf w\in J_c}
 (U^{(c)}_{\mathbf w}-T^{(c)}_{\mathbf w})
 \pi_{c,\mathbf w}(p)
 \leq10^{-j}p.
 \label{eq:iid-performance-lp}
\end{equation}
\end{definition}

\begin{theorem}[LP performance bounds]
\label{prop:iid-failure}
\label{prop:iid-performance-lp}
Fix integers $D\geq1$ and $j\geq0$.
Every $\theta\in\Theta$ with
$d_{\mathrm{circ},\zeta}(\theta)\geq D$ and every decoder
in \Cref{def:iid-decoder} satisfy, for $0\leq p\leq p_{\max}$,
\begin{equation}
 \underline P_\Theta(D,p)
 \leq P_{\mathrm{ud},\theta}^{(\zeta)}(p)
 \leq P_{\mathrm{fail},\theta}^{(\zeta),\delta_\theta}(p).
 \label{eq:iid-lp-lower-bound}
\end{equation}
Whenever
$P_{\mathrm{fail},\theta}^{(\zeta),\delta_\theta}(p)\leq10^{-j}p$,
the circuit family's enumerators give a feasible point of
$\mathrm{PerfLP}_\Theta(D,p,j)$.
Consequently,
\begin{equation}
 \begin{gathered}
 p_{\mathrm{ps},\theta}^{(\zeta),\delta_\theta}(j)
 \leq p_{\mathrm{ps}}^{\mathrm{LP}}(D,j),\\
 \begin{aligned}
 p_{\mathrm{ps}}^{\mathrm{LP}}(D,j)
 \triangleq\sup\bigl\{x\in[0,p_{\max}]:{}&\\[-3pt]
 \mathrm{PerfLP}_\Theta(D,p,j)\text{ is feasible}
 \quad\text{for every }0<p<x&\bigr\}.
 \end{aligned}
 \end{gathered}
 \label{eq:iid-lp-pseudothreshold}
\end{equation}
\end{theorem}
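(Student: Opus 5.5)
The plan is to prove the three assertions in order. First the two-sided inequality \eqref{eq:iid-lp-lower-bound}. Next the feasibility of $\mathrm{PerfLP}_\Theta(D,p,j)$ for any good circuit. Finally the pseudo-threshold comparison, which follows by taking suprema.

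\emph{Lower bound.} Fix $\theta\in\Theta_c$ with $d_{\mathrm{circ},\zeta}(\theta)\geq D$. By \Cref{thm:lp-obstruction}, setting $\eta_c=1$, assigning the actual enumerators $U_{\theta,\mathbf w},T_{\theta,\mathbf w},\widehat U_{\theta,\mathbf w},\widehat T_{\theta,\mathbf w}$ to the class-$c$ arrays, and zeroing all other classes gives a feasible point of $\mathrm{STLP}_\Theta(D)$. Evaluate the objective \eqref{eq:iid-lp-objective} at this point. Since $\mathcal V_\theta^{(\zeta)}=\mathcal E_\theta^{(\zeta)}$, we have $U_{\theta,\mathbf w}-T_{\theta,\mathbf w}=\sum_{\lambda\neq0}A^{(\zeta)}_{\theta,\mathbf w}(0,\lambda)$. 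Each fault of profile $\mathbf w$ has probability $\pi_{\theta,\mathbf w}(p)=\pi_{c,\mathbf w}(p)$, because the block sizes, and hence $N_\theta=N_c$, are common within a class. Substituting $u_b=1-p$, $v_b=p/3$ into \eqref{eq:we-signature-polynomial} gives exactly $\sum_{\mathbf w}A_{\theta,\mathbf w}(\sigma,\lambda)\pi_{\theta,\mathbf w}(p)$. So the objective value equals $P^{(\zeta)}_{\mathrm{ud},\theta}(p)$, and the minimum $\underline P_\Theta(D,p)$ is at most this value. Unattainable signatures contribute zero, so restricting to $\Sigma_{\zeta,\theta}$ is harmless.

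\emph{Upper bound.} For $P_{\mathrm{ud}}\leq P_{\mathrm{fail}}$, use $\widehat e_{\zeta,\theta}(0)=0$, so $\delta_\theta(0)=0$. Every pair $(0,\lambda)$ with $\lambda\neq0$ therefore satisfies $\lambda\neq\delta_\theta(0)$ and appears in \eqref{eq:iid-failure-probability}. All remaining terms are nonnegative polynomial evaluations for $p\in[0,1]$, giving the inequality.

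\emph{Feasibility of $\mathrm{PerfLP}$.} If $P_{\mathrm{fail},\theta}\leq10^{-j}p$, the same assignment as above satisfies \eqref{eq:iid-performance-lp}, since its left-hand side equals $P_{\mathrm{ud},\theta}(p)\leq P_{\mathrm{fail},\theta}(p)\leq10^{-j}p$. This point is therefore feasible for $\mathrm{PerfLP}_\Theta(D,p,j)$.

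\emph{Pseudo-threshold comparison.} Take any $x$ in the set defining $p^{(\zeta),\delta_\theta}_{\mathrm{ps},\theta}(j)$. For every $0<p<x$ the failure bound holds, so $\mathrm{PerfLP}_\Theta(D,p,j)$ is feasible for every such $p$. Hence $x$ lies in the set defining $p^{\mathrm{LP}}_{\mathrm{ps}}(D,j)$. Set inclusion gives the inequality of suprema, and the case where no positive interval qualifies is trivial.

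The main point needing care is the identification of the LP objective with $P_{\mathrm{ud}}$. This requires checking that the i.i.d.\ probability depends only on $|\mathbf w|$ and $N_c$, which are class data. It also requires that the zero-syndrome, nonzero-logical-effect faults are counted correctly, independent of the choice of $L_{\zeta,\theta}$ representatives. The remark after \Cref{def:ft-signature-matrices} shows that zero syndrome and $\lambda\neq0$ on $\ker H$ are basis-independent, so this causes no difficulty.
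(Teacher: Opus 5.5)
Your proposal is correct and follows the paper's proof step for step. The feasible point from \Cref{thm:lp-obstruction} has objective value $P_{\mathrm{ud},\theta}^{(\zeta)}(p)$, which gives the first inequality; $\delta_\theta(0)=0$ gives $P_{\mathrm{ud}}\leq P_{\mathrm{fail}}$; and the pseudo-threshold bound follows by inclusion of the defining sets, while your extra checks ($N_\theta=N_c$ within a class, basis-independence of the zero-syndrome, $\lambda\neq0$ count) only make explicit what the paper leaves implicit.
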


\begin{proof}
The feasible point supplied by \Cref{thm:lp-obstruction}
has objective $P_{\mathrm{ud},\theta}^{(\zeta)}(p)$,
by expanding \Cref{eq:we-signature-polynomial}.
Minimization gives the first inequality in
\Cref{eq:iid-lp-lower-bound}.
Since $\delta_\theta(0)=0$, every harmful undetected fault
also causes decoded failure, giving the second.
Thus this point satisfies the additional performance constraint
whenever the decoder achieves the prescribed target.

Every comparison interval in \Cref{def:iid-pseudothreshold}
therefore gives an interval on which the performance LP
is feasible. Taking supremums proves
\Cref{eq:iid-lp-pseudothreshold}.
\end{proof}
We illustrate these LP bounds with flag-based syndrome-extraction
circuits, expressing the $t$-flag condition as additional linear
constraints in \Cref{subsec:flag-syndrome-extraction}.


\section{Discussion and outlook}
\label{sec:discussion-outlook}

The tools developed in this work provide a general framework for analyzing
fault tolerance while retaining the full circuit operation, including
quantum outputs and measurement records. We therefore foresee broad
applications in fault-tolerant circuit verification and design. Some interesting avenues for future work are as follows:
\begin{itemize}
   \item \emph{Adaptive stabilizer circuits.} Consider the simple adaptive circuit in
\Cref{remark: Adaptive case}. It is shown there that the global parity check
$o_2+o_3=0$ does not capture the outcome-conditional parity check $(1+o_1)o_2=0$ over $\mathbb F_2$. Applying our construction of the common relation space
$\mathcal W_{\mathbf C}$ in
\Cref{def:common-stabilizer-relations} to adaptive circuits captures only Pauli relations valid across all branches
with fixed input and output Pauli labels and signs
determined by fixed record parities. It therefore need
not capture such conditional checks. An extension of our framework could retain the branch spaces $V_{K_o}$
defined in \Cref{Cor: Logical dim Pres}, together with
their relation signs and the conditions selecting each history.
The composition rule of \Cref{prop: stabilizers of composition}
could then update these relations as the preceding outcomes
determine the next operation. An extension that includes faults on the classical controller must also account for corrupted control outcomes that change these choices, beyond the retained-record noise convention in \Cref{rem:noise_model_explicit}.
Existing verification methods for circuits with classical
control
\cite{kliuchnikov2023stabilizercircuitverification,
FangYing2024Symbolic,ChenEtAl2025FaultTolerance}
and composable detector error models
\cite{kliuchnikov2026composing} provide starting points. A remaining challenge is to combine these approaches with
our fault-signature and weight-enumerator tools while retaining
the conditional checks without explicitly enumerating all
measurement histories, whose number can grow exponentially
with the number of measurements.
    \item \emph{Non-stabilizer circuits.}
Our recovery conditions in
\Cref{Thm: Iff Fault tolerance} already apply to non-stabilizer
circuits, but our explicit fault signatures and weight
enumerators rely on stabilizer structure. For example,
$TXT^\dagger=(X+Y)/\sqrt{2}$, so a single Pauli fault can become
a sum of Pauli terms that affect subsequent checks differently.
Replacing this coherent sum by a random choice between
an $X$ fault and a $Y$ fault, each with probability $1/2$,
would generally change the error channel. Extending our weight-enumerator LP approach would require
quantities that describe faults of each spacetime weight
while retaining the amplitudes and relative phases of the
Pauli terms they produce. The next step would be to derive
linear equalities and inequalities that these quantities
must satisfy whenever a circuit achieves a specified level
of fault protection. Circuits with few $T$ gates offer a concrete starting point, namely, expanding each $T$ as a linear combination of $I$ and $Z$
expresses the circuit operator as a sum of Clifford circuit
operators, whose contributions must be combined coherently.
Existing tensor and circuit enumerators
\cite{CaoLackey2024TensorEnumerators,
KuklianskyLackey2024CircuitEnumerators}
and simulation methods using stabilizer decompositions
\cite{BravyiGosset2016Simulation} provide tools for investigating
this extension.
\item \emph{Connectivity and other hardware constraints.} Our framework can accommodate restricted qubit connectivity, native gate sets, and limits on circuit depth or ancilla resources by restricting the admissible circuit family. This provides a basis for investigating how hardware constraints limit achievable fault tolerance. We leave the development of corresponding SDP and LP bounds and explicit circuit constructions for future work.
    \item \emph{Noise models.} In \Cref{sec:iid-performance} we have derived performance bounds for i.i.d. depolarizing noise model. The next step would be to generalize this to stochastic Pauli noise models.
    \item \emph{Tractable solutions for SDP/LP and constructions.} Another interesting future direction would be to explore both stabilizer and non-stabilizer circuit families for which our LP and SDP would have analytically or numerically tractable solutions, providing a direct advantage over brute force search.
    \item \emph{Closed forms.} Computing neat closed form expressions for level $1$ non-adaptive circuits with varying qubit registers over time and multiple measurement ancilla can lead to more general analysis of floquet codes.
    \item \emph{Decoders.} In our performance bounds we have only explored a basic look-up table decoder. The next step would be to obtain bounds and constructions for efficient and scalable decoders.
\end{itemize}

\begin{acknowledgments}
S. Ghosh and H. Boche acknowledge support from
the BMFTR under the ``Souver\"an. Digital.
Vernetzt.'' program for project 6G-life, 16KIS2414. S. Ghosh and H. Boche acknowledge support from the BMFTR
Quantum Programme Q-STARS, 16KIS2601, and project QUARKS, 
16KIS1998K.

H. Boche acknowledges funding by
the DFG as part of Germany’s Excellence
Strategy-EXC 2050/2-Project ID 390696704-
Cluster of Excellence ``Centre for Tactile Internet
with Human-in-the-Loop'' (CeTI) of Technische
Universit\"at Dresden and within the Gottfried
Wilhelm Leibniz Prize Programme under Grant BO 1734/20-1 and in part
by DFG Projects under Grant BO 1734/35-1.

A.T. is supported by the CQT PhD Scholarship, the Google PhD Fellowship Program and the CQT Young Researcher Career Development Grant.
\end{acknowledgments}

\paragraph*{Data availability.}
The data specifying the finite honeycomb circuit and the analytic
CSS Hamming-code benchmark are contained in this article.

\paragraph*{AI use.} 
The authors used OpenAI GPT-5.6 Sol to assist with writing the Python code for the honeycomb Floquet example in Appendix B. All the proofs were done by the authors while their correctness were cross-checked with OpenAI GPT-5.6 Sol and GPT-6 Astra.
\onecolumngrid
\appendix
\crefalias{section}{appendix}



\section{Causal normalization of circuit map and noise model}
\label{app:causal-normalisation}

The constraints used in \Cref{def: Circuit map and noise model}
follow from local trace preservation and causality. We derive
them by discarding the latest output and repeating the
calculation at each earlier step. We then show that connecting
the circuit and noise gives a trace-preserving map.
The register dimensions may depend on the time step, as in
\Cref{def:quantum_circuit}.
All vectorizations use the fixed bases of the preliminaries
and are unnormalized. A CP map
$C(\rho)=\sum_a C_a\rho C_a^\dagger$ has Choi operator
$J_C=\sum_a\vket{C_a}\vbra{C_a}\succeq0$
\cite{chiribella2009theoretical}.
For an operator $D:X\to Y$, write
$\vket D=\sum_{i,j}D_{ij}|i\rangle_Y\otimes|j\rangle_X$.
Tracing the output gives
\begin{equation}
    \begin{aligned}
        \operatorname{Tr}_Y(\vket D\vbra D)
        &=\sum_{j,k}\left(\sum_iD_{ij}D_{ik}^{\star}\right)
                         |j\rangle\langle k|\\
        &=(D^\dagger D)^{\mathsf T}.
    \end{aligned}
    \label{eq:app-comb-local-trace}
\end{equation}
Consequently, for Kraus operators $\{D_a\}_a$, the
condition $\sum_aD_a^\dagger D_a=I_X$ is equivalent to
$\operatorname{Tr}_Y\sum_a\vket{D_a}\vbra{D_a}=I_X$.
We use this identity and its index form below.

\subsection{Circuit map constraints}

To distinguish open wires, let $Q_r^{\mathrm{out}}$ denote
the data output of component $r$, and let
$Q_r^{\mathrm{in}}$ denote the input of component $r+1$.
For $1\le r<\ell$, these are distinct copies of $Q_r$
until connected. The final data output is denoted by
$Q_\ell^{\mathrm{out}}$. Likewise, the external input
$Q_0^{\mathrm{ext}}$ is distinct from the first component's
input copy $Q_0^{\mathrm{in}}$. For each fixed history $h=o_{1:r-1}$, the component
Kraus operators define a trace-preserving channel
\begin{equation}
    C_{o_r|h}:Q_{r-1}^{\mathrm{in}}\longrightarrow
             A_r^{\mathrm{out}}\otimes Q_r^{\mathrm{out}},
    \qquad
    \sum_{o_r}C_{o_r|h}^\dagger C_{o_r|h}
        =I_{Q_{r-1}^{\mathrm{in}}}.
    \label{eq:app-comb-component-tp}
\end{equation}
It is this local completeness relation that yields
the circuit-map constraints. For $0\le r\le\ell$, define the prefix space containing
all open input and output legs of the first $r$ circuit
components by
\[
    \mathcal H_r
    \triangleq
    \bigotimes_{i\in\mathbb Z^+_{r}}
    \left(
        A_i^{\mathrm{out}}\otimes Q_i^{\mathrm{out}}
        \otimes Q_{i-1}^{\mathrm{in}}
    \right).
\]
The factors are ordered from $i=r$ to $i=1$, and the
empty tensor product gives $\mathcal H_0=\mathbb C$.
For each outcome prefix $o_{1:r}$, define
\begin{equation}
    \begin{aligned}
        \mathbf C_{o_{1:r}}
        &\triangleq
        \vket{C_{o_r|o_{1:r-1}}}\otimes\cdots
        \otimes\vket{C_{o_1}}
        \otimes I_{Q_0^{\mathrm{ext}}},\\
        \mathbf C_{\varnothing}
        &\triangleq I_{Q_0^{\mathrm{ext}}}.
    \end{aligned}
    \label{eq:app-comb-prefixes}
\end{equation}
Thus $\mathbf C_{o_{1:r}}$ maps $Q_0^{\mathrm{ext}}$
to $\mathcal H_r\otimes Q_0^{\mathrm{ext}}$.
At $r=\ell$, these are precisely the circuit-map Kraus
operators $\mathbf C_o$. The normalization operators
in \Cref{def: Circuit map and noise model} are
\begin{equation}
    \mathbf C^{(r)}
    \triangleq
    \sum_{o_{1:r}}
        \mathbf C_{o_{1:r}}\mathbf C_{o_{1:r}}^\dagger
    \succeq0,
    \qquad
    \mathbf C^{(0)}=I_{Q_0^{\mathrm{ext}}}.
    \label{eq:app-comb-normalisation-operators}
\end{equation}
For $h=o_{1:r-1}$, the prefix factorizes as
$\mathbf C_{o_{1:r}}=\vket{C_{o_r|h}}\otimes\mathbf C_h$.
Grouping the sum by $h$ and applying
\eqref{eq:app-comb-local-trace} to the last component gives
\begin{equation}
    \begin{aligned}
        \operatorname{Tr}_{A_r^{\mathrm{out}}
                              Q_r^{\mathrm{out}}}
             \mathbf C^{(r)}
        &=
        \sum_h
        \left(\sum_{o_r}C_{o_r|h}^\dagger C_{o_r|h}
        \right)^{\mathsf T}
        \otimes\mathbf C_h\mathbf C_h^\dagger\\
        &=
        I_{Q_{r-1}^{\mathrm{in}}}
        \otimes\mathbf C^{(r-1)}.
    \end{aligned}
    \label{eq:app-comb-circuit-normalisation}
\end{equation}
For $r=1$, the history is empty and
$\mathbf C_h=I_{Q_0^{\mathrm{ext}}}$, giving the stated
boundary condition. Since the completeness relation
holds for every history, the argument also applies
to all adaptive choices of the components. These are the deterministic quantum-comb normalization constraints
with the external input retained
\cite{chiribella2009theoretical}.
Discarding the latest output leaves an identity on
that component's input, tensored with the earlier
prefix normalization operator. Thus later trace-preserving
operations do not alter the earlier process.
The full circuit map satisfies
\begin{equation}
    \mathbf C_{\mathrm{cm}}(\rho)
       =\sum_o\mathbf C_o\rho\mathbf C_o^\dagger,
    \qquad
    \mathbf C_{\mathrm{cm}}(I_{Q_0^{\mathrm{ext}}})
       =\mathbf C^{(\ell)}.
    \label{eq:app-comb-circuit-map}
\end{equation}
Here $\mathbf C^{(\ell)}$ is the normalization operator
appearing in \Cref{def: Circuit map and noise model} and it is not the Choi operator
of $\mathbf C_{\mathrm{cm}}$. Local trace preservation does not imply that this open
circuit map is trace-preserving. Indeed, the external
input is an identity tensor factor in every $\mathbf C_o$,
so $\sum_o\mathbf C_o^\dagger\mathbf C_o$ is proportional
to $I_{Q_0^{\mathrm{ext}}}$. Taking traces in
\eqref{eq:app-comb-circuit-normalisation} determines
the proportionality factor
\begin{equation}
    \begin{aligned}
        \sum_o\mathbf C_o^\dagger\mathbf C_o
        &=
        \left(\prod_{r=0}^{\ell-1}\dim Q_r\right)
        I_{Q_0^{\mathrm{ext}}},\\
        \operatorname{Tr}\mathbf C_{\mathrm{cm}}(\rho)
        &=
        \left(\prod_{r=0}^{\ell-1}\dim Q_r\right)
        \operatorname{Tr}\rho.
    \end{aligned}
    \label{eq:app-comb-open-trace}
\end{equation}
The dimension factors arise from the unnormalized
vectorizations. After contraction with a noise model
satisfying the stated constraints, the resulting map is
trace-preserving, as shown in
\Cref{subsec: Trace-pres after contraction}.

\subsection{Noise model constraints}

Consider the local noise operators of
\Cref{rem:noise_model_explicit}, with
$N_0=N_{\ell+1}=\mathbb C$.
The initial operation includes preparation of the memory,
and the final Kraus family includes tracing it out.
More explicitly, the local operators have domains and
codomains
\[
    \begin{aligned}
        E_{e_0}&:Q_0^{\mathrm{ext}}
             \longrightarrow Q_0^{\mathrm{in}}\otimes N_1,\\
        E_{e_r}&:A_r^{\mathrm{out}}\otimes Q_r^{\mathrm{out}}
                    \otimes N_r
             \longrightarrow A_r^{\mathrm{out}}
                    \otimes Q_r^{\mathrm{in}}\otimes N_{r+1}
                    \quad(1\leq r<\ell),\\
        E_{e_\ell}&:A_\ell^{\mathrm{out}}\otimes Q_\ell^{\mathrm{out}}
                    \otimes N_\ell
             \longrightarrow A_\ell^{\mathrm{out}}
                    \otimes Q_\ell.
    \end{aligned}
\]
The input and output copies of each record register are
distinct tensor factors, although both are denoted by
$A_r^{\mathrm{out}}$. Noise may act on these registers
as well as on the data.
Their local completeness relations are
\begin{equation}
    \sum_{e_0}E_{e_0}^\dagger E_{e_0}
       =I_{Q_0^{\mathrm{ext}}},
    \qquad
    \sum_{e_r}E_{e_r}^\dagger E_{e_r}
       =I_{A_r^{\mathrm{out}}Q_r^{\mathrm{out}}N_r}
       \quad(1\le r\le\ell).
    \label{eq:app-comb-noise-local-tp}
\end{equation}

For $0\le r<\ell$, define the noise-prefix operators
\begin{equation}
    \begin{aligned}
        \mathbf E_{e_{0:r}}:
        Q_r^{\mathrm{in}}\otimes
        \mathcal H_r\otimes Q_0^{\mathrm{ext}}
        \longrightarrow
        A^{\mathrm{out}}_{1:r}
        \otimes N_{r+1}.
    \end{aligned}
    \label{eq:app-comb-noise-prefix}
\end{equation}
The component blocks in the domain are ordered from
$i=r$ to $i=1$; empty tensor products equal $\mathbb C$.
Writing $I_{\mathrm{in}}$ (resp. $I_{\mathrm{out}}$) for the
identities on the displayed domain (resp. codomain) of \Cref{eq:app-comb-noise-prefix}, we have
\begin{equation}
    \begin{aligned}
        \mathbf E_{e_{0:r}}
        \triangleq
        \left(
            I_{\mathrm{out}}
            \otimes\vbra{E_{e_r}^{\star}}
            \otimes\cdots\otimes\vbra{E_{e_0}^{\star}}
        \right)\cdot
        \left(
            \vket{I_{\mathrm{out}}}
            \otimes\bigotimes_{s=1}^{r}\vket{I_{N_s}}
            \otimes I_{\mathrm{in}}
        \right).
    \end{aligned}
    \label{eq:app-comb-noise-prefix-construction}
\end{equation}
Each $\vket{I_{N_s}}$ connects the $N_s$ output of
$E_{e_{s-1}}$ to the $N_s$ input of $E_{e_s}$.
The ket $\vket{I_{\mathrm{out}}}$ retains the record
outputs and the remaining memory, while
$I_{\mathrm{in}}$ leaves the prefix inputs open.
Tensor factors are paired according to these register
labels and reordered as in
\eqref{eq:app-comb-noise-prefix}. Define the positive normalization operators
\begin{equation}
    \mathbf E^{(r)}
       \triangleq
       \sum_{e_{0:r}}
          \mathbf E_{e_{0:r}}^\dagger\mathbf E_{e_{0:r}}
       \succeq0.
    \label{eq:app-comb-noise-effects}
\end{equation}
For $1\le r<\ell$, write $h' =e_{0:r-1}$.
For fixed $h'$ and $e_r$, the prefix operator is obtained
by treating the $Q_r^{\mathrm{in}}$ output of $E_{e_r}$
as an open input index. Tracing this index in
$\mathbf E_{e_{0:r}}^\dagger\mathbf E_{e_{0:r}}$
therefore sums over that output index. The other output
indices are already summed in the operator product.
Together these sums give $E_{e_r}^\dagger E_{e_r}$,
with no transpose. Grouping the sum by $h'$ and applying
local completeness gives
\begin{equation}
    \begin{aligned}
        \operatorname{Tr}_{Q_r^{\mathrm{in}}}\mathbf E^{(r)}
        &=
        \sum_{h'}
        \left(
            I_{A_r^{\mathrm{out}}Q_r^{\mathrm{out}}}
            \otimes\mathbf E_{h'}
        \right)^\dagger
        \left[
            I_{A_{1:r-1}^{\mathrm{out}}}
            \otimes\sum_{e_r}E_{e_r}^\dagger E_{e_r}
        \right]
        \left(
            I_{A_r^{\mathrm{out}}Q_r^{\mathrm{out}}}
            \otimes\mathbf E_{h'}
        \right)\\
        &=
        I_{A_r^{\mathrm{out}}Q_r^{\mathrm{out}}}
        \otimes\sum_{h'}\mathbf E_{h'}^\dagger\mathbf E_{h'}\\
        &=
        I_{A_r^{\mathrm{out}}Q_r^{\mathrm{out}}}
        \otimes\mathbf E^{(r-1)}.
    \end{aligned}
    \label{eq:app-comb-noise-recursion-proof}
\end{equation}
The identity on $A_{1:r-1}^{\mathrm{out}}$ carries the
earlier record outputs through the latest operation.
At $r=0$, the same calculation gives the boundary condition
\begin{equation}
    \operatorname{Tr}_{Q_0^{\mathrm{in}}}\mathbf E^{(0)}
       =\sum_{e_0}E_{e_0}^\dagger E_{e_0}
       =I_{Q_0^{\mathrm{ext}}}.
    \label{eq:app-comb-noise-boundary}
\end{equation}
Appending the final local noise operator yields
\begin{equation}
    \mathbf E_e
       =
       \left(
           I_{A_{1:\ell-1}^{\mathrm{out}}}\otimes E_{e_\ell}
       \right)
       \left(
           I_{A_\ell^{\mathrm{out}}Q_\ell^{\mathrm{out}}}
           \otimes\mathbf E_{e_{0:\ell-1}}
       \right).
    \label{eq:app-comb-final-noise}
\end{equation}
Summing first over $e_\ell$ and applying local
completeness once more gives
\begin{equation}
    \begin{aligned}
        \sum_e\mathbf E_e^\dagger\mathbf E_e
        &=
        I_{A_\ell^{\mathrm{out}}Q_\ell^{\mathrm{out}}}
        \otimes
        \sum_{e_{0:\ell-1}}
           \mathbf E_{e_{0:\ell-1}}^\dagger
           \mathbf E_{e_{0:\ell-1}}\\
        &=
        I_{A_\ell^{\mathrm{out}}Q_\ell^{\mathrm{out}}}
        \otimes\mathbf E^{(\ell-1)}.
    \end{aligned}
    \label{eq:app-comb-noise-normalisation}
\end{equation}
These are the noise constraints of
\Cref{def: Circuit map and noise model}.
The connected memory registers allow correlations
between 
local noise operators; no independence assumption
is used. A related operational treatment of quantum
processes with memory is given by the process-tensor
framework \cite{pollock2018non}.

\subsection{Normalization after connecting circuit and noise}
\label{subsec: Trace-pres after contraction}
The following argument uses only the circuit and noise
normalization constraints. It therefore applies to any
noise map satisfying those constraints, whether or not
a local realization has been specified.
The connected map has Kraus operators
$\{\mathbf E_e\mathbf C_o\}_{e,o}$.
By \eqref{eq:app-comb-noise-normalisation},
\begin{equation}
    \sum_{e,o}
       (\mathbf E_e\mathbf C_o)^\dagger
        (\mathbf E_e\mathbf C_o)
    =
    \sum_o\mathbf C_o^\dagger
       \bigl(I_{A_\ell^{\mathrm{out}}Q_\ell^{\mathrm{out}}}
                   \otimes\mathbf E^{(\ell-1)}\bigr)
        \mathbf C_o.
    \label{eq:app-comb-connected-kraus}
\end{equation}
Fix $h=o_{1:r-1}$. Choose orthonormal bases
$\{|a\rangle\}$ of
$A_r^{\mathrm{out}}\otimes Q_r^{\mathrm{out}}$
and $\{|i\rangle\}$ of $Q_{r-1}^{\mathrm{in}}$.
Then
\[
    \vket{C_{o_r|h}}
       =\sum_{a,i}(C_{o_r|h})_{ai}
                     |a\rangle\otimes|i\rangle,
    \qquad
    \mathbf C_{o_{1:r}}
       =\vket{C_{o_r|h}}\otimes\mathbf C_h.
\]
Component completeness gives
\[
    \sum_{o_r,a}
       (C_{o_r|h})_{ai}^{\star}(C_{o_r|h})_{aj}
       =
       \left(\sum_{o_r}C_{o_r|h}^\dagger C_{o_r|h}\right)_{ij}
       =\delta_{ij}.
\]
Consequently, expanding the contraction yields
\begin{equation}
    \begin{aligned}
        &\sum_{o_r}
          \mathbf C_{o_{1:r}}^\dagger
          \left(
              I_{A_r^{\mathrm{out}}Q_r^{\mathrm{out}}}
              \otimes\mathbf E^{(r-1)}
          \right)\mathbf C_{o_{1:r}}\\
        &\quad=
          \mathbf C_h^\dagger
          \left[
              \sum_{o_r,a,i,j}
              (C_{o_r|h})_{ai}^{\star}(C_{o_r|h})_{aj}
              (\langle i|\otimes I)\,
                 \mathbf E^{(r-1)}\,
              (|j\rangle\otimes I)
          \right]\mathbf C_h\\
        &\quad=
          \mathbf C_h^\dagger
          \left[
              \sum_i
              (\langle i|\otimes I)\,
                 \mathbf E^{(r-1)}\,
              (|i\rangle\otimes I)
          \right]\mathbf C_h\\
        &\quad=
          \mathbf C_h^\dagger
          \left(
              \operatorname{Tr}_{Q_{r-1}^{\mathrm{in}}}
                    \mathbf E^{(r-1)}
          \right)\mathbf C_h.
    \end{aligned}
    \label{eq:app-comb-connected-trace}
\end{equation}
Here $I$ acts on
$\mathcal H_{r-1}\otimes Q_0^{\mathrm{ext}}$,
and the final equality is the definition of the partial trace.
For $r>1$, the noise recursion gives
\[
    \operatorname{Tr}_{Q_{r-1}^{\mathrm{in}}}
          \mathbf E^{(r-1)}
       =
       I_{A_{r-1}^{\mathrm{out}}Q_{r-1}^{\mathrm{out}}}
          \otimes\mathbf E^{(r-2)}.
\]
Thus summing \eqref{eq:app-comb-connected-trace} over $h$
reduces the expression from $r$ components to $r-1$.
Repeating this reduction gives
\begin{equation}
    \begin{aligned}
        \sum_{e,o}
           (\mathbf E_e\mathbf C_o)^\dagger
            (\mathbf E_e\mathbf C_o)
        &=
        \sum_o\mathbf C_o^\dagger
        \left(
            I_{A_\ell^{\mathrm{out}}Q_\ell^{\mathrm{out}}}
            \otimes\mathbf E^{(\ell-1)}
        \right)\mathbf C_o\\
        &=
        \cdots
        =
        \sum_{o_1}\mathbf C_{o_1}^\dagger
        \left(
            I_{A_1^{\mathrm{out}}Q_1^{\mathrm{out}}}
            \otimes\mathbf E^{(0)}
        \right)\mathbf C_{o_1}\\
        &=
        \mathbf C_{\varnothing}^\dagger
        \left(
            \operatorname{Tr}_{Q_0^{\mathrm{in}}}
                \mathbf E^{(0)}
        \right)\mathbf C_{\varnothing}\\
        &=I_{Q_0^{\mathrm{ext}}}.
    \end{aligned}
    \label{eq:app-comb-connected-tp}
\end{equation}
The first equality uses
\eqref{eq:app-comb-noise-normalisation}, while the last uses
$\mathbf C_{\varnothing}=I_{Q_0^{\mathrm{ext}}}$
and \eqref{eq:app-comb-noise-boundary}.
Hence the connected map is CPTP.
Identity noise gives the ideal contraction as a special case.

\section{18-qubit instance of the Hastings--Haah honeycomb Floquet code}
\label{app:honeycomb}
\label{ex:honeycomb-level-1}

Consider an $18$-qubit instance of the Hastings--Haah honeycomb
floquet code \cite{hastings2021dynamically}. The corresponding measurement schedule matrix is given by
\begin{equation}
 \mathbf M=\bigl[(\mathbf M^{(0)})^{\mathsf T}\;
 (\mathbf M^{(1)})^{\mathsf T}\;
 (\mathbf M^{(2)})^{\mathsf T}\;
 (\mathbf M^{(0)})^{\mathsf T}\;
 (\mathbf M^{(1)})^{\mathsf T}\;
 (\mathbf M^{(2)})^{\mathsf T}\;
 (\mathbf M^{(0)})^{\mathsf T}\bigr]^{\mathsf T}
 \in\mathbb F_2^{63\times36}.
 \label{eq:honeycomb-M}
\end{equation}
For $a\in\{0,1,2\}$, the rows of
$\mathbf M^{(a)}\in\mathbb F_2^{9\times36}$ encode the
nine Pauli checks listed in column $\mathbf M^{(a)}$ of
\Cref{fig:honeycomb-finite-lattice}.
Specifically, a check on qubits $i,j$ is encoded by
\begin{equation}
 X_iX_j\longmapsto[f_i+f_j,0],\qquad
 Y_iY_j\longmapsto[f_i+f_j,f_i+f_j],\qquad
 Z_iZ_j\longmapsto[0,f_i+f_j],
\end{equation}
where $f_i$ is the $i$th standard basis row vector of
$\mathbb F_2^{18}$ and $0$ denotes the zero row vector
of length $18$. Round $t\in\{0,\ldots,6\}$ uses block
$\mathbf M^{(t\bmod3)}$, with its checks measured in
row order $s=1,\ldots,9$. Define
\begin{equation}
 \mathbf m_{9t+s}
 \triangleq
 \bigl(\mathbf M^{(t\bmod3)}\bigr)_{s,:}
 \in\mathbb F_2^{36},
 \qquad
 M_{9t+s}\triangleq P(\mathbf m_{9t+s}),
\end{equation}
where $(\cdot)_{s,:}$ denotes the $s$th row.
Thus $\mathbf m_r$ is the binary Pauli label of the
$r$th measurement, with $r=9t+s$.
\begin{figure}[htbp]
\centering
\begin{minipage}[c]{.43\linewidth}
\centering
$\begin{array}{c|ccc}
 s&\mathbf M^{(0)}&\mathbf M^{(1)}&\mathbf M^{(2)}\\\hline
 1&X_1X_6&Y_1Y_{14}&Z_1Z_2\\
 2&Z_3Z_4&X_2X_3&Y_3Y_{16}\\
 3&Y_5Y_{18}&Z_5Z_6&X_4X_5\\
 4&Y_2Y_7&Z_7Z_8&X_7X_{12}\\
 5&X_8X_9&Y_4Y_9&Z_9Z_{10}\\
 6&Z_{11}Z_{12}&X_{10}X_{11}&Y_6Y_{11}\\
 7&Z_{13}Z_{14}&X_{13}X_{18}&Y_8Y_{13}\\
 8&Y_{10}Y_{15}&Z_{15}Z_{16}&X_{14}X_{15}\\
 9&X_{16}X_{17}&Y_{12}Y_{17}&Z_{17}Z_{18}
\end{array}$
\end{minipage}\hfill
\begin{minipage}[c]{.55\linewidth}
\centering
\begingroup
\def\HCA#1#2{({1.5*((#1)+(#2))},{.8660254*(-(#1)+(#2))})}
\def\HCB#1#2{({1.5*((#1)+(#2))+1},{.8660254*(-(#1)+(#2))})}
\def\HCDomainPath{(-.6,0)--(3.9,-2.598076)--(8.4,0)--(3.9,2.598076)--cycle}
\resizebox{\linewidth}{!}{%
\begin{tikzpicture}[x=1cm,y=.83cm,
 hcx/.style={draw=red,line width=2.2pt,line cap=round},
 hcy/.style={draw=green!60!black,line width=2.2pt,line cap=round},
 hcz/.style={draw=blue,line width=2.2pt,line cap=round},
 hclabel/.style={fill=white,inner sep=.5pt,font=\scriptsize},
 hcvertex/.style={circle,draw,line width=.7pt,fill=white,
   minimum size=4mm,inner sep=0pt,font=\scriptsize}]
 \begin{scope}
  \clip \HCDomainPath;
  \foreach \i in {-1,...,3}{\foreach \j in {-1,...,3}{
   \pgfmathtruncatemacro{\cx}{mod(mod(\i+2*\j,3)+3,3)}
   \pgfmathtruncatemacro{\cy}{mod(\cx+1,3)}
   \pgfmathtruncatemacro{\cz}{mod(\cx+2,3)}
   \draw[hcx] \HCA{\i}{\j}
    -- node[hclabel]{\cx} \HCB{\i-1}{\j};
   \draw[hcy] \HCA{\i}{\j}
    -- node[hclabel]{\cy} \HCB{\i}{\j-1};
   \draw[hcz] \HCA{\i}{\j}
    -- node[hclabel]{\cz} \HCB{\i}{\j};
  }}
 \end{scope}
 \draw[gray,line width=.6pt] \HCDomainPath;
 \foreach \i in {0,1,2}{\foreach \j in {0,1,2}{
  \pgfmathtruncatemacro{\qa}{2*(3*\j+\i+1)-1}
  \pgfmathtruncatemacro{\qb}{\qa+1}
  \node[hcvertex] at \HCA{\i}{\j} {\qa};
  \node[hcvertex] at \HCB{\i}{\j} {\qb};
 }}
\end{tikzpicture}%
}
\endgroup
\end{minipage}
\caption{Complete measurement schedule. Left: Pauli checks specifying
rows of the three binary blocks of $\mathbf M$. Right: Honeycomb lattice with opposite sides identified. Circles denote qubit numbers. Edge numbers
give the round type $0,1,2$, repeated as $0,1,2,0,1,2,0$.
Red, green, and blue edges denote $XX$, $YY$, and $ZZ$, respectively.}
\label{fig:honeycomb-finite-lattice}
\end{figure}
From \Cref{def:L1-schedule-matrices} it follows,
\begin{equation}
 \mathbf Q=\mathbf M\Omega_{18},\qquad
 T_{rs}=\begin{cases}
 \langle\mathbf m_s,\mathbf m_r\rangle_s,&s<r,\\0,&s\ge r,
 \end{cases}
 \qquad
 \operatorname{rank}\mathbf T=36,\qquad
 \operatorname{rank}[\mathbf T\ \mathbf M]
 =\operatorname{rank}[\mathbf T^{\mathsf T}\ \mathbf M]=52.
 \label{eq:honeycomb-ranks}
\end{equation}

\begin{proposition}[Common relations and preserved logical qubits]
\label{prop:honeycomb-relations}
For $C_{\mathrm{enc}}=I$, the circuit satisfies
\begin{equation}
\begin{gathered}
 \dim\mathcal W_{\mathrm{det}}=11,\qquad
 \dim\mathcal W_{\mathrm{obs}}
 =\dim\mathcal W_{\mathrm{out}}=27,\qquad
 \dim\mathcal W_{\mathbf C}=47,\\
 V_{\mathrm{in}}=V_{\mathrm{out}},\qquad
 \dim V_{\mathrm{in}}=16,\qquad
 \dim\mathcal L_{\mathrm{POP}}=4,\qquad
 k_{\mathrm{pres}}=2.
\end{gathered}
\label{eq:honeycomb-dimensions}
\end{equation}
All detector parities vanish, and
\begin{equation}
 \Omega_{\mathbf C}=\mathcal D_{\mathbf C}^{\perp},
 \qquad |\Omega_{\mathbf C}|=2^{52}.
 \label{eq:honeycomb-realisable-outcomes}
\end{equation}
A symplectic basis of the preserved logical classes is
represented by
\begin{equation}
\begin{aligned}
 \overline X_1&=X_1X_2X_7X_8X_{13}X_{14},&
 \overline Z_1&=Y_1Y_3X_4Z_6,\\
 \overline X_2&=X_2X_7X_8X_{10}X_{11}X_{12}X_{15},&
 \overline Z_2&=Y_1Y_2Y_3Y_4Y_5Y_6.
\end{aligned}
\label{eq:honeycomb-logical-generators}
\end{equation}
A choice of their record-vector representatives is given by
\begin{equation}
\begin{gathered}
 b_{\overline X_1}=b_{\overline Z_2}=0,\qquad
 b_{\overline Z_1}=\sum_{r\in J_Z}\epsilon_r,\qquad
 b_{\overline X_2}=\sum_{r\in J_X}\epsilon_r,\\
 J_Z=\{2,11,12,19,21,28,29,38,39,46,48,55\},\\
 J_X=\{2,4,5,9,12,18,19,20,21,22,28,29,31,36,
        39,45,46,47,48,49,55,59\},
\end{gathered}
\label{eq:honeycomb-logical-records}
\end{equation}
where $\epsilon_r$ is the $r$th standard basis row vector
of $\mathbb F_2^{63}$. For each displayed $\overline P$ in \Cref{eq:honeycomb-logical-generators},
\begin{equation}
 (Z_A^{b_{\overline P}}\otimes\overline P)C_o
 =C_o\overline P.
 \label{eq:honeycomb-logical-POPs}
\end{equation}
\end{proposition}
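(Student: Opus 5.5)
The plan is to reduce every claim to the level-$1$ formulas of \Cref{cor:L1-extraction} and \Cref{prop:outcome-independent-stabilizer-relations}, together with the ranks in \Cref{eq:honeycomb-ranks}. Those ranks are certified by exact Gaussian elimination over $\mathbb F_2$ on the explicit $63\times 36$ matrix $\mathbf M$ and the $63\times63$ matrix $\mathbf T$.

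\textbf{Dimensions.} With $\ell=63$, $n=18$:
\begin{itemize}
\item $\dim\mathcal W_{\mathrm{det}}=\ell-\operatorname{rank}[\mathbf T\ \mathbf M]=63-52=11$.
\item $\dim\mathcal W_{\mathrm{obs}}=\ell-\operatorname{rank}\mathbf T=63-36=27$.
\item \Cref{prop:outcome-independent-stabilizer-relations} then gives $\dim V_{\mathrm{in}}=27-11=16$ and $k_{\mathrm{pres}}=n-\dim V_{\mathrm{in}}=2$. This agrees with \Cref{eq:L1-unconditioned-ranks}: $18-52+36=2$.
\item The same proposition gives $\dim V=n+n'=36$, so $\dim\mathcal W_{\mathbf C}=36+11=47$.
\item It also gives $\dim V_{\mathrm{out}}=n'-k=16$, hence $\dim\mathcal W_{\mathrm{out}}=27$, and $\dim\mathcal L_{\mathrm{POP}}=2k=4$.
\end{itemize}

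\textbf{$V_{\mathrm{in}}=V_{\mathrm{out}}$.} Since both spaces have dimension $16$, one inclusion suffices. I plan to show $V_{\mathrm{in}}=\{v\mathbf M: v\mathbf T=0\}$ by \Cref{cor:L1-extraction}. I will characterize output checks similarly from \Cref{eq: code-conditioned level-1 relation space} with $\nu=0$: $p'=b\mathbf M$ with $b\mathbf T^{\mathsf T}=0$.

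This reduces the claim to checking that the row spaces of $\mathbf M$ restricted to $\ker\mathbf T$ and to $\ker\mathbf T^{\mathsf T}$ coincide. That is a finite rank computation: stack the two spanning sets and verify the rank stays $16$.

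A structural route would be a time-reversal symmetry of the palindromic schedule $0,1,2,0,1,2,0$ combined with the lattice symmetry exchanging the round types. I would not rely on it, since order within rounds matters only through $\mathbf T$.

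\textbf{Detector parities and $\Omega_{\mathbf C}$.} Every measured operator is a two-qubit check $X_iX_j$, $Y_iY_j$ or $Z_iZ_j$, and each such check has $\tau(\mathbf m_r)\in\{0,2\}$. For a detector $d$, the parity is $\Theta(0,d)=\sum_r d_r\gamma(\mathbf m_r,\nu_r)$ with $\nu_r=\sum_{s<r}d_s\mathbf m_s$.

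I would evaluate this on a computed basis of the $11$-dimensional kernel. Physically these are the plaquette products, which multiply to $+I$ because each plaquette is a product of commuting edge checks whose product is a positive plaquette stabilizer. The sign bookkeeping via $\kappa$ is the step I expect to be the main obstacle, and I would verify it by explicit computation rather than by argument.

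Once every parity is zero, \Cref{eq:common-realisable-outcomes} gives $\Omega_{\mathbf C}=\mathcal D_{\mathbf C}^\perp$, of size $2^{63-11}=2^{52}$.

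\textbf{Logical representatives.} For each displayed $\overline P$ and record vector $b_{\overline P}$, I check membership $[b,\overline p,\overline p]\in\mathcal W_{\mathbf C}$ via \Cref{eq: code-conditioned level-1 relation space} with $\nu=\overline p$. Concretely, I verify:
\begin{itemize}
\item $\overline p\,\mathbf Q^{\mathsf T}+b\mathbf T^{\mathsf T}=0$;
\item $b\mathbf M=0$, so that $p'=p$;
\item the sign $\Theta(\overline p,b)=0$.
\end{itemize}
These yield \Cref{eq:honeycomb-logical-POPs} through \Cref{eq:common-operational-relation}.

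For $b=0$, the conditions say that $\overline X_1$ and $\overline Z_2$ commute with every check. Direct inspection against \Cref{fig:honeycomb-finite-lattice} confirms this.

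Finally, I check the symplectic pairings among the four operators: $\langle\overline X_i,\overline Z_j\rangle_s=\delta_{ij}$, and all other pairs commute. Together with $\dim\mathcal L_{\mathrm{POP}}=4$, this shows that their classes are independent modulo $\mathcal W_{\mathrm{obs}}+\mathcal W_{\mathrm{out}}$ and form a symplectic basis.
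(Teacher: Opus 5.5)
Your proposal is correct and follows essentially the same route as the paper. In both, the ranks of $\mathbf T$ and $[\mathbf T\ \mathbf M]$ are fed through \Cref{cor:L1-extraction} and \Cref{prop:outcome-independent-stabilizer-relations} to get all dimensions, and $V_{\mathrm{in}}=V_{\mathrm{out}}$ and the vanishing detector parities are settled by explicit $\mathbb F_2$ elimination and phase evaluation on a detector basis. The logical claims are likewise verified by checking $b\mathbf M=0$, $\overline p\,\mathbf Q^{\mathsf T}+b\mathbf T^{\mathsf T}=0$, $\Theta(\overline p,b)=0$ and the symplectic pairings. The only differences are cosmetic: you obtain $\dim\mathcal W_{\mathbf C}=36+11$ from $\dim V=n+n'$ rather than summing $11+16+16+4$, and you phrase the basis argument modulo $\mathcal W_{\mathrm{obs}}+\mathcal W_{\mathrm{out}}$ rather than in $V_{\mathrm{in}}^{\perp_s}/V_{\mathrm{in}}$.
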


\begin{proof}
By \Cref{prop: code-conditioned stabilizers level 1},
the common relations $\omega=[b,p,p']$ are obtained by solving the linear system
\begin{equation}
 p\mathbf Q^{\mathsf T}+b\mathbf T^{\mathsf T}=0,
 \qquad p'=p+b\mathbf M,
 \label{eq:honeycomb-relation-system}
\end{equation}
with $c_\omega=\Theta(p,b)$.
By \Cref{cor:L1-extraction}, and by setting $p=0$ in
\Cref{eq:honeycomb-relation-system} for output checks,
we obtain
\begin{equation}
\begin{array}{rll}
 \text{Detectors:}
   &d\mathbf T=0,&d\mathbf M=0,\\
 \text{Input observables:}
   &v\mathbf T=0,&v\mathbf M=p,\\
 \text{Output checks:}
   &u\mathbf T^{\mathsf T}=0,&u\mathbf M=p'.
\end{array}
\label{eq:honeycomb-extraction-systems}
\end{equation}
Gaussian elimination over $\mathbb F_2$ yields
$V_{\mathrm{in}}=V_{\mathrm{out}}$.
The ranks in \Cref{eq:honeycomb-ranks} give detector
dimension $63-52=11$ and observable and output-check
dimensions $63-36=27$. Hence
$\dim V_{\mathrm{in}}=27-11=16$, while
\Cref{cor:L1-extraction} gives
$k_{\mathrm{pres}}=18-52+36=2$.
By \Cref{prop:outcome-independent-stabilizer-relations},
$\dim\mathcal L_{\mathrm{POP}}=4$ and
$\dim\mathcal W_{\mathbf C}=11+16+16+4=47$.
Phase evaluation gives $\Theta(0,d)=0$ on a detector
basis, proving \Cref{eq:honeycomb-realisable-outcomes}. The binary labels $\overline x_i,\overline z_i$ of the
operators in \Cref{eq:honeycomb-logical-generators} satisfy
\begin{equation}
\begin{gathered}
 \overline x_i,\overline z_i\in V_{\mathrm{in}}^{\perp_s},
 \qquad
 \langle\overline x_i,\overline z_j\rangle_s=\delta_{ij},\\
 \langle\overline x_i,\overline x_j\rangle_s
 =\langle\overline z_i,\overline z_j\rangle_s=0,
 \qquad i,j\in\{1,2\}.
\end{gathered}
\end{equation}
Since $\dim(V_{\mathrm{in}}^{\perp_s}/V_{\mathrm{in}})=4$,
their classes form a symplectic basis.
Solving \Cref{eq:honeycomb-relation-system} with
$p=p'=\overline p$ yields the representatives in
\Cref{eq:honeycomb-logical-records}, which satisfy
\begin{equation}
 b_{\overline P}\mathbf M=0,\qquad
 \overline p\mathbf Q^{\mathsf T}
 +b_{\overline P}\mathbf T^{\mathsf T}=0,\qquad
 \Theta(\overline p,b_{\overline P})=0.
\end{equation}
These identities establish \Cref{eq:honeycomb-logical-POPs}.
\end{proof}

\begin{corollary}[Code-conditioned honeycomb circuit]
\label{cor:honeycomb-conditioned}
Let $C_{\mathrm{enc}}$ encode two qubits into the common
$+1$ eigenspace of the rank-$16$ stabilizer group
\begin{equation}
 \mathcal S=
 \left\{(-1)^{\Theta(v\mathbf M,v)}P(v\mathbf M):
                         v\mathbf T=0\right\},
 \qquad
 \widetilde C_o\triangleq C_oC_{\mathrm{enc}}.
 \label{eq:honeycomb-encoder}
\end{equation}
For $\widetilde{\mathbf C}$,
\begin{equation}
\begin{gathered}
 \mathcal D_{\widetilde{\mathbf C}}=\ker\mathbf T,
 \qquad
 \dim\mathcal D_{\widetilde{\mathbf C}}=27,
 \qquad
 \mathcal W_{\mathrm{obs}}=\mathcal W_{\mathrm{det}},\\
 \Omega_{\widetilde{\mathbf C}}=(\ker\mathbf T)^\perp,
 \qquad
 |\Omega_{\widetilde{\mathbf C}}|=2^{36},\\
 \dim V_{\mathrm{out}}=16,\qquad
 k_{\mathrm{pres}}=2,\qquad
 \dim\mathcal W_{\widetilde{\mathbf C}}=47.
\end{gathered}
\label{eq:honeycomb-conditioned-detectors}
\end{equation}
All detector parities vanish. Conditioning leaves
$\mathbf M,\mathbf Q,\mathbf T$ unchanged.
\end{corollary}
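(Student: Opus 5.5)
The plan is to apply \Cref{prop: code-conditioned stabilizers level 1} with the encoder of \Cref{eq:honeycomb-encoder}, then reduce every claim to the rank data in \Cref{eq:honeycomb-ranks} and the unconditioned results of \Cref{prop:honeycomb-relations}. Throughout we use \Cref{rem:code-conditioning-common-relations}.

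\emph{Step 1: $\mathbf M,\mathbf Q,\mathbf T$ are unchanged.} Conditioning only precomposes with $C_{\mathrm{enc}}$. The schedule itself is not altered, so $\mathbf M,\mathbf Q,\mathbf T$ stay the same.

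\emph{Step 2: $\mathcal S$ is a valid rank-$16$ stabilizer group.} By \Cref{cor:L1-extraction}, the elements $P(v\mathbf M)$ with $v\mathbf T=0$ are exactly the measured input observables, carrying parities $\Theta(v\mathbf M,v)$. Their labels form $V_{\mathrm{in}}$, which has dimension $16$ by \Cref{prop:honeycomb-relations} and is isotropic by \Cref{Cor: Logical dim Pres}. If $v\mathbf M=v'\mathbf M$, then $v+v'$ is a detector, and detector parities vanish. By \Cref{prop:obs_redundancy} the sign is therefore well defined. So $\mathcal S$ is an abelian group of rank $16$ not containing $-I$, and it encodes $18-16=2$ qubits.

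\emph{Step 3: detectors.} Every $v\in\ker\mathbf T$ represents some $s\in\mathcal S$ with the parity used to sign $\mathcal S$. By the remark, each such $v$ becomes a detector with parity zero, so $\ker\mathbf T\subseteq\mathcal D_{\widetilde{\mathbf C}}$.

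For the reverse inclusion, apply \Cref{prop: code-conditioned stabilizers level 1} with $p=p'=0$. A detector $b$ then needs a witness $\nu=[0,z_{A_{\mathrm{enc}}},0]F_{\mathrm{enc}}$, which ranges over $V_{\mathrm{in}}$, satisfying $\nu\mathbf Q^{\mathsf T}+b\mathbf T^{\mathsf T}=0$ and $\nu+b\mathbf M=0$. Hence $\nu=b\mathbf M$. Using $\mathbf M\mathbf Q^{\mathsf T}=\mathbf T+\mathbf T^{\mathsf T}$, the first equation becomes $b\mathbf T=0$.

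Therefore $\mathcal D_{\widetilde{\mathbf C}}=\ker\mathbf T$, whose dimension is $63-36=27$. The realizable-outcome description $\Omega_{\widetilde{\mathbf C}}=(\ker\mathbf T)^\perp$, of size $2^{36}$, then follows from \Cref{eq:common-realisable-outcomes}, since all parities are zero.

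\emph{Step 4: $\mathcal W_{\mathrm{obs}}=\mathcal W_{\mathrm{det}}$.} The input is now $Q_{\mathrm L}$. An observable relation $[b,p,0]$ requires $\nu=b\mathbf M$ and $b\mathbf T=0$, by the same computation as in Step 3. This forces $\nu\in V_{\mathrm{in}}$. Since $F_{\mathrm{enc}}$ maps logical labels outside the stabilizer span, we get $x_{\mathrm L}=z_{\mathrm L}=0$, so $p=0$.

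\emph{Step 5: dimensions.} By \Cref{prop:outcome-independent-stabilizer-relations}, $k_{\mathrm{pres}}=k_{\mathrm{in}}-\dim V_{\mathrm{in}}=2-0=2$, and hence $\dim V_{\mathrm{out}}=18-2=16$. Then $\dim\mathcal W_{\widetilde{\mathbf C}}=\dim\mathcal W_{\mathrm{det}}+\dim V$, where $\dim V=n_{\mathrm{in}}+n_{\mathrm{out}}=2+18=20$. This gives $27+20=47$.

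\emph{Expected main obstacle.} The hardest part is the sign bookkeeping in Step 2. One must check that $\Theta(v\mathbf M,v)$ is compatible with the group law, so that $\mathcal S$ is closed under multiplication. Equivalently, the encoder sign $s_{\mathrm{enc}}$ must cancel $\Theta$ so that every detector parity vanishes. My first approach would be to derive this cancellation from \Cref{eq:L1-matrix-phase} together with the remark. The idea is that, for an observable representative, the conditioned relation sign equals the original parity plus the stabilizer sign, and these cancel by the choice of signs in \Cref{eq:honeycomb-encoder}.
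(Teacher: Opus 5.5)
Your proposal is correct and follows essentially the paper's argument: an observable relation $[b,p,0]$ forces $\nu=b\mathbf M$ with $b\mathbf T=0$, so $\nu$ is an encoder stabilizer label and hence $p=0$; every $b\in\ker\mathbf T$ becomes a zero-parity detector by the choice of signs in $\mathcal S$; and $\operatorname{rank}\mathbf T=36$ gives the counts. You obtain $\dim V_{\mathrm{out}}=16$ from the dimension identity $k=n-\dim V_{\mathrm{in}}=n'-\dim V_{\mathrm{out}}$ with $n=2$, where the paper instead asserts that the output checks and logical classes are unchanged by conditioning; your route is slightly cleaner. The sign and closure issue you flag as the main obstacle is not really one: $\mathcal S$ is given as a stabilizer group in the statement, and closure would follow anyway because products of relations in the group $G_{\mathbf C}$ are again relations whose signs are uniquely determined.
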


\begin{proof}
By \Cref{prop: code-conditioned stabilizers level 1},
an observable relation $[v,p,0]$ requires
$\nu=v\mathbf M$ and $v\mathbf T=0$.
Thus $\nu$ is an encoder stabilizer label, so $p=0$.
Conversely, every $v\in\ker\mathbf T$ gives a detector
whose parity vanishes by the choice of $\mathcal S$.
This proves the detector and observable claims;
$\operatorname{rank}\mathbf T=36$ gives the dimensions
and outcome count.
The sixteen output-check labels and four logical
classes in \Cref{prop:honeycomb-relations} remain the same after
code-conditioning. Hence
\Cref{prop:outcome-independent-stabilizer-relations}
gives $k_{\mathrm{pres}}=2$ and
$\dim\mathcal W_{\widetilde{\mathbf C}}=27+16+4=47$.
\end{proof}

\subsection{Circuit distance}
\label{app:honeycomb-circuit-distance}
Let $\mathcal V^{(1)}\subseteq\mathcal E^{(1)}$ allow arbitrary Pauli
faults on each data qubit immediately after each individual measurement
$r=1,\ldots,62$, and on each ancilla measurement qubit immediately after its creation.
All other coordinates vanish. There are $18\cdot62+63=1179$ locations
and $2358$ binary coordinates. The encoder and both data boundaries are
noiseless; all records are retained, and output checks and recovery are
noiseless. Record $Z$ faults have zero signature.

\begin{proposition}[Distance]
For the conditioned circuit and fault space $\mathcal{V}^{(1)}$, $d_{\mathrm{circ},1}=4$.
\end{proposition}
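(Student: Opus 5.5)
The distance is the minimum in \Cref{def:ft-circuit-distance}, so the proof has two halves. The upper bound $d_{\mathrm{circ},1}\leq4$ needs one explicit weight-four fault $e^\star\in\mathcal V^{(1)}$ with $H_1e^{\star\mathsf T}=0$ and $L_1e^{\star\mathsf T}\neq0$. The lower bound needs every $e\in\mathcal V^{(1)}$ of weight at most three with zero syndrome to also satisfy $L_1e^{\mathsf T}=0$.

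I would first assemble $H_1$ and $L_1$ explicitly from \Cref{prop:ft-level1-signature}. For each basis relation $\omega=[b,p,p']$ of $\mathcal W_{\widetilde{\mathbf C}}$ (dimension $47$ by \Cref{cor:honeycomb-conditioned}), compute the prefix labels $\nu_r=\nu+\sum_{s<r}b_s\mathbf m_s$. The coordinate of a data fault $e_r$ after measurement $r$ is then $\langle e_r,\nu_{r+1}\rangle_s$. The coordinate of an ancilla $X$-type fault on record $r$ is $b_r$. Record $Z$ faults contribute nothing.

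For the basis, take $\omega_1,\ldots,\omega_h$ to span $\mathcal W_{\mathrm{out}}$. Here $\mathcal W_{\mathrm{out}}$ consists of the $27$ detectors $\ker\mathbf T$ together with the $16$ output checks, whose labels come from $u\mathbf T^{\mathsf T}=0$, $u\mathbf M=p'$. Extend this basis by the four preserved-pair relations of \Cref{eq:honeycomb-logical-records}. These four rows form $L_1$, and the rows built from the $\mathcal W_{\mathrm{out}}$ basis form $H_1$. With the matrices in hand, the statement reduces to a classical minimum-distance question: find the smallest $\operatorname{wt}_{\mathrm{ST}}$ of a vector in $\ker H_1\setminus\ker M_1$.

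For the upper bound, I would look for a physically natural witness. A candidate is a weight-four data fault pattern that reproduces a dressed logical such as $\overline X_1$ or $\overline Z_1$ at an instant where it is locally a product of a check and a short string. Honeycomb distance-four behaviour on this $3\times3$ torus is expected because measuring the current check pairs halves the effective string weight. Then I would verify $H_1e^{\star\mathsf T}=0$ and $L_1e^{\star\mathsf T}\neq0$ by direct evaluation.

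The lower bound is the main obstacle: the fault space has $1179$ locations, so exhaustive enumeration of weight $\leq3$ patterns is about $3^3\binom{1179}{3}\approx7\times10^9$, which is large but feasible. I would cut it down in three ways.
\begin{itemize}
\item Use \Cref{lem:ft-signature-completeness} to identify equivalent locations. A data fault after measurement $r$ that commutes with $\mathbf m_{r+1}$ has the same signature as the same fault one step later, so I would quotient by these equivalences.
\item Enumerate weight-one and weight-two zero-syndrome patterns first, and check that each has trivial logical effect.
\item For weight three, use a meet-in-the-middle search: store the syndromes of all single faults, and for each pair look up whether the third fault cancels the pair's syndrome.
\end{itemize}
This is a finite $\mathbb F_2$ computation, carried out by computer as for \Cref{prop:honeycomb-relations}. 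The result should show that every zero-syndrome pattern of weight at most three lies in $\ker L_1$, which completes the proof.
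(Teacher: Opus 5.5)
Your plan matches the paper's proof. The paper builds $H_1$ from the $27$ detectors and $16$ output checks and $L_1$ from the four logical relations, then proves $d_{\mathrm{circ},1}\geq4$ by your meet-in-the-middle check: it compares weight-$\leq1$ signatures with weight-$\leq2$ signatures ($299$ and $40472$ distinct ones) and splits any weight-$\leq3$ zero-syndrome fault as $a+b$. The one piece you leave as a heuristic search is the upper-bound witness. The paper takes it to be $Y_1X_2Z_7$ immediately after measurement $1$ together with $X_{18}$ immediately after measurement $3$, which by your commutation argument is equivalent to a single weight-four Pauli after measurement $3$; it has $H_1e_*^{\mathsf T}=0$ and $L_1e_*^{\mathsf T}=(0,0,1,1)^{\mathsf T}$. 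Your signature tables would find such a witness systematically, as two weight-$\leq2$ patterns with equal syndrome but different logical effect.
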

\begin{proof}
Using \Cref{def:ft-signature-matrices,prop:ft-level1-signature},
construct $H_1$ from the $27$ detectors and $16$ additional
output checks of \Cref{cor:honeycomb-conditioned}, and
$L_1$ from the logical relations in
\Cref{eq:honeycomb-logical-POPs}, ordered as
$\overline X_1,\overline Z_1,\overline X_2,\overline Z_2$. Define
\begin{equation}
 \mathcal F_1
 \triangleq \{e\in\mathcal V^{(1)}:
                  \operatorname{wt}_{\mathrm{ST}}(e)\leq1\},
 \qquad
 \mathcal F_1+\mathcal F_1
 \triangleq \{a+b:a,b\in\mathcal F_1\}.
\end{equation}
By construction, $\mathcal F_1+\mathcal F_1$ is exactly the set
of allowed spacetime fault patterns of weight at most two. Computing $(H_1e^{\mathsf T},L_1e^{\mathsf T})$ for
$e\in\mathcal F_1$ gives $299$ distinct signatures.
By linearity, adding every pair of these signatures gives
the $40472$ distinct signatures of
$\mathcal F_1+\mathcal F_1$.
Exact binary comparison verifies
\begin{equation}
 \begin{gathered}
 H_1a^{\mathsf T}=H_1b^{\mathsf T}
 \ \Longrightarrow\
 L_1a^{\mathsf T}=L_1b^{\mathsf T},\\
 a\in\mathcal F_1,\qquad b\in\mathcal F_1+\mathcal F_1.
 \end{gathered}
 \label{eq:honeycomb-signature-comparison}
\end{equation}

Every allowed pattern $e$ of weight at most three splits
as $e=a+b$, with $a\in\mathcal F_1$ and
$b\in\mathcal F_1+\mathcal F_1$.
If $H_1e^{\mathsf T}=0$, then
$H_1a^{\mathsf T}=H_1b^{\mathsf T}$, so
\Cref{eq:honeycomb-signature-comparison} gives
\[
 L_1e^{\mathsf T}
 =L_1a^{\mathsf T}+L_1b^{\mathsf T}=0.
\]
Hence no allowed fault of weight at most three has zero
syndrome and nonzero logical signature. The circuit distance
(\Cref{def:ft-circuit-distance}) therefore gives
$d_{\mathrm{circ},1}\geq4$. Conversely, apply $Y_1X_2Z_7$ immediately after measurement
$1$ and $X_{18}$ immediately after measurement $3$.
This allowed pattern $e_*$ has weight four and satisfies
\[
 H_1e_*^{\mathsf T}=0,\qquad
 L_1e_*^{\mathsf T}=(0,0,1,1)^{\mathsf T}\ne0.
\]
Thus $d_{\mathrm{circ},1}\leq4$, proving
\begin{equation}
 d_{\mathrm{circ},1}=4.
 \label{eq:honeycomb-circuit-distance}
\end{equation}
\end{proof}

\section{Syndrome extraction on CSS codes with flag circuits}
\label{subsec:flag-syndrome-extraction}

We apply the fault signature formalism of
\Cref{sec: Fault tolerance of stabilizer circuits} and the LP bounds of
\Cref{sec:linear-programming-bounds} to a family of flag circuits
prescribed by their CNOT orderings for measuring a fixed $Z$-type
stabilizer of a CSS code
\cite{ChaoReichardt2018TwoQubits,ChamberlandBeverland2018Flags,PatoEtAl2024Flag,chao2020flag}.
The analysis extends to a chosen $X$-type stabilizer measurement by using its support, reversing
all CNOT directions, and interchanging $|0\rangle$ with $|+\rangle$
in ancilla preparations and $Z$ with $X$ in ancilla measurements,
with the corresponding changes to the fault-propagation and
signature matrices. The $t$-flag condition \cite{ChamberlandBeverland2018Flags} supplies additional constraints to the distance bound LP through vanishing
fault counts. Finally, we analytically benchmark the LP distance bound for CSS Hamming
codes and verify that it is attained by the one-flag construction of
Chao and Reichardt \cite{ChaoReichardt2018TwoQubits} in \Cref{subsec:flp-css-hamming}.
\subsection{Z-measurement flag circuit family, signature matrices and fault weight profiles}

\begin{definition}[Flag-circuit family for $Z$-type stabilizer measurement]
\label{def:fw-family}
Fix an $[[n,k_L,d_{\mathrm{code}}]]$ CSS code with an isometric stabilizer encoder
$C_{\mathrm{enc}}:Q_{\mathrm L}\to Q\triangleq\bigotimes_{j=1}^{n}Q_j$,
where $Q_{\mathrm L}\cong\C^{2^{k_L}}$ and $Q_j\cong\C^2$,
and let $P_Z=\prod_{j=1}^{w}Z_{Q_j}$ be the measured
weight-$w$ stabilizer, supported on $Q_1,\ldots,Q_w$,
with $1\leq w\leq n$ and $P_ZC_{\mathrm{enc}}=C_{\mathrm{enc}}$.
Let $A\cong\C^2$ be the syndrome ancilla and
$F=\bigotimes_{a=1}^{f}F_a$ the flag register,
where $F_a\cong\C^2$ and $f\geq1$.
Write $Q_{\ne j}\triangleq\bigotimes_{r\in\mathbb Z^+_n\setminus\{j\}}Q_r$
and $F_{\ne a}\triangleq\bigotimes_{b\in\mathbb Z^+_f\setminus\{a\}}F_b$.
Put $g=w+2f$ and define linear operators
\begin{equation}
 \begin{aligned}
 \mathcal G_j
   &\triangleq\cnot_{Q_j\to A}
       \otimes I_{F\otimes Q_{\ne j}},
   &&j\in \mathbb Z^+_w,\\
 \mathcal G_{w+2a-1}=\mathcal G_{w+2a}
   &\triangleq\cnot_{F_a\to A}
       \otimes I_{F_{\ne a}\otimes Q},
   &&a\in \mathbb Z^+_f.
 \end{aligned}
 \label{eq:flag-gate-occurrences}
\end{equation}
The preparation and measurement components, common to every
ordering, are defined as
\begin{equation}
 \begin{aligned}
 C^{\mathrm{prep}}
   \triangleq |0\rangle_A\otimes
\left(\bigotimes_{a=1}^{f}|+\rangle_{F_a}\right)
      \otimes C_{\mathrm{enc}},\quad \mathrm{and} \quad  
 C_{s,\alpha}^{\mathrm{meas}}
   \triangleq |s\rangle\langle s|_A
      \otimes\bigl(|\alpha\rangle\langle\alpha|_F H^{\otimes f}\bigr)
      \otimes I_Q,
 \end{aligned}
 \label{eq:flag-preparation-readout}
\end{equation}
for $s\in\mathbb F_2$ and $\alpha\in\mathbb F_2^f$, respectively.
Let $\mathfrak P$ be a nonempty set of permutations
$\pi:\mathbb Z_g^+\to\mathbb Z_g^+$, with $\pi(i)$ specifying
the gate at time step $i$.
Define the family $\{\mathbf C_\pi:\pi\in\mathfrak P\}$ by
\begin{equation}
 \begin{gathered}
 \mathbf C_\pi\triangleq
   \{C^{\mathrm{prep}},\mathcal G_{\pi(1)},\ldots,
     \mathcal G_{\pi(g)},\{C_{s,\alpha}^{\mathrm{meas}}\}_{s,\alpha}\},\quad \mathrm{with} \quad 
 U_\pi\triangleq
   \mathcal G_{\pi(g)}\cdots\mathcal G_{\pi(1)}
   \in\operatorname{Cliff}(A\otimes F\otimes Q).
 \end{gathered}
 \label{eq:flag-circuit-components}
\end{equation}
Its ideal realization $\mathbf C_{\mathrm{id},\pi}$
has Kraus operators $\{C_{s,\alpha}^{\pi}: Q_{\mathrm L}\to A\otimes F\otimes Q\}_{s,\alpha}$ given by
\begin{equation}
  C_{s,\alpha}^{\pi}
 \triangleq C_{s,\alpha}^{\mathrm{meas}}U_\pi C^{\mathrm{prep}}= \begin{quantikz}[
     row sep={0.6cm,between origins},
     column sep=0.18cm
 ]
     \lstick{$\ket{0}_A$}
         & \qw
         & \gate[3]{U_\pi}
         & \qw
         & \meter{}
         & \rstick{$s$}\setwiretype{c}\\
     \lstick{$\ket{+}^{\otimes f}_F$}\setwiretype{b}
         & \qw
         & \qw
         & \gate{H^{\otimes f}}
         & \meter{}
         & \rstick{$\alpha$}\setwiretype{c}\\
     \lstick{$Q_{\mathrm L}$}\setwiretype{b}
         & \gate{C_{\mathrm{enc}}}
         & \qw
         & \qw
         & \qw
         & \rstick{$Q$}\setwiretype{b}
 \end{quantikz}.
 \label{eq:flag-circuit-diagram}
\end{equation}
\end{definition}
\begin{figure*}[t]
\centering
\begingroup
\definecolor{FlagBlue}{RGB}{37,100,153}
\definecolor{FlagRed}{RGB}{176,64,45}
\definecolor{FlagGray}{RGB}{120,120,120}

\begin{minipage}[t]{.665\textwidth}
\centering
\textbf{(a)} Chao--Reichardt circuit and permitted locations
\par\smallskip
\resizebox{\linewidth}{!}{%
\begin{tikzpicture}[
 x=1cm,y=1cm,line width=.45pt,
 every node/.style={font=\scriptsize,inner sep=1pt},
 guide/.style={
   draw=black!25,densely dashed,line width=.35pt
 },
 allowed/.style={
   circle,fill=FlagBlue,inner sep=1.1pt
 },
 fixed/.style={
   rectangle,draw=FlagGray,fill=white,
   minimum size=3pt,inner sep=0pt
 },
 target/.style={
   circle,draw,fill=white,
   minimum size=3.1mm,inner sep=0pt
 },
 control/.style={circle,fill,inner sep=1.35pt}
]
 \fill[FlagBlue!5] (1.40,-.22) rectangle (7.60,3.43);
 \fill[black!4] (8.77,-.22) rectangle (9.03,3.43);
 \foreach \t/\x in
 {0/0,1/1.65,2/2.60,3/3.55,4/4.50,
  5/5.45,6/6.40,7/7.35,8/8.90}{
   \draw[guide] (\x,-.22)--(\x,3.43);
   \node at (\x,3.65) {$R_{\t}$};
   \node at (\x,-.45) {$\t$};
 }
 \node[anchor=east] at (-.22,-.45) {$t=$};

 \draw[draw=black!40,rounded corners=1pt]
   (.29,-.20) rectangle (1.22,3.43);
 \node at (.755,4.05) {$C^{\mathrm{prep}}$};
 \draw (-.45,1.20)--(.40,1.20);
 \node[anchor=east] at (-.48,1.20) {$Q_{\mathrm L}$};
 \node[fixed] at (0,1.20) {};

 \foreach \y in {2.4,2.0,1.6,1.2,.8,.4,0}
   \draw (.75,\y)--(9.20,\y);
 \foreach \y in {3.2,2.8}
   \draw (.75,\y)--(8.125,\y);

 \node[fill=white] at (.75,3.20) {$|0\rangle$};
 \node[fill=white] at (.75,2.80) {$|+\rangle$};
 \draw[fill=white] (.40,-.12) rectangle (1.10,2.55);
 \node[rotate=90] at (.75,1.20) {$C_{\mathrm{enc}}$};

 \foreach \x/\y/\gate in
 {2.125/2.4/1,4.025/2.0/2,
  4.975/1.6/3,6.875/1.2/4}{
   \draw (\x,\y)--(\x,3.2);
   \node[control] at (\x,\y) {};
   \node[target] at (\x,3.2) {};
   \draw
     (\x-.11,3.2)--(\x+.11,3.2)
     (\x,3.09)--(\x,3.31);
   \node at (\x,4.05) {$\mathcal G_{\gate}$};
 }

 \foreach \x/\gate in {3.075/5,5.925/6}{
   \draw[FlagBlue] (\x,2.8)--(\x,3.2);
   \node[control,fill=FlagBlue] at (\x,2.8) {};
   \node[target,draw=FlagBlue] at (\x,3.2) {};
   \draw[FlagBlue]
     (\x-.11,3.2)--(\x+.11,3.2)
     (\x,3.09)--(\x,3.31);
   \node[text=FlagBlue] at (\x,4.05)
     {$\mathcal G_{\gate}$};
 }

 \draw[draw=black!40,rounded corners=1pt]
   (7.80,-.20) rectangle (8.45,3.43);
 \node at (8.125,4.05) {$C^{\mathrm{meas}}_{s,\alpha}$};
 \foreach \y/\basis in {3.2/Z,2.8/X}{
   \node[
     draw,fill=white,
     minimum width=4mm,minimum height=3mm
   ] at (8.125,\y) {$\basis$};
   \draw (8.34,\y+.025)--(9.20,\y+.025);
   \draw (8.34,\y-.025)--(9.20,\y-.025);
 }
 \node[anchor=west] at (9.24,3.2) {$A:s$};
 \node[anchor=west] at (9.24,2.8) {$F_1:\alpha$};
 \foreach \j/\y in
 {1/2.4,2/2.0,3/1.6,4/1.2,5/.8,6/.4,7/0}
   \node[anchor=west] at (9.24,\y) {$Q_{\j}$};

 \foreach \x in {1.65,2.60,3.55,4.50,5.45,6.40,7.35}
   \foreach \y in {3.2,2.8,2.4,2.0,1.6,1.2,.8,.4,0}
     \node[allowed] at (\x,\y) {};
 \foreach \y in {3.2,2.8,2.4,2.0,1.6,1.2,.8,.4,0}
   \node[fixed] at (8.90,\y) {};

 \draw[FlagRed,rounded corners=2pt,line width=.8pt]
   (3.40,2.65) rectangle (3.70,3.35);
 \foreach \y in {3.2,2.8}
   \node[allowed,fill=FlagRed] at (3.55,\y) {};
 \foreach \y in {2.4,2.0,1.6,1.2,.8,.4,0}
   \draw[FlagBlue,rounded corners=1pt]
     (3.42,\y-.13) rectangle (3.68,\y+.13);

 \node[text=FlagBlue] at (1.65,-.80) {$I_{\mathrm{in}}$};
 \node[text=FlagBlue] at (7.35,-.80) {$I_{\mathrm{out}}$};
 \node[align=center] at (4.65,-1.18)
   {$\mathcal E_{\pi_*}^{(0)}$: all labels arbitrary
    \qquad
    $\mathcal V_{\pi_*}^{(0)}$: gray labels fixed to $0$};
\end{tikzpicture}}
\end{minipage}\hfill
\begin{minipage}[t]{.315\textwidth}
\centering
\textbf{(b)} Operation blocks at $t=3$
\par\smallskip
\resizebox{.93\linewidth}{!}{%
\begin{tikzpicture}[
 x=1cm,y=1cm,line width=.45pt,
 every node/.style={font=\scriptsize,inner sep=1pt},
 cell/.style={
   draw=FlagBlue,fill=white,rounded corners=1pt,
   minimum width=9mm,minimum height=3mm
 },
 target/.style={
   circle,draw,fill=white,
   minimum size=3.1mm,inner sep=0pt
 }
]
 \node at (.10,3.65) {$r=2$};
 \node at (1.20,3.65) {$R_3$};
 \node at (.10,4.05) {$\mathcal G_5$};
 \draw[black!25,densely dashed]
   (1.20,-.22)--(1.20,3.43);

 \foreach \j/\y/\lab in
 {1/3.2/A,2/2.8/F_1,3/2.4/Q_1,4/2.0/Q_2,
  5/1.6/Q_3,6/1.2/Q_4,7/.8/Q_5,8/.4/Q_6,9/0/Q_7}{
   \draw (-.40,\y)--(1.85,\y);
   \node[anchor=east] at (-.45,\y) {$\lab$};
 }

 \draw (.10,2.8)--(.10,3.2);
 \fill (.10,2.8) circle (1.35pt);
 \node[target] at (.10,3.2) {};
 \draw
   (-.01,3.2)--(.21,3.2)
   (.10,3.09)--(.10,3.31);

 \draw[
   FlagRed,fill=FlagRed!5,
   rounded corners=2pt,line width=.8pt
 ] (.65,2.60) rectangle (1.75,3.40);
 \node[text=FlagRed] at (1.20,3.20) {$e_{1,3}$};
 \node[text=FlagRed] at (1.20,2.80) {$e_{2,3}$};
 \node[anchor=west,text=FlagRed] at (1.95,3.00)
   {$B_{\pi_*,2}\in I_{\pi_*,2}$};

 \foreach \j/\y in
 {3/2.4,4/2.0,5/1.6,6/1.2,7/.8,8/.4,9/0}
   \node[cell,text=FlagBlue] at (1.20,\y) {$e_{\j,3}$};
 \node[anchor=west,text=FlagBlue,align=left]
   at (1.95,1.20)
   {$\{(j,3)\}\in I_{\pi_*,1}$\\[2pt]$j=3,\ldots,9$};

 \node at (1.20,-.45) {$t=3$};
 \node at (1.20,-.80)
   {$B_{\pi_*,2}=\{(1,3),(2,3)\}$};
 \node at (1.20,-1.18)
   {$e_{B_{\pi_*,2}}\in\mathbb F_2^4,\quad
     e_{j,3}\in\mathbb F_2^2$};
\end{tikzpicture}}
\end{minipage}
\endgroup

\caption{Chao--Reichardt one-flag circuit and operation partition.
(a) Steane-code (i.e., the $\mu=3$ specialization of the CSS Hamming code) example \cite{ChaoReichardt2018TwoQubits},
with measured support relabeled $Q_1,\ldots,Q_4$ and ordering
$\pi_*=(1,5,2,3,6,4)$.
Vertical guides mark the registers $R_t$.
Each marked location carries an arbitrary Pauli label in
$\mathcal E_{\pi_*}^{(0)}$.
In $\mathcal V_{\pi_*}^{(0)}$, gray labels vanish and all other
labels remain unrestricted, including both data boundaries.
(b) The red block groups the two outputs of the second CNOT;
each blue block is a separate idle-qubit operation location.
A nonzero block label counts as one faulty operation.
Applying this grouping at every step, with singleton blocks
at $t=1$, partitions $\Lambda_{\pi_*,\mathrm{act}}^{(0)}$.}
\label{fig:flag-family-grouping-hamming}
\label{fig:ham-ZA-propagation}
\end{figure*}
\begin{definition}[Fault locations and fault model]
\label{def:flag-pauli-model}
Put $n_*=n+f+1$. For each $\pi\in\mathfrak P$, write the
component domains and codomains as
\begin{equation}
 \begin{aligned}
 C^{\mathrm{prep}}:R_0\to R_1,\quad 
 \mathcal G_{\pi(r)}:R_r\to R_{r+1}
       \quad(r\in\mathbb Z^+_g), \quad
 C_{s,\alpha}^{\mathrm{meas}}:R_{g+1}\to R_{g+2},
 \end{aligned}
 \label{eq:flag-component-spaces}
\end{equation}
where $R_0=Q_{\mathrm L}$ and
$R_t\cong A\otimes F\otimes Q$ for $t \in \mathbb Z^+_{g+2}$.
Let $n_0=k_L$ and $n_t=n_*$ for $t\in\mathbb Z^+_{g+2}$.
Label the logical input qubits by $j\in\mathbb Z^+_{k_L}$.
For $t\in\mathbb Z^+_{g+2}$, label the qubits of $R_t$ by
$j\in\mathbb Z^+_{n_*}$ in the order
$A,F_1,\ldots,F_f,Q_1,\ldots,Q_n$, namely system $A$ has label $1$,
$F_1$ has label $2$, and so on. Following the spacetime-location convention of
\Cref{def:binary-fault-space}, define the spacetime locations,
spacetime fault space, and permitted fault subspace respectively by
\begin{equation}
 \begin{gathered}
 \Lambda_\pi^{(0)}\triangleq
 \{(j,t):t\in[g+2],\ j\in\mathbb Z^+_{n_t}\},\quad
 \mathcal E_\pi^{(0)}\triangleq
 \bigl\{e=[e_{j,t}]_{(j,t)\in\Lambda_\pi^{(0)}}:
                 e_{j,t}\in\mathbb F_2^2\bigr\},\\
 \mathcal V_\pi^{(0)}\triangleq
 \bigl\{e\in\mathcal E_\pi^{(0)}:
       e_{j,t}=[0,0]\text{ whenever }t\in\{0,g+2\}\bigr\}.
 \end{gathered}
 \label{eq:flag-fault-space}
\end{equation}
The actual locations $\Lambda_{\pi,\mathrm{act}}^{(0)}$ (see
\Cref{fig:ham-ZA-propagation}(b)) on which
$\mathcal V_\pi^{(0)}$ permits nontrivial faults are given by
\begin{equation}
 \Lambda_{\pi,\mathrm{act}}^{(0)}
 \triangleq\mathbb Z^+_{n_*}\times\mathbb Z^+_{g+1},\qquad
 N\triangleq|\Lambda_{\pi,\mathrm{act}}^{(0)}|
 =(g+1)n_*,
 \label{eq:flag-active-locations}
\end{equation}
with input and output data locations, immediately after encoding
and immediately before readout, respectively,
\begin{equation}
 \begin{aligned}
 I_{\mathrm{in}}
 \triangleq\{(f+1+j,1):j\in\mathbb Z^+_n\},\quad
 I_{\mathrm{out}}
 \triangleq\{(f+1+j,g+1):j\in\mathbb Z^+_n\}.
 \end{aligned}
 \label{eq:flag-data-boundaries}
\end{equation}
\end{definition}

In circuit-level fault counting, a nonidentity two-qubit
Pauli associated with one CNOT counts as one fault, even
when both qubits are affected
\cite[Sec.~2.1]{ChamberlandBeverland2018Flags}.  Spacetime weight instead counts the affected qubit locations.
We therefore group the two output locations of each CNOT
and use the resulting operation-wise weight and weight profiles
throughout the remainder of this section. The philosophy is to count the number of failed \emph{operations} in a circuit rather than counting the number of failed \emph{locations}.

\begin{definition}[Faulty operations and operation-wise weight profiles]
\label{def:flag-operation-weight}
For $r\in\mathbb Z^+_g$, let $c_{\pi,r}\in\mathbb Z^+_{n_*}\setminus\{1\}$
label the control qubit of $\mathcal G_{\pi(r)}$, whose target
has label $1$. Define
\begin{equation}
 \begin{gathered}
 B_{\pi,r}\triangleq\{(c_{\pi,r},r+1),(1,r+1)\},\\
 B_{\pi,r}\subseteq\Lambda_{\pi,\mathrm{act}}^{(0)},
 \qquad |B_{\pi,r}|=2,\\
 I_{\pi,2}\triangleq\{B_{\pi,r}:r\in\mathbb Z^+_g\},\\
 I_{\pi,1}\triangleq
 \bigl\{\{\ell\}:\ell\in
 \Lambda_{\pi,\mathrm{act}}^{(0)}
       \setminus\!\bigcup_{B\in I_{\pi,2}}B\bigr\}.
 \end{gathered}
 \label{eq:flag-operation-partition}
\end{equation}
The operation-wise location set
$\mathcal O_\pi\triangleq I_{\pi,1}\sqcup I_{\pi,2}$
partitions the actual spacetime locations in \Cref{def:flag-pauli-model} as
\[
 \Lambda_{\pi,\mathrm{act}}^{(0)}
 =\bigsqcup_{B\in\mathcal O_\pi}B.
\]
Its single-qubit and CNOT blocks form the partition
$\Pi_\pi\triangleq\{I_{\pi,1},I_{\pi,2}\}$, with
\begin{equation}
 \begin{aligned}
 N_1&\triangleq|I_{\pi,1}|=N-2g=n_*+g(n_*-2),\\
 N_2&\triangleq|I_{\pi,2}|=g.
 \end{aligned}
 \label{eq:flp-location-numbers}
\end{equation}
For $e\in\mathcal V_\pi^{(0)}$ and $B\in\mathcal O_\pi$,
write $e_B=(e_{j,t})_{(j,t)\in B}\in\mathbb F_2^{2|B|}$,
with qubit indices ordered increasingly within each block; while
operation $B$ is faulty when $e_B\ne0$.
Its operation-wise weight profile and total operation weight (see
\Cref{fig:ham-ZA-propagation}(b)) are
\begin{equation}
 \begin{gathered}
 a\triangleq|\{B\in I_{\pi,1}:e_B\ne0\}|,\qquad
 b\triangleq|\{B\in I_{\pi,2}:e_B\ne0\}|,
 \\
 \mathbf w_\pi(e)\triangleq(a,b)\in
 \mathcal J\triangleq[N_1]\times[N_2],
 \qquad
 \operatorname{wt}_{\mathrm{loc},\pi}(e)\triangleq a+b.
 \end{gathered}
 \label{eq:flag-fault-count}
\end{equation}
\end{definition}

\begin{remark}
    Hereafter, for the remainder of this section, omit the coordinates fixed to zero in
\Cref{eq:flag-fault-space}, identifying
$\mathcal V_\pi^{(0)}\cong
(\mathbb F_2^2)^{N_1}\oplus(\mathbb F_2^4)^{N_2}
\cong\mathbb F_2^{2N}$.
Write $e=(e_0,\ldots,e_g)$, where
$e_r=[x_r,z_r]\in\mathbb F_2^{2n_*}$ contains the labels
at time $t=r+1$ in the stated qubit order.
All subsequent matrices use these coordinates, and character
duals are taken in this reduced ambient space $\mathbb F_2^{2N}$.
Accordingly, the enumerator and LP formulas below use
$d_{0,\pi}=2N$.
\end{remark}

\begin{proposition}[Common ideal action]
\label{prop:fw-ideal}
Every ordering $\pi\in\mathfrak P$ has the same ideal branches,
\begin{equation}
 \begin{aligned}
 C_{s,\alpha}^{\pi}
 &=\delta_{\alpha,0}|s\rangle_A\otimes|0^f\rangle_F
   \otimes\frac{I_Q+(-1)^sP_Z}{2}C_{\mathrm{enc}}\\
 &=\delta_{s,0}\delta_{\alpha,0}|0\rangle_A\otimes|0^f\rangle_F
   \otimes C_{\mathrm{enc}}.
 \end{aligned}
 \label{eq:flag-ideal-branches}
\end{equation}
\end{proposition}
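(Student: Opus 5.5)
The plan is to compute the branch operators directly. We track how $U_\pi$ acts on the prepared ancilla and flag states, and use that all the gates $\mathcal G_j$ are CNOTs with common target $A$. Such gates commute with one another, because CNOTs sharing only a target commute. Hence $U_\pi=\prod_{j=1}^{w}\cnot_{Q_j\to A}\prod_{a=1}^{f}(\cnot_{F_a\to A})^2$ independently of $\pi$. Each flag CNOT occurs exactly twice, so those factors square to the identity, leaving $U_\pi=\prod_{j=1}^{w}\cnot_{Q_j\to A}$ for every $\pi\in\mathfrak P$.

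With this reduction, apply $U_\pi$ to $|0\rangle_A\otimes|+\rangle^{\otimes f}_F\otimes C_{\mathrm{enc}}|\psi\rangle$. On a computational basis state $|x\rangle_Q$, the data CNOTs write the parity $x_1\oplus\cdots\oplus x_w$ into $A$. Therefore the projection $|s\rangle\langle s|_A$ acting after $U_\pi$ gives $|s\rangle_A\otimes\tfrac{1}{2}(I_Q+(-1)^sP_Z)$ on the data, exactly as in the $Z_1Z_2$ computation of \Cref{example: Z measurement L0}'s precursor example. The flags remain in $|+\rangle^{\otimes f}$. Then $H^{\otimes f}$ maps them to $|0^f\rangle$, and $|\alpha\rangle\langle\alpha|_F$ yields the factor $\delta_{\alpha,0}|0^f\rangle_F$. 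Combined, these give the first line of \eqref{eq:flag-ideal-branches}.

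For the second line, use the hypothesis $P_ZC_{\mathrm{enc}}=C_{\mathrm{enc}}$. It gives $\tfrac12(I+(-1)^sP_Z)C_{\mathrm{enc}}=\tfrac12(1+(-1)^s)C_{\mathrm{enc}}=\delta_{s,0}C_{\mathrm{enc}}$.

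The only step needing care is the $\pi$-independence, specifically the commutation of CNOTs that share a target. The cleanest justification is that each such CNOT is $\tfrac12(I+Z_c)\otimes I_A+\tfrac12(I-Z_c)\otimes X_A$ on its control $c$. All terms involve only $Z$'s on distinct controls and $X_A$ or $I_A$ on the target, so they mutually commute. Alternatively, one can check the symplectic matrices from \Cref{Prop: Symplectic Clifford}, which commute and square to the identity, with no sign corrections since CNOT has trivial sign map on the relevant labels.
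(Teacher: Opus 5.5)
Your proof is correct and follows the same route as the paper. The CNOTs share the target $A$ and so commute, the two copies of each flag CNOT cancel to give $U_\pi=\prod_{j=1}^{w}\cnot_{Q_j\to A}$ for every $\pi$, what remains is a measurement of $P_Z$ with the flags returned to $|0^f\rangle$, and $P_ZC_{\mathrm{enc}}=C_{\mathrm{enc}}$ gives the second line. You additionally spell out the basis-state computation and the $\delta_{\alpha,0}$ flag factor that the paper leaves implicit.

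Two small points. The $Z_1Z_2$ computation you cite is in \Cref{example: Circuits and noise}. Keep the decomposition $\tfrac12(I+Z_c)\otimes I_A+\tfrac12(I-Z_c)\otimes X_A$ as your justification for commutation, since commuting symplectic matrices alone give commutation only up to a Pauli and a phase.
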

\begin{proof}
The CNOTs commute and square to the identity, so the flag CNOT pairs can be brought together by commuting them through the other CNOTs which makes them identity, hence leaving a measurement of $P_Z$, with
$P_ZC_{\mathrm{enc}}=C_{\mathrm{enc}}$.
\end{proof}

\begin{proposition}[Fault propagation]
\label{prop:flag-propagation}
Let $\varepsilon_j$ denote the $j$th standard basis row of
$\mathbb F_2^{n_*}$, with $\varepsilon_A=\varepsilon_1$.
For $0\leq r\leq g$, put
$v_{\pi,r}=\sum_{j=r+1}^{g}\varepsilon_{c_{\pi,j}}$.
The CNOTs applied at steps $r+1,\ldots,g$ have combined
unitary $\mathcal G_{\pi(g)}\cdots\mathcal G_{\pi(r+1)}$. Denote the symplectic matrix of their combined unitary by $F_{\pi,r}$ and define the \emph{propagation matrix} $\overrightarrow L_\pi$ by stacking them. Explicitly, the symplectic matrices have a closed form
\begin{equation}
 \begin{gathered}
 F_{\pi,r}=
 \begin{bmatrix}
 I_{n_*}+v_{\pi,r}^{\mathsf T}\varepsilon_A&0\\
 0&I_{n_*}+\varepsilon_A^{\mathsf T}v_{\pi,r}
 \end{bmatrix},\quad \mathrm{where} \quad 
 F_{\pi,g}=I_{2n_*},\qquad
 \overrightarrow L_\pi\triangleq
 \begin{bmatrix}F_{\pi,0}\\ \vdots\\ F_{\pi,g}\end{bmatrix}
 \in\mathbb F_2^{2N\times2n_*}.
 \end{gathered}
 \label{eq:flag-explicit-suffix}
\end{equation}
For every $e\in\mathcal V_\pi^{(0)}$, the propagated Pauli
label immediately before readout is
$e\overrightarrow L_\pi$. Define the coordinate selection matrices
$J_A\in\mathbb F_2^{2n_*\times1}$,
$J_F\in\mathbb F_2^{2n_*\times f}$, and
$J_Q\in\mathbb F_2^{2n_*\times2n}$ by
$pJ_A=x_A$, $pJ_F=z_F$, and $pJ_Q=[x_Q,z_Q]$ for
$p=[x_A,x_F,x_Q,z_A,z_F,z_Q]\in\mathbb F_2^{2n_*}$.
The data label and record flips are then
\begin{equation}
 \begin{aligned}
 q_\pi(e)\triangleq e\overrightarrow L_\pi J_Q,\quad 
 s_\pi(e)\triangleq e\overrightarrow L_\pi J_A,\quad
 \alpha_\pi(e)\triangleq e\overrightarrow L_\pi J_F.
 \end{aligned}
 \label{eq:flag-boundary-data-record}
\end{equation}
The unique nonzero Kraus operator of the noisy realization up to phase, is given by,
\[
 |s_\pi(e)\rangle_A\otimes|\alpha_\pi(e)\rangle_F
 \otimes P(q_\pi(e))C_{\mathrm{enc}}.
\]
\end{proposition}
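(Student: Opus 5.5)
First I would fix the symplectic matrix of a single CNOT from control $c$ to target $A$ in the paper's row-vector convention. Conjugation by $\cnot_{c\to A}$ sends $X_c\mapsto X_cX_A$ and $Z_A\mapsto Z_cZ_A$, and fixes $X_A$ and $Z_c$. In row-vector form the matrix is therefore block diagonal with blocks $I_{n_*}+\varepsilon_c^{\mathsf T}\varepsilon_A$ (x-part) and $I_{n_*}+\varepsilon_A^{\mathsf T}\varepsilon_c$ (z-part). Signs play no role here, because only labels matter for the propagation claim, and the final Kraus statement is only up to phase.

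Next I would compose the CNOTs at steps $r+1,\dots,g$. Every gate has the same target $A$, and the controls are never targets, so $\varepsilon_A\varepsilon_c^{\mathsf T}=0$ for $c\neq 1$. Hence $(I+\varepsilon_{c}^{\mathsf T}\varepsilon_A)(I+\varepsilon_{c'}^{\mathsf T}\varepsilon_A)=I+(\varepsilon_c+\varepsilon_{c'})^{\mathsf T}\varepsilon_A$, and similarly for the z-blocks. Induction on the number of gates gives the closed form for $F_{\pi,r}$ with $v_{\pi,r}$. The flag control appears twice in this product, and the repeated contributions cancel mod 2. This also shows the order of the gates is irrelevant within a suffix, and that $F_{\pi,g}=I$.

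Then I would propagate faults. The label $e_r$ sits at time $t=r+1$, after gate $r$ (or after preparation when $r=0$), so it passes through exactly the gates $r+1,\dots,g$. By \Cref{Prop: Symplectic Clifford}, moving each Pauli past the remaining unitaries to just before readout multiplies its label by $F_{\pi,r}$. Pauli products add labels, so the total label is $\sum_r e_rF_{\pi,r}=e\overrightarrow L_\pi$. Boundary labels at $t\in\{0,g+2\}$ vanish by the definition of $\mathcal V_\pi^{(0)}$. Thus each noisy branch equals, up to phase, $C^{\mathrm{meas}}_{s,\alpha}P(e\overrightarrow L_\pi)U_\pi C^{\mathrm{prep}}$.

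Finally I would push this Pauli through readout. On $A$, the $z_A$ part acts as a phase on $|0\rangle$-basis projectors and the $x_A$ part shifts $s$ by $s_\pi(e)$. On each flag, the Hadamard exchanges $X$ and $Z$, so $x_F$ becomes a phase and $z_F$ becomes a bit flip, shifting $\alpha$ by $z_F=\alpha_\pi(e)$. The data factor $P(q_\pi(e))$ commutes with the measurement. Using \Cref{prop:fw-ideal}, only the branch with ideal outcome $(0,0)$ is nonzero, which yields the unique noisy Kraus operator $|s_\pi(e)\rangle\otimes|\alpha_\pi(e)\rangle\otimes P(q_\pi(e))C_{\mathrm{enc}}$.

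The main obstacle I expect is bookkeeping the flag Hadamard and the phase conventions correctly, i.e. confirming that it is $z_F$ (not $x_F$) that flips $\alpha$. The other delicate point is checking that the x-block and z-block orientations in $F_{\pi,r}$ match the row-vector convention.
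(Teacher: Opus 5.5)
Your proposal is correct and follows the same route as the paper's proof, which is very terse. Both use the single-CNOT update $x_A\mapsto x_A+x_c$, $z_c\mapsto z_c+z_A$ composed into $F_{\pi,r}$, propagation of each $e_r$ through the remaining gates via \Cref{Prop: Symplectic Clifford} (equivalently the recursion of \Cref{prop:ft-level0-signature}), and readout of the branch from \Cref{eq:flag-ideal-branches}. Your explicit handling of the nilpotent block products and of the flag Hadamard (so that $z_F$, not $x_F$, flips $\alpha$) fills in details the paper omits.

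One small imprecision: a flag control appears twice in the suffix product only when both of its CNOTs lie in steps $r+1,\ldots,g$. Otherwise $v_{\pi,r}$ contains $\varepsilon_{F_a}$ once, but your closed form in terms of $v_{\pi,r}$ already covers that case correctly.
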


\begin{proof}
The CNOT updates $x_A\mapsto x_A+x_c$ and
$z_c\mapsto z_c+z_A$, together with
\Cref{Prop: Symplectic Clifford,prop:ft-level0-signature},
give the propagation formulas.
The branch expression follows from
\Cref{eq:flag-ideal-branches}.
\end{proof}
Following \Cref{Cor: Logical dim Pres}, write $V\triangleq V_{\mathrm{enc}}^{\mathrm{out}}
 =\{q\in\mathbb F_2^{2n}:
       P(q)C_{\mathrm{enc}}=\pm C_{\mathrm{enc}}\}$. Choose matrices $\mathsf S\in\mathbb F_2^{(n-k_L)\times2n}$, $\mathsf G\in\mathbb F_2^{2k_L\times2n}$
whose rows are, respectively, a basis of $V$ and
representatives of a basis of $V^{\perp_s}/V$.
Thus $\mathsf S$ contains stabilizer labels,
$\mathsf G$ contains logical Pauli labels, and the rows
of $[\mathsf S;\mathsf G]$ form a basis of $V^{\perp_s}$.

\begin{proposition}[Signature matrices]
\label{prop:fw-signatures}
Valid level-$0$ signature matrices in the sense of
\Cref{def:ft-signature-matrices} are
\begin{equation}
 \begin{gathered}
 H_\pi=
 \begin{bmatrix}
 J_A^{\mathsf T}\\J_F^{\mathsf T}\\
 \mathsf S\Omega_nJ_Q^{\mathsf T}
 \end{bmatrix}\overrightarrow L_\pi^{\mathsf T},\quad 
 L_\pi=\mathsf G\Omega_nJ_Q^{\mathsf T}
                  \overrightarrow L_\pi^{\mathsf T},
 \qquad M_\pi=\begin{bmatrix}H_\pi\\L_\pi\end{bmatrix}.
 \end{gathered}
 \label{eq:fw-HL}
\end{equation}
Their action on a permitted fault $e\in\mathcal V_\pi^{(0)}$ is
\[
 H_\pi e^{\mathsf T}=
 \begin{bmatrix}
 s_\pi(e)\\
 \alpha_\pi(e)^{\mathsf T}\\
 \mathsf S\Omega_nq_\pi(e)^{\mathsf T}
 \end{bmatrix},
 \qquad
 L_\pi e^{\mathsf T}
 =\mathsf G\Omega_nq_\pi(e)^{\mathsf T}.
\]
Consequently,
\[
 \begin{aligned}
 H_\pi e^{\mathsf T}=0
 &\iff s_\pi(e)=0,\ \alpha_\pi(e)=0,\
        q_\pi(e)\in V^{\perp_s},\\
 M_\pi e^{\mathsf T}=0
 &\iff s_\pi(e)=0,\ \alpha_\pi(e)=0,\
        q_\pi(e)\in V.
 \end{aligned}
\]
The ranks are independent of the ordering $\pi$, namely,
\begin{equation}
 \begin{aligned}
h\triangleq\operatorname{rank}H_\pi=f+1+n-k_L,\quad
 \operatorname{rank}L_\pi =2k_L,\quad
m\triangleq\operatorname{rank}M_\pi=f+1+n+k_L.
 \end{aligned}
 \label{eq:flag-common-ranks}
\end{equation}
\end{proposition}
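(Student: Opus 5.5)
The plan is to verify that the displayed rows evaluate the fault signature $\chi_0(e,\cdot)$ on a basis of $\mathcal W_{\mathrm{out}}$ (for $H_\pi$) extended by a complement (for $L_\pi$), as \Cref{def:ft-signature-matrices} requires. After that, the rank statements follow by linear algebra. By \Cref{prop:flag-propagation}, every noisy branch equals, up to phase,
\[
 |s_\pi(e)\rangle_A\otimes|\alpha_\pi(e)\rangle_F\otimes P(q_\pi(e))C_{\mathrm{enc}}.
\]
So the propagated output label is $\beta(e)=e\overrightarrow L_\pi$, with record part $[s_\pi(e),\alpha_\pi(e)]$ (an $X$-flip on the records; the $z_A,z_F$ parts are irrelevant) and data part $q_\pi(e)$. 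We then use \Cref{eq:ft-signature-commutation}: $\chi_0(e,\omega)=\langle\beta(e),r(\omega)\rangle_s$.

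\textbf{Step 1: identify $\mathcal W^{(0)}$.} From \Cref{eq:flag-ideal-branches}, the unique realizable branch is $|0\rangle_A|0^f\rangle_F\otimes C_{\mathrm{enc}}$. The common stabilizer group is therefore generated by three families:
\begin{itemize}
 \item[(i)] record relations $Z_A$ and $Z_{F_a}$, all with zero endpoint labels; these are detectors;
 \item[(ii)] output checks $\stk{P(q)}{I}$ for $q\in V$;
 \item[(iii)] preserved pairs $(P(g),P(g'))$ with $g$ running over the input logical labels and $g'$ over the representatives in $\mathsf G$.
\end{itemize}
Hence $\mathcal W_{\mathrm{out}}$ is spanned by (i) and (ii), and (iii) completes a basis of $\mathcal W^{(0)}$. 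The dimension check is $\dim\mathcal W_{\mathrm{det}}=f+1$, $\dim V=n-k_L$, and $2k_L$ logical relations; this is consistent with \Cref{prop:outcome-independent-stabilizer-relations}.

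\textbf{Step 2: evaluate the signature on each basis relation.}
\begin{itemize}
 \item For $\omega=Z_A$, the symplectic product with $\beta(e)$ picks out the $x_A$ coordinate, giving $e\overrightarrow L_\pi J_A=s_\pi(e)$. Similarly $Z_{F_a}$ gives $\alpha_\pi(e)_a$. Here the selector $J_F$ must be read so that it selects the $X$-flip of the flag after the final $H^{\otimes f}$. This is exactly the paper's convention that $pJ_F=z_F$ before the Hadamard, because $H$ exchanges $z_F$ into a record bit flip. I will check this against \Cref{eq:flag-preparation-readout}.
 \item For an output check with label $q\in V$, the value is $\langle q_\pi(e),q\rangle_s=q\,\Omega_n\,q_\pi(e)^{\mathsf T}$. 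These give the rows $\mathsf S\Omega_nJ_Q^{\mathsf T}\overrightarrow L_\pi^{\mathsf T}$.
 \item The logical relations give $\mathsf G\Omega_nq_\pi(e)^{\mathsf T}$ in the same way.
\end{itemize}
Transposing $e\overrightarrow L_\pi J$ into a column yields the stated matrix forms and their action on $e$.

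\textbf{Step 3: the kernel characterizations.} The condition $\mathsf S\Omega_nq^{\mathsf T}=0$ holds iff $q\in V^{\perp_s}$. Since the rows of $[\mathsf S;\mathsf G]$ form a basis of $V^{\perp_s}$, the condition $[\mathsf S;\mathsf G]\Omega_nq^{\mathsf T}=0$ holds iff $q\in(V^{\perp_s})^{\perp_s}=V$.

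\textbf{Step 4: ranks.} I expect this to be the main obstacle, because it needs a surjectivity argument independent of $\pi$. The plan is to show that the linear map $e\mapsto e\overrightarrow L_\pi$ is onto $\mathbb F_2^{2n_*}$. This holds because the block $F_{\pi,g}=I_{2n_*}$ corresponds to faults at $t=g+1$, which are permitted for every qubit.

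Next, the functionals $p\mapsto pJ_A$, $pJ_F$, $\mathsf S\Omega_nJ_Q^{\mathsf T}$, and $\mathsf G\Omega_nJ_Q^{\mathsf T}$ are linearly independent on $\mathbb F_2^{2n_*}$:
\begin{itemize}
 \item they act on disjoint coordinate sets ($x_A$, $z_F$, and the data);
 \item $\Omega_n$ is invertible;
 \item the rows of $[\mathsf S;\mathsf G]$ are independent.
\end{itemize}
Composition with a surjection preserves rank, so
\[
 \operatorname{rank}H_\pi=1+f+(n-k_L),\qquad \operatorname{rank}M_\pi=1+f+n+k_L.
\]
For $L_\pi$, the data coordinates alone already give $\operatorname{rank}L_\pi=2k_L$. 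None of these values depends on $\pi$.
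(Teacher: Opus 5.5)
Your proposal is correct and follows the paper's own proof. Both identify the rows of $H_\pi$ and $L_\pi$ as $\chi_0(e,\cdot)$ evaluated on a basis of $\mathcal W_{\mathrm{out}}$, extended by logical pairs, using the propagated branch. Both read off the kernels via $(V^{\perp_s})^{\perp_s}=V$, and both get the $\pi$-independent ranks from surjectivity of $e\mapsto e\overrightarrow L_\pi$, guaranteed by $F_{\pi,g}=I_{2n_*}$.

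One small notational slip: $e\overrightarrow L_\pi$ is the pre-readout label, so writing $\beta(e)=e\overrightarrow L_\pi$ literally is off on the flag register. Before readout the irrelevant bits are $z_A,x_F$, not $z_A,z_F$. Your Step 2 already accounts for the Hadamard, so nothing breaks.
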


\begin{proof}
Apply \Cref{def:ft-signature-matrices} to the propagated
branch, using symplectic commutation and
$(V^{\perp_s})^{\perp_s}=V$.
Since $F_{\pi,g}=I_{2n_*}$, propagation is surjective,
so the independent boundary checks give the stated ranks.
\end{proof}

Write $p=e\overrightarrow L_\pi
=[x_A,x_F,x_Q,z_A,z_F,z_Q]$.
By the selector matrix definitions in \Cref{prop:flag-propagation}, the first two blocks of
$H_\pi e^{\mathsf T}$ are
\[
 \begin{bmatrix}
 J_A^{\mathsf T}\\ J_F^{\mathsf T}
 \end{bmatrix}
 \overrightarrow L_\pi^{\mathsf T}e^{\mathsf T}
 =
 \begin{bmatrix}
 (pJ_A)^{\mathsf T}\\ (pJ_F)^{\mathsf T}
 \end{bmatrix}
 =
 \begin{bmatrix}
 x_A\\ z_F^{\mathsf T}
 \end{bmatrix}.
\]
These are precisely the record-flip bits because
\[
 \begin{aligned}
 Z_AP(p)&=(-1)^{x_A}P(p)Z_A,\\
 X_{F_a}P(p)&=(-1)^{z_{F_a}}P(p)X_{F_a}.
 \end{aligned}
\]
Since in the noiseless case, measurement outcomes are zero by
\Cref{eq:flag-ideal-branches}, the noisy records are
$s_\pi(e)=x_A$ and $\alpha_\pi(e)=z_F$.
Similarly, the third block
$\mathsf S\Omega_nq_\pi(e)^{\mathsf T}$ gives the
commutation parities with the code stabilizer generators,
which define the ideal output checks. These checks are available
to the ideal recovery in the distance criterion; while the circuit's
retained measurement record is $(s,\alpha)$. Following, \cite[Definition~6]{ChamberlandBeverland2018Flags} we now define the $t$-flag criteria as follows.

\begin{definition}[$t$-flag condition]
\label{def:fw-tflag}
Let $p_Z\in V$ be the binary label of $P_Z$, and define
\[
 \operatorname{wt}_{P_Z}(q)
 \triangleq
 \min\{\operatorname{wt}P(q),\operatorname{wt}P(q+p_Z)\}.
\]
For an integer $t\geq1$, an ordering $\pi$ is
\emph{$t$-flag} if every $e\in\mathcal V_\pi^{(0)}$ satisfies
\begin{equation}
 \left.
 \begin{gathered}
 1\leq\operatorname{wt}_{\mathrm{loc},\pi}(e)\leq t,\\
 \alpha_\pi(e)=0
 \end{gathered}
 \right\}
 \ \Longrightarrow\
 \operatorname{wt}_{P_Z}(q_\pi(e))
 \leq\operatorname{wt}_{\mathrm{loc},\pi}(e).
 \label{eq:flag-t-condition}
\end{equation}
\end{definition}

\subsection{Flag-LP distance bound}
\label{subsec:flag-lp}

Use the spaces $\mathcal K_{0,\pi},\mathcal T_{0,\pi}$ and the
four distributions $U^\pi,T^\pi,\widehat U^\pi,\widehat T^\pi$
of \Cref{eq:we-fault-subspaces,eq:we-nested-distributions},
with operation profiles $\mathbf w_\pi$ and character duals
in the reduced space $\mathbb F_2^{2N}$.

\begin{definition}[Operation-wise distance]
\label{def:flag-operation-distance}
For $\pi\in\mathfrak P$, the operation-wise distance of
$\mathbf C_\pi$ is defined as
\begin{equation}
 d_\pi\triangleq
 \min_{\substack{e\in\mathcal V_\pi^{(0)}\\
                 H_\pi e^{\mathsf T}=0,\ L_\pi e^{\mathsf T}\ne0}}
       \operatorname{wt}_{\mathrm{loc},\pi}(e),
 \qquad \min\varnothing=\infty.
 \label{eq:flp-location-distance}
\end{equation}
\end{definition}
Each faulty operation affects one or two spacetime locations,
so $\operatorname{wt}_{\mathrm{loc},\pi}(e)
\leq\operatorname{wt}_{\mathrm{ST}}(e)
\leq2\operatorname{wt}_{\mathrm{loc},\pi}(e)$ and hence
$d_\pi\leq d_{\mathrm{circ},0}(\pi)\leq2d_\pi$. For each $\pi\in\mathfrak P$, the distributions
$U^\pi,T^\pi,\widehat U^\pi,\widehat T^\pi$ satisfy
the counting and MacWilliams constraints
\Cref{eq:lp-capacity,eq:lp-zero,eq:lp-cardinalities,eq:we-split-macwilliams}
with the substitutions
\begin{equation}
 \begin{gathered}
 \Gamma_{a,b}=3^a15^b\binom{N_1}{a}\binom{N_2}{b},\\
 \mathbb K_{i,j}(a,b)=
 \mathbb K_i^{(4)}(a;N_1)\mathbb K_j^{(16)}(b;N_2),\\
 d_{0,\pi}=2N=2N_1+4N_2,\\
 |\mathcal K_{0,\pi}|=2^{2N-h},\qquad
 |\mathcal T_{0,\pi}|=2^{2N-m}.
 \end{gathered}
 \label{eq:flp-krawtchouk-16}
\end{equation}
Here $\mathbb K_j^{(q)}$ is defined in
\Cref{eq:we-krawtchouk}.
Indeed, the character sum in
\Cref{eq:we-local-character-sum} over a nonzero CNOT label
in $\mathbb F_2^4$ equals $15$ for the trivial character
and $-1$ otherwise; the proof of
\Cref{prop:we-signature-macwilliams} therefore applies with
local alphabet sizes $4$ and $16$. To express the flag condition, retain the complete data label.
For $y=(s,\alpha,q)\in
\mathbb F_2\times\mathbb F_2^f\times\mathbb F_2^{2n}$, define
\begin{equation}
 C_{a,b}^\pi(y)\triangleq
 \left|\left\{e\in\mathcal V_\pi^{(0)}:
 \begin{gathered}
  \mathbf w_\pi(e)=(a,b),\\
  (s_\pi(e),\alpha_\pi(e),q_\pi(e))=y
 \end{gathered}
 \right\}\right|.
 \label{eq:flp-resolved-counts}
\end{equation}
These counts refine the signature counts, namely, the logical signature
alone does not determine $\operatorname{wt}_{P_Z}(q)$.

\begin{proposition}[Resolved counts and vanishing conditions]
\label{prop:flag-forbidden-sectors}
Put $\beta=1+f+2n$. For every $\pi\in\mathfrak P$, $C_{a,b}^\pi(y)\geq0$ and
\begin{equation}
 \begin{gathered}
 C_{0,0}^\pi(y)=\mathbf1_{\{y=0\}},\qquad
 \sum_yC_{a,b}^\pi(y)=\Gamma_{a,b},\\
 \sum_{(a,b)\in\mathcal J}C_{a,b}^\pi(y)=2^{2N-\beta},\\
 U_{a,b}^\pi=\sum_{q\in V^{\perp_s}}C_{a,b}^\pi(0,0,q),\\
 T_{a,b}^\pi=\sum_{q\in V}C_{a,b}^\pi(0,0,q).
 \end{gathered}
 \label{eq:flp-resolved-totals}
\end{equation}
The ordering $\pi$ is $t$-flag if and only if 
\begin{equation}
 \begin{gathered}
 C_{a,b}^\pi(s,0,q)=0\qquad(s\in\mathbb F_2),\\
 \text{whenever }1\leq a+b\leq t,\quad
 \operatorname{wt}_{P_Z}(q)>a+b.
 \end{gathered}
 \label{eq:flp-flag-zeros}
\end{equation}
For an integer $D\geq1$, its operation distance satisfies
$d_\pi\geq D$ if and only if
\begin{equation}
 U_{a,b}^\pi=T_{a,b}^\pi
 \qquad\bigl((a,b)\in\mathcal J,\ 0<a+b<D\bigr).
 \label{eq:flp-distance}
\end{equation}
\end{proposition}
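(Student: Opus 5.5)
The plan is to view the map $e\mapsto y(e)\triangleq(s_\pi(e),\alpha_\pi(e),q_\pi(e))=e\overrightarrow L_\pi[J_A\ J_F\ J_Q]$ as an $\mathbb F_2$-linear map $\mathcal V_\pi^{(0)}\cong\mathbb F_2^{2N}\to\mathbb F_2^{\beta}$. Everything in \Cref{eq:flp-resolved-totals} is then a counting statement about the fibres of this map, organized by operation profile. Nonnegativity is immediate. Since the only profile-$(0,0)$ fault is $e=0$ and $y(0)=0$, we get $C^\pi_{0,0}(y)=\mathbf 1_{\{y=0\}}$. Summing over $y$ counts all faults of profile $(a,b)$. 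There are $\binom{N_1}{a}3^a$ choices on single-qubit blocks and $\binom{N_2}{b}15^b$ choices on CNOT blocks, which gives $\Gamma_{a,b}$.

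For the fibre-size identity, summing over all profiles counts $|\{e:y(e)=y\}|$. This equals $2^{2N-\operatorname{rank}}$ provided the map is surjective. Surjectivity follows because $F_{\pi,g}=I_{2n_*}$: restricted to the last-time coordinates $e_g$, the map is the identity on $\mathbb F_2^{2n_*}$. The selector $[J_A\ J_F\ J_Q]$ simply picks $1+f+2n=\beta$ of these coordinates, namely $x_A$, $z_F$ and $(x_Q,z_Q)$. Hence every fibre has cardinality $2^{2N-\beta}$.

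The expressions for $U^\pi$ and $T^\pi$ then follow from \Cref{prop:fw-signatures}. There, $H_\pi e^{\mathsf T}=0$ holds exactly when $s=0$, $\alpha=0$ and $q\in V^{\perp_s}$, and $M_\pi e^{\mathsf T}=0$ holds exactly when $s=0$, $\alpha=0$ and $q\in V$. Partitioning $\mathcal K_{0,\pi}$ and $\mathcal T_{0,\pi}$ by the value of $q$ and by profile gives the two sums.

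For the $t$-flag equivalence, I will unfold \Cref{def:fw-tflag}. An ordering fails to be $t$-flag exactly when some $e$ has $1\le a+b\le t$, $\alpha_\pi(e)=0$ and $\operatorname{wt}_{P_Z}(q_\pi(e))>a+b$, where $\operatorname{wt}_{\mathrm{loc},\pi}(e)=a+b$. Such an $e$ is counted by $C^\pi_{a,b}(s,0,q)$ with $s=s_\pi(e)$ arbitrary and $q=q_\pi(e)$. So the condition holds if and only if all these counts vanish, over both values of $s$. The distance equivalence is \Cref{prop:we-enumerator-constraints} with the operation-wise partition $\Pi_\pi$. Here $U^\pi_{a,b}-T^\pi_{a,b}$ counts faults of profile $(a,b)$ with zero syndrome and nonzero logical effect. \Cref{def:flag-operation-distance} measures weight by $a+b$, so $d_\pi\ge D$ is equivalent to these counts vanishing for all $0<a+b<D$.

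The only step needing care is surjectivity and the exact rank $\beta$ in the fibre count. This requires checking that the reduced coordinates still include every label at time $t=g+1$, which is immediately before readout and is not zeroed by $\mathcal V^{(0)}_\pi$. I will also verify that the CNOT-block grouping does not affect linearity, since grouping only changes the profile bookkeeping and not the coordinates.
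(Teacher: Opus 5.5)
Your proposal is correct and follows essentially the same route as the paper. The paper obtains the fibre size $2^{2N-\beta}$ by fixing the faults at times $t=1,\ldots,g$ arbitrarily and observing that a prescribed $y$ forces exactly the $x_A,z_F,x_Q,z_Q$ bits of the final-time label while leaving its $z_A,x_F$ bits free; this is the explicit form of your surjectivity and rank--nullity argument via $F_{\pi,g}=I_{2n_*}$, and the remaining identities, the $t$-flag equivalence and the distance equivalence are argued exactly as in the paper.
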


\begin{proof}
All coefficients are nonnegative because they count fault patterns.
The zero fault is the only pattern with profile $(0,0)$.
Grouping all faults with profile $(a,b)$ according to their
values of $y$ gives
\[
 C_{0,0}^\pi(y)=\mathbf1_{\{y=0\}},
 \qquad
 \sum_y C_{a,b}^\pi(y)=\Gamma_{a,b}.
\]

Fix $y=(s,\alpha,q)$ and choose the faults at times
$t=1,\ldots,g$ arbitrarily.
To obtain this prescribed $y$, the final fault at $t=g+1$
must have specific values of its $x_A,z_F,x_Q,z_Q$ bits.
Its remaining $z_A,x_F$ bits are unrestricted.
There are therefore
\[
 2^{2gn_*}\,2^{f+1}=2^{2N-\beta}
\]
fault patterns giving each $y$.
Summing over their profiles proves the third equality in
\Cref{eq:flp-resolved-totals}.
By \Cref{prop:fw-signatures}, the faults counted by
$U_{a,b}^\pi$ have $s=\alpha=0$ and $q\in V^{\perp_s}$,
while those counted by $T_{a,b}^\pi$ additionally have
$q\in V$. This proves the remaining two equalities. The counts in \Cref{eq:flp-flag-zeros} vanish exactly when
no unflagged fault of operation weight between $1$ and $t$
violates \Cref{def:fw-tflag}.
Finally, $U_{a,b}^\pi-T_{a,b}^\pi$ counts the faults with
operation-wise weight profile $(a,b)$ that have zero syndrome and nontrivial
logical effect. Thus $d_\pi\geq D$ exactly when this
difference vanishes for every $0<a+b<D$, proving
\Cref{eq:flp-distance}.
\end{proof}

Let $A_j,B_j$ be the prescribed weight distributions of
$V,V^{\perp_s}$, respectively, as in
\Cref{prop:lp-boundary-enumerators}.
Because the weight profile does not distinguish data-boundary
locations from other operation locations, we use the following
total-weight constraint.

\begin{proposition}[Boundary constraint]
\label{prop:flp-code-boundaries}
Every ordering $\pi\in\mathfrak P$ satisfies
\begin{equation}
 \sum_{\substack{(a,b)\in\mathcal J\\a+b=j}}
       (U_{a,b}^\pi-T_{a,b}^\pi)
 \geq2(B_j-A_j),\qquad 1\leq j\leq n.
 \label{eq:flp-boundary-harmful}
\end{equation}
For $k_L>0$, it follows that
$d_\pi\leq d_{\mathrm{code}}$ and
$2(d_\pi-1)\leq n-k_L$.
\end{proposition}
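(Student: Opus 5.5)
The plan is to exhibit, for each $j$, two disjoint families of harmful undetected faults of operation weight $j$, each in bijection with $\{p\in V^{\perp_s}\setminus V:\operatorname{wt}P(p)=j\}$, which has size $B_j-A_j$. The two families are the faults supported only on the input data boundary $I_{\mathrm{in}}$ (time $t=1$) and those supported only on the output data boundary $I_{\mathrm{out}}$ (time $t=g+1$). Both boundaries are permitted by $\mathcal V_\pi^{(0)}$. Since $g\geq1$, the two boundaries are distinct time slices, so the families are disjoint. Every pattern in them is nonzero because $j\geq1$.

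\emph{Output boundary.} For $q\in V^{\perp_s}\setminus V$, place $P(q)$ on the data qubits at $t=g+1$. Because $F_{\pi,g}=I_{2n_*}$ by \Cref{eq:flag-explicit-suffix}, we get $e\overrightarrow L_\pi=[0,0,x_Q,0,0,z_Q]$. Hence $s_\pi(e)=0$, $\alpha_\pi(e)=0$ and $q_\pi(e)=q$. By \Cref{prop:fw-signatures}, $H_\pi e^{\mathsf T}=0$, and $M_\pi e^{\mathsf T}\neq0$ because $q\notin V$.

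For the operation weight, each affected data location is either a singleton block or lies in the block $B_{\pi,g}$ of the last CNOT. In the second case only one qubit of that block (the control) carries a nonzero label, so it still counts as one faulty operation. Thus $\operatorname{wt}_{\mathrm{loc},\pi}(e)=\operatorname{wt}P(q)=j$. The same label placed at $t=1$ also gives operation weight $j$, since at $t=1$ all locations are singletons: the CNOT blocks sit at times $r+1\geq2$.

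\emph{Input boundary.} This is the step needing care: the fault must be propagated with $F_{\pi,0}$. Place $P(q)$ on the data at $t=1$, so that $e_0=[0,0,x_Q,0,0,z_Q]$. The vector $v_{\pi,0}$ sums the control labels over all $g$ gates. Each data qubit $Q_1,\ldots,Q_w$ occurs once, and each flag occurs twice and cancels. Hence $v_{\pi,0}$ is the indicator of the support of $P_Z$ on the data.

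By \Cref{eq:flag-explicit-suffix}, the $X$ block gives $x_A\mapsto x_Q\cdot\mathbf 1_{\operatorname{supp}P_Z}=\langle q,p_Z\rangle_s$. The $Z$ block only adds $z_A$ to the controls, and $z_A=0$ here. So no flag or data bits change, and $\alpha_\pi(e)=0$, $q_\pi(e)=q$, $s_\pi(e)=\langle q,p_Z\rangle_s$. Since $p_Z\in V$ and $q\in V^{\perp_s}$, we get $s_\pi(e)=0$. As before, $e$ is syndrome-trivial and logically nontrivial.

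Summing these two injections into $\{e:\mathbf w_\pi(e)=(a,b),\,a+b=j\}\cap(\mathcal K_{0,\pi}\setminus\mathcal T_{0,\pi})$ gives \Cref{eq:flp-boundary-harmful}.

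\emph{Consequences.} For $k_L>0$, the space $V^{\perp_s}\setminus V$ is nonempty, and its minimum weight is $d_{\mathrm{code}}$. At $j=d_{\mathrm{code}}$ we have $B_j>A_j$, so the left side of \Cref{eq:flp-boundary-harmful} is positive there. By \Cref{eq:flp-distance} this forces $d_\pi\leq d_{\mathrm{code}}$. Combining with the quantum Singleton bound $n-k_L\geq2(d_{\mathrm{code}}-1)$ \cite[Sec.~7.1]{gottesman1997stabilizer} gives $2(d_\pi-1)\leq n-k_L$.

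I expect the main obstacle to be bookkeeping rather than ideas. One part is checking that the flag CNOT pairs cancel in $v_{\pi,0}$. The other is confirming that the output-boundary control location inside $B_{\pi,g}$ is still counted as a single operation.
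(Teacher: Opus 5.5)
Your proposal is correct and follows the same route as the paper. For each weight-$j$ label $q\in V^{\perp_s}\setminus V$ you place $P(q)$ on either the input or the output data boundary, check zero records via $\langle p_Z,q\rangle_s=0$ and operation weight $j$, and then conclude with the first $j$ having $B_j>A_j$ and the quantum Singleton bound; your explicit computation of $v_{\pi,0}$ and of block membership at $t=1$ and $t=g+1$ just fills in details the paper states in one line.
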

\begin{proof}
Each weight-$j$ label $q\in V^{\perp_s}\setminus V$ gives
distinct input-only and output-only faults, both with zero
records since $\langle p_Z,q\rangle_s=0$ and with nontrivial
logical effect.
Their data labels remain $q$ as input data faults propagate only
an additional syndrome-ancilla $X$ label
$\langle p_Z,q\rangle_s=0$.
Both faults have operation weight $j$, since each operation block
contains at most one data-qubit location.
Counting gives \Cref{eq:flp-boundary-harmful}; the first
$j$ with $B_j>A_j$ gives the distance bound, and
the code Singleton bound, as in \Cref{cor:boundary-singleton},
gives the last inequality.
\end{proof}

\begin{definition}[Flag-circuit distance LP]
\label{def:flp-final}
For integers $t,D\geq1$, let
$\mathrm{FlagLP}_{\mathfrak P}(t,D)$ be the feasibility LP
in real arrays $U,T,\widehat U,\widehat T,C$, indexed by
$(a,b)\in\mathcal J$ and, for $C$, by $y$.
Require
\Cref{eq:lp-capacity,eq:lp-zero,eq:lp-cardinalities,eq:we-split-macwilliams}
with the substitutions in \Cref{eq:flp-krawtchouk-16},
together with $C\geq0$ and
\Cref{eq:flp-resolved-totals,eq:flp-flag-zeros,eq:flp-distance,eq:flp-boundary-harmful}.
In these constraints, replace
$U^\pi,T^\pi,\widehat U^\pi,\widehat T^\pi,C^\pi$
by $U,T,\widehat U,\widehat T,C$, respectively.
For $k_L>0$, also require $2(D-1)\leq n-k_L$. Additional linear constraints $\mathcal A_{\mathfrak P}$
may be imposed provided they hold for
$(U^\pi,T^\pi,\widehat U^\pi,\widehat T^\pi,C^\pi)$
for every $\pi\in\mathfrak P$.
All data and additional constraints remain fixed as $D$ varies.
\end{definition}
This specializes \Cref{def:lp-spacetime} to one class with
common parameters, operation weights, and the additional
resolved counts. Without $\mathcal A_{\mathfrak P}$, the LP
uses only the shared code and circuit parameters.

\begin{theorem}[LP distance bound]
\label{thm:flp-bound}
Every $t$-flag ordering $\pi\in\mathfrak P$ with $d_\pi\geq D$
supplies a feasible point of $\mathrm{FlagLP}_{\mathfrak P}(t,D)$.
Consequently, infeasibility implies $d_\pi\leq D-1$ for
every $t$-flag ordering in the prescribed family.
If $k_L>0$ and this subfamily is nonempty, then, for every
$t$-flag ordering $\pi\in\mathfrak P$,
\begin{equation}
 \begin{gathered}
 d_{\mathrm{LP}}\triangleq
 \min_{D\in\mathbb Z_{\geq1}}\{D-1:\
       \mathrm{FlagLP}_{\mathfrak P}(t,D)\text{ is infeasible}\},\\
 d_\pi\leq d_{\mathrm{LP}}\leq d_{\mathrm{code}}.
 \end{gathered}
 \label{eq:flp-distance-bound}
\end{equation}
Feasibility alone does not imply realization by an ordering.
\end{theorem}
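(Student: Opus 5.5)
The plan follows the pattern of \Cref{thm:lp-obstruction}. Fix a $t$-flag ordering $\pi\in\mathfrak P$ with $d_\pi\geq D$. As the candidate feasible point we take the actual enumerators of $\mathbf C_\pi$:
\[
 U=U^\pi,\quad T=T^\pi,\quad \widehat U=\widehat U^\pi,\quad \widehat T=\widehat T^\pi,\quad C=C^\pi.
\]
The proof then consists of checking each family of constraints in \Cref{def:flp-final}, one at a time.

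\textbf{Counting and MacWilliams constraints.} The counting and MacWilliams constraints \Cref{eq:lp-capacity,eq:lp-zero,eq:lp-cardinalities,eq:we-split-macwilliams} are needed under the substitutions \Cref{eq:flp-krawtchouk-16}.
\begin{itemize}
 \item Inclusions, zero-weight values and capacities: these follow exactly as in \Cref{prop:we-enumerator-constraints}, with capacity $\Gamma_{a,b}$ counting $3^a15^b$ nonzero labels.
 \item Cardinalities: $|\mathcal K_{0,\pi}|=2^{2N-h}$ and $|\mathcal T_{0,\pi}|=2^{2N-m}$. These come from \Cref{lem:we-dual-row-spaces} together with the $\pi$-independent ranks \Cref{eq:flag-common-ranks}. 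Here $H_\pi,M_\pi$ restricted to the reduced space are surjective, as noted in \Cref{prop:fw-signatures}.
 \item Split identities: these follow from rerunning the character computation of \Cref{prop:we-signature-macwilliams} with local alphabets $\mathbb F_2^2$ and $\mathbb F_2^4$. The single-location character sums are $3,-1$ and $15,-1$, respectively.
\end{itemize}

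\textbf{Resolved counts, flag zeros and distance equalities.}
\begin{itemize}
 \item The constraints $C\geq0$, \Cref{eq:flp-resolved-totals} and \Cref{eq:flp-distance} are exactly the content of \Cref{prop:flag-forbidden-sectors} applied to $\pi$. For \Cref{eq:flp-distance} we use the hypothesis $d_\pi\geq D$.
 \item The vanishing conditions \Cref{eq:flp-flag-zeros} hold by the ``only if'' direction of the $t$-flag equivalence in the same proposition, since $\pi$ is $t$-flag.
 \item The boundary inequality \Cref{eq:flp-boundary-harmful} is \Cref{prop:flp-code-boundaries}.
 \item For $k_L>0$, $d_\pi\geq D$ gives $2(D-1)\leq2(d_\pi-1)\leq n-k_L$, again by \Cref{prop:flp-code-boundaries}.
 \item Any extra constraints $\mathcal A_{\mathfrak P}$ hold by their defining hypothesis.
\end{itemize}
This establishes feasibility. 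The infeasibility statement then follows by contraposition.

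\textbf{The bound \Cref{eq:flp-distance-bound}.} Let $\pi$ be $t$-flag.
\begin{itemize}
 \item For every $D\leq d_\pi$ the LP is feasible, so the minimizing infeasible $D$ exceeds $d_\pi$. This gives $d_\pi\leq d_{\mathrm{LP}}$, and is meaningful once some infeasible $D$ exists.
 \item For the upper bound $d_{\mathrm{LP}}\leq d_{\mathrm{code}}$, the plan is to show $\mathrm{FlagLP}_{\mathfrak P}(t,d_{\mathrm{code}}+1)$ is infeasible using only constraints that do not depend on $\pi$.
 \item Take $j=d_{\mathrm{code}}$. Since $k_L>0$, the boundary distributions satisfy $B_j>A_j$, because a minimum-weight logical exists.
 \item So \Cref{eq:flp-boundary-harmful} forces $\sum_{a+b=j}(U_{a,b}-T_{a,b})\geq2(B_j-A_j)>0$.
 \item With $D=d_{\mathrm{code}}+1$, however, the distance equalities \Cref{eq:flp-distance} force $U_{a,b}=T_{a,b}$ for all $0<a+b\leq d_{\mathrm{code}}$, so that same sum vanishes.
\end{itemize}
This contradiction gives infeasibility, hence $d_{\mathrm{LP}}\leq d_{\mathrm{code}}$. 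The same argument also shows the minimum defining $d_{\mathrm{LP}}$ is attained.

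\textbf{Main obstacle.} I expect the only delicate point to be this upper-bound step. One has to confirm that $B_{d_{\mathrm{code}}}>A_{d_{\mathrm{code}}}$ follows from the definition of code distance for $k_L>0$, and that $d_{\mathrm{code}}\leq n$, so that $j$ lies in the range of \Cref{eq:flp-boundary-harmful}. Everything else is bookkeeping from the earlier propositions.

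The final remark, that feasibility does not imply realization by an ordering, needs no proof: the LP variables are relaxed to reals and retain only necessary conditions.
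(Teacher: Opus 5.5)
Your proposal is correct and follows essentially the same route as the paper: you substitute the actual enumerators $(U^\pi,T^\pi,\widehat U^\pi,\widehat T^\pi,C^\pi)$, verify the constraints via \Cref{prop:flag-forbidden-sectors}, \Cref{prop:flp-code-boundaries} and the general enumerator identities, and contrapose. You then use the boundary inequality \Cref{eq:flp-boundary-harmful} at $j=d_{\mathrm{code}}$ to exclude $D=d_{\mathrm{code}}+1$, and your explicit check that $B_{d_{\mathrm{code}}}>A_{d_{\mathrm{code}}}$ when $k_L>0$ fills in a step the paper leaves implicit.
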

\begin{proof}
Assign the LP variables by
\[
 (U,T,\widehat U,\widehat T,C)
 =(U^\pi,T^\pi,\widehat U^\pi,\widehat T^\pi,C^\pi).
\]
These actual counts from an actual circuit satisfy all constraints by
\Cref{prop:flag-forbidden-sectors,prop:flp-code-boundaries}
and the general enumerator identities, together with the
assumed validity of $\mathcal A_{\mathfrak P}$.
They therefore give a feasible real point.
The contrapositive proves the obstruction;
\Cref{eq:flp-boundary-harmful} excludes
$D=d_{\mathrm{code}}+1$, ensuring the stated minimum exists.
\end{proof}

\subsection{CSS Hamming-code analytical benchmark}
\label{subsec:flp-css-hamming}

\begin{definition}[CSS Hamming code encoder]
\label{def:css-hamming}
Fix an integer $\mu\geq3$ and put
$n=2^\mu-1$, $k_L=n-2\mu$, and $w=2^{\mu-1}$.
Let $\mathsf H\in\mathbb F_2^{\mu\times n}$ contain every
nonzero column $c\in\mathbb F_2^\mu$ exactly once.
The binary Hamming code with parity-check matrix $\mathsf H$
is $\{x\in\mathbb F_2^n:\mathsf Hx^{\mathsf T}=0\}$. Let $Q_{\mathrm L}\cong\mathbb C^{2^{k_L}}$ and
$Q=\bigotimes_{c\ne0}Q_c$, with $Q_c\cong\mathbb C^2$
ordered by the columns of $\mathsf H$.
An encoder of the CSS Hamming code
\cite{ChaoReichardt2018TwoQubits} is an isometric stabilizer
circuit satisfying, for every $i\in\mathbb Z^+_\mu$,
\begin{equation}
 \begin{gathered}
 C_{\mathrm{enc}}:Q_{\mathrm L}\longrightarrow Q,
 \qquad C_{\mathrm{enc}}^\dagger C_{\mathrm{enc}}=I_{Q_{\mathrm L}},\\
 \left(\prod_{c:c_i=1}X_{Q_c}\right)C_{\mathrm{enc}}
 =C_{\mathrm{enc}},\\
 \left(\prod_{c:c_i=1}Z_{Q_c}\right)C_{\mathrm{enc}}
 =C_{\mathrm{enc}}.
 \end{gathered}
 \label{eq:ham-encoder-checks}
\end{equation}
The code is $\operatorname{im}(C_{\mathrm{enc}})$.
Each generating check has weight $w$, and, in the notation of
\Cref{Cor: Logical dim Pres}, its output stabilizer space is
\begin{equation}
 V=V_{\mathrm{enc}}^{\mathrm{out}}
 =\{[u\mathsf H,v\mathsf H]:u,v\in\mathbb F_2^\mu\}.
 \label{eq:ham-code-data}
\end{equation}
\end{definition}

\begin{proposition}[Analytic boundary obstruction]
\label{lem:ham-boundary-enumerators}
\label{prop:ham-boundary-bound}
The code has parameters $[[n,k_L,3]]$, and, for $j\in[n]$,
\begin{equation}
 \begin{aligned}
 A_j&=\delta_{j,0}+3n\delta_{j,w}+n(n-1)\delta_{j,3w/2},\\
 B_j&=2^{-2\mu}\bigl[
 \mathbb K_j^{(4)}(0;n)+3n\mathbb K_j^{(4)}(w;n)+n(n-1)\mathbb K_j^{(4)}(3w/2;n)\bigr].
 \end{aligned}
 \label{eq:ham-boundary-enumerators}
\end{equation}
In particular, $B_1=B_2=A_3=0$ and $B_3=n(n-1)/2$.
For every flag count and ordering family,
$\mathrm{FlagLP}_{\mathfrak P}(t,D)$ is infeasible for $D\geq4$.
\end{proposition}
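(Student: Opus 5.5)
The plan has three parts: compute the weight enumerator of $V$ directly, obtain $B_j$ from the boundary MacWilliams identity, and feed the resulting counts into the boundary constraint of $\mathrm{FlagLP}$.

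\emph{The enumerator $A_j$.} By \eqref{eq:ham-code-data}, every element of $V$ has the form $[u\mathsf H,v\mathsf H]$ with $u,v\in\mathbb F_2^\mu$. The Pauli $P([u\mathsf H,v\mathsf H])$ acts nontrivially on qubit $Q_c$ exactly when $u\cdot c=1$ or $v\cdot c=1$. The number of such columns $c\neq0$ depends only on whether $u$ and $v$ are zero and whether they are equal, and I will count in three cases.
\begin{itemize}
\item[(i)] $u=v=0$ gives weight $0$.
\item[(ii)] Exactly one of $u,v$ is nonzero, or $u=v\neq0$. This is $3(2^\mu-1)=3n$ labels, and each has weight equal to the number of $c$ with $u\cdot c=1$, namely $2^{\mu-1}=w$.
\item[(iii)] $u,v$ are nonzero and distinct, giving $n(n-1)$ labels. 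Here $c$ fails the condition only when $u\cdot c=v\cdot c=0$, which happens for $2^{\mu-2}-1$ nonzero $c$. The weight is therefore $n-(2^{\mu-2}-1)=3\cdot2^{\mu-2}=3w/2$.
\end{itemize}
Together these give $A_j$, with $r=\dim V=2\mu=n-k_L$.

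\emph{The enumerator $B_j$ and the code distance.} The formula for $B_j$ is the MacWilliams identity \eqref{eq:lp-output-macwilliams} applied with $2^{-r}=2^{-2\mu}$. I would not evaluate Krawtchouk sums to get the small values. Instead I will argue from the classical Hamming code. Because the code is CSS, a label $[x,z]$ lies in $V^{\perp_s}$ iff $\mathsf Hx^{\mathsf T}=\mathsf Hz^{\mathsf T}=0$, so $x$ and $z$ are both Hamming codewords. Nonzero Hamming codewords have weight at least $3$, so $B_1=B_2=0$. A weight-$3$ label in $V^{\perp_s}$ is $[x,0]$, $[0,z]$, or $[x,x]$ with $x$ a weight-$3$ codeword. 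Each unordered pair of distinct columns determines a unique third column equal to their sum, so there are $n(n-1)/6$ weight-$3$ codewords. This gives $B_3=3\cdot n(n-1)/6=n(n-1)/2$.

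Next I need $A_3=0$. Since $w=2^{\mu-1}\geq4$ and $3w/2\geq6$ when $\mu\geq3$, the only weights in $V$ are $0$, $w$, and $3w/2$, none equal to $3$. Hence $B_3>A_3$, while $B_1=A_1=B_2=A_2=0$. Combined with the standard fact that the quantum code distance is the least $j$ with $B_j>A_j$, this shows the code is $[[n,k_L,3]]$.

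\emph{Infeasibility for $D\geq4$.} Suppose $\mathrm{FlagLP}_{\mathfrak P}(t,D)$ with $D\geq4$ has a feasible point. The distance constraint \eqref{eq:flp-distance} then forces $U_{a,b}=T_{a,b}$ for every $(a,b)$ with $a+b=3<D$. The left side of \eqref{eq:flp-boundary-harmful} at $j=3$ is therefore $0$, while its right side is $2(B_3-A_3)=n(n-1)>0$. This contradiction holds for every $t$, $f$ and $\mathfrak P$, because only the shared code data entered. Note that \eqref{eq:flp-boundary-harmful} is imposed on the LP variables directly, so no property of a particular ordering is needed.

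The main obstacle is making sure the combinatorial count of $B_3$ is consistent with the Krawtchouk expression for $B_j$ in the statement. If a check is wanted, I would verify it at $\mu=3$ (Steane, $n=7$, $B_3=21$) by evaluating the three Krawtchouk terms numerically. The general case should not depend on that check, since $V^{\perp_s}$ has an explicit description through the classical Hamming code.
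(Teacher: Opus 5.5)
Your proposal is correct and follows essentially the same route as the paper: your three cases for $A_j$ are the paper's ranks $0,1,2$ of $[u;v]$ with weight $2^\mu-2^{\mu-r}$; $B_j$ comes from the MacWilliams identity; the classical-Hamming description of $V^{\perp_s}$ together with the triples $\{c,d,c+d\}$ carrying $XXX,YYY,ZZZ$ gives $B_1=B_2=0$, $B_3=n(n-1)/2$, $A_3=0$ and distance three; and infeasibility follows because \eqref{eq:flp-distance} conflicts with \eqref{eq:flp-boundary-harmful} at $j=3$. The paper additionally checks $\mathsf H\mathsf H^{\mathsf T}=0$ (even row weights and even pairwise overlaps), which makes $V$ isotropic. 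Your worry about consistency with the Krawtchouk expression is unnecessary: the MacWilliams identity is exact, so the direct count and the Krawtchouk sum compute the same number.
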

\begin{proof}
Rows of $\mathsf H$ have weight $w$ and distinct rows overlap
on $w/2$ columns, both even. Hence
$\mathsf H\mathsf H^{\mathsf T}=0$; its rank $\mu$ gives $k_L$.
For $u,v\in\mathbb F_2^\mu$, if the $2\times\mu$ matrix
$[u;v]$ has rank $r$, then
$[u\mathsf H,v\mathsf H]$ has weight $2^\mu-2^{\mu-r}$.
Ranks $0,1,2$ occur $1,3n,n(n-1)$ times, respectively,
giving $A_j$; \Cref{eq:lp-output-macwilliams} gives $B_j$.
The distinct nonzero columns exclude nonzero normalizer
labels of weight one or two. Each triple $\{c,d,c+d\}$
supports exactly the weight-three labels $XXX,YYY,ZZZ$;
there are $n(n-1)/6$ such triples and $w\geq4$, proving
$B_3=n(n-1)/2$, $A_3=0$, and distance three.
Finally, at any feasible LP point, the boundary constraint
\Cref{eq:flp-boundary-harmful} requires
\begin{equation}
 \sum_{\substack{(a,b)\in\mathcal J\\a+b=3}}
       (U_{a,b}-T_{a,b})
 \geq n(n-1)>0,
 \label{eq:ham-boundary-certificate}
\end{equation}
whereas the distance constraints \Cref{eq:flp-distance}
make the left-hand side zero
for $D\geq4$.
\end{proof}

\begin{definition}[Chao--Reichardt one-flag construction]
\label{def:ham-chao-reichardt}
For the CSS Hamming-code encoder of \Cref{def:css-hamming},
take $f=1$. Fix a binary basis of $\mathbb F_{2^{\mu-1}}$
and write the column labels as
$c\in\mathbb F_2\times\mathbb F_{2^{\mu-1}}$.
Choose $\xi\in\mathbb F_{2^{\mu-1}}$ of multiplicative order
$w-1$ and enumerate the columns of $\mathsf H$ with first coordinate one by
\begin{equation}
 c_1=(1,0),\qquad
 c_{j+2}=(1,\xi^j)\quad(0\leq j\leq w-2).
 \label{eq:ham-finite-field-order}
\end{equation}
Relabel $Q_{c_i}$ as $Q_i$ for $i\in\mathbb Z^+_w$ and set
$P_Z=\prod_{i=1}^{w}Z_{Q_i}$.
The Chao--Reichardt construction
\cite[arXiv version, Sec.~III~C, Claim~1]{ChaoReichardt2018TwoQubits}
is the circuit $\mathbf C_{\pi_*}$ of \Cref{def:fw-family}
with ordering
\begin{equation}
 \bigl(\pi_*(1),\ldots,\pi_*(w+2)\bigr)
 =\bigl(1,w+1,2,\ldots,w-1,w+2,w\bigr).
 \label{eq:ham-bracketed-order}
\end{equation}
The two flag CNOTs occur immediately after the first and
immediately before the last data CNOT, respectively;
see \Cref{fig:flag-family-grouping-hamming} for $\mu=3$.
\end{definition}
For the benchmark LP, let $\mathfrak P_{\mathrm{Ham}}$
be the set of all permutations of $\mathbb Z^+_{w+2}$ and take
$\mathcal A_{\mathfrak P_{\mathrm{Ham}}}=\varnothing$.
The common parameters are
\begin{equation}
 \begin{gathered}
 g=w+2,\qquad N_1=n(w+3)+2,\qquad N_2=w+2,\\
 h=2\mu+2,\qquad m=2n-2\mu+2.
 \end{gathered}
 \label{eq:ham-lp-parameters}
\end{equation}

\begin{theorem}[Attainment of the CSS Hamming LP bound]
\label{thm:ham-tight-benchmark}
Under \Cref{def:flag-pauli-model,def:flag-operation-weight},
$\pi_*$ is $1$-flag, its distributions give a feasible point of
$\mathrm{FlagLP}_{\mathfrak P_{\mathrm{Ham}}}(1,3)$, and
\begin{equation}
 d_{\pi_*}
 =\max_{\substack{\pi\in\mathfrak P_{\mathrm{Ham}}\\
                  \pi\text{ is }1\text{-flag}}}d_\pi
 =d_{\mathrm{LP}}=3.
 \label{eq:ham-tight-distance}
\end{equation}
\end{theorem}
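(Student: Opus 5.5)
The proof has four parts: show that $\pi_*$ is $1$-flag, show that $d_{\pi_*}\geq 3$, obtain LP feasibility from \Cref{thm:flp-bound}, and close the chain of inequalities using \Cref{prop:ham-boundary-bound}. The upper end is already available. \Cref{prop:ham-boundary-bound} makes $\mathrm{FlagLP}_{\mathfrak P_{\mathrm{Ham}}}(1,D)$ infeasible for every $D\geq 4$, and \Cref{thm:flp-bound} gives $d_\pi\leq d_{\mathrm{LP}}\leq d_{\mathrm{code}}=3$ for every $1$-flag ordering. It therefore suffices to prove two facts: $\pi_*$ is $1$-flag, and $d_{\pi_*}\geq 3$. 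Once these hold, \Cref{thm:flp-bound} with $D=3$ makes the actual distributions of $\pi_*$ a feasible point of $\mathrm{FlagLP}_{\mathfrak P_{\mathrm{Ham}}}(1,3)$. Infeasibility at $D=4$ then forces $d_{\mathrm{LP}}=3$, and since $3\leq d_{\pi_*}\leq\max_\pi d_\pi\leq 3$, every quantity in \Cref{eq:ham-tight-distance} equals $3$.

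\emph{The $1$-flag property.} We work with \Cref{prop:flag-propagation}. A single faulty operation changes the output only through the propagated label $e\overrightarrow L_\pi$. Idle faults and faults on data qubits propagate trivially onto the data, with weight at most one. Faults on the flag qubit $F_1$ spread only $X$ onto the ancilla. The dangerous faults are $X_A$-type components at step $r$, which become $Z$-type data errors through $I_{n_*}+\varepsilon_A^{\mathsf T}v_{\pi,r}$. These give $z_Q$ equal to the restriction of $v_{\pi,r}$ to data qubits, i.e.\ $Z$ on the data CNOTs that follow step $r$. For a CNOT-block fault this can combine with a weight-one component on the control.

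Under the ordering $\pi_*$ there are three regimes. If the ancilla fault occurs before the first flag CNOT or after the second, the suffix contains at most one data CNOT, or all $w$ (equivalent to $P_Z$), or $w-1$ (reducing modulo $p_Z$ to weight one). In all three cases $\operatorname{wt}_{P_Z}\leq 1$. If the fault occurs strictly between the two flag CNOTs, the suffix contains exactly one flag CNOT. The $z_{F}$ component then picks up $z_A$, so $\alpha\neq 0$ and the fault is flagged. The remaining work is to track the bookkeeping of the two-qubit block (control, target) at the boundary steps to confirm \Cref{eq:flag-t-condition}.

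\emph{$d_{\pi_*}\geq 3$.} This is the main obstacle. We must show that no fault of operation weight $1$ or $2$ has zero syndrome and nonzero logical effect. By \Cref{prop:fw-signatures}, a zero syndrome requires $s=\alpha=0$ and $q\in V^{\perp_s}$. The CSS Hamming code has no nontrivial normalizer element of weight at most $2$, since $B_1=B_2=0$, so it suffices to show that undetected faults of weight $\leq 2$ produce $q\in V$ or $q$ of weight at most $2$. Faults that do not touch the ancilla give data weight at most $2$.

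The hard case is two faults that each produce a hook error under $\alpha=0$. Two hooks together flip the flag parity back. Here we use the finite-field ordering \Cref{eq:ham-finite-field-order}: following Chao--Reichardt Claim~1, any unflagged hook is a $Z$ suffix $\{c_{j},\dots,c_{w}\}$ modulo $P_Z$. The claim to prove is that the sum of two such suffixes together with at most two weight-one errors never equals a weight-three logical $ZZZ$ on a triple $\{c,d,c+d\}$. One route is to cite \cite{ChaoReichardt2018TwoQubits} for this step. The other is to verify directly that the column sums of these consecutive intervals, computed with $\xi$ of order $w-1$, cannot vanish on a set of the required form. We would try the citation first and reduce the remaining cases to the syndrome computation $\mathsf H z^{\mathsf T}\neq 0$.
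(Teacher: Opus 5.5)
\textbf{Verdict: the outer structure is right, but the central step $d_{\pi_*}\ge 3$ has a genuine gap.}

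Your outer structure matches the paper. You prove that $\pi_*$ is $1$-flag and that $d_{\pi_*}\ge 3$. You then take feasibility at $D=3$ from \Cref{thm:flp-bound} and infeasibility at $D\ge 4$ from \Cref{prop:ham-boundary-bound}, and the closing chain of equalities is fine. Your regime analysis for the $1$-flag property is also essentially right, with one correction. It is $Z_A$, not $X_A$, that the $z$-block $I_{n_*}+\varepsilon_A^{\mathsf T}v_{\pi,r}$ copies onto the later controls; $X_A$ only flips $s$.

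The problems in the $d_{\pi_*}\ge 3$ step are these.
\begin{enumerate}
\item \emph{Wrong hard case.} The dangerous pairs are two faults that are each \emph{flagged}, so that their flag bits cancel. Their data errors are long $Z$ suffixes of $c_2,\ldots,c_w$. The unflagged hooks are already of weight at most one modulo $P_Z$ by the $1$-flag property.
\item \emph{Target too weak.} You aim to show the combination ``never equals a weight-three logical $ZZZ$''. Two flagged suffixes differ by a $Z$ string on a run $\{c_j,\ldots,c_{j'-1}\}$ of up to $w-1$ consecutive columns. For $\mu\ge 4$ the code has nontrivial $Z$-type logicals of other weights (e.g.\ weight four). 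So excluding weight-three triples does not give $q_{\pi_*}(e)\in V$. What must be shown is that zero output-code syndrome forces the label into $V$.
\item \emph{Enumeration not exhaustive.} A flagged hook can be paired with a $Z_F$ fault on the flag qubit, which has data error $I_Q$ and flag $1$. CNOT-block faults also carry $X$ and $Z$ components on their controls.
\end{enumerate}

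The idea that handles all cases uniformly is the paper's splitting reduction. Write an undetected $e$ of operation weight at most two as $e=e_1+e_2$ along its occupied blocks. Then $H_{\pi_*}e_1^{\mathsf T}=H_{\pi_*}e_2^{\mathsf T}$, so the two summands share their flag bit and their output-code syndrome.
\begin{itemize}
\item \emph{Common flag $0$.} The $1$-flag property makes $q_{\pi_*}(e)$ a zero-syndrome label of weight at most two modulo $P_Z$. It therefore lies in $V$, because $B_1=B_2=0$.
\item \emph{Common flag $1$.} You need the single-fault distinguishability property of the Chao--Reichardt ordering: every flagged single-fault data error, including $I_Q$, is determined modulo $V$ by its output syndrome.
\begin{itemize}
\item Its $Z$ part is a suffix of $c_2,\ldots,c_w$. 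Every nonempty run of consecutive columns $(1,\xi^i)$, with $\xi$ of order $w-1$, has nonzero column sum: odd runs have first coordinate $1$, and even runs have length below $w-1$, so $\xi^L\neq1$.
\item Its $X$ part has weight at most one. It is fixed by the $Z$-check syndrome because the columns of $\mathsf H$ are distinct.
\end{itemize}
Equal syndromes then give $q_{\pi_*}(e_1)+q_{\pi_*}(e_2)\in V$.
\end{itemize}
Your remark about column sums of consecutive intervals points at this fact. You need it in this nonvanishing-syndrome form, applied after the splitting, not as an exclusion of weight-three triples.
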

\begin{proof}
For single CNOT faults, Chao and Reichardt's analysis
\cite[arXiv version, Sec.~III~C, Claim~1]{ChaoReichardt2018TwoQubits}
establishes that unflagged data errors have weight at most
one modulo $P_Z$. For flagged faults, the $Z$ components
are determined modulo $P_Z$ by the $X$-check syndrome.
The $X$ component has weight at most one, so the distinct
nonzero columns of $\mathsf H$ determine it from the
$Z$-check syndrome. Thus the full output-code syndrome
determines every flagged data error modulo stabilizers.
The additional singleton locations preserve these properties:
data faults, including both boundaries, remain unflagged
single-qubit data errors; flag faults give no data error,
and $I_Q$ is already among the flagged CNOT possibilities;
and the sole singleton $A$-location, at $t=1$, gives
$I_Q$ or $P_Z$ without a flag.
Thus $\pi_*$ is $1$-flag.

Write any undetected fault of operation weight at most two
as $e=e_1+e_2$, by splitting its occupied operation blocks, with each $e_i$
of operation weight at most one and allowing a zero summand.
Linearity and $H_{\pi_*}e^{\mathsf T}=0$ imply that their
$H_{\pi_*}$ signatures agree, so their flag outcomes
and output-code syndromes coincide.
If the common flag is zero, reducing both data errors modulo
$P_Z$ gives a zero-syndrome error of weight at most two,
which is a stabilizer because $d_{\mathrm{code}}=3$.
If the common flag is one, the cited distinguishability
property gives the same conclusion.
Hence $q_{\pi_*}(e)\in V$ and $d_{\pi_*}\geq3$.
The boundary bound in \Cref{prop:flp-code-boundaries} gives
$d_{\pi_*}\leq3$.
Finally, \Cref{thm:flp-bound,prop:ham-boundary-bound}
give LP feasibility at $D=3$ and infeasibility at $D=4$,
proving \Cref{eq:ham-tight-distance}.
\end{proof}
\bibliography{bibliography}

\end{document}